\documentclass[twocolumn,longbibliography,nofootinbib, superscriptaddress,10pt,aps,prx]{revtex4-2}
\usepackage[dvips]{graphicx} 
\usepackage{amsfonts}
\usepackage{csquotes} 
\usepackage{amssymb}
\usepackage{amscd}
\usepackage{amsmath}    
\usepackage{amsthm}
\usepackage{bm} 
\usepackage{bbm}
\usepackage{booktabs}
\usepackage{enumerate}
\usepackage{enumitem}
\usepackage{epsfig}
\usepackage{subfigure}
\usepackage{subfloat}
\usepackage{xcolor}	
\usepackage{physics}
\usepackage[most]{tcolorbox}
\usepackage{tikz}
\usetikzlibrary{quantikz2}
\usepackage{float}
\usepackage{appendix}
\usepackage{algpseudocode}  
\usepackage{dsfont}
\usepackage[colorlinks, linkcolor=blue, anchorcolor=blue, citecolor=blue]{hyperref}
\usepackage{MnSymbol}
\setcounter{MaxMatrixCols}{11}
\usepackage[linesnumbered,ruled,vlined]{algorithm2e}
\usepackage{times}

\newlist{protocolsteps}{enumerate}{1}
\setlist[protocolsteps]{
	label=\arabic*.,
	leftmargin=2.1em,
	labelsep=0.5em,
	topsep=0.2em,
	itemsep=0.18em,
	parsep=0pt,
	partopsep=0pt
}

\newcounter{protocol}
\renewcommand{\theprotocol}{\arabic{protocol}}

\newenvironment{protocol}[2]
{%
	\noindent
	\begin{minipage}{\linewidth}
		\refstepcounter{protocol}%
		\label{#2}%
		\raggedright
		\setlength{\parindent}{0pt}%
		
		\hrule height \heavyrulewidth
		\vspace{0.55ex}
		
		\textbf{Protocol~\theprotocol.} #1\par
		
		\vspace{0.35ex}
		\hrule height \lightrulewidth
		\vspace{0.65ex}
	}
	{%
		\vspace{0.15ex}
		\hrule height \heavyrulewidth
	\end{minipage}
}

\newcommand{\TableL}[2]{%
	\parbox[c]{#1}{\raggedright #2\par}%
}

\newcommand{\TableC}[2]{%
	\parbox[c]{#1}{\centering #2\par}%
}

\newtcolorbox[auto counter]{mybox}[2]{
	enhanced,
	breakable,
	label=#1,
	colback=blue!5!white,
	colframe=blue!75!black,
	fonttitle=\bfseries,
	title=Box \thetcbcounter: #2
}

\newtheorem{theorem}{Theorem}
\newtheorem{lemma}{Lemma}
\newtheorem{corollary}{Corollary}
\newtheorem{claim}{Claim}

\newtheorem{fact}{Fact}

\newtheorem{definition}{Definition}
\newtheorem{proposition}{Proposition}

\newtheorem{result}{Result}

\newcommand{\bE}{\mathbb{E}}
\newcommand{\bZ}{\mathbb{Z}}

\newcommand{\bF}{\mathbb{F}}
\newcommand{\bmz}{{\bm{z}}}
\newcommand{\bmx}{{\bm{x}}}
\newcommand{\bmb}{{\bm{b}}}
\newcommand{\cO}{\mathcal{O}}
\newcommand{\cL}{\mathcal{L}}
\newcommand{\cP}{\mathcal{P}}
\newcommand{\poly}{\mathrm{poly}}

\newcommand{\wt}{\mathrm{wt}}

\newcommand{\Var}{\mathrm{Var}}
\DeclareMathOperator{\spn}{span}
\newcommand{\Bbrange}{\sfB^{n_B}}
\newcommand{\Ber}{\mathrm{Bernoulli}}
\newcommand{\sfB}{{\mathsf{B}}}

\newcommand{\dtr}{\mathrm{d}_{\mathrm{tr}}}

\newcommand{\freeset}{\mathcal{F}_{\mathsf{P}}}
\newcommand{\freeprojset}{\mathcal{FP}_{\mathsf{P}}}
\newcommand{\fid}{\mathrm{Fid}_{\mathsf{P}}}
\newcommand{\LQ}{\mathrm{LQ}_{\mathsf{P}}}
\newcommand{\Bzrange}{\{0,1\}^{n_B}}
\newcommand{\Axrange}{\mathsf{X}^{n_A}}
\newcommand{\Bxrange}{\mathsf{X}^{n_B}}
\newcommand{\zsum}{\bmz \in \Bzrange}
\newcommand{\propwitness}{O_{\mathsf{P}}(\psi)}
\newcommand{\accept}{\mathsf{accept}}
\newcommand{\reject}{\mathsf{reject}}

\newcommand{\pfreeprojset}{\mathcal{FP}_{\mathsf{P}}^p}
\newcommand{\fidp}{\mathrm{Fid}_{\pfreeprojset}}
\newcommand{\fidpprime}{\mathrm{Fid}_{\mathcal{FP}_{\mathsf{P}}^{p'}}}
\newcommand{\culprobp}{F_{\pfreeprojset}(\psi,t)}
\newcommand{\culprobpprime}{F_{\mathcal{FP}_{\mathsf{P}}^{p'}}(\psi,t)}
\newcommand{\threswitnessnewp}{\tilde{O}(\psi,t,p)}

\newcommand{\revise}[1]{{\color{black} #1}}

\begin{document}
	\title{Certifying localizable quantum properties with constant sample complexity}
	\author{Zhenyu Du}
	\affiliation{Center for Quantum Information, Institute for Interdisciplinary Information Sciences, Tsinghua University, Beijing 100084, China}
	
	\author{Jinchang Liu}
	\affiliation{Institute for Interdisciplinary Information Sciences, Tsinghua University, Beijing 100084, China}
	
	\author{Elias X.\ Huber}
	\affiliation{Yenching Academy, Peking University, Beijing 100871, China}
	
	\author{Zi-Wen Liu}
	\email{zwliu0@tsinghua.edu.cn}
	\affiliation{Yau Mathematical Sciences Center, Tsinghua University, Beijing 100084, China}
	\author{Xiongfeng Ma}
	\email{xma@tsinghua.edu.cn}
	\affiliation{Center for Quantum Information, Institute for Interdisciplinary Information Sciences, Tsinghua University, Beijing 100084, China}

	\begin{abstract} 
		Characterizing increasingly complex quantum systems is a central task in quantum information science, yet experimental costs often scale prohibitively with system size. Certifying key properties using simple local measurements is highly desirable but challenging. 
		In this work, we introduce a highly general certification framework based on a physical phenomenon that we call localizable quantumness: for generic many-body states, essential quantum properties are robustly preserved within the projected ensembles on small subsystems after performing local projective measurements on the rest of the system. 
		Leveraging this insight, we develop protocols to certify global properties---including multipartite entanglement, circuit complexity, and quantum magic---by witnessing them on a small, accessible subsystem.
		\revise{
			Remarkably, randomizing the local measurement bases extends this capability to certify state fidelity. 
			Relying solely on local Pauli measurements, these protocols achieve constant sample complexity and robustness for almost all quantum states, including a wide range of physically relevant states.
			For certifying the fidelity of $n$-qubit states, this $\cO(1)$ scaling dramatically improves upon state-of-the-art protocols requiring $\cO(n^4)$ samples.}
		Our unified framework provides both a practical toolkit for large-scale quantum certification and a novel lens into complex many-body systems.
	\end{abstract}
	
	\maketitle
	
	\section{Introduction}
	
	The rapid advancement of quantum technologies is enabling control over increasingly large and complex quantum systems, paving the way for next-generation quantum computers and simulators. This progress, however, brings a central challenge: how can we efficiently characterize these devices at scale? 
	While characterizing an unknown quantum system is generally hard~\cite{Haah2016OptimalTomo, Liu2022Fundamental}, in practical experiments one usually has good knowledge of the state-preparation procedure \cite{Wang2016TenPhotons, Cao2023GenerationGME, Kim2023QuantumUtility, Acharya2024QuantumEC}. 
	\emph{Quantum certification}—verifying a quantum system by comparing with prior knowledge of its state—provides a powerful approach \cite{Eisert2020CertificationReview, Kliesch2021CertificationSurvey, Carraso2021TheoreticalExperimental}. 
	This approach not only carries fundamental importance in quantum information science~\cite{Mayers2004SelfTesting}, but also underpins a broad range of applications, from probing novel quantum phenomena~\cite{Hoke2023MeasurementInducedEntanglement, Niroula2024MagicPhaseTransition} and device benchmarking \cite{Knill2008RandomizedBenchmarking, Arute2019Supremacy, Liu2022BenchmarkingSampling, Mark2023BenchmarkingErgodic, Choi2023RandomStateBenchmarking} to quantum cryptography \cite{Ekert1991CryptoBell, Reichardt2013ClassicalCommand, Liu2025CertifiedRandomness}.
	
	A central challenge in certification is to reduce measurement requirements and resource costs~\cite{Flammia2011DirectFidelityEstimation, Pallister2018OptimalVerification, Takeuchi2018VerificationManyQubit, Elben2020CrossPlatform, Zwerger2019DIGME, Rodriguez2021CertificationGME, Huang2024Certifying, Gupta2025SingleQubitCertification}. 
	As the system size grows, the exponential cost of standard characterization techniques like full state tomography becomes prohibitive.
	Another straightforward approach---projecting an experimental system onto the ideal state---is impractical, as it requires measurement capabilities comparable to the state preparation itself. 
	These challenges create an imperative need for scalable and experimentally feasible methods to certify quantum states and their essential properties.

	This need has motivated a shift toward certification protocols that leverage simple local measurements, which are readily available on most quantum platforms~\cite{Zwerger2019DIGME, Rodriguez2021CertificationGME, Huang2024Certifying, Gupta2025SingleQubitCertification, Li2025UniversalVerification}.
	The central difficulty, however, is that many crucial quantum properties, such as multipartite entanglement, are inherently global. 
	Prior attempts to certify these global quantumness with local measurements face crucial limitations: 
	Protocols are often designed case by case for specific states or properties~\cite{Zwerger2019DIGME, Rodriguez2021CertificationGME, Cao2023GenerationGME}, require demanding adaptive measurement schemes that are challenging for near-term devices~\cite{Gupta2025SingleQubitCertification, Li2025UniversalVerification}, or guarantee soundness only for pure states~\cite{Zwerger2019DIGME, Haug2023ScalableMeasureMagic, Huang2024Shallow, Landau2024Shallow}, severely limiting their applicability in realistic, noisy systems. 
	Furthermore, existing non-adaptive protocols for certifying state fidelity or generic properties typically suffer from sample complexity that grows with system size and a diminishing robustness~\cite{Huang2024Certifying}. This prompts a critical question: 
	\begin{center}
		\emph{Is there a general physical principle for robustly and efficiently certifying global quantum properties with simple local measurements?}
	\end{center}
	
	In this work, we address this question in the affirmative by introducing and formalizing a fundamental phenomenon we term localizable quantumness. Inspired by localizable entanglement~\cite{Verstraete2004LocalizedEntanglement, Popp2005LocalizableEntanglement}, we show that the underlying physical principle is far more general.  
	For generic many-body states, essential quantum properties including entanglement, circuit complexity, and magic are robustly encoded throughout the system. 
	Consequently, local projective measurements on a large complement do not destroy these properties, but rather concentrate them into the remaining small, accessible subsystem.
	The measurement-induced projected ensemble on the subsystem thereby carries a distinctive feature that serves as a local signature of the global quantum property.
	
	Leveraging this phenomenon, we develop a unified and highly efficient certification framework that applies broadly to diverse properties (see Table~\ref{tab:results_summary} for the properties considered in detail). 
	Our protocol first localizes the global property into a small subsystem via projective measurements on the large complement system, and then witnesses the localized property on the small subsystem. 
	The significant reduction in system size allows us to efficiently certify the quantumness of the subsystem, thereby substantially reducing experimental overhead and enabling the certification of entanglement, circuit complexity, and quantum magic for generic states using only local Pauli measurements.
	We prove that the method achieves i) constant sample complexity, ii) constant-level robustness against deviations from the target states, and iii) soundness for mixed states, thereby overcoming key limitations of previous approaches and achieving substantial improvement in sample complexity~\cite{Haug2023ScalableMeasureMagic, Huang2024Shallow, Landau2024Shallow, Kim2024LearningPhaseOfMatter, Niroula2024MagicPhaseTransition}.
	Our framework also improves the sample complexity from $\mathcal{O}(n)$ to $\mathcal{O}(\log n)$ for certifying complex properties such as fully inseparable entanglement~\cite{Eisert2006MultiparticleEntanglement, Rodriguez2021CertificationGME}.
	We further demonstrate its efficiency across various Hamiltonian systems and large-scale magic-injection circuits.

	\revise{
		Furthermore, extending this framework by measuring the large complement system in random Pauli bases demonstrates remarkable power in certifying global state fidelity.
		Using random matrix theory, we prove its constant sample complexity and constant robustness for certifying almost all quantum states. 
		We also establish these guarantees for random graph states via an error localization mechanism: Pauli errors acting on generic stabilizer states can be localized to small subsystems via local measurements. 
		This $\cO(1)$ scaling substantially improves upon the state-of-the-art non-adaptive local measurement protocol requiring $\cO(n^4)$ samples~\cite{Huang2024Certifying}.
		Taken together, our unified framework delivers optimal constant-sample-complexity scaling for both property and state certification tasks across generic many-body systems.}

	Our paper is organized as follows. 
	In Section~\ref{sec:summary}, we present the key ideas underlying our protocols and provide an overview of the main results. 
	Section~\ref{sec:protocol_framework} introduces the mathematical framework, formalizes the certification protocol, and analyzes its performance. 
	Sections~\ref{sec:complexity}, \ref{sec:certifying_entanglement}, and \ref{sec:certifying_magic} apply our methods to certifying circuit complexity, entanglement, and magic, respectively. 
	Section~\ref{sec:fidelity} develops the random-basis–enhanced protocol for fidelity certification. 
	Finally, Section~\ref{sec:discussion} concludes with broader implications and future directions.
	
	\begin{table*}[!htbp]
		\centering
		\caption{
			Summary of the quantum properties certified by our framework, together with the corresponding sample complexities \revise{and the representative state classes for which these sample complexity guarantees are proven or numerically demonstrated.}
		}
		\label{tab:results_summary}
		
		\small
		\renewcommand{\arraystretch}{1.18}
		
		\begin{tabular*}{\textwidth}{@{\extracolsep{\fill}}lll@{}}
			\toprule
			
			\TableL{0.25\textwidth}{
				\textbf{Property} 
			}
			&
			\TableC{0.2\textwidth}{
				\textbf{Sample complexity}
			}
			&
			\TableL{0.52\textwidth}{
				\revise{
					\textbf{Representative states}
				}
			}
			\\
			
			\midrule
			
			\TableL{0.25\textwidth}{
				(Measurement-assisted) circuit complexity $d$
				in $D$ spatial dimensions
			}
			&
			\TableC{0.16\textwidth}{
				$\exp\bigl(\cO(d^D)\bigr)$
			}
			&
			\TableL{0.52\textwidth}{
				\revise{
					Deep-thermalized states
					(Proposition~\ref{prop:gap_deep_thermalize})
					\newline
					Random brickwork-circuit states
					(Proposition~\ref{prop:brickwork_state})
				}
			}
			\\
			
			\addlinespace[0.25em]
			\hline
			\addlinespace[0.25em]

			\TableL{0.25\textwidth}{
				Bipartite entanglement
			}
			&
			\TableC{0.16\textwidth}{
				$\cO(1)$
			}
			&
			\TableL{0.52\textwidth}{
				\revise{
					Deep-thermalized states
					(Proposition~\ref{prop:gap_deep_thermalize})
					\newline
					Random brickwork-circuit states
					(Proposition~\ref{prop:brickwork_state})
					\newline
					Practical Hamiltonian ground states\textsuperscript{*}
					(Sec.~\ref{subsec:HamiltonianEntanglement})
				}
			}
			\\
			
			\addlinespace[0.25em]
			\hline
			\addlinespace[0.25em]
			
			\TableL{0.25\textwidth}{
				Fully inseparable entanglement
			}
			&
			\TableC{0.16\textwidth}{
				$\cO(\log n)$
			}
			&
			\TableL{0.52\textwidth}{
				\revise{
					Haar-random states
					(Sec.~\ref{subsec:fully-inseparable};
					Lemma~\ref{lem:Haar-random-states-localizable-entanglement})
					\newline
					Random stabilizer states\textsuperscript{*}
					(Sec.~\ref{subsec:fully-inseparable})
				}
			}
			\\
			
			\addlinespace[0.25em]
			\hline
			\addlinespace[0.25em]
			
			\TableL{0.25\textwidth}{
				Quantum magic
			}
			&
			\TableC{0.16\textwidth}{
				$\cO(1)$
			}
			&
			\TableL{0.52\textwidth}{
				\revise{
					Haar-random states (Sec.~\ref{sec:certifying_magic}; Lemma~\ref{lem:Haar-random-states-localizable-magic}) 
					\newline
					Magic-injection-and-scrambling circuits\textsuperscript{*}
					(Sec.~\ref{sec:certifying_magic})
				}
			}
			\\
			
			\addlinespace[0.25em]
			\hline
			\addlinespace[0.25em]
			
			\TableL{0.25\textwidth}{
				State fidelity
			}
			&
			\TableC{0.16\textwidth}{
				$\cO(1)$
			}
			&
			\TableL{0.52\textwidth}{
				\revise{
					Random graph states
					(Theorem~\ref{thm:constant-gap-graph-states})
					\newline
					Haar-random states
					(Theorem~\ref{thm:large_spectral})
					\newline
					Random brickwork-circuit states\textsuperscript{*}
					(Sec.~\ref{subsec:fidelity-Haar-random})
				}
			}
			\\
			
			\bottomrule
		\end{tabular*}
		
		\vspace{0.35em}
		\parbox{\textwidth}{%
			\raggedleft
			\footnotesize
			\revise{
				\textsuperscript{*} Numerics.\par
			}
		}
		
	\end{table*}
	
	\section{Overview of Methods and Results}\label{sec:summary}
	This work presents a unified and practical framework for quantum certification based solely on local Pauli measurements. 
	In certification tasks, one typically has reliable prior knowledge of the quantum system. Specifically, a pure target state $\psi_{AB}$ is known (for example, from a noiseless quantum circuit description), and the goal is to certify whether the experimentally realized mixed state $\rho_{AB}$ possesses certain desired properties. 
	We partition the system into a small subsystem $A$ and its large complement $B$. 
	\revise{Performing computational basis measurements on $B$ generates a measurement-induced ensemble on $A$, commonly referred to as a \emph{projected ensemble}.
		Specifically, the experimental state yields } $\mathcal{E}_\rho = \{p_\rho(\bmz), \rho_\bmz\}$ on $A$, while the target state generates $\mathcal{E}_\psi = \{p_\psi(\bmz), \psi_\bmz\}$, as illustrated in Fig.~\ref{fig:projected_state_set}a. 
	\revise{While such ensembles have recently attracted substantial attention for probing complex dynamics and deep thermalization~\cite{Choi2023RandomStateBenchmarking, Cotler2023EmergentDesign, Ho2022EmergentChaotic, Ippoliti2023DynamicalPurification, Chang2024ChargeConserving, Mark2024MaximumEntropy}, we reveal a powerful new capability: these projected ensembles can serve as robust local signatures of the original global quantum properties. }
	By locally comparing $\rho_\bmz$ and $\psi_\bmz$ on the small subsystem $A$, our approach efficiently certifies the global quantum properties of $\rho$. We demonstrate that a broad class of quantum properties can be certified in this way with constant sample complexity. This yields highly efficient certification protocols for complex quantum structures, such as nontrivial measurement-equivalent quantum phases and fully inseparable entanglement.

	\begin{figure}[!h]
		\centering
		\includegraphics[width=.48\textwidth]{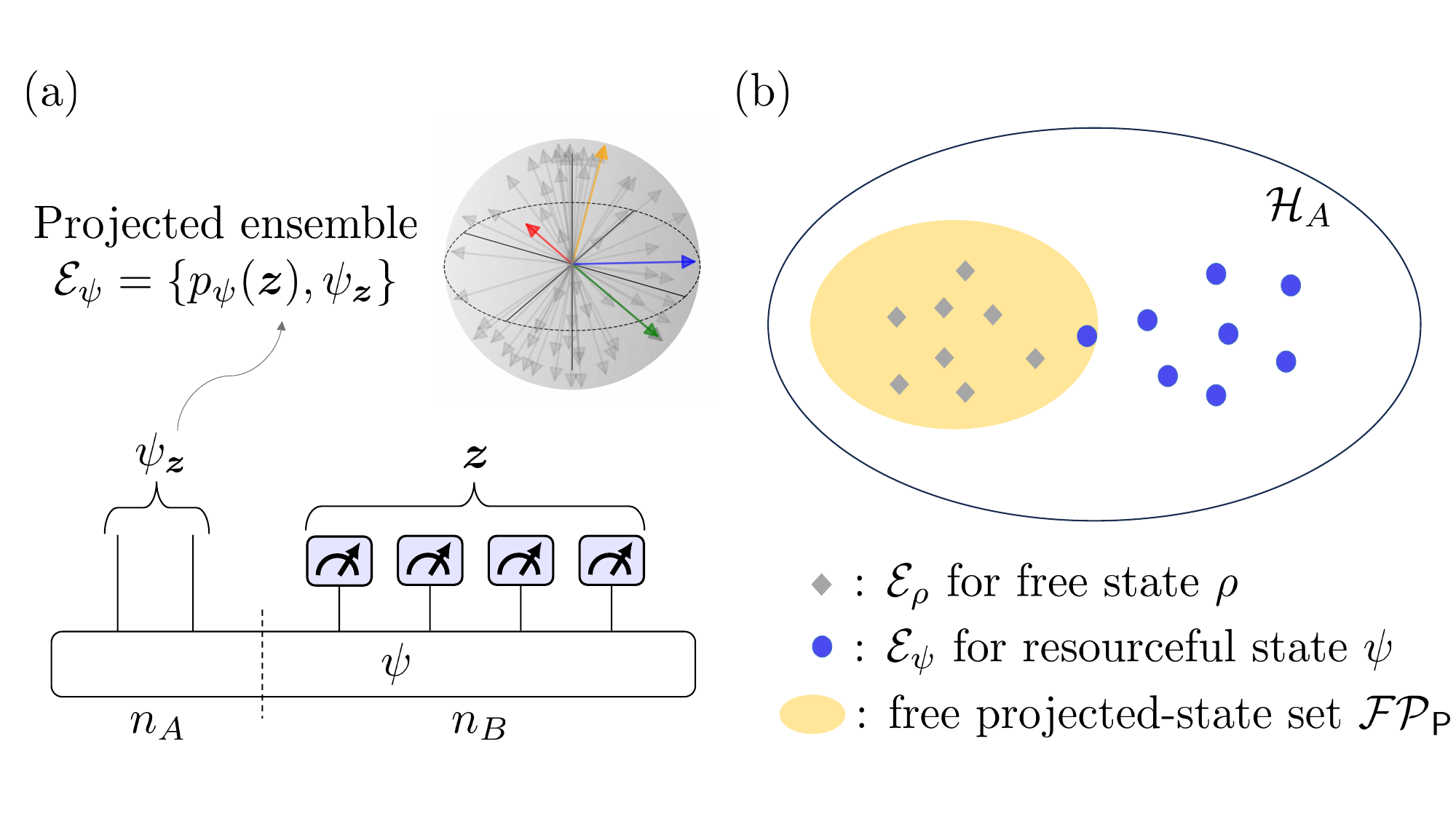}
		\caption{Localizable quantumness within projected states. 
			(a) Projected ensemble: The target state is partitioned into subsystems $A$ and $B$. Measuring $B$ in the computational basis yields an outcome $\bmz$ and a corresponding projected state $\psi_{\bmz}$ on $A$ with probability $p_{\psi}(\bmz)$, thereby defining the projected ensemble $\mathcal{E}_{\psi}$. 
			(b) Localizable quantumness: If the target state exhibits localizable quantumness, then its projected ensemble $\mathcal{E}_{\psi}$ differs significantly from the projected ensemble of any free state $\rho$. For instance, most projected states lie outside the free projected-state set $\freeprojset$. This difference can be utilized to certify the quantumness of the target state by examining the small subsystem $A$.}
		\label{fig:projected_state_set}
	\end{figure}
	
	Our certification framework relies on two key ingredients: (1) establishing that many quantum properties are localizable, meaning they concentrate into the small subsystem $A$ once the complement $B$ is measured, and (2) converting this observation into an efficient certification protocol by introducing and measuring a quantity we term the conditional fidelity \revise{(see Protocol~\ref{prot:certification})}. Together, these ingredients yield our main Result~\ref{thm:performance_certification_protocol_informal}, which guarantees that any such property can be certified both efficiently and robustly. We now outline these two ingredients, with details provided in Sec.~\ref{sec:protocol_framework}.
	
	The central concept of our approach is \emph{localizable quantumness}. For a broad class of physically relevant many-body states, essential quantum properties of the global system remain robustly encoded in small subsystems even after local projective measurements on the complement. Concretely, if the global target state $\psi$ possesses a localizable quantum property $\mathsf{P}$ (see Table~\ref{tab:results_summary} for examples)
	, then the projected ensemble $\mathcal{E}_\psi$ necessarily differs from $\mathcal{E}_\rho$ for any state $\rho$ that lacks $\mathsf{P}$, as shown in Fig.~\ref{fig:projected_state_set}b. This difference represents a local signature of the global property, thereby transforming the otherwise challenging task of global certification into a tractable local problem: certifying the properties of the projected ensemble $\mathcal{E}_\rho$.
	
	To harness localizable quantumness, we avoid the intractable task of directly evaluating projected-ensemble–averaged nonlinear quantities, such as localizable entanglement~\cite{Verstraete2004LocalizedEntanglement, Popp2005LocalizableEntanglement}. Instead, we compare the projected ensemble of the experimental state with that of the target. To this end, we introduce the notion of \emph{conditional fidelity}, which evaluates the local fidelity between each projected state $\rho_{\bmz}$ and its corresponding $\psi_{\bmz}$, conditioned on the measurement outcome $\bmz$ on $B$. 
	A naive strategy—estimating each fidelity $\tr(\psi_{\bm z}\rho_{\bm z})$—is inefficient because we have only single-copy access to $\rho_{\bm z}$ without post-selection. 
	In contrast, we estimate the ensemble-averaged local fidelity $\sum_{\bmz}p_{\rho}(\bmz) \tr(\psi_{\bm z}\rho_{\bm z})$ via randomized local Pauli measurements on the small subsystem $A$.
	\revise{This is achieved using the post-selection-free local shadow primitive introduced in Ref.~\cite{McGinley2024PostselectionFreeLearning} and also employed in the shadow-overlap protocol~\cite{Huang2024Certifying}}.
	By estimating the conditional fidelity, we can certify whether the property $\mathsf{P}$ is present or the state lies in the free set $\mathcal{F}_{\mathsf{P}}$ (states lacking $\mathsf{P}$), thereby turning the localizable quantumness into a robust witness of global quantum properties. In summary, we establish the following:
	
	\begin{result}[Certifying localizable quantumness, informal]\label{thm:performance_certification_protocol_informal}
		For any property $\mathsf{P}$ with associated free set $\mathcal{F}_{\mathsf{P}}$, if the target state $\psi$ exhibits localizable quantumness on a small subsystem $A$, then the protocol achieves: 
		\begin{enumerate}
			\item Using non-adaptive local measurements; 
			\item Soundness: all (possibly mixed) $\rho \in \mathcal{F}_{\mathsf{P}}$ are rejected with high probability;
			\item Completeness (Robustness): any $\rho$  within a constant trace distance of $\psi$ is accepted with high probability;
			\item Constant sample complexity: the number of required copies is $\mathcal{O}(1)$ for constant subsystem size $|A|= \cO(1)$.
		\end{enumerate}
	\end{result}

	The requirement of localizable quantumness is highly general~\cite{Verstraete2004LocalizedEntanglement, Popp2005LocalizableEntanglement, Zwerger2019DIGME, Ho2022EmergentChaotic, Choi2023RandomStateBenchmarking, Cotler2023EmergentDesign, Du2025SpacetimeComplexity, Niroula2024MagicPhaseTransition}. 
	\revise{As summarized in Table~\ref{tab:results_summary},  we establish both analytically and numerically that this phenomenon is indeed a universal feature across a wide range of quantum many-body systems.
		
		Built upon this ubiquitous physical phenomenon, our certification framework achieves powerful performance guarantees.}
	First, the guarantee of soundness for mixed states marks a significant advance over previous approaches, which often applied only to pure states when certifying complex quantum properties~\cite{Huang2024Shallow,Landau2024Shallow, Haug2023ScalableMeasureMagic, Kim2024LearningPhaseOfMatter}. 
	Second, the constant robustness also represents an important advantage over protocols whose robustness diminishes with system size~\cite{Huang2024Certifying, Gupta2025SingleQubitCertification}, ensuring practical certification in realistic noisy settings. 
	Finally, the guarantee of constant sample complexity yields substantial improvements over prior methods, as detailed below. 
	Combined with the use of only local Pauli measurements, this ensures experimental efficiency and scalability. 
	Taken together, these features establish a versatile and powerful framework for certifying diverse quantum properties.

	\textit{Complexity certification.}  
	A highlight of our results is the first explicit construction of complexity witnesses. 
	In Sec.~\ref{sec:complexity}, we show that circuit complexity, i.e., the minimal circuit depth required to prepare a state can be certified to exceed a constant threshold with constant sample complexity and robustness. 
	The key idea is to certify reliable signatures of circuit complexity, such as high entanglement, on a carefully chosen small subsystem. 
	This approach also extends to certifying \emph{measurement-assisted} circuit complexity, which can be exponentially smaller than standard state complexity and is therefore more challenging to certify~\cite{Piroli2021MeasurementAssistedQuantumCircuits, Lu2022MeasurementShortcut}. 
	Our framework thus opens a new pathway for probing intricate structures in many-body quantum systems. 
	As two important examples, it enables efficient certification of long-range entanglement~\cite{Chen2010LongRangeEntanglement, Wen2013topological} and nontrivial measurement-equivalent quantum phases~\cite{Tantivasadakarn2023HierarchyMeasurement}.  
	\begin{corollary}[Certifying quantum circuit complexity, informal]
		If the target state exhibits high localizable entanglement on a specific small subsystem, then constant (measurement-assisted) quantum circuit complexity can be certified with constant sample complexity while achieving constant robustness. This enables certification of long-range entanglement and nontrivial measurement-equivalent quantum phases with $\omega(1)$ sample complexity\footnote{In this work, ``$\omega(1)$ sample complexity'' means $\cO(f(n))$ samples for any function $f(n)=\omega(1)$.}.
	\end{corollary}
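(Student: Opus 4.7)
The plan is to instantiate the general framework of Theorem~\ref{thm:performance_certification_protocol_informal} with the property $\mathsf{P}_{d_0}$ standing for ``(measurement-assisted) circuit complexity exceeds $d_0$,'' whose free set $\mathcal{F}_{\mathsf{P}_{d_0}}$ consists of states preparable by depth-$\leq d_0$ circuits in $D$ spatial dimensions. The task reduces to exhibiting a localizable witness on a small subsystem $A$ whose value on $\psi$ provably separates from its value on every $\rho \in \mathcal{F}_{\mathsf{P}_{d_0}}$.

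The key structural step is a light-cone lemma for the free set: for any $\rho \in \mathcal{F}_{\mathsf{P}_{d_0}}$ and any outcome $\bmz$ of the computational-basis measurement on the complement $B$, the projected state $\rho_\bmz$ has Schmidt rank at most $\exp(\cO(d_0^D))$ across any bipartition $(A_1,A_2)$ of $A$. The proof would note that every contribution to such entanglement must be mediated by gates lying in the intersection of the causal cones of $A_1$ and $A_2$; in a depth-$d_0$ circuit these cones contain only $\cO(d_0^D)$ gates, each contributing at most a constant amount of Schmidt rank. I would then translate this Schmidt-rank bound into an upper bound on the conditional fidelity with any fixed high-entanglement reference state on $A$: if $\psi_\bmz$ has Schmidt rank comfortably exceeding $\exp(\cO(d_0^D))$ across the same cut for typical $\bmz$ (i.e., $\psi$ has high localizable entanglement on $A$), then $\tr(\psi_\bmz\rho_\bmz)$ is bounded away from $\tr(\psi_\bmz^2)$ on average, giving the required gap.

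Plugging this witness into Theorem~\ref{thm:performance_certification_protocol_informal} yields soundness against all mixed $\rho \in \mathcal{F}_{\mathsf{P}_{d_0}}$, completeness within a constant trace-distance ball around $\psi$, constant noise robustness, and $\cO(1)$ sample complexity, since $|A|=\cO(1)$ for constant $d_0$. For the applications, long-range entangled states are those requiring $\omega(1)$ circuit depth by definition~\cite{Chen2010LongRangeEntanglement}, while nontrivial measurement-equivalent phases require $\omega(1)$ measurement-assisted depth~\cite{Tantivasadakarn2023HierarchyMeasurement}. Choosing $d_0 = \omega(1)$ but sufficiently slowly growing keeps $|A| = \cO(\mathrm{polylog}\, n)$ and yields $\omega(1)$ sample complexity for certifying both phenomena, establishing the stated corollary.

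The main obstacle will be the light-cone lemma in the measurement-assisted setting, where classical feedforward can propagate information outside the naive unitary causal cone and thereby potentially inflate post-measurement entanglement. The remedy is to carefully decompose the adaptive circuit into rounds and argue that, conditioned on each intermediate measurement record, the effective unitary acting nontrivially within the causal cone of $A$ still has depth $\cO(d_0)$, so the Schmidt-rank bound $\exp(\cO(d_0^D))$ survives up to constants in the exponent. A secondary subtlety is ensuring that the ``typical $\bmz$'' for $\rho$ and for $\psi$ overlap sufficiently so that the conditional-fidelity gap remains robust under the $p_\rho(\bmz)$ weighting used in the estimator of Sec.~\ref{sec:protocol_framework}; this can be handled by arguing that a free state achieving small conditional fidelity on $\psi$-typical outcomes must place comparable weight on those outcomes, or by bounding the total-variation distance between $p_\rho$ and $p_\psi$ in terms of the global trace distance.
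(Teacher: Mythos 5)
Your overall architecture matches the paper's: bound the entanglement of projected states of low-depth circuits via a light-cone decomposition, use high localizable entanglement of $\psi$ to get a conditional-fidelity gap, and feed this into the general framework of Theorem~\ref{thm:performance_certification_protocol}. However, there are two genuine gaps.

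First, your translation from a Schmidt-\emph{rank} hypothesis to a fidelity gap does not work as stated. If $\rho_{\bmz}$ is (a mixture of) pure states of Schmidt rank at most $r=\exp(\cO(d_0^D))$ across $L\mid R$, then $\tr(\psi_{\bmz}\rho_{\bmz})\le\sum_{i=1}^{r}\lambda_i$, the sum of the top $r$ squared Schmidt coefficients of $\psi_{\bmz}$. Requiring that $\psi_{\bmz}$ have Schmidt rank ``comfortably exceeding'' $r$ does not bound this sum away from $1$: a state can have full Schmidt rank while being $\varepsilon$-close in fidelity to a product state. You need a quantitative condition on the Schmidt \emph{spectrum}. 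The paper's formal hypothesis is $E_{L\mid R}(\psi_{\bmz})\ge c\abs{L}+1$ with constant probability, and the gap is extracted via Fannes' inequality (Lemma~\ref{lem:fid_proj_entanglement}): a state with entropy $c\abs{L}+1$ must be at trace distance $\Omega(1)$ from every state with entropy $\le 8wd$, which then bounds $\fid(\psi_{\bmz})$ away from $1$. Your argument can be repaired by replacing the rank condition with an entropy condition, but as written the central inequality is unsupported.

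Second, and more seriously, your remedy for the measurement-assisted case fails. You propose to show that, conditioned on each intermediate measurement record, the effective circuit within the causal cone of $A$ still has depth $\cO(d_0)$, so a \emph{per-outcome} Schmidt-rank (or entropy) bound survives. This is false: mid-circuit single-qubit measurements with feedforward can deterministically create entanglement far outside the unitary light cone (e.g., a depth-$\cO(1)$ adaptive circuit prepares a cluster state and measures out a chain, teleporting a Bell pair between distant qubits), so individual projected states $\phi_{\bmz}$ of a constant-depth adaptive circuit can have entanglement up to $\abs{L}$ across $L\mid R$. The paper explicitly works around this: Lemma~\ref{lem:ent_adaptive_circuit} bounds only the \emph{average} $\bE_{\bmz}[E_{L\mid R}(\phi_{\bmz})]\le 12wd$ using monotonicity of entanglement under local measurements, and because only an average bound is available, the plain conditional-fidelity witness is no longer sound. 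The paper must replace it by the thresholded linear witness $\tilde{O}$ of Eq.~\eqref{eq:adaptive_complexity_witness} built on a ``$p$-likely'' free projected-state set, with a separate soundness/completeness analysis (Lemma~\ref{lem:performance_witnessp}). Your proposal contains neither the average-entanglement argument nor the modified witness, so the measurement-assisted half of the corollary is not established. (Your ``secondary subtlety'' about overlapping typical $\bmz$ is a non-issue: soundness in the paper is per-outcome — every $\rho_{\bmz}$ of a free state lies in $\freeprojset$ — and completeness follows directly from the operator norm bound on the witness against trace distance.)
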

	\revise{
		Concretely, these performance guarantees are established for deep-thermalized states~\cite{Choi2023RandomStateBenchmarking, Cotler2023EmergentDesign, Ho2022EmergentChaotic, Ippoliti2023DynamicalPurification, Chang2024ChargeConserving, Mark2024MaximumEntropy} and physically relevant random brickwork-circuit states (see Table~\ref{tab:results_summary}).}
	
	\textit{Entanglement certification.}  
	In Sec.~\ref{sec:certifying_entanglement}, we show that our method enables highly efficient certification of entanglement. For bipartite entanglement, we show that localizable entanglement on a small subsystem suffices to certify entanglement across exponentially many bipartitions with constant sample complexity. 
	More strikingly, we establish a highly efficient method for certifying \emph{fully inseparable entanglement}, namely entanglement across all possible bipartitions~\cite{Eisert2006MultiparticleEntanglement}, using only logarithmic sample complexity. Fully inseparable entanglement is a crucial resource for quantum error correction~\cite{Rodriguez2021CertificationGME} and for the activation of genuine multipartite entanglement~\cite{Palazuelos2022FullyInseparableGMEActivation}, making our certification protocol especially valuable. 
	\begin{corollary}[Certifying fully inseparable entanglement, informal]
		If the target state exhibits localizable entanglement on certain small subsystems, then fully inseparable entanglement can be certified with $\mathcal{O}(\log n)$ sample complexity while achieving constant robustness. 
	\end{corollary}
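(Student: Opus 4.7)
The plan is to reduce certifying fully inseparable entanglement to certifying bipartite entanglement across each edge of a carefully chosen connected graph on the $n$ qubits, and then invoke Theorem~\ref{thm:performance_certification_protocol_informal} edge by edge. Recall that a state is fully inseparable iff it is entangled across every non-trivial bipartition $(S,\bar S)$ of $[n]$. I would pick a spanning tree $G=(V,E)$ with $V=[n]$ and $|E|=n-1$, and designate the two-qubit subsystem $A_e=\{i,j\}$ for each edge $e=(i,j)\in E$. The key observation is that if the projected state on $A_e$ (after measuring its complement in the computational basis) is entangled, then the global state cannot be separable across any bipartition $(S,\bar S)$ that cuts $e$: a separable $\rho=\sum_\alpha p_\alpha\rho_\alpha^S\otimes\rho_\alpha^{\bar S}$ with $i\in S,\,j\in \bar S$ projects onto $\sum_\alpha p'_\alpha\,\sigma_\alpha^{(i)}\otimes\tau_\alpha^{(j)}$, which is $(i,j)$-separable. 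Since $G$ is connected, every non-trivial bipartition of $[n]$ is cut by at least one edge of $E$, so certifying that the projected ensemble is entangled on every $A_e$ rules out separability across every bipartition---yielding the desired soundness for mixed states.

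Next I would invoke the main certification protocol on each edge. The corollary's hypothesis that the target state $\psi$ exhibits localizable entanglement on the small subsystems $\{A_e\}_{e\in E}$ matches the premise of Theorem~\ref{thm:performance_certification_protocol_informal} instantiated with $\mathsf{P}=$ bipartite entanglement across $(i,j)$, yielding a per-edge protocol with constant sample complexity and constant-level noise robustness. Amplifying each edge test to error probability $\cO(1/n)$ and applying a union bound over the $n-1$ edges then ensures overall completeness and soundness with high probability, establishing the qualitative part of the corollary.

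To reach the claimed total sample complexity $\cO(\log n)$---rather than the naive $\cO(n\log n)$ from independent edge tests---the individual tests must be multiplexed onto a shared pool of measurements. Since each per-edge conditional-fidelity witness is a local observable of constant support varying only in which edge of $E$ it probes, a unified randomized local-Pauli measurement scheme combined with classical-shadow-style post-processing should allow a single batch of $\cO(\log n)$ samples to simultaneously estimate all $n-1$ edge witnesses, at which point a union bound closes the argument. The constant-level noise robustness is inherited directly from Theorem~\ref{thm:performance_certification_protocol_informal}. I expect the main obstacle to be precisely this sample-sharing analysis: the vanilla per-edge protocol measures the complement of $A_e$ in the computational basis, so samples are naturally edge-specific, and constructing a shared scheme requires a careful variance estimate showing that the cost of simultaneous estimation over all $n-1$ edges grows only logarithmically in $n$, relying crucially on the constant locality of each $A_e$.
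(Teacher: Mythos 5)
Your proposal follows essentially the same route as the paper (Proposition~\ref{prop:fully_inseparable_entanglement}): a spanning-tree reduction (the paper uses the path $\{i,i+1\}$ but notes any connected set of $n-1$ pairs works), soundness via the fact that local measurements on the complement cannot create entanglement across a separable cut, and a union bound with per-test failure probability $\delta/(n-1)$. The ``main obstacle'' you flag is resolved exactly as you anticipate: the paper replaces the computational-basis measurement on the complement by single-qubit random Pauli measurements on \emph{all} qubits, so one dataset serves every edge, with the outcomes off $A_e$ selecting the projected state and the outcomes on $A_e$ furnishing the local classical shadow. Two small points you leave implicit: (i) this switch changes the hypothesis --- the quantity that must be $\Omega(1)$ is the random-basis localizable entanglement $\widetilde{\mathrm{LE}}_{\psi}(i,i+1)$ rather than the fixed-basis version; and (ii) no new variance estimate is needed, since each per-edge estimator still has variance $\cO(4^{2}+1)=\cO(1)$ regardless of how many edges share the data, so the entire $\log n$ cost comes from the union bound alone.
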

	\revise{
		Beyond establishing the performance of this entanglement certification method for deep-thermalized states and random brickwork-circuit states, we also numerically demonstrate its efficiency across random stabilizer states and various Hamiltonian systems (see Table~\ref{tab:results_summary}).}
	
	\textit{Magic certification.}  
	We further apply our framework to quantum magic (or non-stabilizerness) certification in Sec.~\ref{sec:certifying_magic}, with the free set consisting of convex mixtures of all stabilizer states. Since local Pauli measurements are free operations in the magic resource theory and preserve stabilizerness, the presence of localizable quantum magic after measuring a large complement in the computational basis directly certifies the magic of the target state. 
	\begin{corollary}[Certifying quantum magic, informal]
		If the target state exhibits localizable magic on a small subsystem, then quantum magic can be certified with $\mathcal{O}(1)$ sample complexity while achieving constant robustness. 
	\end{corollary}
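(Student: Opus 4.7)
The plan is to instantiate the general framework of Theorem~\ref{thm:performance_certification_protocol_informal} with $\mathsf{P}$ chosen as non-stabilizerness, so that the free set $\freeset$ consists of all convex mixtures of pure stabilizer states on the full system. The core task is to verify the hypothesis of localizable quantumness for magic: namely, that the projected ensemble of the target on the small subsystem $A$ is separated from the projected ensembles generated by any $\rho \in \freeset$.

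First, I would prove a closure-under-projection lemma: for any $\rho \in \freeset$, every projected state $\rho_{\bmz}$ on $A$ is a convex mixture of stabilizer states on $A$. Writing $\rho = \sum_i p_i |\sigma_i\rangle\!\langle\sigma_i|$, the computational-basis measurement on $B$ is a product of single-qubit Pauli-$Z$ measurements, which are Clifford operations that map stabilizer states to stabilizer states. Hence each post-measurement component factorizes as $|\bmz\rangle_B \otimes |\sigma_i^{\bmz}\rangle_A$ with $|\sigma_i^{\bmz}\rangle$ a stabilizer state on $A$. By linearity, $\rho_{\bmz}$ lies in the stabilizer polytope on $A$, so the conditional-fidelity functional evaluated against the nearest stabilizer mixture on $A$ saturates at $1$ for every free $\rho$.

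Next, I would translate localizable magic into a quantitative conditional-fidelity gap. The hypothesis means that with constant probability over outcomes $\bmz$, the projected states $\psi_{\bmz}$ lie at a constant distance from the stabilizer polytope on $A$. Thus the gap
\begin{equation}
1 - \sum_{\bmz} p_\psi(\bmz)\,\max_{\sigma\in\mathrm{STAB}(A)} \tr(\psi_{\bmz}\sigma)
\end{equation}
is a constant $\Delta=\Theta(1)$ independent of the total system size $n$. Invoking Theorem~\ref{thm:performance_certification_protocol_informal} then yields $\cO(1)$ sample complexity, constant noise robustness, and soundness for arbitrary mixed $\rho\in\freeset$, since $|A|=\cO(1)$ keeps the local Pauli-measurement estimation on $A$ at $\cO(1)$ sample cost.

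The principal obstacle is establishing the constant-gap condition: one must show that physically relevant target states possess a dimension-independent amount of localizable magic, i.e., that the conditional gap does not shrink as the system grows. I expect this to follow from existing results showing that projected ensembles of generic many-body states rapidly approach Haar-random on constant subsystems, together with the fact that Haar-random states are bounded away from the stabilizer polytope with overwhelming probability. A secondary technical point is ensuring that, on constant-sized $A$, one can efficiently evaluate $\max_{\sigma\in\mathrm{STAB}(A)}\tr(\psi_{\bmz}\sigma)$; since $A$ has $\cO(1)$ qubits, the number of stabilizer extreme points on $A$ is $\cO(1)$ and the optimization is trivial.
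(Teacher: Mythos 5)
Your proposal follows essentially the same route as the paper: it identifies the free projected set as the stabilizer polytope on $A$ via closure of stabilizer states under computational-basis (Pauli) measurements, defines localizable magic $\mathrm{LM}(\psi)=\mathrm{LQ}_{\mathsf{Magic}}(\psi)$ as the conditional-fidelity gap, and invokes the general performance theorem to obtain $\cO(1)$ samples, mixed-state soundness, and constant robustness when $n_A=\cO(1)$ and $\mathrm{LM}(\psi)=\Omega(1)$; the paper likewise justifies the constant-gap hypothesis for generic states via deep thermalization. The only minor difference is that the corollary already assumes localizable magic as a hypothesis, so the "principal obstacle" you flag is, in the paper, handled separately (analytically for generic states and numerically for the magic-injection-and-scrambling model) rather than being part of the corollary's proof.
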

	
	\revise{
		These performance guarantees are established for typical Haar-random states.
		Furthermore, because magic is a fundamental resource for fault-tolerant quantum computing~\cite{Bravyi2005Universal}, we numerically demonstrate the scalability of our approach on magic-injection circuits for systems up to 512 qubits.}
	To the best of our knowledge, this is the first protocol to certify quantum magic using only single-copy local measurements while achieving soundness for mixed states. 
	Consequently, our framework offers the potential for exponential improvements in sample complexity for relevant experiments, such as detecting magic phase transitions~\cite{Niroula2024MagicPhaseTransition} and benchmarking near-term fault-tolerant demonstrations.
	
	\textit{Fidelity certification.}  
	Finally, we show that global state fidelity is also robustly encoded within small subsystems in Sec.~\ref{sec:fidelity}. 
	Specifically, we propose a fidelity certification protocol that compares local fidelities on a small subsystem after measuring the large complement in local Pauli bases.
	\revise{
		Our protocol achieves constant sample complexity and robustness for almost all quantum states, including random graph states.
	}
	Since every stabilizer state is local-Clifford equivalent to a graph state, this yields the first proof that generic stabilizer states can be efficiently certified using solely local random Pauli measurements.
	\revise{This constant scaling dramatically improves upon state-of-the-art non-adaptive methods,} whose costs grow \revise{polynomially} with system size and whose robustness against state imperfection diminishes~\cite{Huang2024Certifying, Gupta2025SingleQubitCertification}, as well as over adaptive local-measurement approaches that require challenging classical feedback and long-lived quantum memory to store states between measurement rounds~\cite{Gupta2025SingleQubitCertification, Li2025UniversalVerification, Coladangelo2026TwoBases} \revise{(see Sec.~\ref{sec:fidelity} for a detailed comparison)}.
	
	\begin{result}[Certifying state fidelity, informal]
		Conditional fidelity robustly certifies the global fidelity of typical pure target states. 
		\revise{Specifically, the protocol provably achieves constant sample complexity and constant robustness for almost all quantum states (Theorem~\ref{thm:large_spectral}), and for random graph states (Theorem~\ref{thm:constant-gap-graph-states}).}
	\end{result}
	
	\section{Mathematical framework and protocol}\label{sec:protocol_framework}
	Having outlined the key ideas and results of our work, we now present the certification framework and protocol in detail. We begin by formalizing the notion of localizable quantumness, which serves as the foundation of our approach. Building on this, we show how to construct property witnesses through conditional fidelity. Then, we present the complete protocol scheme, together with an analysis of its performance guarantees.

	\subsection{Localizable quantumness and conditional fidelity}
	We partition the $n$-qubit system into a small subsystem $A$ (with $n_A$ qubits) and its complement $B$ (with $n_B$ qubits, so that $n_A + n_B = n$). In the computational basis of $B$, the target state can be written as
	\begin{equation}
		\ket{\psi} =\sum_{\zsum}
		\sqrt{p_\psi(\bm z)}\,
		\ket{\psi_{\bm z}}_A\otimes\ket{\bm z}_B,
	\end{equation}
	which induces the projected ensemble $\mathcal{E}_{\psi} \coloneqq \{p_{\psi}(\bmz), \psi_{\bmz}\}$ (see Fig.~\ref{fig:projected_state_set}a). For a (possibly mixed) experimental state $\rho$, computational-basis measurements on $B$ yields outcome $\bmz$ with a projected state on $A$,
	\begin{equation}\label{eq:projected_state_rho}
		\rho_{\bmz} \coloneqq \frac{\tr_B[(I_A \otimes \ketbra{\bmz}_B) \rho]}{p_{\rho}(\bmz)}.
	\end{equation}
	with probability $p_{\rho}(\bmz) \coloneqq \tr[(I_A \otimes \ketbra{\bmz}_B) \rho]$. 
	
	To quantify the strength of localizable quantumness for a property $\mathsf{P}$, we first specify its associated free set $\freeset \subseteq \mathcal{D}(\mathcal{H}_{AB})$, consisting of all states that do not possess $\mathsf{P}$ (e.g., the separable states when $\mathsf{P}$ is entanglement). Local projective measurements on subsystem $B$ maps each $\rho\in\freeset$ to a family of projected states on $A$, thereby inducing the free projected-state set
	\begin{equation}\label{eq:free_proj_set}
		\freeprojset \coloneqq \{\rho_{\bm{z}} : \rho \in \freeset,\; p_{\rho}(\bm{z}) \neq 0\}.
	\end{equation}
	For the target state $\psi$, we then define the maximal fidelity at each outcome $\bm{z}$ by
	\begin{equation}
		\mathrm{Fid}_{\mathsf{P}}(\psi_{\bm{z}}) \coloneqq \sup_{\sigma \in \freeprojset}\,\mathrm{tr}(\sigma\,\psi_{\bm{z}}).
	\end{equation}
	Small values of $\mathrm{Fid}_{\mathsf{P}}(\psi_{\bm{z}})$ indicates strong localizable quantumness—i.e., $\psi_{\bm{z}}$ remains far from all property-free projected states on $A$ (Fig.~\ref{fig:projected_state_set}b). Averaging over outcomes yields our localizable-quantumness metric
	\begin{equation}\label{eq:cond_infid}
		\mathrm{LQ}_{\mathsf{P}}(\psi)\coloneqq \sum_{\bm{z}\in\{0,1\}^{n_B}} p_{\psi}(\bm{z})\,\bigl[1-\mathrm{Fid}_{\mathsf{P}}(\psi_{\bm{z}})\bigr].
	\end{equation}
	This quantity plays a central role in our certification protocols, serving as the gap that separates the target state from all property-free states (Lemma~\ref{lem:eta_witness}).
	Unlike localizable entanglement~\cite{Verstraete2004LocalizedEntanglement}, we do not optimize over measurement bases on $B$, as fixing a simple local measurement scheme minimizes experimental requirements.
	
	We also consider random-basis measurements on $B$: independently for each qubit, randomly choose a Pauli measurement basis $X/Y/Z$. Equivalently, the single-qubit POVM on $B$ is
	\begin{equation}\label{eq:POVM}
		\left\{\frac{1}{3}\ketbra{0},\, \frac{1}{3}\ketbra{1},\, \frac{1}{3}\ketbra{+},\, \frac{1}{3}\ketbra{-},\, \frac{1}{3}\ketbra{+i},\, \frac{1}{3}\ketbra{-i}\right\}.
	\end{equation}
	Let $\bm x\in\{0,1,+,-,+i,-i\}^{n_B}$ (abbreviated as $\mathsf{X}^{n_B}$) denote the outcome string. The projected state on $A$ and its probability are
	\begin{equation}
		\begin{split}
			\ket{\psi_{\bm x}}
			&=\frac{(I_A \otimes \bra{\bm x}_B)\ket{\psi}}{\sqrt{3^{n_B} \tilde{p}_{\psi}(\bmx)}}, \\
			\tilde{p}_\psi(\bm x)&=3^{-n_B}\,\mathrm{tr}\bigl[(I_A\otimes\ketbra{\bm x}_B)\,\psi\bigr].
		\end{split}
	\end{equation}
	The random-basis metric of localizable quantumness is then
	\begin{equation}
		\widetilde{\mathrm{LQ}}_{\mathsf{P}}(\psi)\coloneqq \sum_{\bm x\in\Bxrange} \tilde{p}_{\psi}(\bm x)\,\bigl[1-\mathrm{Fid}_{\mathsf{P}}(\psi_{\bm x})\bigr].
	\end{equation}
	This variant preserves minimal experimental requirements while enabling extensive data reuse and flexible post-processing from a single measurement dataset. 
	These advantages will be leveraged in our entanglement certification scheme. In what follows, we present the protocol under computational-basis measurements on $B$, and the extension to the random-basis setting is straightforward.
	
	We certify the target property $\mathsf{P}$ by comparing the projected ensemble of the experimental state $\rho$ with that of the target $\psi$. Formally, we define the observable
	\begin{equation}
		\begin{split}
			\propwitness &\coloneqq \sum_{\bmz} [\psi_{\bmz} - \fid(\psi_{\bm z}) I] \otimes \ketbra{\bmz} \\
			\eta_{\psi}(\rho) &\coloneqq \tr(\propwitness \rho) \\
			&= \sum_{\bmz} p_{\rho}(\bmz) [\tr(\psi_{\bmz} \rho_{\bmz}) - \fid(\psi_{\bm z})]
		\end{split}
	\end{equation}
	We refer to such an ensemble-averaged quantity as a conditional fidelity, since it averages local fidelities conditioned on measurement outcomes of $B$. Each local fidelity is shifted by the constant $\fid(\psi_{\bm z})$, ensuring that the operator $\psi_{\bm z}-\fid(\psi_{\bm z})I$ acts as a valid property witness, taking positive expectation only for projected states lying outside the free set $\freeprojset$. 
	
	\begin{lemma}[$\eta_{\psi}(\rho)$ is a valid witness]\label{lem:eta_witness}
		The conditional fidelity $\eta_{\psi}$ satisfies: 
		\begin{enumerate}
			\item Soundness: If $\rho\in\freeset$, then $\eta_{\psi}(\rho)\le 0$.
			\item Completeness: For the target $\psi$, $\eta_{\psi}(\psi)= \LQ(\psi)$.
		\end{enumerate}
		Consequently, the gap $ \LQ(\psi)$ separates $\psi$ from all property-free states, making $\eta_{\psi}$ a valid witness for $\mathsf{P}$.
	\end{lemma}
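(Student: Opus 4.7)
The lemma has two independent claims, and both unfold directly from the definitions once the conditional-fidelity observable $\propwitness$ is unpacked. The plan is to prove soundness and completeness separately and then combine them.

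\textbf{Soundness.} The core observation is that computational-basis measurement on $B$ maps the free set $\freeset$ into the free projected-state set $\freeprojset$: by the definition of $\freeprojset$ in Eq.~\eqref{eq:free_proj_set}, for any $\rho \in \freeset$ and any outcome $\bmz$ with $p_\rho(\bmz)\neq 0$, the projected state $\rho_{\bmz}$ defined in Eq.~\eqref{eq:projected_state_rho} lies in $\freeprojset$. I would first verify from the displayed expression for $\propwitness$ that
\begin{equation*}
\eta_{\psi}(\rho) = \sum_{\bmz} p_{\rho}(\bmz)\bigl[\tr(\psi_{\bmz}\rho_{\bmz}) - \fid(\psi_{\bmz})\bigr],
\end{equation*}
which follows by tracing out $B$ against the projector $\ketbra{\bmz}$ and using Eq.~\eqref{eq:projected_state_rho}. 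Then, by the defining supremum $\fid(\psi_{\bmz}) = \sup_{\sigma \in \freeprojset}\tr(\sigma\psi_{\bmz})$, each bracket is non-positive whenever $\rho\in\freeset$. Since the weights $p_{\rho}(\bmz)$ are non-negative, summation preserves the inequality, so $\eta_{\psi}(\rho)\le 0$.

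\textbf{Completeness.} Substituting $\rho = \psi$ gives $p_{\rho}(\bmz) = p_{\psi}(\bmz)$ and $\rho_{\bmz} = \psi_{\bmz}$. Because $\psi$ is pure, each projected state $\psi_{\bmz}$ is also pure, so $\tr(\psi_{\bmz}\psi_{\bmz}) = 1$. Plugging into the ensemble-averaged expression yields
\begin{equation*}
\eta_{\psi}(\psi) = \sum_{\bmz} p_{\psi}(\bmz)\bigl[1 - \fid(\psi_{\bmz})\bigr] = \LQ(\psi),
\end{equation*}
by the definition of $\LQ(\psi)$ in Eq.~\eqref{eq:cond_infid}. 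Combining the two parts gives a gap of at least $\LQ(\psi)$ between the target and any property-free state, which is the ``consequently'' conclusion.

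\textbf{Main obstacle.} There is essentially no conceptual obstacle; the only subtlety worth stating carefully is the bookkeeping step that relates the operator expression for $\propwitness$ to the ensemble-averaged formula for $\eta_{\psi}(\rho)$, so that the sum is correctly weighted by $p_{\rho}(\bmz)$ rather than by $p_{\psi}(\bmz)$. Getting this step right is what ensures the soundness argument goes through for the experimental state $\rho$ rather than for the target $\psi$, and it is the one place where a reader could otherwise be confused.
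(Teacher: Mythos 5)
Your proof is correct and follows essentially the same route as the paper's: soundness via the observation that projected states of free states lie in $\freeprojset$ so each bracket $\tr(\psi_{\bmz}\rho_{\bmz})-\fid(\psi_{\bmz})$ is non-positive, and completeness by direct substitution of $\rho=\psi$ using purity of the projected states. The extra bookkeeping you spell out (that the ensemble average is weighted by $p_\rho(\bmz)$) is implicit in the paper's displayed definition of $\eta_\psi(\rho)$ and is a reasonable thing to make explicit.
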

	
	\begin{proof}
		If $\rho\in\freeset$, then for every $\bm z$ with $p_\rho(\bm z)>0$ we have $\rho_{\bm z}\in\freeprojset$. The definition of maximal fidelity implies $\tr\bigl(\psi_{\bm z}\rho_{\bm z}\bigr) - \fid(\psi_{\bmz}) \le 0$. Averaging over outcomes yields $\eta_{\psi}(\rho)\le 0$.
		For the target state $\psi$,
		\begin{equation}\label{eq:target_value}
			\tr(\propwitness \psi)=\sum_{\bmz} p_{\psi}(\bmz)[1-\fid(\psi_{\bmz})].
		\end{equation}
		which equals $\LQ(\psi)$ by definition.
	\end{proof}
	
	By continuity, $\eta_{\psi}$ also certifies $\mathsf{P}$ for any state $\rho$ sufficiently close to $\psi$ (Fig.~\ref{fig:conditional_fidelity}b). 
	Moreover, replacing $\fid(\psi_{\bm z})$ in $\propwitness$ with suitable upper bounds $f(\psi_{\bmz})$ also suffices for certification. 
	This slightly reduce the gap in Lemma~\ref{lem:eta_witness} to
	\begin{equation}\label{eq:delta_upper_bounds}
		\Delta = \sum_{\bmz} p_{\psi}(\bmz) [ 1- f(\psi_{\bmz})],
	\end{equation}
	but in practice, such bounds are often easier to compute and readily available. 
	
	\begin{figure*}
		\centering
		\includegraphics[width=0.7\linewidth]{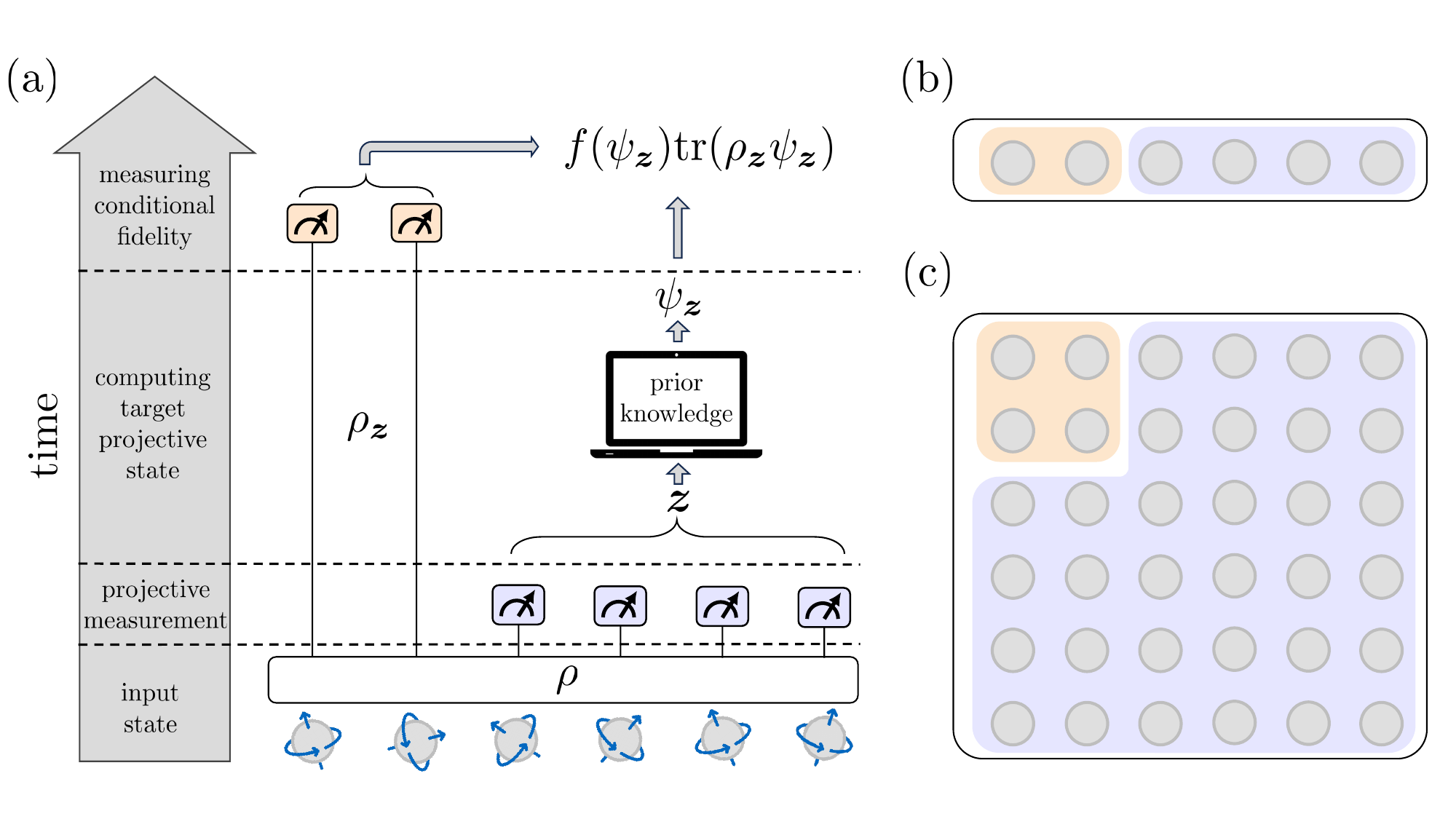}
		\caption{Certifying quantum properties via conditional fidelity.  (a) \revise{Workflow of Protocol~\ref{prot:certification}}.  Computational-basis measurements (shown in blue) are performed on a subsystem of $\rho$, yielding an outcome $\bm z$ and the corresponding projected state $\rho_{\bm z}$.  Randomized Pauli measurements (shown in orange) are then applied to $\rho_{\bm z}$.  Leveraging prior knowledge of $\psi_{\bm z}$, we post-process the measurement data to obtain estimators of $\tr(\psi_{\bmz} \rho_{\bmz})$.  (b) Decision rule and robustness. By selecting an appropriate threshold $\eta^*$ and estimating the conditional fidelity $\eta_{\psi}$ from the measurement data, states $\rho$ that are close to $\psi$ are accepted, while property-free states are rejected.
		}
		\label{fig:conditional_fidelity}
	\end{figure*}
	
	\subsection{Protocol scheme and performance}\label{subsec:measurement_scheme}
	We now turn the above construction into a practical, experimentally implementable protocol and analyze its performance (see Appendix~\ref{app:prelim} for proofs).
	Protocol~\ref{prot:certification} details our protocol scheme, which proceeds in three stages: (i) Choose a partition $A \cup B$ such that the subsystem $A$ exhibits large localizable quantumness. (ii) Perform measurements on $\rho$ and record the measurement outcomes. (iii) Using the measurement record together with prior knowledge of $\psi$, compute the estimator of $\eta_{\psi}(\rho)$ and decide whether $\rho$ possesses the target property.
	
	\begin{table}[!htbp]
		\begin{protocol}
			{Protocol for certifying quantum properties}
			{prot:certification}
			
			\small
			
			\textbf{Input:} Quantum state $\rho$; classical description of a target state
			$\psi$; failure probability $\delta\in(0,1)$.
			
			\textbf{Output:} $\accept$ or $\reject$.
			\par
			
			\medskip
			\textbf{Pre-computation:}
			\begin{protocolsteps}
				\item Find a partition $A\cup B$.
			\end{protocolsteps}
			
			\smallskip
			\textbf{Experiment:}
			\begin{protocolsteps}[start=2]
				\item
				Measure subsystem $B$ of $\rho$ in the computational basis,
				obtaining an outcome $\bm z$.
				
				\item
				Measure subsystem $A$ of $\rho_{\bm z}$ in a random Pauli basis,
				obtaining an outcome $\bm x$.
				
				\item
				Repeat Steps~2 and~3 for $T$ rounds, obtaining
				$\{(\bm z_i,\bm x_i)\}_{i=1}^{T}$.
			\end{protocolsteps}
			
			\smallskip
			\textbf{Post-processing:}\par
			\begin{protocolsteps}[start=5]
				\item
				Compute the estimator $\omega_{i}$ of $\tr(\psi_{\bmz_{i}}\rho_{\bmz_{i}})$ for $1 \le j \le T$ using local classical shadow, and compute the median-of-means estimator $\omega$ using $\{\omega_{i} - \fid(\psi_{\bmz_i})\}_{j=1}^{T}$.
				
				\item
				Set $\eta^\ast = \LQ(\psi)/3$. If $\omega \le \eta^\ast$, output $\reject$; otherwise, output $\accept$. 
			\end{protocolsteps}
		\end{protocol}
	\end{table}
	
	In the protocol, we need to evaluate $\LQ(\psi)$ for the chosen partition and $\fid(\psi_{\bm z})$ for the sampled projected states. 
	Rather than exhaustively enumerating the full projected ensemble $\mathcal{E}_\psi$, one can reliably estimate $\LQ(\psi)$ by sampling only a small subset of projected states. 
	As discussed, it is not necessary to compute $\fid(\psi_{\bm z})$ exactly; replacing it with suitable upper bounds $f(\psi_{\bm z})$ suffices. 
	This substitution reduces the gap from $\LQ(\psi)$ to a smaller value $\Delta$ (Eq.~\eqref{eq:delta_upper_bounds}), and we may adjust the threshold to $\eta^* = \Delta/3$ accordingly. 
	In all applications considered here, subsystem $A$ is of constant size, which makes the computation of $\fid(\psi_{\bm z})$ or its upper bound efficient. 
	Throughout this section, we therefore treat the partition $A \cup B$ together with the values of $\LQ(\psi)$ and $\fid(\psi_{\bm z})$ as given, deferring their explicit construction to the discussion of specific applications.
	
	In each experimental round $i$, we first measure subsystem $B$ in the computational basis $\{\ket{\bm z}\}_{\bm z\in\{0,1\}^{n_B}}$, obtaining a bit string $\bm z_i$ drawn from $p_\rho(\bm z)$. Conditional on this outcome, we then measure each qubit of $A$ in an independently and uniformly chosen random single-qubit Pauli basis using the POVM of Eq.~\eqref{eq:POVM}, recording the outcome $\bm x_i\in\Axrange$. Repeating for $T$ rounds yields the dataset $\{(\bm z_i,\bm x_i)\}_{i=1}^{T}$.

	In the post-processing step, we use the collected data to evaluate $\eta_{\psi}(\rho)$.  
	The informationally complete POVM on subsystem $A$ enables an unbiased estimator $\omega_i\coloneqq \tr\Bigl\{\psi_{\bmz_i} \Bigl[\bigotimes_{j \in A} (3\ketbra{\bmx_{i,j}} - I_j)\Bigr]\Bigr\}$ of $\tr(\psi_{\bmz_i}\rho_{\bmz_i})$ using \revise{local classical shadow~\cite{Huang2020Predicting, McGinley2024PostselectionFreeLearning, Huang2024Certifying}}. The final estimator $\omega$ is given by applying the median-of-means procedure to $T$ estimators $\{\omega_i - \fid(\psi_{\bmz_i})\}_{i=1}^{T}$. 
	
	Finally, we certify the property by comparing the estimator to $\eta^\ast = \LQ(\psi)/3$. This decision threshold is placed between the maximal free-state value ($\le 0$) and the ideal value $\LQ(\psi)$, leaving margins of $\LQ(\psi)/3$ above $0$ and $2\LQ(\psi)/3$ below $\LQ(\psi)$. These margins absorb statistical fluctuations and experimental noise. Our protocol is summarized in Fig.~\ref{fig:conditional_fidelity}.
	We have the following performance guarantee: 
	\begin{theorem}[Performance of Protocol~\ref{prot:certification}, formal version of Result \ref{thm:performance_certification_protocol_informal}]\label{thm:performance_certification_protocol}
		Given a property $\mathsf{P}$, a target state $\psi$, and a bipartition $A\cup B$. Protocol~\ref{prot:certification} satisfies
		\begin{enumerate}
			\item (Soundness): If the input state $\rho\in\freeset$, the protocol outputs $\reject$ with probability at least $1-\delta$.
			\item (Completeness): If $\rho$ satisfies $\dtr(\psi,\rho)\le\varepsilon$ with $\varepsilon =\LQ(\psi) / 6$, the protocol outputs $\accept$ with probability at least $1-\delta$.
		\end{enumerate}
		Moreover, the protocol performs local Pauli measurements on $\rho$ with sample complexity 
		\begin{equation}\label{eq:T_value}
			T = \frac{243 \ln(\delta^{-1}) \sigma^2}{\LQ^2(\psi)},
		\end{equation}
		where $\sigma^{2}=4^{n_{A}}+1$.  
		
		When $n_{A}=\cO(1)$, $\LQ(\psi)=\Omega(1)$, the sample complexity is $T=\cO(1)$, while achieving $\varepsilon =\Omega(1)$ robustness.
	\end{theorem}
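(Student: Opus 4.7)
The plan is to reduce both soundness and completeness to a single concentration statement: namely that the median-of-means estimator $\omega$ lies within $\LQ(\psi)/3$ of $\eta_\psi(\rho)$ with probability at least $1-\delta$. Once this concentration is established, Lemma~\ref{lem:eta_witness} immediately handles soundness (it gives $\eta_\psi(\rho)\le 0$ for $\rho\in\freeset$, so $\omega\le\LQ(\psi)/3=\eta^*$), and a short Lipschitz argument handles completeness: since the witness operator $\propwitness$ is block-diagonal in the $B$-basis with each block of the form $\psi_{\bm z}-\fid(\psi_{\bm z})I$, its operator norm is bounded by $1$, so $|\eta_\psi(\psi)-\eta_\psi(\rho)|\le 2\,\dtr(\psi,\rho)\le\LQ(\psi)/3$, yielding $\eta_\psi(\rho)\ge 2\LQ(\psi)/3$ and hence $\omega\ge\LQ(\psi)/3=\eta^*$ on the good event.

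The concentration statement itself decomposes into two parts. First, unbiasedness: by iterating the expectation over $\bm z\sim p_\rho$ and over the random Pauli bases on $A$, the classical-shadow identity $\mathbb{E}_{\bm x}\bigl[\bigotimes_{j\in A}(3\ketbra{\bm x_{i,j}}-I_j)\bigr]=\rho_{\bm z}$ gives $\mathbb{E}[\omega_i-\fid(\psi_{\bm z_i})]=\sum_{\bm z}p_\rho(\bm z)[\tr(\psi_{\bm z}\rho_{\bm z})-\fid(\psi_{\bm z})]=\eta_\psi(\rho)$. Second, a variance bound: applying the law of total variance conditioned on $\bm z$, the inner (shadow) variance is controlled by the standard local-Pauli shadow norm bound $4^{n_A}\|\psi_{\bm z}\|_\infty^2\le 4^{n_A}$, and the outer variance over $\bm z$ is at most $1$ since $\tr(\psi_{\bm z}\rho_{\bm z})-\fid(\psi_{\bm z})\in[-1,1]$. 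Summing gives $\sigma^2\le 4^{n_A}+1$, which matches the theorem statement.

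The concentration is then a direct application of the standard median-of-means inequality: partitioning the $T$ samples into $K=\Theta(\log\delta^{-1})$ groups and taking the median of their means yields deviation at most $\LQ(\psi)/3$ with probability $\ge 1-\delta$ whenever $T=\Theta(\sigma^2\log(\delta^{-1})/\LQ(\psi)^2)$, giving the stated $T=243\ln(\delta^{-1})\sigma^2/\LQ^2(\psi)$ after tracking constants. For constant $n_A$ and $\LQ(\psi)=\Omega(1)$, both $\sigma^2$ and $\LQ(\psi)^{-2}$ are $\cO(1)$, so $T=\cO(\log\delta^{-1})$ and the robustness $\varepsilon=\LQ(\psi)/6=\Omega(1)$.

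I do not expect a genuine obstacle here; the theorem is essentially a packaging of three standard ingredients (Lemma~\ref{lem:eta_witness}, the local-Pauli shadow variance bound, and median-of-means). The only nontrivial bookkeeping is the variance calculation, where care is needed because $\bm z_i$ is itself random and $\fid(\psi_{\bm z_i})$ is a random shift rather than a constant; using the law of total variance to separate the shadow randomness from the $\bm z$-randomness cleanly resolves this, and then matching the stated constant $243$ is a routine exercise in choosing the median-of-means parameters.
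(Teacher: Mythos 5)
Your proposal is correct and follows essentially the same route as the paper: Lemma~\ref{lem:eta_witness} for soundness, the operator-norm/trace-distance bound $|\tr(\propwitness(\rho-\psi))|\le 2\,\dtr(\psi,\rho)$ for completeness, the law of total variance to get $\sigma^2\le 4^{n_A}+1$, and median-of-means for concentration. Your extra care about $\fid(\psi_{\bm z_i})$ being a random shift is handled in the paper by the same conditioning argument (its Lemma on total expectation/variance), so there is no substantive difference.
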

	Therefore, strong localizable quantumness $\LQ(\psi)$ on a small subsystem guarantees sample-efficient certification, validating Result~\ref{thm:performance_certification_protocol_informal}. 
	This condition holds for a broad class of states, which is established in subsequent sections.
	
	An important benefit of our protocol is its constant robustness against state deviation. 
	Specifically, our scheme accepts any input state $\rho$ within $\varepsilon=\Omega(1)$ trace distance of the target state $\psi$.
	Compared to certification methods that require the experimental state to be $\varepsilon$-close to the target with $\varepsilon \sim 1/\poly(n)$~\cite{Huang2024Certifying, Gupta2025SingleQubitCertification}, the constant robustness of our method significantly relaxes experimental requirements.
	Given the high gate fidelities achieved by current platforms~\cite{Abanin2025OTOC, Kretschmer2025UnconditionalSeparation}, this robustness effectively enlarges the certifiable circuit volume by a substantial factor. 
	Furthermore, in the early fault-tolerant regime with quantum error correction techniques, our protocol serves as a practical tool for certifying logical preparation procedures, such as magic injection and cultivation.
	
	Moreover, our protocol reliably rejects any mixed state in $\freeset$, offering an advantage over previous protocols that detect properties only for pure input states, such as those for circuit complexity~\cite{Huang2024Shallow,Landau2024Shallow} and magic~\cite{Haug2023ScalableMeasureMagic}.
	The decision boundary $\eta^\ast$ can also be tuned closer to the property-free value $0$ to further enhance robustness. 
	Meanwhile, this also increases the sample complexity, as higher precision is required to estimate $\eta_{\psi}(\rho)$. Therefore, $\eta^\ast$ can be tuned to balance robustness and sample complexity.

	\subsection{Remarks on computational efficiency}
	
	While our protocol is experimentally efficient, its practical application also relies on the computational cost of classical processing. In our protocol, this cost is dominated by obtaining classical descriptions of the projected states $\psi_{\bm{z}}$. Although, in general, computing $\psi_{\bm{z}}$ may require exponential time, for a broad class of physically relevant states, this computation is tractable.
	Tensor-network representations enable efficient classical computation for a wide class of large quantum systems~\cite{Vidal2003MPS,Verstraete2006MPSGroudState,Jordan2008TensorNetwork2D,Vidal2008MERA,Pan2022SimulationBatchTensor}. Examples include states produced by one-dimensional logarithmic-depth circuits~\cite{Shi2006TreeTensor}, the surface-code ground state~\cite{Darmawan2017TensorNetworkSurfaceCode}, and other long-range-entangled states~\cite{Tantivasadakarn2024LongRangeEntanglement}. Neural-network quantum states~\cite{Carleo2017SolvingManybody,Torlai2018NeuralNetworkTomography,Carrasquilla2019GenerativeModel} also capture highly entangled states while permitting efficient evaluation of $\psi_{\bm{z}}$. In many scenarios, such as magic-state injection protocols~\cite{Bravyi2005Universal}, the projected states are known exactly by construction. 
	Furthermore, for small subsystem sizes $n_A$, computing the projected state is feasible under the amplitude query model by querying individual amplitudes of $\psi_{\bm{z}}$, consistent with the framework established in Ref.~\cite{Huang2024Certifying}.
	Hence, our certification scheme is both sample-efficient and computationally tractable for a broad range of states.
	
	Our protocol also requires estimating the maximal fidelity $\fid(\psi_{\bmz})$, or a suitable upper bound, on the subsystem $A$. 
	This value can be computed in time  $2^{\mathcal{O}(n_A)}$ for entanglement and circuit complexity via singular value decomposition, or in time $2^{\cO(n_A^2)}$ for quantum magic using the methods from Ref.~\cite{Hamaguchi2024HandbookQuantifying}.
	Since our protocol only requires a small subsystem size $n_A$, this computational procedure remains efficient.
	
	Crucially, our framework decouples the quantum experiment from the classical computation by using non-adaptive local measurements. 
	During the quantum stage, no prior knowledge of state preparation is required, so data collection is highly efficient. 
	Consequently, the same dataset can be analyzed offline for exponentially many different properties or even different target states~\cite{Huang2020Predicting, Huang2024Certifying}, which can yield exponential improvements in sample complexity over previous protocols, as exemplified in Sec.~\ref{sec:certifying_entanglement}. 
	Given that quantum resources are currently far expensive than classical ones, this decoupling is a significant practical advantage.
	
	\section{Certifying quantum circuit complexity} \label{sec:complexity}
	
	Having established our certification framework, we now illustrate its power and versatility by applying it to four central tasks in quantum information: certifying quantum circuit complexity, entanglement, magic, and state fidelity.
	
	We begin with circuit complexity. It quantifies the minimal resources required to prepare a quantum state and plays a fundamental role connecting quantum computation~\cite{Aaronson2016Complexity, Yi2024Complexity}, quantum learning~\cite{Huang2024Shallow,Zhao2024LearningBounded,Landau2024Shallow}, many-body physics~\cite{Chen2010LongRangeEntanglement, Wen2013topological, Brandao2021ModelsComplexity,Cotler2023EmergentDesign,Du2025SpacetimeComplexity, Mark2024MaximumEntropy}, and black-hole physics~\cite{Brown2018SecondLaw,Susskind2018BlackHole,Oszmaniec2024RecurrenceComplexity}. Certifying that a prepared state has high complexity is therefore crucial both for probing novel structures of quantum matter and for benchmarking experimental platforms.

	\subsection{Certifying unitary circuit complexity}\label{subsec:certifying_complexity}
	
	We first consider the standard notion of circuit complexity for unitary circuit architectures. Formally, for a given circuit architecture specified by a graph $\mathsf{G}=(V,E)$ where vertices represent qubits and edges indicate the qubit pairs on which native two-qubit gates may act, we define the circuit complexity of a quantum state as the minimum circuit depth required to prepare it within $\mathsf{G}$.  For a pure target state $\psi$, the circuit complexity is given by
	\begin{equation}
		C_{\mathsf{G}}(\psi) := \min\left\{d : \ket{\psi} = \prod_{i=1}^d \left(\bigotimes_j V_{i,j}\right) \ket{0}^{\otimes n} \right\}
	\end{equation}
	where $V_{i,j}$ is a two-qubit gate in the $i$-th layer with $\operatorname{supp}(V_{i,j})\in E$ or a single-qubit gate, and the gates in the same layer act on disjoint qubits.
	For a mixed state $\rho$, we adopt the convex-roof extension.  Any $\rho$ can be written as $\rho=\sum_{k} p_{k}\psi_{k}$ with $p_{k}\ge0$ and $\sum_{k}p_{k}=1$. 
	One may prepare $\rho$ by sampling $k$ with probability $p_{k}$ and then preparing $\psi_{k}$ using a circuit of depth $C_{\mathsf{G}}(\psi_{k})$.  The quantum circuit complexity of $\rho$ can therefore be no larger than the largest depth needed among the $\psi_{k}$.  Formally,
	\begin{equation}\label{eq:def_mixed_state_complexity}
		\begin{split}
			C_{\mathsf{G}}(\rho) := \min\Bigl\{ d : \rho = \sum_i p_i \psi_i, \text{ with } p_i \ge 0, \sum_i p_i = 1, \\
			\text{ and } C_{\mathsf{G}}(\psi_i) \le d \text{ for all } i \Bigr\}.
		\end{split}
	\end{equation}

	We apply our framework to certify that the complexity of a state exceeds a certain depth $d$ on a $D$-dimensional lattice.
	While learning-based protocols exist~\cite{Huang2024Shallow, Kim2024LearningPhaseOfMatter, Landau2024Shallow}, they are often restricted to pure states and require a sample complexity that grows with system size, limiting their usage in realistic noisy settings. 
	In contrast, our approach overcomes these limitations and exponentially improves sample complexity, achieving constant sample complexity for certifying constant-depth complexity ($d=\cO(1)$) and remaining sound for mixed states.
	
	Our approach is based on a key physical insight: in a low-depth circuit, information can propagate only a limited distance. This light-cone effect dictates that local measurements on the periphery of a system cannot substantially alter the entanglement structure deep within its center. 
	Consequently, low-complexity states yield projected states with limited entanglement.
	
	Formally, we focus here on an $m\times m$ two-dimensional case ($D=2$, with circuit complexity denoted $C_2$). Our method readily extends to higher dimensions.
	Let $A$ be the set of qubits in a $2w\times2w$ square and $B$ its complement, as illustrated in Fig.~\ref{fig:projected_ensemble_complexity}a, where the width $w$ is chosen slightly larger than the threshold depth $d$.
	The key observation is that, for a circuit of depth $d$, the entanglement between the left and right halves of region $A$ ($L$ and $R$ in Fig.~\ref{fig:projected_ensemble_complexity}a) is bounded. 
	This is because only gates near the boundary between $L$ and $R$ can generate entanglement across this cut, and a shallow circuit contains few such gates. 
	We formalize this as the following lemma. Detailed proofs are given in Appendix~\ref{app:additional_proofs_complexity}.
	
	\begin{figure*}[!htbp]
		\centering
		\includegraphics[width=.7\textwidth]{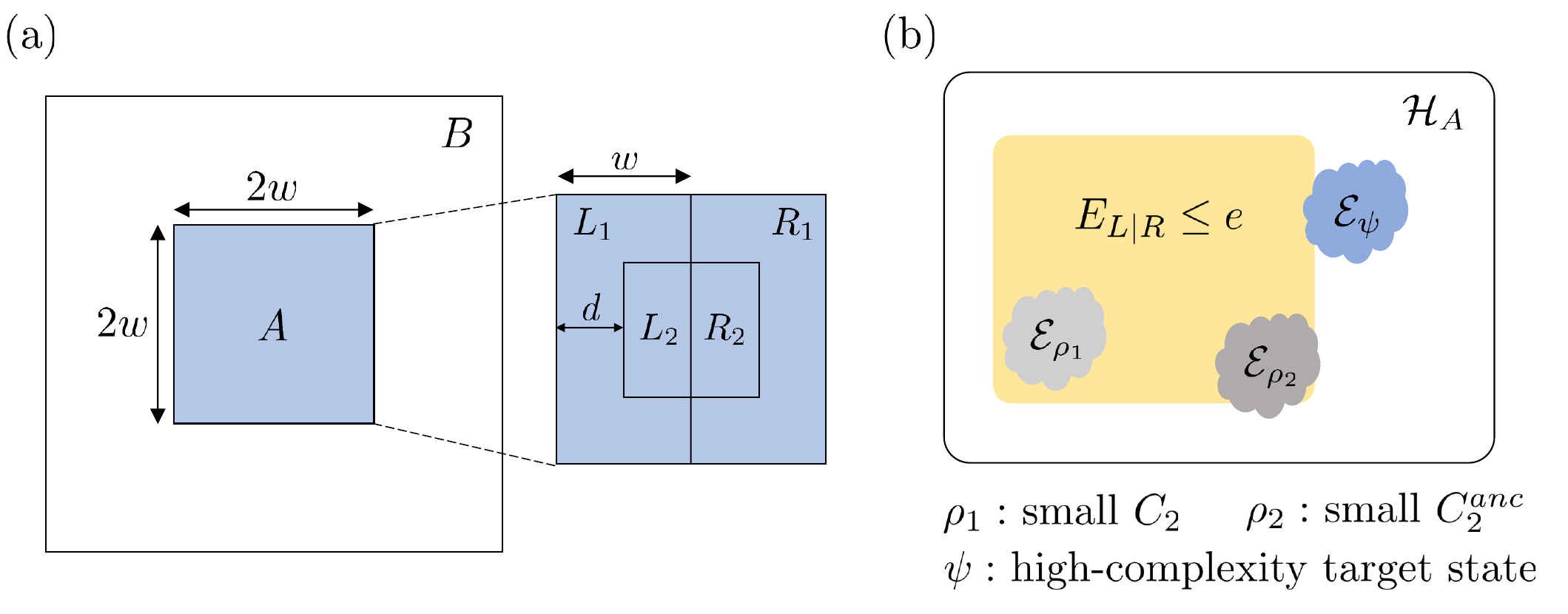}
		\caption{Projected ensembles for certifying circuit complexity $d$.
			(a) Partitioning of a two-dimensional lattice: the full $m\times m$ lattice is divided into subsystems $A$ and $B$, where $A$ is a $2w\times 2w$ square. Subsystem $A$ is further split into $L$ and $R$, with $L=L_1\cup L_2$ and $R=R_1\cup R_2$. The regions $L_1$ and $R_1$ comprise all qubits whose distance to $B$ is at most $d$.
			(b) Entanglement in the projected ensemble certifies complexity. The free projected-state set $\freeprojset$ consists of states with entanglement across $L\mid R$ below a threshold $e$. If a state $\rho_1$ has small unitary circuit complexity $C_2$, all of its projected states lie in $\freeprojset$. If a state $\rho_2$ has small measurement-assisted complexity $C_2^{\mathrm{anc}}$, a large fraction of its projected states lie in $\freeprojset$. Our protocol efficiently certifies the complexity of a high-complexity target $\psi$ whose projected states are far from $\freeprojset$.
		}
		\label{fig:projected_ensemble_complexity}
	\end{figure*}
	
	\begin{lemma}[Entanglement in states of low unitary circuit complexity]\label{lem:entanglement_free_projected_state}
		Let $\ket{\phi}$ be an $(m\times m)$-qubit pure state with circuit depth $C_2(\phi)\le d$. Consider the partition $A\cup B$ with $A=L\cup R$, as depicted in Fig.~\ref{fig:projected_ensemble_complexity}. Then, for every measurement outcome $\bmz \in \Bzrange$,
		\begin{equation}
			E_{L\mid R}(\phi_{\bm z})\le 8wd,
		\end{equation}
		where $E_{L\mid R}(\xi_{LR})\coloneqq S(\xi_L) = -\tr(\xi_L \log \xi_L)$ is the entanglement entropy of the state $\xi_{LR}$ across the bipartition $L\mid R$.
	\end{lemma}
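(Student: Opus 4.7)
The plan is to combine a light-cone localization, reducing the problem to a subcircuit acting near $A$, with a Schmidt-rank bound along an extended cut.

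\textbf{Step 1: Localization to a subcircuit on $A\cup A_d$.} Let $A_d\subseteq B$ be the set of qubits in $B$ at graph distance at most $d$ from $A$. Since $U$ has depth $d$, gates of $U$ acting entirely outside $A\cup A_d$ fall outside the causal past of $A$ and do not affect the projected state, while gates that straddle the boundary of $A_d$ can be absorbed, after accounting for the computational-basis measurement on $B\setminus A_d$ (which contributes only a scalar and at most single-qubit Pauli corrections on $A_d$), into an effective depth-$d$ nearest-neighbor subcircuit $\tilde U$ on $A\cup A_d$. Consequently, up to normalization,
\begin{equation}
    \ket{\phi_{\bm z}}\;\propto\;\bigl(I_A\otimes\bra{\bm z_{A_d}}_{A_d}\bigr)\,\tilde U\,\ket{0}^{\otimes|A\cup A_d|},
\end{equation}
where $\bm z_{A_d}$ denotes the restriction of $\bm z$ to $A_d$.

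\textbf{Step 2: Extended cut and Schmidt-rank bound.} Extend the vertical $L\mid R$ cut through $A_d$, splitting $A_d = A_d^L \cup A_d^R$, and set $\tilde L = L\cup A_d^L$, $\tilde R = R\cup A_d^R$. Since the bra $\bra{\bm z_{A_d}}$ is a product vector across $\tilde L\mid\tilde R$, projecting a Schmidt decomposition of $\tilde U\ket{0}^{\otimes|A\cup A_d|}$ yields
\begin{equation}
    \mathrm{SchRk}_{L\mid R}(\phi_{\bm z})\;\le\;\mathrm{SchRk}_{\tilde L\mid\tilde R}\bigl(\tilde U\ket{0}^{\otimes|A\cup A_d|}\bigr).
\end{equation}
The extended cut has at most $2w + 2d$ boundary edges (the $2w$ within $A$ plus $d$ on each extension through $A_d$), so the depth-$d$ subcircuit $\tilde U$ has at most $(2w+2d)d$ gates crossing it. Each crossing two-qubit gate has operator Schmidt rank at most $4$, so it multiplies the Schmidt rank by a factor of at most $4$; therefore
\begin{equation}
    \log_2\mathrm{SchRk}_{\tilde L\mid\tilde R}\bigl(\tilde U\ket{0}^{\otimes|A\cup A_d|}\bigr)\;\le\;2(2w+2d)d\;\le\;8wd,
\end{equation}
where the last inequality uses $d\le w$, the relevant regime since the protocol chooses $w$ slightly larger than $d$. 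The lemma then follows from $E_{L\mid R}(\phi_{\bm z}) = S(\phi_{\bm z,L})\le\log_2\mathrm{SchRk}_{L\mid R}(\phi_{\bm z})$.

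\textbf{Main obstacle.} The delicate step is the light-cone localization in Step 1, especially the treatment of gates of $U$ that straddle the boundary of $A_d$ (one qubit in $A_d$, the other in $B\setminus A_d$). Such gates can entangle $A_d$ with regions deeper in $B$, and one must verify that the computational-basis measurement on $B\setminus A_d$ does not teleport this entanglement into $A$. The key observation is that measurement in a fixed product basis yields only classical information and single-qubit Pauli feedback on $A_d$—no entangling operations are induced—so the effective preparation of $\phi_{\bm z}$ remains a depth-$d$ circuit on $A\cup A_d$, which is precisely what the Schmidt-rank counting in Step 2 requires.
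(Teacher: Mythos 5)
There is a genuine gap in Step~1, and it is exactly the obstacle you flag at the end. The claim that the computational-basis measurement on $B\setminus A_d$ contributes ``only a scalar and at most single-qubit Pauli corrections on $A_d$'' is false for general circuits: Pauli feedback is a feature of Clifford / measurement-based settings, not of arbitrary two-qubit gates. What actually happens is that the gates lying outside the backward light cone of $A$, together with the projection $\bra{\bm z}$ on $B\setminus A_d$, induce a general linear functional $\bra{\xi}$ on $A_d$: already in a 1D toy example it is an arbitrary non-unitary single-site operator on the outermost qubit of $A_d$, depending on the far gates and far outcomes, and in 2D it is an operator supported on the whole outer boundary of the annulus $A_d$ that can be entangled along that boundary. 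In particular $\bra{\xi}$ need not be a product vector across $\tilde L\mid\tilde R$, which is precisely the hypothesis Step~2 invokes (``Since the bra $\bra{\bm z_{A_d}}$ is a product vector across $\tilde L\mid\tilde R$\dots''). Relatedly, the opening sentence of Step~1 --- that gates entirely outside $A\cup A_d$ do not affect the projected state --- is false: they do not affect the reduced state $\tr_B(\phi)$, but conditioning on the outcome $\bm z$ lets them influence $\phi_{\bm z}$, e.g.\ via a Bell pair straddling the outer boundary of $A_d$ whose far half is carried deeper into $B$ before being measured.

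The argument can probably be patched --- the induced functional $\bra{\xi}$ arises from a depth-$d$ circuit acting on a product bra within $B$, so its Schmidt rank across the extension of the cut into $B$ can itself be bounded by counting crossing gates --- but this adds an extra $\cO(wd+d^{2})$ to the exponent and does not recover the stated constant $8wd$ (note also that the lemma does not assume $d\le w$, which your arithmetic uses). The paper's proof avoids the issue by cutting the other way: it factors $U=VW$, where $V$ collects the gates outside the \emph{backward light cone of $B$} (hence supported on $A$ alone and commuting with $\bra{\bm z}_B$), writes $\ket{\phi_{\bm z}}=V\left(\ket{\varphi}_{L_1R_1}\otimes\ket{0}_{L_2R_2}\right)$, and bounds $E_{L\mid R}$ by $\abs{L_1}+2\cdot(2wd)\le 4wd+4wd$. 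That is, instead of arguing that the far gates are irrelevant, it bounds the entanglement they can deposit by the dimension of the region $L_1$ they can reach.
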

	
	In essence, Lemma~\ref{lem:entanglement_free_projected_state} characterizes the projected free set $\freeprojset$ for the property $\mathsf{P}$ of ``circuit complexity $C_2>d$'': it contains only projected states with low entanglement across $L\mid R$. Conversely, if the projected states of the target are highly entangled, they must lie far from $\freeprojset$. Therefore,  the high entanglement entropy of the projected ensemble provides a clear signature of circuit complexity, as illustrated in Fig.~\ref{fig:projected_ensemble_complexity}b, enabling efficient certification of circuit complexity.
	
	\begin{theorem}[Certifying 2D quantum circuit complexity]\label{thm:2Dcomplexity}
		Let $\psi$ be the target state and consider the bipartition $A\cup B$ shown in Fig.~\ref{fig:projected_ensemble_complexity}, with $w=\eta d$ for some $\eta>4$.  Suppose projected ensemble of $\psi$ on $A$ satisfies
		\begin{equation}
			\Pr_{\bmz \sim p_{\psi}(\bmz)}\left[E_{L \mid R}(\psi_{\bmz}) \ge c \abs{L} + 1\right] \ge p =\Omega(1),   
		\end{equation}
		for a constant $c=4\eta^{-1}+\Omega(1)$.  Then Protocol~\ref{prot:certification} certifies that the circuit complexity of $\psi$ exceeds $d$ with constant failure probability, using $T = \exp\bigl(\cO(\eta^2d^2)\bigr)$ copies of $\psi$ and achieving robustness $\varepsilon = \Omega(1)$.
	\end{theorem}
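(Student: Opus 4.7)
The plan is to reduce Theorem~\ref{thm:2Dcomplexity} to Theorem~\ref{thm:performance_certification_protocol} by establishing a constant lower bound $\LQ(\psi) = \Omega(1)$ for the given partition. Since $n_A = (2w)^2 = 4\eta^2 d^2$, once such a bound is in hand the sample count $T = \cO(4^{n_A}/\LQ^2(\psi)) = \exp(\cO(\eta^2 d^2))$ and noise tolerance $\varepsilon = \LQ(\psi)/6 = \Omega(1)$ follow immediately from Theorem~\ref{thm:performance_certification_protocol}. The task thus reduces to bounding $\fid(\psi_{\bmz})$ strictly below $1$ on a constant-probability set of outcomes.

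First I would use Lemma~\ref{lem:entanglement_free_projected_state} to characterize the free projected-state set $\freeprojset$. Since $|L| = 2w^2$, the bound $E_{L\mid R}(\phi_{\bmz}) \le 8wd$ rewrites as $S(\phi_{\bmz,L}) \le (4/\eta)|L|$ for every pure $\phi \in \freeset$ and every outcome $\bmz$. Applying the convex-roof definition in Eq.~\eqref{eq:def_mixed_state_complexity}, any mixed $\rho \in \freeset$ decomposes into pure components of complexity $\le d$, so the projected state $\rho_{\bmz}$ lies in the convex hull of pure projected states each obeying the same entropy bound on $L$. Because $\tr(\sigma \psi_{\bmz})$ is linear in $\sigma$, the supremum defining $\fid(\psi_{\bmz})$ is attained at an extremal point, and it suffices to upper bound $|\langle \sigma | \psi_{\bmz}\rangle|^2$ over pure $\sigma$ with $S(\sigma_L) \le (4/\eta)|L|$.

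The core step is an entropy-to-overlap conversion. Fixing any outcome $\bmz$ in the high-entanglement event $\{E_{L\mid R}(\psi_{\bmz}) \ge c|L|+1\}$, the entropy gap between $\psi_{\bmz,L}$ and $\sigma_L$ is at least $(c-4/\eta)|L|+1$. Writing $t = \|\psi_{\bmz,L} - \sigma_L\|_1/2$ for the trace distance of the reduced states, the Fannes--Audenaert inequality in ambient dimension $2^{|L|}$ gives
\begin{equation}
    (c-4/\eta)|L| + 1 \;\le\; t \log(2^{|L|}-1) + h(t) \;\le\; t|L| + 1,
\end{equation}
so $t \ge c - 4/\eta$. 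Fuchs--van de Graaf then yields $F^2(\psi_{\bmz,L}, \sigma_L) \le 1 - t^2 \le 1 - (c-4/\eta)^2$, and monotonicity of fidelity under the partial trace over $R$ upgrades this to $\tr(\sigma \psi_{\bmz}) = F^2(\sigma, \psi_{\bmz}) \le 1 - (c-4/\eta)^2$. Taking the supremum over $\sigma \in \freeprojset$ gives $1 - \fid(\psi_{\bmz}) \ge (c - 4/\eta)^2 = \Omega(1)$ on the high-entanglement event; averaging against the hypothesis $p = \Omega(1)$ then yields $\LQ(\psi) \ge p(c-4/\eta)^2 = \Omega(1)$, and Theorem~\ref{thm:performance_certification_protocol} completes the argument.

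The main obstacle is the entropy-to-overlap conversion: Fannes--Audenaert is tight only when the entropy gap scales linearly with $\log(\dim) = |L|$, which is precisely why the statement requires $c$ to exceed $4/\eta$ by an $\Omega(1)$ constant rather than an $o(1)$ amount; this is also the reason for the explicit $+1$ in the entanglement threshold, which absorbs the binary-entropy correction $h(t) \le 1$. A secondary subtlety is that $\freeset$ contains mixed states, but linearity of $\tr(\sigma \psi_{\bmz})$ in $\sigma$ combined with the convex-roof definition cleanly reduces the problem to pure-state overlaps.
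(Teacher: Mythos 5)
Your proposal is correct and follows essentially the same route as the paper: Lemma~\ref{lem:entanglement_free_projected_state} bounds the entanglement of free projected states, a Fannes-type continuity bound converts the entropy gap into a trace-distance (hence fidelity) gap on the high-entanglement event, and averaging gives $\LQ(\psi)=\Omega(1)$, after which Theorem~\ref{thm:performance_certification_protocol} delivers the sample complexity and robustness. The only cosmetic differences are that you invoke the sharper Audenaert form of Fannes and Fuchs--van de Graaf with fidelity monotonicity (yielding $t\ge c-4/\eta$ rather than the paper's $\tfrac12(c-4/\eta)$ and hence a slightly better constant), and you make the convex-roof reduction to pure free states explicit where the paper leaves it implicit.
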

	
	Therefore,  for constant circuit complexity $d=\cO(1)$, our protocol achieves constant sample complexity and robustness. 
	This constant-depth regime is particularly important: in finite spatial dimensions, depths $d = \omega(1)$ are associated with long-range entanglement in many-body systems~\cite{Hastings2005Topology, Chen2010LongRangeEntanglement,Wen2013topological}. 
	Consequently, our protocol provides a sample-efficient method for certifying long-range entanglement by certifying depths just beyond the constant regime using $\omega(1)$ samples. 
	Moreover, the protocol requires only polynomially many samples up to depths $d = \cO((\log n)^{1/2})$ (or more generally $d=\cO((\log n)^{1/D})$ in spatial dimension $D$).
	
	\subsection{Certifying measurement-assisted circuit complexity}
	We now turn to a stronger notion---measurement-assisted circuit complexity~\cite{Tantivasadakarn2023HierarchyMeasurement, Du2025SpacetimeComplexity}. 
	Measurement-assisted circuits permit mid-circuit single-qubit measurements with classical feedforward, and have attracted significant recent interest because they can provide exponential depth shortcuts for preparing important many-body states, including topologically ordered phases~\cite{Briegel2001PersistentEntanglement, Piroli2021MeasurementAssistedQuantumCircuits, Lu2022MeasurementShortcut, bravyi2022adaptive, Tantivasadakarn2024LongRangeEntanglement}. 
	The recently introduced spacetime quantum circuit complexity also unifies notions of circuit width, depth, and measurements, and reveals a novel complexity concentration phenomenon in measurement-assisted circuits~\cite{Du2025SpacetimeComplexity}.
	
	Here, we focus on the definition of measurement-assisted circuit complexity as the minimum number of measurement layers required to prepare a target state \emph{deterministically}. This definition underpins the hierarchy of topological order and the notion of nontrivial measurement-equivalent quantum phases~\cite{Tantivasadakarn2023HierarchyMeasurement}.  
	Formally, fix an architecture $\mathsf{G}=(V,E)$ on $n=|V|$ qubits, a depth-$d$ measurement-assisted circuit consists of $d$ rounds.  In round $i\in [d]$, conditioned on the previous measurement record $m_{<i}=(m_1,\dots,m_{i-1})$, one applies parallel two-qubit gates $\{V_{i,\ell}^{m_{<i}}\}_{\ell}$ on pairwise disjoint edges of $E$ (and arbitrary single-qubit gates if desired), followed by single-qubit projective measurements $\{P_{i,q}^{m_{<i}}(b)\}_{b\in\{0,1\}}$ on a subset of qubits $q$, yielding outcomes $m_{i,q}\in\{0,1\}$. Gates and measurements within the same round act on disjoint qubits, and future operations may depend on the cumulative measurement record $m_{<i}$. Let $m=(m_1,\dots,m_d)$ denote the full outcome record. The unnormalized post-measurement state is
	\begin{equation}
		\ket{\widetilde{\phi}_m}
		=
		\Bigl[\prod_{i=1}^{d}\Bigl(\prod_{q} P_{i,q}^{\,m_{<i}}(m_{i,q})\Bigr)\Bigl(\prod_{\ell} V_{i,\ell}^{\,m_{<i}}\Bigr)\Bigr]\ket{0}^{\otimes n},
	\end{equation}
	with branch probability $p_m=\|\ket{\widetilde{\phi}_m}\|_2^{2}$ and normalized state $\ket{\phi_m}=\ket{\widetilde{\phi}_m}/\sqrt{p_m}$. The measurement-assisted circuit complexity of a pure state $\psi$ is
	\begin{equation}
		\begin{split}
			C_{\mathsf{G}}^{anc}(\psi)
			\coloneqq \min\Bigl\{d: \exists\ \text{depth-}d\ \text{measurement-assisted circuit} \\
			\text{s.t.}\ \forall m\ \text{with}\ p_m>0,\ \ket{\phi_m}=\ket{\psi}\Bigr\}.
		\end{split}
	\end{equation}
	For mixed states, we use the convex-roof extension same as in Eq.~\eqref{eq:def_mixed_state_complexity}.
	
	Low-depth measurement-assisted circuits are also expected to generate limited entanglement across appropriately chosen cuts. Nonetheless, incorporating the effects of measurements is subtle. For example, depending on the partition, measurements can induce significant entanglement in post-measurement states even when the underlying unitary depth is shallow~\cite{McGinley2025MeasurementInducedEntanglement2D}. We address this by choosing the specific bipartition in Fig.~\ref{fig:projected_ensemble_complexity} and leveraging the fact that entanglement entropy is a monotone that is non-increasing on average under local measurements~\cite{Vidal2000EntanglementMonotones}. The next lemma quantifies the average entanglement remaining on $A$ after measuring its complement $B$.
	\begin{lemma}[Entanglement in states of measurement-assisted circuit complexity]\label{lem:ent_adaptive_circuit}
		Let $\ket{\phi}$ be an $(m\times m)$-qubit pure state with measurement-assisted circuit complexity $C_2^{anc}(\phi)\le d$. Consider the partition $A\cup B$ with $A=L\cup R$ as depicted in Fig.~\ref{fig:projected_ensemble_complexity}. Then,
		\begin{equation}
			\bE_{\bmz \sim p_{\phi}(\bmz)}[E_{L\mid R}(\phi_{\bm z})]\leq 12wd.
		\end{equation} 
	\end{lemma}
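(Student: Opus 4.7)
The plan is to track the expected entanglement entropy across the $L\mid R\cup B$ bipartition through the $d$ rounds of the measurement-assisted circuit, and then apply entanglement monotonicity one more time when $B$ is measured. The key insight is that each round contributes a bounded amount to this expected entanglement, irrespective of the classical feedforward: (i) two-qubit gates acting on one side of the cut leave the reduced entropy $S_L$ unchanged, (ii) two-qubit gates straddling the cut can raise $S_L$ by at most $2$ bits each, and (iii) single-qubit projective measurements do not raise $S_L$ in expectation, since entanglement entropy is a monotone under local operations averaged over outcomes~\cite{Vidal2000EntanglementMonotones}.

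Concretely, I would let $\ket{\psi^{(i)}_{m_{\le i}}}$ denote the pure state of the $n$-qubit system after round $i$ conditional on the measurement record $m_{\le i}$, and set $E_i \coloneqq \bE_{m_{\le i}}\bigl[S(\psi^{(i)}_{m_{\le i},L})\bigr]$. Starting from $E_0 = 0$, I would prove inductively that $E_{i+1} \le E_i + 2\abs{\partial L}$: for each fixed $m_{\le i}$, the round-$(i+1)$ two-qubit gates act on pairwise disjoint edges, so at most $\abs{\partial L}$ of them straddle the cut, and each contributes at most $2$ bits; the subsequent single-qubit projective measurements are local with respect to the cut and so, by Vidal's inequality applied conditionally on $m_{\le i}$, do not raise the conditional expectation of $S_L$. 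Iterating over $d$ rounds yields $E_d \le 2d\abs{\partial L}$. For the $w\times 2w$ rectangle $L$ in Fig.~\ref{fig:projected_ensemble_complexity}, the perimeter satisfies $\abs{\partial L} \le 2(w+2w) = 6w$, so $E_d \le 12wd$.

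Determinism of the preparation, which is built into the definition of $C_2^{anc}$, forces $\ket{\psi^{(d)}_m} = \ket{\phi}$ for every $m$ with $p_m > 0$, and hence $E_d = S(\phi_L) \le 12wd$. Finally, the computational-basis measurement on $B$ is local with respect to the $L\mid R\cup B$ cut, and $\phi_{\bm z}$ is pure on $A = L\cup R$, so $E_{L\mid R}(\phi_{\bm z}) = S(\phi_{\bm z,L})$. One more application of entanglement monotonicity then gives
\[
    \bE_{\bm z \sim p_\phi(\bm z)}\bigl[E_{L\mid R}(\phi_{\bm z})\bigr] \;\le\; S(\phi_L) \;\le\; 12wd,
\]
which is the claimed bound.

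The main obstacle is the inductive step: operations in round $i+1$ depend on the random record $m_{\le i}$, so the per-round bound of $2\abs{\partial L}$ on the incremental entanglement must be uniform in $m_{\le i}$. This reduces to the observation that the parallel structure of each round---a layer of disjoint-edge two-qubit gates followed by disjoint-qubit measurements---is a structural feature of the circuit that does not depend on past outcomes, even though the specific unitaries and POVMs do; hence the gate-counting and Vidal bounds apply uniformly in $m_{\le i}$. Once this is settled, the perimeter calculation and the final LOCC step are essentially routine.
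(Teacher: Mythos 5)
Your proposal is correct and follows essentially the same route as the paper's proof: track the expected entanglement entropy across the $L\mid(R\cup B)$ cut round by round, bound the unitary layers by counting the at most $6w$ cut-crossing gates (each adding at most $2$ bits), invoke the monotonicity of entanglement entropy under local measurements for the mid-circuit measurement layers and again for the final measurement of $B$, and use determinism to identify the expected final entropy with $S(\phi_L)$. Your treatment of the conditioning on the measurement record in the inductive step is slightly more explicit than the paper's, but the argument is the same.
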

	Therefore, although certain projected states $\phi_{\bm z}$ may exhibit high entanglement, the average entanglement remains small. As a result, most projected states lie in a low-entanglement set (Fig.~\ref{fig:projected_ensemble_complexity}b). Nonetheless, this crucial distinction makes the conditional fidelity witness in Lemma~\ref{lem:eta_witness} less effective. To address this, we adopt a linear witness of the form
	\begin{equation}\label{eq:adaptive_complexity_witness}
		\tilde{O} \coloneqq \sum_{\bm z}\bigl[t+(1-t)\Theta_t\bigl(\mathrm{Fid}(\psi_{\bm z})\bigr)\bigr]\, \ketbra{\psi_{\bm z}}_A\otimes \ketbra{\bm z}_B,
	\end{equation}
	where $t \in (0,1)$ and $\Theta_t(x) = \mathbf{1}[x \leq t]$ is a threshold function. 
	The precise construction of $\tilde{O}$, along with the set defining the maximal fidelity $\mathrm{Fid}(\psi_{\bm z})$, is provided in Appendix~\ref{app:additional_proofs_complexity}, together with the full proofs. 
	Using this construction, we establish the following:
	\begin{theorem}[Certifying 2D measurement-assisted circuit complexity]\label{thm:2Dadaptive_complexity}
		Let $\psi$ be the target state and consider the bipartition $A\cup B$ shown in Fig.~\ref{fig:projected_ensemble_complexity}, with $w=\eta d$ for some $\eta>6$.  Suppose the projected ensemble of $\psi$ on $A$ satisfies
		\begin{equation}
			\Pr_{\bmz \sim p_{\psi}(\bmz)}\left[E_{L \mid R}(\psi_{\bmz}) \ge c \abs{L} + 1\right] \ge p =\Omega(1),   
		\end{equation}
		for a constant $c=(6+\Omega(1))p^{-1}\eta^{-1}$.  Then, estimating $\tr(\tilde{O} \rho)$ above a threshold $\eta^*$ certifies that the measurement-assisted circuit complexity of $\psi$ exceeds $d$ with constant failure probability $\delta$ using $\exp\bigl(\cO(\eta^2d^2)\bigr)$
		copies of $\psi$ and achieving robustness $\varepsilon = \Omega(1)$.
	\end{theorem}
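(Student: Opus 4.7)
The strategy is to show that the linear witness $\tilde{O}$ of Eq.~\eqref{eq:adaptive_complexity_witness} attains a constant expectation gap between the target $\psi$ and any (possibly mixed) state $\rho$ with $C_2^{\mathrm{anc}}(\rho)\le d$, and that this gap can be detected by the local classical-shadow estimator of Protocol~\ref{tab:protocol} on the constant-size subsystem $A$. The key new ingredient beyond Theorem~\ref{thm:2Dcomplexity} is a threshold-weighted witness that copes with the fact that Lemma~\ref{lem:ent_adaptive_circuit} only controls the \emph{average} entanglement of the projected ensemble, not a worst-case bound. First I would instantiate $\tilde{O}$ by taking the free projected-state set $\pfreeprojset$ to consist of states on $A$ whose $L\mid R$ entanglement entropy is at most $c|L|$, and defining $\mathrm{Fid}(\psi_{\bm z}) := \sup_{\sigma\in\pfreeprojset}\tr(\sigma\psi_{\bm z})$. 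A quantitative entropy-vs-overlap argument (using the Schmidt decomposition of $\psi_{\bm z}$ and the fact that states of low $L\mid R$ entropy concentrate their Schmidt weight on a low-dimensional subspace) should yield a constant $t<1$ such that $E_{L\mid R}(\psi_{\bm z})\ge c|L|+1$ forces $\mathrm{Fid}(\psi_{\bm z})\le t$: the one-bit entropy gap in the hypothesis is precisely what produces a constant-factor suppression of the maximal overlap.

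The second step is to bound both sides of the gap. Completeness is immediate, since $\tr(\psi_{\bm z}\psi_{\bm z})=1$:
\begin{equation}
\tr(\tilde{O}\psi)=\sum_{\bm z}p_{\psi}(\bm z)\bigl[t+(1-t)\Theta_t(\mathrm{Fid}(\psi_{\bm z}))\bigr]\ge t+(1-t)p.
\end{equation}
For soundness, take any pure $\phi$ with $C_2^{\mathrm{anc}}(\phi)\le d$. Combining Lemma~\ref{lem:ent_adaptive_circuit} with Markov's inequality, and using $|L|=2w^{2}=2\eta w d$ and $c=(6+\Omega(1))p^{-1}\eta^{-1}$, gives
\begin{equation}
p' := \Pr_{\bm z\sim p_\phi}\bigl[E_{L\mid R}(\phi_{\bm z})>c|L|\bigr] \le \frac{12wd}{c|L|}= \frac{6p}{6+\Omega(1)}<p.
\end{equation}
Splitting $\tr(\tilde{O}\phi)=\sum_{\bm z}w_{\bm z}p_{\phi}(\bm z)\tr(\psi_{\bm z}\phi_{\bm z})$ according to whether $\mathrm{Fid}(\psi_{\bm z})\le t$ and whether $\phi_{\bm z}\in\pfreeprojset$, every summand is bounded by $t$ except when $\mathrm{Fid}(\psi_{\bm z})\le t$ \emph{and} $\phi_{\bm z}$ is high-entanglement—a case whose total probability weight is at most $p'$. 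Hence $\tr(\tilde{O}\phi)\le t+(1-t)p'$, and linearity together with the convex-roof definition in Eq.~\eqref{eq:def_mixed_state_complexity} extends this bound to all mixed $\rho\in\freeset$.

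This yields a constant gap $\tr(\tilde{O}\psi)-\tr(\tilde{O}\rho)\ge (1-t)(p-p')=\Omega(1)$, so placing $\eta^\ast$ midway between the two bounds separates the cases. Since $\tilde{O}$ is a weighted sum of mutually orthogonal rank-one projectors with $\|\tilde{O}\|_\infty\le 1$, the local classical-shadow estimator has single-shot variance $\cO(4^{|A|})=\exp(\cO(\eta^2 d^{2}))$, so median-of-means delivers $\Omega(1)$ precision with $\exp(\cO(\eta^{2}d^{2}))$ samples at failure probability $\delta$. Noise robustness is immediate from $|\tr(\tilde{O}\rho)-\tr(\tilde{O}\psi)|\le\|\tilde{O}\|_\infty\,\dtr(\rho,\psi)\le\varepsilon$, giving $\varepsilon=\Omega(1)$ tolerance. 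The main obstacle is the entropy-vs-overlap step: one must show quantitatively that a one-bit gap in $L\mid R$ entanglement entropy forces $\mathrm{Fid}(\psi_{\bm z})\le t$ for some absolute constant $t<1$, while simultaneously verifying, through the choice $c=(6+\Omega(1))p^{-1}\eta^{-1}$, that the Markov bound on $p'$ leaves a constant margin below $p$. Balancing these two constraints is what pins down the precise form of the hypothesis and the weights in $\tilde{O}$.
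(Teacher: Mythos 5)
Your overall architecture matches the paper's: a threshold-weighted witness built from a ``likely-free'' projected-state set, a Markov bound derived from Lemma~\ref{lem:ent_adaptive_circuit} to control the fraction of free projected states that escape that set, the same case-split giving soundness $\tr(\tilde{O}\phi)\le t+(1-t)p'$, linear extension to the convex roof, and median-of-means local shadows with variance $\cO(4^{n_A})=\exp(\cO(\eta^2d^2))$. The soundness side of your calculation is correct: with your threshold $c|L|$ one indeed gets $p'\le 6/(c\eta)=6p/(6+\Omega(1))<p$.

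However, the step you flag as ``the main obstacle'' is a genuine gap, and with your choice of free set it cannot be closed. You define $\pfreeprojset$ as the states with $E_{L\mid R}\le c|L|$ while the hypothesis only gives $E_{L\mid R}(\psi_{\bm z})\ge c|L|+1$, and you assert that this one-bit entropy margin forces $\mathrm{Fid}(\psi_{\bm z})\le t$ for an absolute constant $t<1$. It does not. Entropy continuity (Fannes) gives $\lvert E(\phi)-E(\varphi)\rvert\le 2\dtr(\phi_L,\varphi_L)\,|L|+1$, so a one-bit gap yields only $\dtr\ge 0$; and concretely, perturbing a state of entropy $c|L|$ by trace distance $\cO(1/|L|)$ toward a highly entangled direction raises its entropy by a full bit, so $\sup_{\sigma\in\pfreeprojset}\tr(\sigma\psi_{\bm z})$ can be as large as $1-\cO(1/|L|)\to 1$. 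Your gap $(1-t)$ would therefore shrink like $1/(\eta^2d^2)$ rather than being constant, which in particular breaks the claimed $\varepsilon=\Omega(1)$ noise tolerance. The paper's proof avoids this by setting the free-set entropy threshold \emph{strictly below} $c|L|$ by a constant factor---it uses $\pfreeprojset=\{\phi: E_{L\mid R}(\phi)\le 12wd/(1-p')\}$ with a tunable $p'$---so that the entropy margin between the target's projected states and the free set is $\Omega(|L|)$ and Fannes then yields a constant trace-distance (hence constant-fidelity) separation, $t=1-\tfrac14\bigl[c-6\eta^{-1}/(1-p')\bigr]^2$. The price is that Markov must now be applied at this lower threshold, giving $\Pr[\phi_{\bm z}\notin\pfreeprojset]\le 1-p'$, and one must choose $p'$ (the paper takes $p'=1-(p+6\eta^{-1}c^{-1})/2$) so that $1-p'$ still sits strictly below $p$. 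Repairing your argument amounts to introducing exactly this extra parameter: set the free-set threshold at $\theta c|L|$ for some $\theta\in\bigl(6/(6+\Omega(1)),1\bigr)$, which keeps $p'<p$ while restoring the $\Omega(|L|)$ entropy margin needed for a constant $t$.
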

	
	In the regime $d=\mathcal{O}(1)$, our protocol again uses only $\mathcal{O}(1)$ sample complexity. This enables, in particular, the certification of nontrivial measurement-equivalent quantum phases (those with $C_{2}^{\mathrm{anc}}=\omega(1)$)~\cite{Tantivasadakarn2023HierarchyMeasurement} using only $\omega(1)$ samples. Although the hidden constant could be large, to our knowledge, these are the first protocols to achieve system-size-independent sample-complexity scaling for certifying unitary and measurement-assisted circuit complexity.
	
	\subsection{Performance on generic quantum states}\label{subsec:states_certifying_complexity}
	
	We now show that our approach applies to generic quantum states and a wide range of physically relevant states. The detailed proofs are given in Appendix~\ref{app:performance_certification_states}. 
	
	An important class of states is those drawn Haar-randomly, which could demonstrate the average-case performance of our protocol over the full Hilbert space. For Haar-random states, the projected states on subsystem $A$ are highly entangled.
	This can be established via the concept of deep thermalization~\cite{Cotler2023EmergentDesign, Choi2023RandomStateBenchmarking}: higher moments of the measurement-induced projected ensemble within a subsystem are close to their thermal values (i.e., to those of the Haar ensemble), so the ensemble forms an (approximate) state design.
	The high entanglement inherent in second-order state designs~\cite{Liu2018EntanglementDesign} suffices to satisfy the requirement in Theorems~\ref{thm:2Dcomplexity} and ~\ref{thm:2Dadaptive_complexity}, enabling sample-efficient certification of complexity for these deep-thermalized states.
	
	\begin{proposition}[Certifying circuit complexity of deep-thermalized states]\label{prop:gap_deep_thermalize}
		Let $\psi$ be an $n$-qubit state whose projected ensemble $\mathcal{E}_{\psi}$ on a specific $\Theta(d^2)$-qubit system $A$ (see Fig.~\ref{fig:projected_ensemble_complexity}) forms an $\epsilon$-approximate 2-design with $\epsilon = \cO(2^{-n_A/2})$. Then, our protocols can certify that $\psi$ has complexity $C_2$ (or $C_2^{anc}$) larger than $d$ with a sample complexity of $\exp(\cO(d^2))$, while achieving constant robustness.
	\end{proposition}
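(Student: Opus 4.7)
The plan is to use the 2-design property of $\mathcal{E}_\psi$ on $A$ to lower-bound the entanglement entropy of the projected states $\psi_{\bmz}$ across the $L\mid R$ cut with constant probability, and then invoke Theorems~\ref{thm:2Dcomplexity} and~\ref{thm:2Dadaptive_complexity} as black boxes. Throughout I take $A$ to be the $2w\times 2w$ square of Fig.~\ref{fig:projected_ensemble_complexity} with $w=\eta d$ for a constant $\eta$ to be chosen large, so that $|L|=|R|=n_A/2=2w^2$ and $n_A=\Theta(d^2)$.

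The first step is to estimate the ensemble-averaged subsystem purity $\bE_{\bmz\sim p_\psi}[\tr(\psi_{L,\bmz}^2)]$, where $\psi_{L,\bmz}$ denotes the reduced state of $\psi_{\bmz}$ on $L$. Writing $\tr(\psi_{L,\bmz}^2)=\tr[F_L(\psi_{\bmz}\otimes\psi_{\bmz})]$ with $F_L$ the swap on two copies of $L$, the $\epsilon$-approximate 2-design hypothesis says that this expectation agrees with the Haar value up to an additive error of order $\epsilon$. Page's formula gives a Haar-average purity of order $2^{1-n_A/2}$, so the choice $\epsilon=\cO(2^{-n_A/2})$ ensures $\bE_{\bmz}[\tr(\psi_{L,\bmz}^2)]\le C\cdot 2^{-n_A/2}$ for an absolute constant $C$. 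Markov's inequality then yields, with probability at least $1/2$ over $\bmz$, the bound $\tr(\psi_{L,\bmz}^2)\le 2C\cdot 2^{-n_A/2}$, and hence the Rényi-$2$ entropy $S_2(\psi_{L,\bmz})\ge |L|-\log_2(2C)$. Since the von Neumann entropy upper-bounds $S_2$, this gives $E_{L\mid R}(\psi_{\bmz})\ge |L|-\cO(1)$ with probability $p\ge 1/2$.

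The second step is to verify the entanglement hypotheses of the earlier theorems. Since $|L|=2\eta^2d^2$ grows with $d$, the bound $E_{L\mid R}(\psi_{\bmz})\ge|L|-\cO(1)$ implies $E_{L\mid R}(\psi_{\bmz})\ge c|L|+1$ for any constant $c<1$ provided $d$ is large enough. Choosing $\eta$ a sufficiently large constant lets us simultaneously arrange $c>4\eta^{-1}$ (the hypothesis of Theorem~\ref{thm:2Dcomplexity}) and $c>(6+\Omega(1))p^{-1}\eta^{-1}$ (the hypothesis of Theorem~\ref{thm:2Dadaptive_complexity}) with $p=1/2$. Applying those theorems then certifies $C_2(\psi)>d$ and $C_2^{anc}(\psi)>d$ using $\exp(\cO(\eta^2d^2))=\exp(\cO(d^2))$ copies of $\psi$, with $\Omega(1)$ noise robustness.

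The main obstacle is the tightness of the design-error scale: the Haar-typical purity itself is only $\Theta(2^{-n_A/2})$, so any looser approximate-design error would wash out the entanglement signal and the Markov tail bound would become vacuous. This is precisely why the hypothesis fixes $\epsilon=\cO(2^{-n_A/2})$, and one must track the implicit constants carefully when passing from the design error to the Markov step, since these constants set how large $\eta$ must be chosen and, via Theorems~\ref{thm:2Dcomplexity}–\ref{thm:2Dadaptive_complexity}, control the hidden constants in the $\exp(\cO(d^2))$ sample-complexity bound. Up to this constant bookkeeping, and citing the known entanglement-concentration properties of 2-designs~\cite{Liu2018EntanglementDesign}, the argument is direct.
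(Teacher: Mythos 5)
Your proposal is correct and follows essentially the same route as the paper's proof: bound the ensemble-averaged purity of $\psi_{L,\bmz}$ via the $\epsilon$-approximate 2-design property (Lemma~\ref{lem:design_entanglement}), convert it to a high-probability lower bound on $E_{L\mid R}(\psi_{\bmz})$ through Markov's inequality and $S \ge S^{(2)}$, and then invoke Theorems~\ref{thm:2Dcomplexity} and~\ref{thm:2Dadaptive_complexity}. The only (immaterial) difference is that the paper applies Markov at the threshold $2^{-(c|L|+1)}$ to get probability $1-\cO(2^{-(1-c)|L|})=1-o(1)$ rather than your $p\ge 1/2$ from thresholding at twice the mean, which merely changes the constant $\eta$ needed for the measurement-assisted case.
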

	
	Using deep thermalization results for Haar-random states~\cite{Cotler2023EmergentDesign}, we establish the average-case performance of our protocol for Haar-random states. 
	Moreover, deep thermalization has been observed in many chaotic physical quantum systems~\cite{Ippoliti2023DynamicalPurification, Chang2024ChargeConserving, Mark2024MaximumEntropy}, so our protocol also applies to these physically relevant states.
	
	To further assess practical performance, we consider more experimentally relevant ensembles: states generated by brickwork quantum circuits, where each layer consists of two-qubit gates arranged in a staggered pattern. Such architectures serve as canonical ensembles for studying the generic behavior of physical quantum states~\cite{Brandao2016LocalRandomCircuits,Nahum2017EntanglementGrowth,Haferkamp2022LinearGrowth,McGinley2025MeasurementInducedEntanglement2D}.
	We prove in Appendix~\ref{app:performance_certification_states} that for finite $D$-dimensional circuits, as long as the circuit depth $d'$ is slightly larger than the complexity $d$ we aim to certify—specifically $d' \ge kd$ for some global constant $k$—these depth-$d'$ circuits yield highly entangled projected states, which are therefore far from $\freeprojset$. This provides a sample-efficient method for certifying the complexity of shallow-brickwork-circuit states. We summarize the result below, which generalizes to any finite $D$-dimensional circuit.
	
	\begin{proposition}[Certifying circuit complexity of random brickwork-circuit states]\label{prop:brickwork_state}
		There exists a constant $k$ such that, for a state $\psi$ drawn from random $2D$ brickwork circuits with depth $d' \ge kd$, with probability at least 0.99, our protocols can certify that $\psi$ has complexity $C_2$ (or $C_2^{anc}$) larger than $d$ with a sample complexity of $\exp(\cO(d^2))$, while achieving constant robustness.
	\end{proposition}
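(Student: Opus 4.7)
The plan is to verify the hypothesis of Proposition~\ref{prop:gap_deep_thermalize}, namely that the projected ensemble $\mathcal{E}_{\psi}$ on the $\Theta(d^{2})$-qubit subsystem $A$ of Fig.~\ref{fig:projected_ensemble_complexity} forms an $\epsilon$-approximate state $2$-design with $\epsilon=\mathcal{O}(2^{-n_{A}/2})$, and to show that this holds with probability at least $0.99$ over the random depth-$d'$ 2D brickwork circuit ensemble. Once established for a suitable absolute constant $k$ with $d'\ge kd$, Proposition~\ref{prop:gap_deep_thermalize} immediately yields the stated sample complexity $\exp(\mathcal{O}(d^{2}))$ and constant noise robustness.

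My first step would be a light-cone reduction: the reduced state on $A$ depends only on gates inside the past light cone of $A$, which for a 2D brickwork architecture sits in a rectangular patch $R_{A}\subseteq A\cup B$ of size $\mathcal{O}((w+d')^{2})=\mathcal{O}(d^{2})$ when $d'=kd$. Qubits in $B\setminus R_{A}$ stay in $\ket{0}$ and register $z_{i}=0$ deterministically, contributing no randomness to $\mathcal{E}_{\psi}$. The entire problem therefore reduces to a random brickwork circuit on $R_{A}$ followed by computational-basis measurement of the $R_{A}\cap B$ qubits. My second step is to invoke deep-thermalization results for 2D random brickwork ensembles (in the spirit of Cotler et al.\ and Ippoliti-Ho): when the depth exceeds a constant multiple of the diameter of the acted-upon region, the induced projected ensemble on $A$ converges to an $\epsilon$-approximate state $2$-design with $\epsilon$ exponentially small in $n_{A}$. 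Since $R_{A}$ has diameter $\mathcal{O}(d)$ and $n_{A}=\mathcal{O}(d^{2})$, choosing $k$ above the relevant constant delivers the required precision $\epsilon=\mathcal{O}(2^{-n_{A}/2})$.

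Finally, to convert an expectation-level design statement into a probability-$0.99$ statement over circuit realizations, I would bound the variance of the frame-potential gap between $\mathcal{E}_{\psi}$ and the Haar ensemble using Weingarten-type identities for the brickwork ensemble, then apply Chebyshev; slightly enlarging $k$ absorbs the overhead. The main obstacle is quantitatively establishing the deep-thermalization rate with the stringent precision $\epsilon=2^{-\Omega(d^{2})}$ while keeping the depth linear in $d$: standard 2D design-convergence bounds scale as $\mathrm{polylog}(1/\epsilon)$, which would give depth $\mathrm{poly}(d)$ rather than $\mathcal{O}(d)$. A viable workaround is to bypass Proposition~\ref{prop:gap_deep_thermalize} altogether and verify the entanglement condition of Theorems~\ref{thm:2Dcomplexity} and \ref{thm:2Dadaptive_complexity} directly: linear entanglement growth in random brickwork circuits ensures $\psi$ has entanglement $\Omega(d^{2})\ge c|L|+1$ across the $L\mid R$ cut after depth $\Theta(d)$, and the projected-ensemble analogue of deep thermalization guarantees that typical projected states on $A$ inherit near-maximal entanglement across internal cuts with probability $\Omega(1)$ over $\bm{z}$, which is exactly what those theorems require.
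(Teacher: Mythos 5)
You correctly diagnose that the design-based route through Proposition~\ref{prop:gap_deep_thermalize} is a dead end here: achieving $\epsilon=2^{-\Omega(n_A)}$ approximate $2$-design convergence at depth only linear in $d$ is not available from standard brickwork deep-thermalization bounds, and the paper does not attempt it. Your proposed workaround---verifying the entanglement hypothesis of Theorems~\ref{thm:2Dcomplexity} and \ref{thm:2Dadaptive_complexity} directly---is indeed the paper's strategy. However, the workaround as you state it has a genuine gap at exactly the step that carries all the technical weight. Linear entanglement growth gives you $E_{L\mid (R\cup B)}(\psi)=\Omega(d^2)$ for the \emph{pre-measurement} state, but the theorems require that the \emph{projected states} $\psi_{\bm z}$ retain entanglement $\ge c|L|+1$ across $L\mid R$ for a constant fraction of outcomes $\bm z$. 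Measuring $B$ can in principle destroy this entanglement, and there is no monotonicity argument in the helpful direction. You cover this by invoking ``the projected-ensemble analogue of deep thermalization,'' but no such off-the-shelf result exists at the required quantitative strength for this geometry; proving it is the entire content of the paper's Lemma~\ref{lem:gap_random_highd_circuit}, which shows $\bE_{\psi\sim\mathcal{E}_A}[\tr(\psi_L^2)]\le (4/5)^{(2w-4s)^{D-1}\min(s,\lfloor d'/2D\rfloor)}$ via an explicit recursion: decompose $U=VW$ with $V$ the gates in the last $O(d)$ layers outside the backward light cone of $B$ (so that $\ket{\psi_{\bm z}}=V\ket{\phi_{\bm z}}$ and the measurement is absorbed into $\ket{\phi_{\bm z}}$), then iteratively integrate out the Haar-random two-qubit gates of $V$ using the single-gate purity identity of Nahum et al., with careful light-cone and fiber bookkeeping to count how many times the coefficient shrinks by $4/5$. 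Asserting the conclusion without this (or an equivalent) argument leaves the proof incomplete.

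Two smaller issues: your light-cone reduction claims qubits in $B\setminus R_A$ ``stay in $\ket{0}$ and register $z_i=0$ deterministically,'' which is false---they are acted on by gates outside $A$'s light cone and yield nontrivial outcomes; the correct statement is only that those outcomes do not alter the conditional state on $A$, which requires the $U=VW$ decomposition to make precise. And your Chebyshev/frame-potential step for the probability-$0.99$ conversion is attached to the abandoned first route; for the entanglement route the paper simply applies Markov's inequality twice (once over $\bm z$ to lower-bound $p_\psi$ in expectation, once over the circuit randomness to get $p_\psi\ge 0.5$ with probability $0.99$), which you would still need to supply.
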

	
	Taken together, these two propositions establish the sample efficiency of our protocol for both generic quantum states and experimentally realizable states on near-term quantum platforms.

	\section{Certifying quantum entanglement}\label{sec:certifying_entanglement}
	Entanglement is a fundamental feature of quantum mechanics, and certifying it has long been a central yet challenging task in quantum physics~\cite{Ekert1991CryptoBell, Mayers2004SelfTesting, Horodecki2009QunatumEntanglement, Guhne2009EntanglementDetection, Friis2019EntanglementCertification}.
	While fidelity-based witnesses are a powerful and widely used tool~\cite{Guhne2009EntanglementDetection, Riccardi2021Faithfulness, Cao2023GenerationGME}, they suffer from a major practical limitation: measuring fidelity typically requires global measurements, which are often experimentally prohibitive.
	Here, we demonstrate that our certification framework provides a highly efficient solution for certifying both bipartite and fully inseparable multipartite entanglement. By relying solely on local measurements, our approach substantially reduces the implementation complexity required to witness entanglement.

	We begin by showing that entanglement across exponentially many bipartitions can be certified with constant sample complexity. We then tackle the more demanding task of certifying fully inseparable entanglement. Remarkably, we demonstrate an exponential improvement over prior methods, reducing the required sample complexity from linear to logarithmic in the system size for generic quantum states.
	
	\subsection{Bipartite entanglement}
	We first consider certifying entanglement across a bipartition $A\mid B$. In general, certifying bipartite entanglement is a daunting task~\cite{Gurvits2003NPHardnessEntanglement, Liu2022Fundamental, Liu2025SeparationEntanglement}. Entanglement witnesses provide a powerful route: every entangled state admits a witness, which is very useful in experiments~\cite{Guhne2009EntanglementDetection}. Nevertheless, designing witnesses using simple local measurements is highly nontrivial~\cite{Dimic2018SingleCopyEntanglementDetection, Rodriguez2021CertificationGME}.
	
	Our framework circumvents this obstacle by reducing global certification to a local entanglement witnessing task on a small subsystem. Let $A_1 \subset A$ and $B_1 \subset B$ be two small, disjoint subsystems. The key insight is that, if a state $\rho$ is separable across $A\mid B$, then any state on $A_1B_1$ obtained by performing local measurements on the complement $R:=[n] \backslash (A_1 \cup B_1)$ must also be separable. (Note the difference in notation adopted in this section: The system $R$ corresponds to $B$ in Protocol~\ref{prot:certification} and $A_1B_1$ corresponds to $A$.) Therefore, for the property $\mathsf{P}$ of ``entanglement across $A\mid B$,'' the free projected-state set $\freeprojset$ on $A_1B_1$ is precisely the set of separable states on $A_1 B_1$.
	We define the localizable entanglement between $A_1$ and $B_1$ as
	\begin{equation}\label{eq:localizable_entanglement}
		\mathrm{LE}_{\psi}(A_1,B_1)\;\coloneqq\; \mathrm{LQ}_{\mathsf{Ent}(A,B)}(\psi),
	\end{equation}
	making explicit its dependence on the chosen subsystems. By Theorem~\ref{thm:performance_certification_protocol}, if the target state $\ket{\psi}$ robustly preserves this resource—namely, if $\mathrm{LE}_{\psi}(A_1,B_1)=\Omega(1)$—then our protocol for certifying the bipartite entanglement is efficient. 
	
	\begin{proposition}[Certifying bipartite entanglement]\label{prop:bipartite_entanglement}
		Consider two disjoint subsystems $A_1, B_1 \subseteq [n]$ of constant size, $n_{A_1}, n_{B_1} = \cO(1)$. If the target state $\psi$ satisfies
		\begin{equation}\label{eq:bipartite_conditional_entanglement_requirement}
			\mathrm{LE}_{\psi}(A_1,B_1) = \Omega(1),
		\end{equation}
		then Protocol~\ref{prot:certification} certifies that $\psi$ is entangled across any bipartition $A\mid B$ (with $A_1 \subseteq A, B_1 \subseteq B$) with $\cO(\ln \delta^{-1})$ sample complexity, while achieving constant robustness.
	\end{proposition}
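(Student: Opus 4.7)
The plan is to instantiate Theorem~\ref{thm:performance_certification_protocol} with the small subsystem taken to be $A_1\cup B_1$ and its complement $R\coloneqq[n]\setminus(A_1\cup B_1)$ serving as the computational-basis measurement region. The target property is $\mathsf{P}=\mathsf{Ent}(A,B)$ with free set $\freeset=\{\rho : \rho\text{ separable across }A\mid B\}$, so the task reduces to verifying the hypothesis $\LQ(\psi)=\Omega(1)$ in a manner that is uniform over all larger bipartitions $A\mid B$ compatible with $A_1\subseteq A$, $B_1\subseteq B$.

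The first and central step I would carry out is to pin down the free projected-state set $\freeprojset$ for this measurement scheme. Since $R$ decomposes as $(R\cap A)\sqcup(R\cap B)$, computational-basis measurements on $R$ are LOCC operations with respect to the cut $A\mid B$ and therefore preserve separability along the finer cut $A_1\mid B_1$; this shows $\freeprojset\subseteq\mathsf{Sep}(A_1\mid B_1)$. Conversely, any $\sigma\in\mathsf{Sep}(A_1\mid B_1)$ can be lifted to the separable parent $\sigma\otimes\ketbra{0^{|R|}}_R$, which is separable across $A\mid B$ because $R\subseteq A\cup B$, and whose projected state on the outcome $\bmz=0^{|R|}$ is precisely $\sigma$. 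I therefore obtain $\freeprojset=\mathsf{Sep}(A_1\mid B_1)$---a characterization that is \emph{independent} of the particular bipartition $A\mid B$.

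With this identification, both the maximal fidelity $\fid(\psi_{\bmz})$ and the averaged gap $\LQ(\psi)$ reduce exactly to $\mathrm{LE}_{\psi}(A_1,B_1)$, which is $\Omega(1)$ by hypothesis. Plugging $n_A=n_{A_1}+n_{B_1}=\cO(1)$ into Theorem~\ref{thm:performance_certification_protocol} gives $\sigma^2=4^{\cO(1)}+1=\cO(1)$, and therefore sample complexity $T=\cO(\ln\delta^{-1})$ with noise robustness $\varepsilon=\Omega(1)$. Because neither the measurement scheme nor $\freeprojset$ depends on the chosen $A\mid B$, a single quantum-experimental dataset will simultaneously certify entanglement across every one of the exponentially many compatible bipartitions; only the offline classical post-processing varies with the cut, and here it even collapses to the same optimization over $\mathsf{Sep}(A_1\mid B_1)$.

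The main obstacle I anticipate is classical rather than quantum: computing $\fid(\psi_{\bmz})$ exactly is an optimization of $\psi_{\bmz}$'s overlap with separable states on $A_1\cup B_1$, which is NP-hard in general. However, since $A_1\cup B_1$ has constant size, the optimization is tractable either by direct convex-programming methods or by substituting any tractable upper bound $f(\psi_{\bmz})\ge\fid(\psi_{\bmz})$ as permitted by Eq.~\eqref{eq:delta_upper_bounds}; this only reduces the gap $\Delta$ by a constant factor and preserves every scaling claim. No additional quantum ingredients are needed beyond Theorem~\ref{thm:performance_certification_protocol} once the set equality $\freeprojset=\mathsf{Sep}(A_1\mid B_1)$ is in place.
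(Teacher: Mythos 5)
Your proposal is correct and follows essentially the same route as the paper, which establishes this proposition in the main text by observing that local measurements on $R$ map states separable across $A\mid B$ to states separable across $A_1\mid B_1$, identifying $\freeprojset$ with the separable set on $A_1B_1$, and then invoking Theorem~\ref{thm:performance_certification_protocol} with $n_A = n_{A_1}+n_{B_1} = \cO(1)$ and $\LQ(\psi)=\mathrm{LE}_{\psi}(A_1,B_1)=\Omega(1)$. Your explicit two-sided verification of $\freeprojset=\mathsf{Sep}(A_1\mid B_1)$ (including the lifting $\sigma\otimes\ketbra{0^{|R|}}_R$ for the reverse inclusion) fills in a detail the paper only asserts, and is a welcome addition.
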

	\revise{This condition of constant localizable entanglement is satisfied by broad classes of states. In particular, Propositions~\ref{prop:gap_deep_thermalize} and \ref{prop:brickwork_state} establish this behavior for deep-thermalized states and random brickwork-circuit states, respectively, with $n_{A_1}, n_{B_1} = \cO(1)$. Furthermore, Lemma~\ref{lem:Haar-random-states-localizable-entanglement} proves that for Haar-random states, two-qubit subsystems ($A_1=\{1\}$ and $B_1=\{2\}$) already satisfy $\mathrm{LE}_{\psi}(1,2)=\Omega(1)$ with overwhelming probability.}  
	
	Notably, if the protocol outputs $\accept$, then we simultaneously certify entanglement of $\rho$ across all bipartitions that separate $A_1$ and $B_1$. Indeed, if $\rho$ were separable across some bipartition $A\mid B$ with $A_1\subseteq A$ and $B_1\subseteq B$, then local measurements on $R$ cannot create entanglement between $A_1$ and $B_1$. Consequently, every projected state $\rho_{\bm z}$ on $A_1\cup B_1$ would be separable, and the certification would output $\reject$.
	
	We remark that prior definitions of localizable entanglement~\cite{Verstraete2004LocalizedEntanglement, Popp2005LocalizableEntanglement, Rodriguez2021CertificationGME} optimize over local measurement bases, whereas here we do not. Avoiding this optimization eliminates substantial classical optimizations and removes stringent device requirements such as adaptive local measurements, yielding a simple, nonadaptive protocol. The price to pay is that our fixed-basis approach may fail to certify certain families of states for which basis optimization is required—for example, the $n$-qubit GHZ state $(\ket{0^{n}}+\ket{1^{n}})/\sqrt{2}$.
	\revise{Nevertheless, the results above demonstrate that this fixed-basis localizable entanglement remains constant across generic and broadly applicable physical state ensembles.}
	
	It is also instructive to discuss the relationship between our witness and standard entanglement measures.
	The quantity $\fid$ corresponds to the maximum singular value ($p_{\max}$) of the Schmidt decomposition of projected states $\psi_{\bmz}$, which relates to the min-entropy of $(\psi_{\bmz})_{A_1}$ ($- \log p_{\max}$) rather than the standard entanglement entropy $S(\psi_{\bmz})$.
	While these two entropies can diverge for large systems, Proposition~\ref{prop:bipartite_entanglement} targets small subsystems where $n_A = \mathcal{O}(1)$.
	In this regime, the difference between the two entropies is bounded by a constant factor, ensuring comparable performance in characterizing the entanglement of the projected states.
	On the other hand, the relation between $\mathrm{LE}_{\psi}(A_1,B_1)$ and the global entanglement entropy $S(\psi)$ may be weak, as global states with disparate entanglement entropy behaviors can exhibit similar localizable entanglement.
	Nonetheless, the entanglement entropy is notoriously difficult to measure experimentally~\cite{Acharya2020EstimatingQuantumEntropy}. In contrast, fidelity-based metrics hold significant operational relevance~\cite{Leone2025ComputationalEntanglement} and are frequently employed in experimental certification~\cite{Guhne2009EntanglementDetection, Cao2023GenerationGME}. 
	By measuring conditional fidelity solely with local measurements, our method significantly reduces the implementation complexity required for certification tasks based on global quantities.

	Beyond its experimental simplicity, we now show that our framework also enables substantial reductions in sample complexity for certifying more complex entanglement structures.
	
	\subsection{Fully inseparable entanglement}\label{subsec:fully-inseparable}
	
	We now address the more stringent task of certifying fully inseparable entanglement, where a state is entangled across every possible bipartition~\cite{Eisert2006MultiparticleEntanglement}. 
	This structure is a key resource for quantum networks: it is the necessary and sufficient form of entanglement for activating genuine multipartite entanglement~\cite{Palazuelos2022FullyInseparableGMEActivation, Zwerger2019DIGME}. It also provides a meaningful benchmark for noisy devices aiming to implement quantum error correction~\cite{Rodriguez2021CertificationGME}. Consequently, certifying fully inseparable entanglement offers a stringent test of entangling power and a stepping stone toward quantum networks and fault-tolerant quantum computing.
	
	At first glance, certifying fully inseparable entanglement appears challenging, as one must rule out separability across exponentially many bipartitions.
	Nonetheless, Proposition~\ref{prop:bipartite_entanglement} provides a path forward. 
	It suffices to certify localizable entanglement between qubits $i$ and $i+1$ for each $i\in[n-1]$. 
	Indeed, if $\rho$ were separable across some bipartition $A\mid B$, then there exists $i\in A$ satisfying $i+1\in B$, therefore the test on the pair $\{i,i+1\}$ would fail.
	A straightforward implementation that runs our protocol separately for every $i$ would require $n-1$ distinct experiments, yielding total sample complexity at least linear in $n$, similar to existing methods~\cite{Rodriguez2021CertificationGME}.
	
	Here, we exponentially improve this scaling by decoupling the measurement scheme from the choice of pairs and reusing a single dataset across all tests, with the spirit underlying classical shadows~\cite{Huang2020Predicting}.
	Concretely, rather than measuring the complement of a specific pair in the computational basis, we perform single-qubit random Pauli measurements on all qubits. Data collection is thus independent of which pair will be tested in postprocessing. To leverage this dataset, we use the random-basis analogue of \eqref{eq:localizable_entanglement}:
	\begin{equation}
		\widetilde{\mathrm{LE}}_\psi(A_1,B_1)\;\coloneqq\;\widetilde{\mathrm{LQ}}_{\mathsf{Ent}(A_1\mid B_1)}(\psi).
	\end{equation}
	Protocol~\ref{prot:certification} extends directly to this random-basis setting: Measure the complement $R$ of $A_1B_1$ in a random Pauli basis instead of the $Z$ basis, then estimate $\tr(\psi_{\bm x}\rho_{\bm x})$ and $\fid(\psi_{\bm x})$ on $A_1\cup B_1$. 
	Replacing the condition in Eq.~\eqref{eq:bipartite_conditional_entanglement_requirement} of Proposition~\ref{prop:bipartite_entanglement} by $\widetilde{\mathrm{LE}}_\psi(A_1,B_1)=\Omega(1)$ yields the same performance guarantees by using the random-basis variant of Protocol~\ref{prot:certification}. 
	
	Now, we show our modified protocol reduces the sample complexity from linear to logarithmic in $n$ (proofs in Appendix~\ref{app:proof_entanglement_certification}). Concretely, we collect a single dataset from $T$ rounds of random Pauli measurements on $\rho$ and reuse the same dataset to certify localizable entanglement for the $(n-1)$ pairs $A_1 = \{i\}, B_1=\{i+1\}$ for $1\le i\le n-1$. If all tests output $\accept$, we certify fully inseparable entanglement. 
	This protocol also extends to any set of $n-1$ pairs that connects the $n$-vertex graph.
	
	\begin{proposition}[Certifying fully inseparable entanglement with logarithmic sample complexity]\label{prop:fully_inseparable_entanglement}
		Let $\psi$ be a quantum state satisfying 
		\begin{equation}\label{eq:fully_inseparable_requirement}
			\widetilde{\mathrm{LE}}_{\psi}(i,i+1) = \Omega(1)
		\end{equation} for all $1 \le i \le n-1$.
		Then the fully inseparable entanglement of $\psi$ can be certified with failure probability $\delta$ using $T=\cO(\log n + \log \delta^{-1})$ samples, while achieving constant robustness.
	\end{proposition}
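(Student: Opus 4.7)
The plan is to combine the random-basis variant of Protocol~\ref{tab:protocol} with data reuse across all consecutive pairs, and to close the argument by a union bound over the $n-1$ edge-tests on the chain graph. First, I would perform $T$ rounds of independent single-qubit random Pauli measurements on every qubit of $\rho$, recording outcomes $\bm{x}^{(t)}\in\mathsf{X}^{n}$ for $t\in[T]$. Since the same randomized POVM acts independently on each qubit, a single dataset simultaneously realizes the random-basis variant of Protocol~\ref{tab:protocol} for every two-qubit subsystem $A_1\cup B_1=\{i,i+1\}$: the outcomes on the complement $R=[n]\setminus\{i,i+1\}$ play the role of the random-basis projection on $B$, while the outcomes on $\{i,i+1\}$ furnish the local classical-shadow data used to estimate $\tr(\psi_{\bm{x}}\rho_{\bm{x}})$. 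This decoupling of the measurement scheme from the choice of pair is the whole reason a single dataset is enough for all tests.

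Next, I would apply the random-basis analogue of Theorem~\ref{thm:performance_certification_protocol} pair by pair. For each $i\in[n-1]$, with $A=\{i,i+1\}$ and $B=R$, the hypothesis $\widetilde{\mathrm{LE}}_{\psi}(i,i+1)=\Omega(1)$ supplies the gap, while $n_A=2$ makes the shadow variance $\sigma^{2}=4^{2}+1=17$ an absolute constant. Hence each pair test achieves completeness and soundness with per-test failure probability $\delta_i$ using $T_i=\cO(\log\delta_i^{-1})$ shadow samples. Choosing $\delta_i=\delta/(n-1)$ uniformly and taking $T=\cO(\log n+\log\delta^{-1})$ therefore suffices for all pair tests simultaneously. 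The overall certification rule accepts iff every one of the $n-1$ pair tests accepts.

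For soundness of fully inseparable entanglement, suppose $\rho$ is separable across some bipartition $S\mid \bar{S}$. Because the edge set $\{(i,i+1)\}_{i=1}^{n-1}$ forms a connected chain on $[n]$, at least one edge is cut, i.e., there exists $i_{0}$ with $i_{0}\in S$ and $i_{0}+1\in \bar{S}$. Since random Pauli measurements on $R$ are local operations that cannot create entanglement between qubits $i_{0}$ and $i_{0}+1$, every projected state on $\{i_{0},i_{0}+1\}$ lies in the separable free projected set $\freeprojset$. Lemma~\ref{lem:eta_witness} then gives a nonpositive expected witness value, so the $i_{0}$-th test outputs $\reject$ with probability at least $1-\delta/(n-1)$; union bounding over the (at most $n-1$) cut edges yields overall rejection probability $\ge 1-\delta$. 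For completeness, if $\dtr(\psi,\rho)=\cO(1)$ is within the per-pair completeness radius $\min_{i}\widetilde{\mathrm{LE}}_{\psi}(i,i+1)/6=\Omega(1)$, then each test accepts with probability at least $1-\delta/(n-1)$, and a union bound gives overall acceptance probability $\ge 1-\delta$.

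The main obstacle is justifying that one non-adaptive dataset controls $n-1$ coupled hypothesis tests without a multiplicative $n$ factor in sample complexity. This rests on two ingredients: a uniform classical-shadow variance bound across pairs (immediate here because $n_A=2$ on every pair, giving an absolute constant $\sigma^{2}$), and simultaneous median-of-means concentration for all $n-1$ observables, which standard classical-shadow arguments achieve at the cost of only an additional $\log(n-1)$ factor in the number of blocks. Together these give the advertised $T=\cO(\log n+\log\delta^{-1})$ sample complexity together with constant trace-distance robustness.
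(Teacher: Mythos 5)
Your proposal is correct and follows essentially the same route as the paper: reuse one random-Pauli dataset across all $n-1$ consecutive-pair tests, invoke the random-basis variant of the bipartite-entanglement guarantee with constant shadow variance ($n_A=2$), set per-test failure probability $\delta/(n-1)$ to get $T=\cO(\log n+\log\delta^{-1})$, and use connectivity of the chain for soundness plus a union bound for completeness. The only quibble is that soundness needs no union bound over cut edges—one cut edge's test rejecting with probability $\ge 1-\delta/(n-1)$ already suffices—but this does not affect correctness.
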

	
	For generic states, such as Haar-random states, the condition $\widetilde{\mathrm{LE}}_{\psi}(i, i+1) = \Omega(1)$ holds simultaneously for all $i$ with high probability (Appendix~\ref{app:proof_entanglement_certification}, \revise{Lemma~\ref{lem:Haar-random-states-localizable-entanglement}}).
	To further demonstrate localizable entanglement in a physical setting, we evaluate its distribution across all nearest-neighbor pairs for random stabilizer states $\psi$ up to hundreds of qubits, as presented in Fig.~\ref{fig:StabilizerEntanglement} (details provided in Appendix~\ref{app:proof_entanglement_certification}).
	Here, $i_p$ denotes the site index where $\widetilde{\mathrm{LE}}_{\psi}(i_p,i_p+1)$ corresponds to the $p$-th quantile of the distribution ($0<p<1$). We observe that the deviation between the maximum $\widetilde{\mathrm{LE}}_{\psi}(i_0,i_0+1)$ and the minimum $\widetilde{\mathrm{LE}}_{\psi}(i_1,i_1+1)$ is small, with the minimum concentrating consistently around $0.2$. Furthermore, the variation across distinct random stabilizer states is negligible.
	These results further validate that our protocol provides a sample-efficient method for certifying this strong form of multipartite entanglement in generic quantum states.

	\begin{figure}[!htbp]
		\centering
		\includegraphics[width=.4\textwidth]{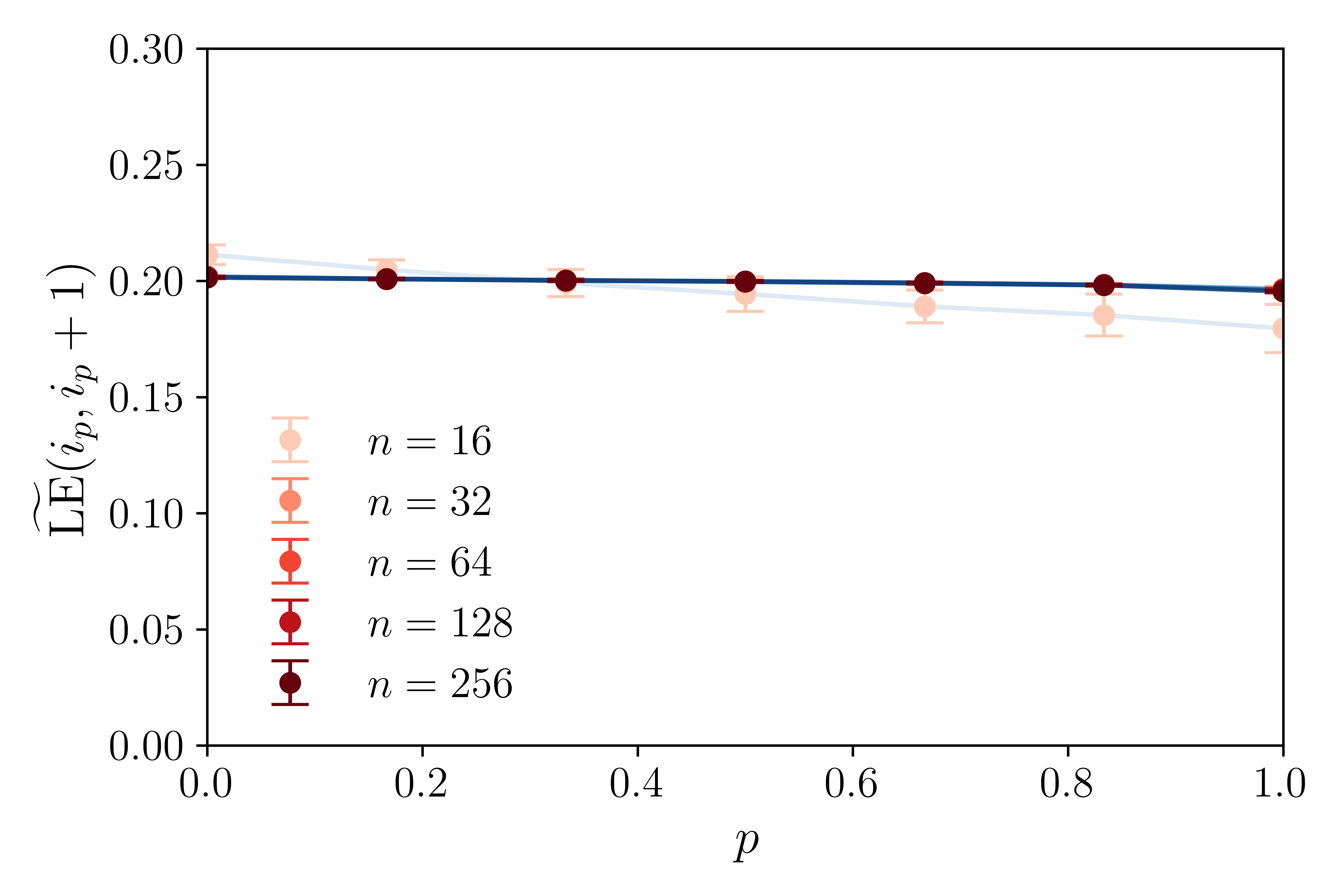} 
		\caption{Localizable entanglement $\widetilde{\mathrm{LE}}_{\psi}(i_p,i_p+1)$ for random stabilizer states $\psi$. Here, $i_p$ denotes the site index corresponding to the $p$-th quantile of the localizable entanglement across all sites. Error bars denote the standard deviation across the sampled states, which are negligible for large system sizes.}
		\label{fig:StabilizerEntanglement}
	\end{figure}

	\subsection{Case study: ground states of Hamiltonian systems}\label{subsec:HamiltonianEntanglement}
	Building on the general result that the random-basis localizable entanglement $\widetilde{\mathrm{LE}}_{\psi}$ can remain constant for generic many-body states, we now investigate this metric for ground states of concrete Hamiltonian models, thereby demonstrating the practical efficiency of our protocol.
	
	We begin with the spin-$\tfrac{1}{2}$ XXZ chain, a paradigmatic and exactly solvable model of quantum magnetism with a rich phase diagram~\cite{Takahashi1999OneDimensionalSolvable}. The XXZ chain has played an important role in entanglement theory~\cite{Calabrese2004EntanglementEntropyQFT, Gu2005GroundStateXXZ, Popp2005LocalizableEntanglement} and serves as a key experimental benchmark on multiple platforms, including trapped ions~\cite{Brydges2019Probing} and optical cavities~\cite{Luo2025XYZOpticalCavity}. The Hamiltonian is
	\begin{equation}
		H_{\mathrm{XXZ}}
		= -\sum_{i=1}^{n}\Bigl[\sigma_{x}^{i}\sigma_{x}^{i+1}
		+ \sigma_{y}^{i}\sigma_{y}^{i+1}
		+ \Delta\,\sigma_{z}^{i}\sigma_{z}^{i+1}\Bigr],
	\end{equation}
	where $\Delta$ is the anisotropy, with periodic boundary condition $\sigma_{\alpha}^{n+1}\equiv\sigma_{\alpha}^{1}$. 
	For a translational-invariant non-degenerate Hamiltonian, its unique ground state is also translational-invariant, so $\widetilde{\mathrm{LE}}_{\psi}(i,i+1)$ is identical for all $i$. 
	We numerically evaluate $\widetilde{\mathrm{LE}}_{\psi}(1,2)$ of the ground state $\psi$ for system sizes up to $n=64$ using the density matrix renormalization group (DMRG) algorithm~\cite{White1992DMRG}. 
	Details of the numeric experiments are provided in Appendix~\ref{app:proof_entanglement_certification}.
	
	The results are shown in Fig.~\ref{fig:HamiltonianEntanglement}a. 
	We observe that $\widetilde{\mathrm{LE}}_{\psi}(1,2)$ remains nonvanishing throughout the critical regime $|\Delta|<1$, with only mild finite-size drift as $n$ increases. 
	Notably, $\widetilde{\mathrm{LE}}_{\psi}(1,2)$ reaches maximal at the isotropic point $\Delta=-1$, at which the XXZ chain undergoes a Kosterlitz-Thouless transition from gapped N\'eel antiferromagnet ($\Delta < -1$) to critical Luttinger liquid ($\abs{\Delta}<1$)~\cite{Takahashi1999OneDimensionalSolvable}.
	Moreover, it vanishes at another phase transition point $\Delta=1$ where the ground state becomes a product state for $\Delta > 1$. 
	Therefore, the behavior of $\widetilde{\mathrm{LE}}_{\psi}(1,2)$ is able to indicate various phase transitions.
	
	\begin{figure}[!htbp]
		\centering
		\includegraphics[width=.5\textwidth]{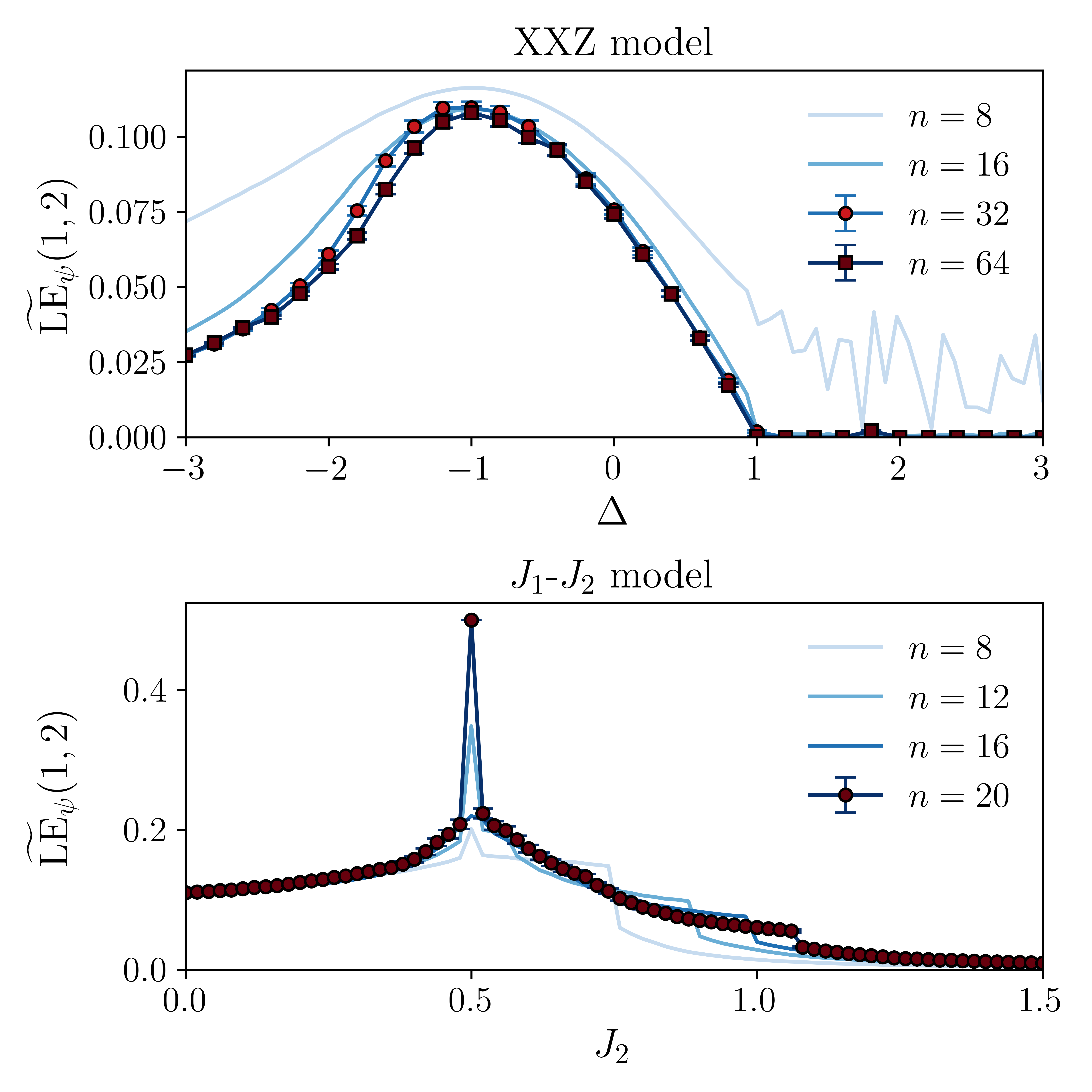} 
		\caption{Localizable entanglement $\widetilde{\mathrm{LE}}_{\psi}(1,2)$ in the ground states of Hamiltonian systems. Across broad parameter regimes, the localizable entanglement remains finite, enabling sample-efficient certification of entanglement. In addition, its behavior signifies key parameter points of the system.}
		\label{fig:HamiltonianEntanglement}
	\end{figure}

	We next consider the spin-$\tfrac{1}{2}$ $J_1$–$J_2$ Heisenberg chain, a paradigmatic frustrated extension of the nearest-neighbor model with next-nearest-neighbor couplings. Although not exactly solvable in general, its key parameter points are well established: a fluid-dimer transition occurs at $J_2/J_1\simeq 0.2411$~\cite{Okamato1992PhaseTransitionNextNearest}; the exactly solvable Majumdar-Ghosh point $J_2/J_1=\tfrac{1}{2}$~\cite{Majumdar1969MGPoint}; and, within the same phase, a Lifshitz point near $J_2/J_1\approx 0.52$~\cite{Bursill1995LifschitzPoint}. The model captures the physics of zigzag-chain materials~\cite{Pregelj2015PhaseFrustratedChain} and has been implemented on programmable superconducting platforms~\cite{Chowdhury2024HeisenbergChainSuperconducting}.
	
	Concretely, with periodic boundary conditions, the Hamiltonian is
	\begin{equation}
		H_{J_1J_2} = J_1\sum_{i=1}^{n} \bm{\sigma}^i \cdot \bm{\sigma}^{i+1} + J_2 \sum_{i=1}^{n} \bm{\sigma}^i \cdot \bm{\sigma}^{i+2}
	\end{equation}
	where $\bm{\sigma}^{\,i}=(\sigma_x^i,\sigma_y^i,\sigma_z^i)$ and $\sigma_{\alpha}^{n+k}\equiv\sigma_{\alpha}^{k}$.
	Here, we set $J_1=1$ and consider an antiferromagnetic interaction $J_2>0$.
	For the ground state $\psi$, we evaluate the random-basis localizable entanglement $\widetilde{\mathrm{LE}}_{\psi}(1,2)$. The results are shown in Fig.~\ref{fig:HamiltonianEntanglement}b.
	
	At the Majumdar–Ghosh point $J_2/J_1=\tfrac{1}{2}$, fluctuations in $\widetilde{\mathrm{LE}}_{\psi}(1,2)$ are from the twofold degeneracy of the ground space: in one ground state, sites $1$ and $2$ are maximally entangled, whereas in the other they are unentangled. Therefore, $\widetilde{\mathrm{LE}}_{\psi}(1,2)$ depends on the superposition coefficients of the two ground states. 
	As $J_2$ increases from $0$, $\widetilde{\mathrm{LE}}_{\psi}(1,2)$ grows and peaks near the Lifshitz point $J_2\approx 0.52$, then decreases for larger $J_2$. 
	Therefore,  the Lifshitz point is signaled by the discontinuity in the first derivative of $\widetilde{\mathrm{LE}}_{\psi}(1,2)$. 
	By contrast, we do not observe a clear signature near the phase transition $J_2/J_1\simeq 0.2411$, and understanding which types of parameters can be captured by localizable entanglement is an interesting question~\cite{Popp2005LocalizableEntanglement}.
	
	One can find wide parameter regimes where $\widetilde{\mathrm{LE}}_{\psi}(1,2)$ is robustly bounded away from zero across both models. 
	For non-degenerate Hamiltonians, translation invariance then implies $\widetilde{\mathrm{LE}}_{\psi}(i,i+1) = \widetilde{\mathrm{LE}}_{\psi}(1,2) =\Omega(1)$ for all nearest-neighbor pairs.  
	By Proposition \ref{prop:fully_inseparable_entanglement}, this yields logarithmic sample complexity and constant robustness for certifying fully inseparable entanglement in those Hamiltonian systems.

	\section{Certifying quantum magic}\label{sec:certifying_magic}
	Here, we further demonstrate the power of our framework by applying it to certify quantum magic (or non-stabilizerness), a critical resource for fault-tolerant quantum computing~\cite{Bravyi2016TradingClassicalQuantum,Bravyi2019StabilizerDecomposition,Seddon2021TigerMagic,Liu2022ManyBodyMagic}. 
	Its characterization has recently received significant attention~\cite{Leone2022StabilizerRenyi,Haug2023ScalableMeasureMagic,Haug2024MeasureStabilizerEntropy,Gross2021SchurWeylClifford,Hinsche2025SingleCopyStablizerTesting,Warmuz2025MagicMonotoneMixedStates,Wei2025LongRangeMagic}. 
	Nonetheless, robust certification of this critical resource remains challenging. 
	Existing protocols often lack soundness for mixed-state inputs or require demanding measurements~\cite{Haug2023ScalableMeasureMagic,Gross2021SchurWeylClifford,Hinsche2025SingleCopyStablizerTesting}. 
	We overcome these limitations and present the first protocol that certifies quantum magic using very few local measurements and achieves soundness for mixed states. Furthermore, it only uses constant sample complexity and achieves constant robustness.
	
	A pure $n$-qubit stabilizer state can be written as  
	$\ket{\psi} = C\ket{0}^{\otimes n}$,  
	where $C$ is a Clifford circuit acting on the initial all-zero state~\cite{Gottesman1997Stabilizer}.  
	The set of free states—namely, the stabilizer set—is defined as the convex hull of stabilizer states:
	\begin{equation}
		\begin{split}
			\mathrm{STAB}_{n} \coloneq \{\rho : \rho  = \sum_i p_i \ketbra{\psi_i}, \text{ with } p_i \ge 0, \sum_i p_i = 1, \\ 
			\text{ and } \ket{\psi_i} = C_i\ket{0}^{\otimes n}, C_i \text{ Clifford}\}.
		\end{split}
	\end{equation}
	A key feature of the stabilizer set is its closure under Pauli measurements~\cite{Gottesman1997Stabilizer}.
	Consequently, the free-projected set $\freeprojset$ is simply the set of stabilizer states on $A$, denoted $\mathrm{STAB}_A$.  
	The maximal fidelity $\fid(\phi)$ then becomes the well-known stabilizer fidelity~\cite{Bravyi2019StabilizerDecomposition,Liu2022ManyBodyMagic}:
	\begin{equation}
		\fid(\psi_{\bm z})
		= \max_{\ket{\phi}\in\mathrm{STAB}_A}
		\bra{\phi}\psi_{\bm z}\ket{\phi}.
	\end{equation}
	
	Let $\mathrm{LM}(\psi)\coloneqq \mathrm{LQ}_{\mathsf{Magic}}(\psi)$ denote the localizable magic of the target. By Theorem~\ref{thm:performance_certification_protocol}, if $n_A = \cO(1)$  and $\mathrm{LM}(\psi)=\Omega(1)$, then our protocol has constant sample complexity and robustness. 
	\revise{We prove that this condition holds for typical Haar-random states with $n_A = 1$ in Appendix~\ref{app:proof_magic_certification}.}
	This condition is also expected for generic physical states: their projected ensemble on $A$ is typically random states~\cite{Cotler2023EmergentDesign}, which possess abundant quantum magic and require many layers of non-Clifford gates to prepare~\cite{Turkeshi2025MagicSpreading, Leone2025NonCliffordCost}.
	Hence, our certification protocol is both efficient and robust for generic states.
	
	However, in many practical scenarios, quantum magic is a scarce and costly resource.
	A crucial scenario involves the injection of magic into Clifford circuits~\cite{Bravyi2005Universal, Niroula2024MagicPhaseTransition}. 
	In near-term demonstration of fault-tolerant architectures, the quantum circuits are likely to be dominated by Clifford operations, with a limited number of logical $T$ gates implemented via magic state distillation or cultivation~\cite{Gidney2024Cultivation}. 
	To demonstrate the efficiency of our certification protocol for this practical setting, we consider and examine a magic-injection-and-scrambling model (Fig.~\ref{fig:magic}a).
	We initialize the system in a product state comprising $n-k$ single-qubit stabilizer states $\ket{0}$ and $k$ magic states $\ket{\alpha} = 1/\sqrt{2}( \ket{0} + e^{i\alpha} \ket{1})$, where $\alpha \in [0, \pi/4]$ and $k$ controls the amount of injected magic (with $\alpha = \pi/4$ yielding maximal magic).
	A random $n$-qubit Clifford unitary $C$ is then applied to scramble the injected magic across the system.
	The resulting target state is given by
	\begin{equation}
		\ket{\psi} = C \left(\ket{0}^{\otimes n-k} \otimes \ket{\alpha}^{\otimes k}\right).
	\end{equation}
	Following local Pauli measurements on the $n_B$-qubit subsystem $B$, the resulting projected states $\psi_{\bm z}$ are defined on the complementary subsystem $A$, which consists of $n_A = n - n_B$ qubits.

	\begin{figure}[!htbp]
		\centering
		\includegraphics[width=.5\textwidth]{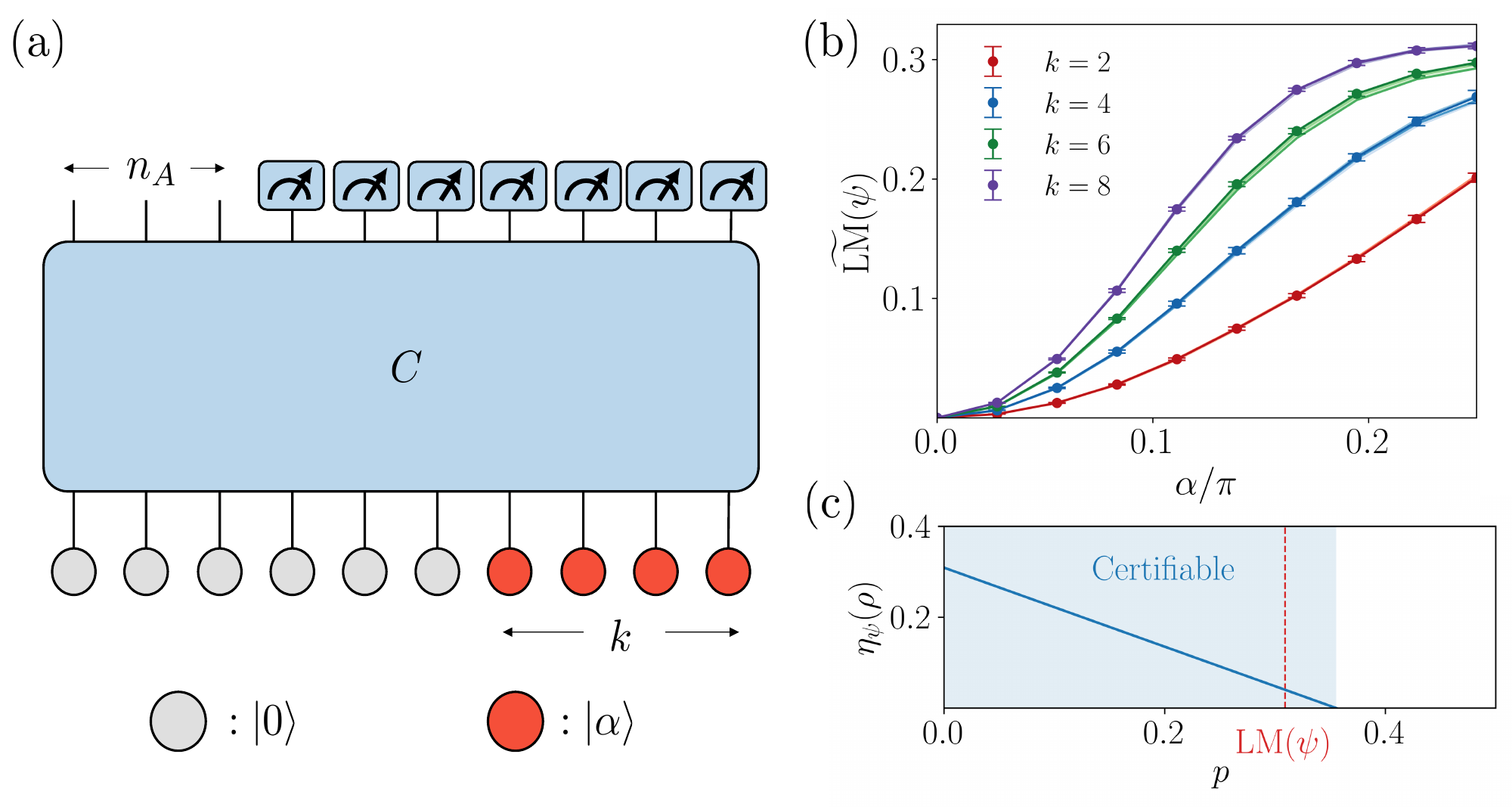} 
		\caption{Magic certification and its performance. (a) The system is initialized in the state $\ket{0}^{\otimes n - k} \otimes \ket{\alpha}^{\otimes k}$. A global random Clifford unitary $C$ is then applied to scramble the injected magic. Finally, single-qubit Pauli measurements are performed on the complementary system, yielding projected states on the small $n_A$-qubit subsystem. (b) Localizable magic $\widetilde{\mathrm{LM}}(\psi)$ with system size $n \in \{32, 64, 128, 256, 512\}$ for a fixed subsystem size $n_A=3$ and varying numbers of injected magic state $k \in \{2,4,6,8\}$. Darker colors correspond to larger system sizes. Each data point represents the average over 5 random Clifford realizations (each estimated using 1000 projected state samples), with error bars denoting the standard deviation across realizations.  (c) The relation between global depolarizing noise strength $p$ and $\eta_{\psi}(\rho)$ for $\rho = (1-p)\psi + p I/d$ with the same state $\psi$. 
			The value of $\eta_{\psi}(\rho)$ remains in the certifiable regime for $p \gtrsim \mathrm{LM}(\psi)$, demonstrating the strong robustness of our protocol.}
		\label{fig:magic}
	\end{figure} 
	
	We first numerically evaluate the localizable magic $\widetilde{\mathrm{LM}}(\psi)$ for a fixed subsystem size $n_A = 3$, varying the magic injection angle $\alpha$, magic state count $k$, and the total system size $n$ (Fig.~\ref{fig:magic}b). These simulations utilize established classical algorithms for $T$-doped Clifford circuits and stabilizer fidelity estimation~\cite{Bravyi2019StabilizerDecomposition, Hamaguchi2024HandbookQuantifying}.
	The data reveals several key features. First, $\widetilde{\mathrm{LM}}(\psi)$ increases with $k$,  demonstrating that global magic is effectively concentrated into the subsystem $A$, an insight that underlies protocols like magic-state injection. 
	Second, and most importantly for our purposes, $\widetilde{\mathrm{LM}}(\psi)$ quickly exceeds a constant threshold as $\alpha$ increases. 
	This implies that for a fixed random Pauli basis, the localizable magic $\widetilde{\mathrm{LM}}(\psi)$ exceeds a constant with high probability.
	According to Theorem~\ref{thm:performance_certification_protocol}, this directly shows that our protocol certifies magic in these states with constant sample complexity and robustness.
	It is also interesting that $\widetilde{\mathrm{LM}}$ increases monotonically with $\alpha$, suggesting that the projected-ensemble-averaged magic faithfully reflects, to a significant extent, the total magic of the full state.  
	Such a magic concentration phenomenon has likewise been observed in~\cite{Niroula2024MagicPhaseTransition, Zhang2024MagicConcentration}. 
	We remark that $\widetilde{\mathrm{LM}}(\psi)$ is not a magic monotone, as it can change under the addition of ancillary $\ket{0}$ qubits and Clifford rotations.  
	Nevertheless, it serves as a practical indicator of the global magic contained in the state.
	
	To further demonstrate robustness, we simulate our protocol on a state subject to global depolarizing noise, $\rho = (1-p)\psi + p\, I/d$, a good model for present-day experiments~\cite{Dalzell2024RandomCircuitsGlobalNoise}. Moreover, coherent errors can be converted into effective depolarizing noise via randomized compiling~\cite{Wallman2016NoiseTailoringRandomizedCompliling,Hashim2021RandomizedCompilingNoisyDevice}. 
	Fig.~\ref{fig:magic}c plots $\eta_{\psi}(\rho)$ versus the noise strength $p$ for a $512$-qubit state generated with $\alpha = \pi/4$. As shown, the conditional fidelity $\eta(\rho)$ remains well above zero even for substantial noise levels $p \approx 0.35$. This behavior can also be derived analytically:
	\begin{equation}
		\begin{split}
			\eta_{\psi}(\rho) &= (1-p)\mathrm{LM}(\psi)+p d^{-1} \Bigl[\sum_{\bmz} \bigl(1 - d_A\fid(\psi_{\bmz}) \bigr) \Bigr] \\
			&= (1-p)\mathrm{LM}(\psi) + p\Bigl[d_A^{-1} - d_B^{-1} \sum_{\bmz} \fid(\psi_{\bmz})\Bigr].
		\end{split}
	\end{equation}
	By adjusting the decision boundary to require $\eta^* < \eta_{\psi}(\rho)$, our protocol could certify the magic in $\rho$ whenever $\eta_{\psi}(\rho)>0$, which holds provided
	\begin{equation}\label{eq:noise_robustness}
		p < \frac{ \mathrm{LM}(\psi) }{d_B^{-1} \sum_{\bmz} \fid(\psi_{\bmz}) + \mathrm{LM}(\psi) -d_A^{-1}} \approx \frac{ \mathrm{LM}(\psi) }{1 - d_A^{-1}},
	\end{equation}
	consistent with the numerical simulation. Here the approximation uses the empirical observation $\mathrm{LM}(\psi) \approx 1-d_B^{-1}\sum_{\bm z}\fid(\psi_{\bm z}) $. 
	This level of robustness is much better than a direct application of Theorem~\ref{thm:performance_certification_protocol}, showing that robustness can be improved in realistic noise scenarios.
	Moreover, the monotonic decrease of $\eta_{\psi}(\rho)$ with increasing noise strength directly highlights the utility of our protocol for benchmarking noisy quantum devices.
	
	\section{Certifying quantum state fidelity using random-basis-enhanced conditional fidelity}\label{sec:fidelity}
	
	We \revise{now address} a particularly strong certification task: certifying the global state fidelity $ \langle \psi | \rho | \psi \rangle $.
	Recent protocols achieve this using only local measurements~\cite{Huang2020Predicting, Gupta2025SingleQubitCertification, Li2025UniversalVerification, Coladangelo2026TwoBases}, but they face severe trade-offs between experimental simplicity and certification power. 
	Adaptive protocols can, in principle, certify arbitrary states~\cite{Gupta2025SingleQubitCertification}. 
	\revise{For Haar-random states, recent adaptive local-measurement protocols have achieved a sample complexity of $\cO((\log n)^2)$ and a robustness of $\Omega(1/\log n)$~\cite{Coladangelo2026TwoBases}. 
		However, employing adaptive protocol requires challenging classical feedback and long-lived quantum memory to store the state between measurement rounds~\cite{Gupta2025SingleQubitCertification, Li2025UniversalVerification, Coladangelo2026TwoBases}. 
		Ref.~\cite{Coladangelo2026TwoBases} also demonstrated that constant scaling is achievable, but doing so required non-local measurements across $\cO(\log n)$ qubits. }

	By contrast, protocols that employ non-adaptive local measurement schemes are far more compatible with near-term hardware. 
	However, state-of-the-art non-adaptive methods for generic states, such as the shadow-overlap protocol, \revise{exhibit \revise{$\cO(n^4)$} sample complexity and $\Omega(n^{-2})$ noise robustness, and this sample complexity can only improve to $\cO(n^2)$ by using adaptive measurements~\cite{Huang2024Certifying}. 
		Consequently, whether constant robustness and sample complexity can be achieved for generic states using solely non-adaptive local measurements has remained a central open challenge in the field.}
	
	\revise{
		Here, we show that our enhanced conditional fidelity protocol settles this problem. 
		Relying solely on non-adaptive single-qubit Pauli measurements, it achieves $\cO(1)$ sample complexity and $\Omega(1)$ robustness for certifying generic states, including both Haar-random states and physically relevant random graph states.
		By achieving essentially optimal scaling guarantees, our protocol provides a definitive resolution to this open challenge in quantum certification.
	}
	
	\subsection{Random-basis-enhanced conditional fidelity}
	Our previous conditional fidelity mainly focuses on measuring subsystem $B$ in a single, fixed basis (e.g., the computational basis), which is insufficient for fidelity certification: the product state $|\psi_{0^{n_B}}\rangle_A \otimes |0^{n_B}\rangle_B$ achieves unit conditional fidelity even though its global fidelity with $|\psi\rangle$ could be exponentially small in $n_B$.
	The core issue is that a fixed measurement basis on $B$ fails to capture the rich correlations across the $A \mid B$ partition, which are revealed only in other bases (e.g., the relative phases among the projected states).
	
	Our solution is to measure subsystem $B$ in randomly chosen single-qubit Pauli bases. The intuition is that random basis choices produce projected states on subsystem $A$ that are closely correlated with one another, so the conditional fidelity captures the global fidelity structure and becomes a sensitive witness of the overall state fidelity. To formalize this idea, we define the following observable:
	\begin{equation}
		O_{\psi} = 3^{-n_B} \sum_{\bmx \in \Bxrange, \tilde{p}(\bmx) \neq 0} \ketbra{\psi_{\bmx}}_A \otimes \ketbra{\bmx}_B.
	\end{equation}
	By construction, $\tr(O_{\psi}\psi) = 1$.
	The ability to certify fidelity is determined by the spectral gap
	\begin{equation}
		\Delta(O_{\psi})
		\coloneqq
		1 - \max_{\phi \perp \psi} \tr(O_{\psi}\phi),
	\end{equation}
	as shown in the following:
	\begin{proposition}[Certifying state fidelity with local Pauli measurements]\label{prop:fidelity_performance}
		Let $\psi$ be the target state with $\Delta(O_{\psi}) > 0$, and fix parameters $0 < c < 1/2$, $0 < F < 1$ and $\delta>0$. Then, using local Pauli measurements on 
		\begin{equation}
			T = \frac{27 \ln(\delta^{-1})\, (4^{n_A}+1)}{c^{2}\Delta^{2}(1-F)^{2}}
		\end{equation}
		samples of $\rho$ suffices to:
		\begin{enumerate}
			\item Output $\reject$ with probability at least $1-\delta$ when $\tr(\rho\psi) \le F$;
			\item Output $\accept$ with probability at least $1-\delta$ when $\tr(\rho\psi) \ge 1-(1-2c)\Delta(1-F)$.
		\end{enumerate}
	\end{proposition}
	Therefore, a larger spectral gap $\Delta$ directly improves both sample complexity and robustness. 
	While the gap can vanish for certain target states (e.g., product states across $A\mid B$), we now show that it is typically bounded below by a constant for generic families of states of interest.
	
	\subsection{Provable constant gap for random graph states}
	We first prove that our protocol achieves a constant gap for random graph states, providing a natural and robust benchmark for evaluating its performance.
	Graph states form a canonical and widely used subclass of stabilizer states, serving as standard resources in quantum computation and quantum networks~\cite{Raussendorf2001oneway,Briegel2001PersistentEntanglement,Hein2004EntanglementGraphStates}.
	Moreover, every stabilizer state is local-Clifford equivalent to a graph state~\cite{Gottesman1997Stabilizer,Hein2004EntanglementGraphStates}, making graph states a representative family for studying generic phenomena for stabilizer states.
	
	Specifically, let $G=(V,E)$ be a simple undirected graph on $n$ vertices $V=\{1,\dots,n\}$ with symmetric adjacency matrix $A\in\bF_2^{n\times n}$ with zero diagonal entries.
	The associated $n$-qubit graph state $\ket{G}$ is given by $\ket{G} = \left(\prod_{i<j}\mathrm{CZ}_{i,j}^{A_{i,j}} \right)\ket{+}^{\otimes n}$. 
	We consider the random graph state ensemble obtained by sampling $A$ uniformly from the set of all symmetric, zero-diagonal matrices.
	Our main theorem for this ensemble is the following.
	\begin{theorem}[Constant gap for random graph states]\label{thm:constant-gap-graph-states}
		Let $\ket{\psi}$ be drawn uniformly from $n$-qubit graph states, and let $n_A=1$. Then, with probability $1-\exp(-\Omega(n))$, $\Delta(O_{\psi})=\Omega(1)$.
	\end{theorem}
	\begin{proof}[Proof sketch]
		For a target stabilizer state $\psi$, the observable $O_{\psi}$ has eigenstates in the form $E\ket{\psi}$, where $E$ is a Pauli error. 
		Consequently, bounding the second-largest eigenvalue of $O_{\psi}$ reduces to showing that every Pauli error on $\psi$ can be localized on system $A$ via the random measurements on $B$, therefore be detected by the conditional fidelity. 
		We further reduce this problem to a random linear-algebra problem induced by the random adjacency matrix $A$, and prove this is indeed the case.
		The detailed proof is given in Appendix~\ref{app:proof_spectral_gap}. 
	\end{proof}
	
	Our result of random graph states directly implies the same constant-gap behavior for the corresponding ensemble of random stabilizer states via local Clifford rotations, as these rotations preserve randomized Pauli measurements up to relabeling. 
	To our knowledge, this is the first proof that local random Pauli measurements suffice to certify generic stabilizer states with constant sample complexity and robustness.
	Compared with the standard certification method based on measuring stabilizer operators, our approach is state-independent, as it does not require prior knowledge of the target stabilizer group before measurements. 
	Hence, a single dataset of local Pauli measurements can be reused to certify many target stabilizer states simultaneously.
	
	We remark that prior analyses of non-adaptive local certification have mainly focused on Haar-random target states~\cite{Huang2024Certifying}. 
	Our Theorem~\ref{thm:constant-gap-graph-states} already gives a constant-gap guarantee for a physically motivated and efficiently preparable non-Haar ensemble. 
	This shows that constant-gap behavior is not restricted to highly random ensembles, but can arise in physically relevant states.
	
	\revise{
		\subsection{Provable constant gap for almost all states}\label{subsec:fidelity-Haar-random}
		
		We now establish the performance guarantees for generic pure states. 
		We focus first on Haar-random states, which benchmark the average-case performance of our protocol across the entire Hilbert space. 
		While we show in Appendix~\ref{app:proof_spectral_gap} that the spectral gap $\Delta(O_{\psi})$ is strictly positive almost surely, efficient certification demands a much stronger condition. 
		Specifically, an exponentially small positive gap would still yield poor robustness and prohibitive sample complexity. 
		Our main fidelity theorem overcomes this by proving that, for all but an exponentially small fraction of states, the spectral gap is strictly lower-bounded by a constant.
		
		\begin{theorem}[Constant gap for almost all states]\label{thm:large_spectral}
			Let $\ket{\psi}$ be drawn Haar-randomly from the $n$-qubit Hilbert space $\mathcal{H}_d$ with $d=2^n$ and $n_A \ge 1$. Then, with probability $1-\exp(-\Omega(n))$, $\Delta(O_{\psi}) = \Omega(1)$.
		\end{theorem}
	}
	
	\revise{
		\begin{proof}[Proof sketch]
			It suffices to consider $n_A=1$, since for any one-qubit subsystem $A_1\subseteq A$, one has $O_{\psi,A} \le O_{\psi,A_1}$ and hence $\Delta(O_{\psi,A})\ge\Delta(O_{\psi,A_1})$.
			Let the target state $\ket{\psi}$ and an orthogonal input pure state $\ket{\phi}$ be decomposed as $\ket{\psi}=\ket{0}\ket{u}+\ket{1}\ket{v}$ and $\ket{\phi}=\ket{0}\ket{a}+\ket{1}\ket{b}$, where the first qubit corresponds to subsystem $A$. For each random Pauli measurement outcome $\bmx$ on subsystem $B$, we denote the projected amplitudes as $u_{\bmx}=\langle\bmx|u\rangle$, and define $v_{\bmx}$, $a_{\bmx}$, and $b_{\bmx}$ analogously. 
			The spectral gap can then be expressed as
			\begin{equation}\label{eq:gap_dirichlet_form}
				\bra{\phi}(I-O_\psi)\ket{\phi} = \sum_{\bmx\in\Bxrange}3^{-n_B} \frac{|u_{\bmx}b_{\bmx}-v_{\bmx}a_{\bmx}|^2}{|u_{\bmx}|^2+|v_{\bmx}|^2}.
			\end{equation}
			By introducing two positive semidefinite operators, $H_0$ and $H_1$ (defined in Eq.~\eqref{eq:H0H1}), we apply the Cauchy--Schwarz inequality to lower-bound Eq.~\eqref{eq:gap_dirichlet_form} by the ratio $\langle\phi|H_0|\phi\rangle^2 / \langle\phi|H_1|\phi\rangle$. 
			Using Gaussian random matrix concentration inequalities and the Wick integration formula for Gaussian random variables, we prove that, with probability $1-\exp[-\Omega(n)]$, the following conditions hold simultaneously:
			\begin{equation}
				\begin{split}
					\ker(H_0) &= \operatorname{span}\{\ket{\psi}\}, \\
					\lambda_{\min}^{+}(H_0) &= \Omega(2^{-n}), \\
					\|H_1\| &= \mathcal{O}(2^{-2n}).
				\end{split}
			\end{equation}    
			The dimensional scaling factors exactly cancel in this ratio, yielding a constant lower bound of $\Omega(1)$. The complete proof is provided in Appendix~\ref{app:proof_spectral_gap}.
		\end{proof}
		
		Combining Theorem~\ref{thm:large_spectral} and Proposition~\ref{prop:fidelity_performance} establishes that for any chosen parameter $c<1/2$, a given fidelity $F$ can be certified using $\cO(1)$ samples for all but a $2^{-\Omega(n)}$ fraction of pure states. 
		Moreover, this approach tolerates an infidelity of $(1-2c)\Delta(1-F) = \Omega(1)$. 
		This performance substantially improves upon previous protocols requiring $\cO(n^4)$ samples and $\Omega(n^{-2})$ robustness~\cite{Huang2024Certifying}.
	}
	
	\revise{To test whether this behavior persists for more experimentally accessible targets, we evaluated the practical performance of our protocol using numerical simulations on physically relevant, non-Haar states.}
	We sampled $N_s=50$ states $\{\psi_j\}_{j=1}^{N_s}$ generated by local quantum circuits of depth $10n$. 
	The depth is chosen so that the states are far from Haar-random states, whose preparation requires exponentially deep circuits. 
	To estimate the spectral gap of $O_{\psi}$ for these states, we draw $N_M=100$ random local Pauli bases $\bmb_1,\bmb_2,\ldots,\bmb_{N_M} \in \{X,Y,Z\}^{n_B}$ and define the truncated observable
	\begin{equation}
		\begin{split}
			O^{(i)}_{\psi}
			= \frac{1}{i}\sum_{j=1}^{i} \sum_{\bm z: \tilde{p}(\bmx(\bmb_j,\bm z)) \neq 0}
			&\ketbra{\psi_{\bmx(\bmb_j,\bm z)}}_A \\ 
			\otimes &\ketbra{\bmx(\bmb_j,\bm z)}_B,
		\end{split}
	\end{equation}
	where $1 \le i \le N_M$ and $\bmx(\bmb_j,\bm z) \in \Bxrange$ denotes the basis state corresponding to measurement outcome $\bm z$ in basis $\bmb_j$. The previously defined $O_{\psi}$ is recovered in the limit $i \to \infty$.
	For $n_A=1$, we compute the averaged spectral gap
	\begin{equation}
		\bar{\Delta}_i
		= \frac{1}{N_s} \sum_{j=1}^{N_s} \Delta\left(O^{(i)}_{\psi_j}\right)
	\end{equation}
	over the sampled random states and plot the results in Fig.~\ref{fig:numerics_fidelity_enhanced}a. As $i$ increases, $\bar{\Delta}_i$ grows, indicating that the average gap $\bar{\Delta}_{\infty} = \Delta(O_{\psi})$ exceeds $\bar{\Delta}_{N_M}$; thus $\bar{\Delta}_{N_M}$ gives a good empirical lower bound of $\Delta(O_{\psi})$.
	
	\begin{figure}[!htbp]
		\centering
		\includegraphics[width=.5\textwidth]{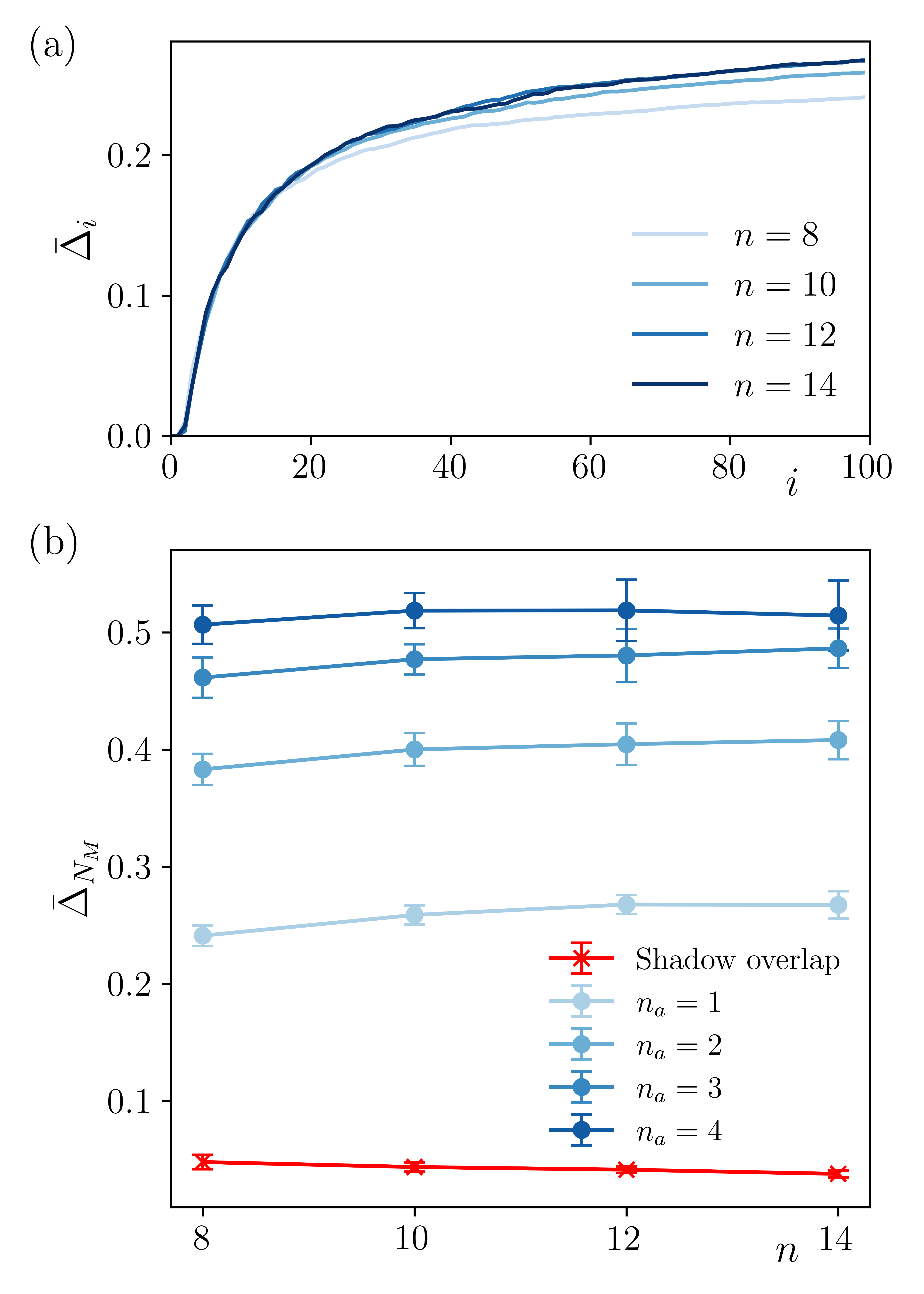} 
		\caption{Spectral gaps of conditional-fidelity observables. (a) The average spectral gap $\bar{\Delta}_i \coloneqq \frac{1}{N_s}\sum_{j=1}^{N_s}\Delta\left(O_{\psi_j}^{(i)}\right)$ for $1\le i\le N_M=100$ random bases increases with $i$, indicating that the limit $O_{\psi}$ (as $i\to\infty$) has a larger average gap than $\bar{\Delta}_{N_M}$. (b) $\bar{\Delta}_{N_M}$ versus $n$ (error bars show the standard deviation across states). Here $\bar{\Delta}_{N_M}$ serves as a proxy for $\bar{\Delta}_{\infty}$, suggesting that the average spectral gap of $O_{\psi}$ remains $\Omega(1)$ as $n$ grows. 
			Moreover, $\bar{\Delta}_{N_M}$ increases with $n_A$, enhancing the protocol’s robustness. 
			For comparison, the shadow-overlap protocol exhibits a smaller spectral gap that decays with $n$.}
		\label{fig:numerics_fidelity_enhanced}
	\end{figure} 
	
	\revise{
		As shown in Fig.~\ref{fig:numerics_fidelity_enhanced}a, the averaged gap rapidly increases and approaches a stable value as more local basis patterns are included.
		Furthermore, Fig.~\ref{fig:numerics_fidelity_enhanced}b shows that the finite-basis gap remains bounded away from zero over the accessible system sizes, in sharp contrast with the decreasing shadow-overlap gap~\cite{Huang2024Certifying}.
		These data clearly indicate that the constant-gap phenomenon established in Theorem~\ref{thm:large_spectral} is not restricted to Haar-random states but extends robustly to physically generated states.
	}
	While enlarging $n_A$ can further increase the spectral gap, thereby improving robustness, it also increases estimator variance as $\mathcal{O}(4^{n_A})$, yielding a trade-off in sample complexity.
	The ability to maintain a large constant gap shows that our protocol is highly practical for certifying and benchmarking intermediate-scale components of near-term devices, using significantly fewer quantum resources while tolerating larger device imperfections.

	\section{Discussion and outlook} \label{sec:discussion}
	
	\revise{In this work, we introduced the physical principle of localizable quantumness and developed a conditional fidelity certification framework, establishing a powerful approach for certifying central quantum properties. 
		By relying only on state-independent, non-adaptive local Pauli measurements, our protocols guarantee mixed-input soundness while yielding substantial improvements in both robustness and sample complexity. 
		Crucially, we resolved the open challenge regarding state fidelity certification by proving that a strictly constant spectral gap, $\Delta(O_{\psi}) = \Omega(1)$, and hence robust certification with constant sample complexity, is achievable for generic quantum states via local measurements. 
		Consequently, our unified framework achieves optimal, constant-sample-complexity scaling for certifying a broad range of many-body properties.
		While our current circuit-complexity witnesses primarily target relatively shallow circuits, measurements are known to concentrate exponentially large circuit complexity into small subsystems~\cite{Du2025SpacetimeComplexity}. 
		Harnessing this complexity concentration phenomenon to design sample-efficient witnesses for highly complex, deep circuits represents a compelling direction for future research.}
	
	Localizable quantumness offers a unifying lens on several previously established effects. For example, complexity concentration~\cite{Du2025SpacetimeComplexity}, localizable entanglement~\cite{Verstraete2004LocalizedEntanglement, Popp2005LocalizableEntanglement}, magic concentration~\cite{Zhang2024MagicConcentration, Niroula2024MagicPhaseTransition, Loio2025quantumstatedesignsmagic}, and magic-state injection~\cite{Bravyi2005Universal} can all be viewed as special cases in which a global resource localizes into a small subsystem after local projective measurements on the complement. 
	This highlights the broader significance of projected-ensemble properties such as $\LQ$ and their potential applications and impact on resource theories and many-body physics. 
	In condensed matter physics, it is especially compelling to study order parameters and phases through the lens of localizable quantumness~\cite{Ge2025BipartiteProjectedEnsemble}. 
	Another exciting direction is to develop a theory of localizable magic, paralleling localizable entanglement~\cite{Verstraete2004LocalizedEntanglement}, as a tool to probe quantum many-body magic.
	
	Extending localizable quantumness to mixed states and developing protocols that certify mixed-state properties, especially mixed-state entanglement and magic, represent especially promising directions.
	Recent work also shows that, with tolerable additional overhead, one can obtain good bounds on nonlinear quantities of the projected ensemble, enabling powerful certifications such as emergent quantum state designs~\cite{McGinley2024PostselectionFreeLearning}. It is therefore compelling to investigate whether nonlinear projected-ensemble quantities can certify stronger quantum properties or yield sharper, more powerful property-detection criteria.
	
	More broadly, our approach complements and extends existing paradigms for certification and estimation such as direct fidelity estimation \cite{Flammia2011DirectFidelityEstimation} and randomized measurements \cite{Huang2020Predicting, Elben2022toolbox} by showing that simple local measurements can certify global properties with favorable scaling. 
	A particularly promising avenue for future work is to tailor our method to other physical platforms, leveraging native, low-overhead measurements to certify key properties—for example, nonclassicality in bosonic systems \cite{Mari2011Nonclassicality} and non-Gaussianity in fermionic systems \cite{Hebenstreit2019AllPureNonGaussian, Gluza2018FidelityWitnessFermionic}.

	\begin{acknowledgments}
		We are grateful to Zhenhuan Liu and Muzhou Ma for insightful discussions in the early stages of this project. We thank Shuo Yang for helpful suggestions on the numerical experiments. 
		This work is supported by the National Natural Science Foundation of China (Grant No. 12575023), the Quantum Science and Technology-National Science and Technology Major Project (Grants No.~2021ZD0300804 and No.~2021ZD0300702), the CCF-QuantumCtek Superconducting Quantum Computing Special Cooperation Program (Grant No.~CCF-QC2025005), and the Turing AI Institute of Nanjing.
		Z.-W.L. is supported in part by NSFC under Grant No.~12475023, Dushi Program, and a startup funding from YMSC. 
		Parts of the numerical results were obtained using the ITensor~\cite{Fishman2022ITensor} and Qiskit~\cite{qiskit2024Quantum} software packages.
		
		\textbf{Data Availability:} The data that support the findings of this article are openly available~\cite{Du2025CertifyingLocalizableRepo}.
	\end{acknowledgments}

	
	\onecolumngrid
	\appendix

	\renewcommand{\thetheorem}{S\arabic{theorem}}
	\renewcommand{\thefact}{S\arabic{fact}}
	\renewcommand{\thelemma}{S\arabic{lemma}}
	\renewcommand{\theequation}{\thesection\arabic{equation}}
	\renewcommand{\thedefinition}{S\arabic{definition}}
	\renewcommand{\theproposition}{S\arabic{proposition}}
	\renewcommand{\thecorollary}{S\arabic{corollary}}
	\renewcommand{\theclaim}{S\arabic{claim}}
	\setcounter{theorem}{0}
	\setcounter{fact}{0}
	\setcounter{lemma}{0}
	\setcounter{equation}{0}
	\setcounter{definition}{0}
	\setcounter{proposition}{0}
	\setcounter{claim}{0}
	\setcounter{corollary}{0}
	
	\renewcommand{\theHtheorem}{S\arabic{theorem}}
	\renewcommand{\theHfact}{S\arabic{fact}}
	\renewcommand{\theHlemma}{S\arabic{lemma}}
	\renewcommand{\theHdefinition}{S\arabic{definition}}
	\renewcommand{\theHproposition}{S\arabic{proposition}}
	\renewcommand{\theHcorollary}{S\arabic{corollary}}
	\renewcommand{\theHclaim}{S\arabic{claim}}
	\renewcommand{\theHfigure}{S\arabic{figure}}

	\section{Additional analysis on the performance of Protocol \ref{prot:certification}}\label{app:prelim}
	
	In this appendix, we analyze the performance of Protocol~\ref{prot:certification}. We first collect the statistical tools used throughout our work, including local shadow tomography, standard concentration inequalities, and the median-of-means estimator. We then establish the performance guarantee based on those tools.
	
	\subsection{Local classical shadow}
	We briefly review the local classical shadow protocol~\cite{Huang2020Predicting}. 
	For a given $k$-qubit input state $\rho$, the protocol performs random single-qubit Pauli measurements in the $X$, $Y$, or $Z$ basis independently on each qubit.  
	This is equivalent to applying a POVM composed of six elements:
	\begin{equation}
		\left\{\frac{1}{3}\ketbra{+},\, \frac{1}{3}\ketbra{-},\, \frac{1}{3}\ketbra{+i},\, \frac{1}{3}\ketbra{-i},\, \frac{1}{3}\ketbra{0},\, \frac{1}{3}\ketbra{1}\right\}.
	\end{equation}
	Suppose the POVM yields measurement outcomes $\bmx \in \mathsf{X}^k$.
	The classical shadow estimator for $\tr(O\rho)$ is given by
	\begin{equation}
		\omega = \tr\Bigl\{O\Bigl[\bigotimes_{i \in [k]} (3\ketbra{\bmx_i} - I_i)\Bigr] \Bigr\}.
	\end{equation}
	This estimator is unbiased:
	\begin{equation}
		\bE[\omega] = \tr(O\rho),
	\end{equation}
	and its variance is bounded by
	\begin{equation}\label{eq:local_shadow_variance}
		\Var[\omega] \le 4^k \norm{O}_{\infty}^2
	\end{equation}
	
	Crucially, $\omega$ uses only a single copy of $\rho$, which is essential for estimating ensemble-averaged local fidelities in our protocol.
	
	\begin{lemma}
		Let $\bm z\sim p$ and $\omega$ be a random variable such that $\bE[\omega| \bm z]=o_{\bm z}\ge 0$ and $\operatorname{Var}[\omega| \bm z]\le \sigma^{2}$ for all $\bm z$. 
		Then, $\bE[\omega] = \sum_\bmz p(\bmz) o_\bmz$ and $\Var[\omega] \le \sigma^2 + \max_{\bmz}\abs{o_{\bmz}}^2$. 
	\end{lemma}
	\begin{proof}
		The proof is from the laws of total expectation and variance,
		\begin{equation}\label{eq:variance_conditional_fidelity}
			\begin{split}
				\bE \omega &= \bE_{\bmz}\bE[\omega|\bmz] \\
				&= \sum_{\bmz} p(\bmz) o_\bmz, \\
				\Var[\omega] &= \bE_z \Var[\omega|\bmz] + \Var_{\bmz}\bE[\omega|\bmz]  \\
				&\le \bE_{\bmz} \sigma^2 + \Var_{\bmz} o_{\bmz}\\
				&\le \sigma^2 + \max_{\bmz}\abs{o_{\bmz}}^2.
			\end{split}
		\end{equation}
	\end{proof}
	
	This enables unbiased estimation of projected-ensemble–averaged linear quantities~\cite{McGinley2024PostselectionFreeLearning}. 
	Combining the lemma with the variance bound in~\eqref{eq:local_shadow_variance} yields the following guarantee for estimating conditional fidelities.
	
	\begin{corollary}[Variance of conditional fidelity]\label{col:var_conditional_fidelity}
		Let an $n$-qubit state $\rho$ be partitioned as $A\cup B$, and measure $B$ in the computational basis to obtain $\bm z$ with probability $p_\rho(\bm z)$ and post-measurement state $\rho_{\bm z}$ on $A$. Construct $\omega$ by applying local classical shadows on $\rho_{\bmz}$ to estimate $\tr(\psi_{\bm z}\rho_{\bm z})$. Then, $\bE[\omega] = \sum_{\bmz} p_{\rho}(\bmz) \tr(\psi_{\bmz}\rho_{\bmz})$ and $\Var[\omega] \le 4^{n_A} + 1$.
	\end{corollary}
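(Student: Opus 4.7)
The plan is to invoke the immediately preceding lemma with the appropriate identifications and then use the local classical shadow variance bound stated in Eq.~\eqref{eq:local_shadow_variance}. First, I would identify the outer random variable $\bm z \sim p_\rho(\bm z)$ induced by the computational-basis measurement on $B$. Conditioned on this outcome, the post-measurement state on $A$ is $\rho_{\bm z}$, and the classical-shadow estimator $\omega$ constructed from independent single-qubit Pauli measurements on $A$ satisfies, by unbiasedness, $\mathbb{E}[\omega \mid \bm z] = \tr(\psi_{\bm z}\rho_{\bm z}) =: o_{\bm z}$, which is clearly nonnegative.

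Next, I would bound the conditional variance. Applying the local shadow variance bound with observable $O = \psi_{\bm z}$ and $k = n_A$ gives $\Var[\omega \mid \bm z] \le 4^{n_A} \norm{\psi_{\bm z}}_\infty^2 = 4^{n_A}$, where the last equality uses that $\psi_{\bm z} = \ketbra{\psi_{\bm z}}$ is a pure-state projector and therefore has operator norm one. Thus we may take $\sigma^2 = 4^{n_A}$ uniformly in $\bm z$. Since $o_{\bm z} = \tr(\psi_{\bm z}\rho_{\bm z}) \in [0,1]$, we also have $\max_{\bm z} |o_{\bm z}|^2 \le 1$.

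Finally, I would plug these two quantitative inputs into the preceding lemma. The expectation formula gives $\mathbb{E}[\omega] = \sum_{\bm z} p_\rho(\bm z)\, \tr(\psi_{\bm z}\rho_{\bm z})$, and the variance bound yields $\Var[\omega] \le \sigma^2 + \max_{\bm z}|o_{\bm z}|^2 \le 4^{n_A} + 1$, which is exactly the claim. There is no real obstacle here: the only step that requires a moment of care is noting that the shadow variance bound is uniform in $\bm z$ because the relevant operator norm is always one, so the conditional hypothesis of the preceding lemma is met with a single global constant $\sigma^2$.
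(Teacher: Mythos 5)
Your proposal is correct and matches the paper's argument exactly: the paper likewise obtains the corollary by combining the conditional expectation/variance lemma with the local-shadow variance bound of Eq.~\eqref{eq:local_shadow_variance}, using $\|\psi_{\bm z}\|_\infty=1$ to get $\sigma^2=4^{n_A}$ and $\max_{\bm z}|o_{\bm z}|^2\le 1$. No gaps.
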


	\subsection{Median-of-means estimator}
	The median-of-means estimator is a standard statistical method that uses medians to suppress the contribution of outlier estimators, which has the benefit of exponentially suppressing the failure probability.
	We first introduce Hoeffding's inequality, which is useful for bounded-value estimators:
	
	\begin{proposition}[Hoeffding's inequality, {\cite[Theorem 8]{Kliesch2021CertificationSurvey}}]\label{prop:Hoeffding}
		Let $\{\omega_i\}_{i=1}^{T}$ be i.i.d. random variables with $a_i \le \omega_i \le b_i$, and define $S_T \coloneqq \sum_{i=1}^{T} \omega_i$.  For any $t>0$,
		\begin{equation}
			\Pr[S_T - \bE[S_T] \ge t] \le \exp\left[-\frac{2t^2}{\sum_{i=1}^T(b_i - a_i)^2}\right]
		\end{equation}
	\end{proposition}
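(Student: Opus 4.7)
The plan is the classical Chernoff-bound route, which is the standard proof of a Hoeffding-type tail bound, carried out in four elementary steps.

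First, I would reduce to the zero-mean case by setting $X_i := \omega_i - \mathbb{E}[\omega_i]$. Then $\mathbb{E}[X_i] = 0$, $X_i \in [a_i - \mathbb{E}[\omega_i],\, b_i - \mathbb{E}[\omega_i]]$ almost surely, the width of this interval is still $b_i - a_i$, and the event $\{S_T - \mathbb{E}[S_T] \ge t\}$ becomes $\{\sum_i X_i \ge t\}$. Second, I would apply exponential Markov (the Chernoff trick): for any $\lambda > 0$,
\begin{equation}
    \Pr\Bigl[\textstyle\sum_{i=1}^{T} X_i \ge t\Bigr] \;\le\; e^{-\lambda t}\,\mathbb{E}\bigl[e^{\lambda \sum_i X_i}\bigr] \;=\; e^{-\lambda t}\prod_{i=1}^{T} \mathbb{E}\bigl[e^{\lambda X_i}\bigr],
\end{equation}
where the factorization uses independence of the $\omega_i$.

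The central technical ingredient is Hoeffding's lemma: for any zero-mean $X$ with $X \in [\alpha,\beta]$ almost surely, $\mathbb{E}[e^{\lambda X}] \le \exp\bigl(\lambda^2(\beta-\alpha)^2/8\bigr)$. I would prove it by the standard convexity-plus-Taylor argument. Convexity of $x \mapsto e^{\lambda x}$ on $[\alpha,\beta]$ gives $e^{\lambda X} \le \tfrac{\beta - X}{\beta - \alpha}\,e^{\lambda \alpha} + \tfrac{X - \alpha}{\beta - \alpha}\,e^{\lambda \beta}$ pointwise; taking expectation and using $\mathbb{E}[X] = 0$ yields $\mathbb{E}[e^{\lambda X}] \le e^{\phi(u)}$ with $u := \lambda(\beta - \alpha)$, where $\phi(0) = \phi'(0) = 0$. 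A short direct computation shows $\phi''(u) \le 1/4$ for all $u$, so a second-order Taylor bound gives $\phi(u) \le u^2/8$ and hence the claim.

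Finally, I would apply Hoeffding's lemma to each $X_i$ and optimize. The Chernoff bound reduces to $\exp\bigl(-\lambda t + \tfrac{\lambda^2}{8}\sum_{i=1}^{T} (b_i - a_i)^2\bigr)$; minimizing over $\lambda > 0$ gives $\lambda^\ast = 4t / \sum_i(b_i - a_i)^2$ and produces exactly the announced bound $\exp\bigl(-2t^2/\sum_i(b_i - a_i)^2\bigr)$. The only nontrivial piece is establishing the tight constant $1/8$ in Hoeffding's lemma; the surrounding steps are routine applications of Markov's inequality, independence, and single-variable calculus. Since this proposition is quoted as a textbook result from \cite{Kliesch2021CertificationSurvey}, I expect the authors to cite rather than re-derive it.
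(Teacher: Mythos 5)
Your derivation is correct: the reduction to zero mean, the Chernoff/exponential-Markov step with factorization over independent variables, Hoeffding's lemma with the $1/8$ constant via the convexity-plus-Taylor bound on $\phi''\le 1/4$, and the optimization $\lambda^\ast = 4t/\sum_i(b_i-a_i)^2$ together give exactly the stated tail bound. The paper itself does not re-derive this proposition but imports it by citation, so there is nothing to compare against beyond noting that yours is the standard textbook proof and is complete.
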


	Then, we introduce the median-of-means estimator:
	\begin{proposition}[Median-of-means estimator]\label{prop:mom-estimator}
		Let $T=BK$ and $\{\omega_i\}_{i=1}^{T}$ be i.i.d. random variables with mean $\mu$ and variance $\sigma^{2}$.  
		For each block $l\in\{1,\dots,K\}$ define the empirical mean
		\begin{equation}
			o_l \coloneq  \frac{1}{B}\sum_{i=(l-1)B+1}^{lB} \omega_i,
		\end{equation}
		and let $\omega = \mathrm{median}\{o_1,o_2,\cdots,o_K\}$.
		Fix $\delta\in(0,1)$ and $\varepsilon>0$.  By choosing $B = \frac{6\sigma^2}{\varepsilon^2}$, $K = \frac{9}{2} \ln(\delta^{-1})$, we have that
		\begin{equation}
			\Pr[\abs{\omega-\mu} \ge \varepsilon] \le \delta.
		\end{equation}
		Equivalently, with probability at least $1-\delta$,
		\begin{equation}\label{eq:claim-distance-from-mean-MoM}
			\abs{\omega - \mu} < \sigma \sqrt{\frac{27 \ln (\delta^{-1})}{T}}
		\end{equation}
	\end{proposition}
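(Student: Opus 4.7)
\bigskip

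\noindent\textbf{Proof plan for Proposition~\ref{prop:mom-estimator}.} The plan is to follow the standard two-step median-of-means argument. The first step controls the fluctuations of a single block mean $o_l$ via a variance bound, and the second step uses a binomial concentration on the number of ``bad'' blocks so that the median inherits an exponentially small failure probability.

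\emph{Step 1 (single block via Chebyshev).} Each block mean $o_l$ is the average of $B$ i.i.d.\ copies, so $\mathbb{E}[o_l]=\mu$ and $\operatorname{Var}(o_l)=\sigma^2/B$. By Chebyshev's inequality,
\begin{equation}
\Pr[\,|o_l-\mu|\ge \varepsilon\,] \;\le\; \frac{\sigma^2}{B\varepsilon^2}.
\end{equation}
Choosing $B=6\sigma^2/\varepsilon^2$ yields $\Pr[\,|o_l-\mu|\ge \varepsilon\,]\le 1/6$. Call a block $l$ \emph{bad} if $|o_l-\mu|\ge \varepsilon$, and let $X_l\in\{0,1\}$ be its indicator, so that the $X_l$ are i.i.d.\ Bernoulli with parameter $p\le 1/6$.

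\emph{Step 2 (median via Hoeffding on the bad-block count).} The key structural observation is that the median $\omega$ satisfies $|\omega-\mu|\ge \varepsilon$ only if at least half of the blocks are bad, i.e.\ $S_K\coloneqq \sum_{l=1}^{K} X_l \ge K/2$. Since $\mathbb{E}[S_K]=pK\le K/6$, Hoeffding's inequality (Proposition~\ref{prop:Hoeffding}) with $X_l\in[0,1]$ gives
\begin{equation}
\Pr\!\left[S_K \ge \tfrac{K}{2}\right]
\;=\; \Pr\!\left[S_K - pK \ge \left(\tfrac{1}{2}-p\right)K\right]
\;\le\; \exp\!\left(-2K\left(\tfrac{1}{2}-p\right)^{2}\right)
\;\le\; \exp\!\left(-\tfrac{2K}{9}\right),
\end{equation}
where the last inequality uses $p\le 1/6$ so that $(1/2-p)^2\ge (1/3)^2=1/9$. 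Setting $K=\tfrac{9}{2}\ln(\delta^{-1})$ makes the right-hand side at most $\delta$, yielding $\Pr[\,|\omega-\mu|\ge \varepsilon\,]\le \delta$ as claimed. The equivalent form~\eqref{eq:claim-distance-from-mean-MoM} then follows by solving $\varepsilon^2 = 6\sigma^2/B = 6\sigma^2 K/T$ with the chosen $K$, giving $\varepsilon = \sigma\sqrt{27\ln(\delta^{-1})/T}$.

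\emph{Anticipated obstacles.} There is no substantive obstacle; this is a textbook argument. The only care needed is the deterministic implication ``median far $\Rightarrow$ at least half of the blocks are bad,'' which must be stated with the correct inequality (``at least $K/2$'' rather than strict majority) so that the constants match $B=6\sigma^2/\varepsilon^2$ and $K=\tfrac{9}{2}\ln(\delta^{-1})$ exactly. One may alternatively sharpen the Chebyshev step to reduce $p$ below $1/6$, but this would only improve constants and is unnecessary for the stated bound.
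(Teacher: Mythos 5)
Your proposal is correct and follows essentially the same route as the paper's own proof: Chebyshev on each block mean with $B=6\sigma^2/\varepsilon^2$ to get a bad-block probability of at most $1/6$, then Hoeffding's inequality on the sum of bad-block indicators to bound $\Pr[\text{at least }K/2\text{ bad blocks}]$ by $\exp(-2K/9)=\delta$. Your handling of the deterministic implication (median far $\Rightarrow$ at least half the blocks bad) is in fact stated slightly more carefully than the paper, which writes it as an equality.
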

	
	\begin{proof}
		For each block mean $o_l$ we have $\bE[o_l]=\mu$ and
		$\Var[o_l] = \sigma^2/B = \varepsilon^2 / 6$.  
		By Chebyshev’s inequality,
		\begin{equation}
			\Pr[\abs{o_l - \mu} \ge \varepsilon] \le \frac{\Var[o_l]}{\varepsilon^2} =\frac{1}{6}.
		\end{equation}
		Define the indicator $\tilde{o}_l \coloneqq \mathbf{1}\bigl[\lvert o_l-\mu\rvert \ge \varepsilon\bigr]$, then $\bE[\tilde{o}_l]\le 1/6$ and $0\le\tilde{o}_l\le 1$.  
		Applying Proposition~\ref{prop:Hoeffding} to $\sum_{l=1}^{K}\tilde{o}_l$, we have that:
		\begin{equation}
			\begin{split}
				\Pr[\abs{\omega-\mu}\ge \varepsilon] &\le \Pr[\sum_{i=1}^{K} \tilde{o}_l \ge \frac{K}{2}] \\ 
				&\le \Pr[\sum_{i=1}^{K} \tilde{o}_l - \bE[\sum_{i=1}^{K} \tilde{o}_l]\ge \frac{K}{3}] \\
				& \le \exp\left[-\frac{2(\frac{K}{3})^2}{K}\right] = \delta,
			\end{split}
		\end{equation}
		completing the proof.
	\end{proof}
	
	\subsection{Proof of Theorem \ref{thm:performance_certification_protocol}}
	\begin{proof}
		\emph{Soundness}---For any property-free state $\rho \in \freeset$, we have $\eta_{\psi}(\rho)\le 0$ by Lemma \ref{lem:eta_witness}. Moreover, $\bE[\omega_i - \fid(\psi_{\bmz_i})] = \eta_{\psi}(\rho)$.
		Applying the median-of-means estimator (Proposition \ref{prop:mom-estimator}) to $T$ independent estimates $\{\omega_i - \fid(\psi_{\bmz_i})\}_{i=1}^{T}$ yields a final estimator $\omega$ satisfying
		\begin{equation}\label{eq:error_MoM}
			\abs{\omega - \eta_{\psi}(\rho)} < \sigma \sqrt{\frac{27 \ln (\delta^{-1})}{T}}.
		\end{equation}
		with probability at least $1-\delta$, where $\sigma^{2}=4^{n_A}+1$ by Corollary~\ref{col:var_conditional_fidelity}.
		With the choice of $T$ in Eq.~\eqref{eq:T_value}, we obtain, with probability at least $1-\delta$,
		\begin{equation}
			\omega < \eta_{\psi}(\rho)+\LQ(\psi)/{3} \le \eta^\ast,
		\end{equation}
		So the protocol outputs $\reject$.  
		
		\emph{Completeness}---Since the experimental state $\rho$ is $\varepsilon$-close to the ideal state $\psi$ in trace distance, expectation values of bounded observables are also close, ensuring that $\eta_{\psi}(\rho)$ is close to the ideal value. Specifically, for a given $\rho$ satisfying $\dtr(\rho,\psi) \le \varepsilon$, we have that 
		\begin{equation}
			\begin{split}
				\eta_{\psi}(\rho) &= \tr(\propwitness \psi) + \tr(\propwitness (\rho-\psi)) \\
				&\ge \LQ(\psi) - \norm{\propwitness}_{\infty} \norm{(\rho-\psi)}_1\\
				&\ge  \LQ(\psi) - 2\dtr(\rho,\psi) \\
				&\ge \frac{2}{3}\LQ(\psi).
			\end{split}
		\end{equation}
		
		Choosing $T$ as in Eq.~\eqref{eq:T_value} ensures
		\begin{equation}
			\omega > \eta_{\psi}(\rho) - \frac{\LQ(\psi)}{3} \ge \eta^\ast
		\end{equation}
		with probability at least $1-\delta$. 
	\end{proof}
	
	\section{Additional proofs on complexity certification} \label{app:additional_proofs_complexity}
	
	\subsection{Proof of Lemma \ref{lem:entanglement_free_projected_state}: entanglement of low-complexity states}
	\begin{proof}
		Because $C_2(\phi)\le d$, the state $\ket{\phi}$ can be prepared from $\ket{0}^{\otimes n}$ by a two-dimensional depth-$d$ circuit
		\begin{equation}
			U = \prod_{i=1}^d \left(\bigotimes_j V_{i,j}\right)
		\end{equation}
		where the gates $\{V_{i,j}\}_{j}$ in the $i$-th layer act on disjoint pairs of nearest-neighbour qubits or on single qubits. Define the backward light cone of the subsystem $B$ layer by layer:
		\begin{equation}
			\begin{split}
				\cL_{B,d+1} &= B, \\
				\cL_{B,i} &= \cL_{B,i+1}\cup \left[\bigcup \left\{\mathrm{supp}(V_{i,j}) \;\big|\; \mathrm{supp}(V_{i,j}) \cap \cL_{B,i+1} \neq \emptyset\right\}\right].
			\end{split}
		\end{equation}
		Thus $\mathcal{L}_{B,1}$ contains all qubits that can interact with $B$ through at most $d$ layers.  By construction of $L_1$ and $R_1$ (see Fig.~\ref{fig:projected_ensemble_complexity}), 
		\begin{equation}\label{eq:backward_lightcone_L1R1}
			(\cL_{B,1} \cap L) \subseteq L_1, \quad (\cL_{B,1} \cap R) \subseteq R_1
		\end{equation}
		
		The unitary $V$ is defined as the product of all gates whose supports lie outside the backward light cone of $B$:
		\begin{equation}
			V = \prod_{i=1}^d \left(\bigotimes_{\mathrm{supp}(V_{i,j})\not\subseteq \cL_{B,i}} V_{i,j}\right),
		\end{equation}
		The remaining two-qubit gates define the unitary
		\begin{equation}
			W = \prod_{i=1}^d \left(\bigotimes_{\mathrm{supp}(V_{i,j})\subseteq \cL_{B,i}} V_{i,j}\right).
		\end{equation}
		By construction, $U = VW$, and $V$ acts only on subsystem $A$. Consequently, $V$ commutes with the operator $\bra{\bmz}_B \otimes I_A$ for every measurement outcome $\bmz \in \Bzrange$. The unnormalised projected state can therefore be written as
		\begin{equation}\label{eq:commute_V_Z}
			\begin{split}
				\ket{\widetilde{\phi_{\bmz}}} &= (\bra{\bmz}_{B} \otimes I_A) \ket{\phi} \\
				&=  (\bra{\bmz}_{B} \otimes I_A)  VW\ket{0}_{AB} \\
				&= V (\bra{\bmz}_{B} \otimes I_A) W\ket{0}_{AB} 
			\end{split}
		\end{equation}
		Eq.~\eqref{eq:backward_lightcone_L1R1} implies that $W$ acts non-trivially only on $B L_1 R_1 $.  Consequently,
		\begin{equation}
			\begin{split}
				V (\bra{\bmz}_{B} \otimes I_A) W\ket{0}_{AB}&= V [(\bra{\bmz}_{B}  \ket{\phi_1}_{BL_1L_2})\ket{0}_{L_2R_2}] \\
				&= V \left(\ket{\widetilde{\varphi}}_{L_1R_1} \ket{0}_{L_2R_2}\right)
			\end{split}
		\end{equation}
		Normalizing this state, we obtain
		\begin{equation}\label{eq:projected_state_low_complexity}
			\ket{\phi_{\bmz}} =  V \left(\ket{\varphi}_{L_1R_1} \otimes \ket{0}_{L_2R_2}\right).
		\end{equation}
		
		The circuit $V$ has depth $d$, and in each layer at most $2w$ two-qubit gates can cross the cut $L\mid R$, so at most $2wd$ such gates appear in total.  Each gate increases the bipartite entanglement entropy by at most $2$~{\cite[Lemma~4]{Harrow2021SeparationOTOC}}.  Hence
		\begin{equation}
			\begin{split}
				E_{L\mid R}(\phi_{\bmz}) &\le E_{L\mid R}(\ket{\varphi}_{L_1R_1} \otimes \ket{0}_{L_2R_2}) + 4wd \\
				&= E_{L_1\mid R_1}(\varphi_{L_1R_1}) + 4wd \\
				&\le 8wd.
			\end{split}
		\end{equation}
		where we used $E_{L_1\mid R_1}(\varphi_{L_1R_1})\le|L_{1}|\le 4w d$.  This completes the proof.
	\end{proof}
	
	\subsection{Proof of Theorem \ref{thm:2Dcomplexity}: certifying unitary circuit complexity}
	
	We first establish the following lemma, which gives an upper bound on the maximal fidelity of highly entangled projected states.
	
	\begin{lemma}\label{lem:fid_proj_entanglement}
		Let $\phi$ be a pure state on subsystem~$A$ with bipartite entanglement $E_{L\mid R}(\phi)=e>8wd+1$.  Then
		\begin{equation}\label{eq:fid_bound_entanglement}
			\fid(\phi) \le 1 - \left(\frac{e-1}{4w^2} - \frac{2d}{w}\right)^2.
		\end{equation}
	\end{lemma}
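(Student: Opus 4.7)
My plan is to reduce the supremum in $\fid(\phi)$ to pure low-entanglement states on $A$, use Fannes' continuity-of-entropy inequality to lower bound $\epsilon \coloneqq D(\phi,\sigma)$, and then apply the pure-state identity $D^2 = 1 - F$ to convert this into the stated fidelity bound.

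The first step is a convexity argument. By the convex-roof definition in Eq.~\eqref{eq:def_mixed_state_complexity}, $\freeset$ is convex, so its image $\freeprojset$ under computational-basis projection on $B$ is also convex; since $\sigma\mapsto \tr(\sigma\phi)$ is linear, the supremum is attained at an extreme point of $\freeprojset$. Such an extreme point arises from a pure $\ket{\phi'}$ with $C_2(\phi')\le d$, and projecting a pure global state onto $\ket{\bm z}_B$ yields a pure state on $A$, so Lemma~\ref{lem:entanglement_free_projected_state} bounds its $L\mid R$ entropy by $8wd$. Hence it suffices to consider pure $\sigma$ with $S(\sigma_L) \le 8wd$.

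Combined with $S(\phi_L) = e$, monotonicity of trace distance under partial trace gives $\|\phi_L - \sigma_L\|_1 \le 2\epsilon$, and Fannes' inequality on the $2w^2$-qubit subsystem $L$ (Hilbert-space dimension $d_L = 2^{2w^2}$) yields
\begin{equation*}
    e - 8wd \;\le\; \|\phi_L-\sigma_L\|_1 \log_2 d_L + \eta(\|\phi_L-\sigma_L\|_1) \;\le\; 4w^2 \epsilon + 1,
\end{equation*}
with $\eta(t)=-t\log_2 t \le 1$; rearranging gives $\epsilon \ge (e-1)/(4w^2) - 2d/w$, which is positive by the hypothesis $e > 8wd + 1$. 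Finally, for pure $\phi$ and $\sigma$ we have $|\langle \sigma | \phi \rangle|^2 = 1 - D(\phi,\sigma)^2 \le 1 - \epsilon^2$, so taking the supremum over admissible $\sigma$ matches the claimed upper bound on $\fid(\phi)$. I do not foresee a serious obstacle: Fannes is a black-box inequality, and the only subtlety is the reduction to pure $\sigma$, which follows from convexity of $\freeset$ and linearity of the fidelity functional.
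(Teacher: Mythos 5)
Your proof is correct and follows essentially the same route as the paper's: Fannes' inequality on $L$ combined with monotonicity of the trace distance under partial trace to get $\dtr(\phi,\sigma)\ge (e-8wd-1)/(4w^2)$, then the pure-state identity $\tr(\phi\sigma)=1-\dtr^2(\phi,\sigma)$. The one loose spot is the reduction to pure $\sigma$ — the map $\rho\mapsto\rho_{\bm z}$ involves normalization and is not affine, so "$\freeprojset$ is the convex image of a convex set" is not quite right as stated — but the needed fact follows immediately by decomposing $\rho\in\freeset$ into pure free states and observing that $\rho_{\bm z}$ is then a convex mixture of their pure projected states, a step the paper's own proof omits entirely.
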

	\begin{proof}
		For any pure state $\varphi \in \freeprojset$, 
		\begin{equation}
			\begin{split}
				E_{L\mid R}(\phi) - E_{L\mid R}(\varphi) &\le  \abs{E_{L\mid R}(\phi) - E_{L\mid R}(\varphi)} \\
				& \le 2 \dtr(\phi_L, \varphi_L) \abs{L} + 1.
			\end{split}
		\end{equation}
		where the last step uses Fannes’ inequality~\cite[Theorem~11.6]{Nielsen2012QCQI}.  Because trace distance is non-increasing under partial trace, $\dtr(\phi_L, \varphi_L) \le \dtr(\phi,\varphi)$. With $\abs{L} = 2w^2$ and $E_{L\mid R}(\varphi) \le 8wd$ as given in Lemma \ref{lem:entanglement_free_projected_state}, we obtain
		\begin{equation}
			\dtr(\phi,\varphi) \ge \frac{e - 8wd - 1}{4w^2} = \frac{e-1}{4w^2} - \frac{2d}{w}.
		\end{equation}
		Finally,
		\begin{equation}
			\tr(\phi\varphi) = 1-\dtr^2(\phi,\varphi) \le 1 - \left(\frac{e-1}{4w^2}-\frac{2d}{w}\right)^2,
		\end{equation}
		which completes the proof.
	\end{proof}
	
	Then we prove the following theorem.
	\begin{theorem}
		Let $\psi$ be the target state and consider the bipartition $A\cup B$ shown in Fig.~\ref{fig:projected_ensemble_complexity}, with $w=\eta d$ for some $\eta>4$.  Assume projected ensemble of $\psi$ on $A$ satisfies
		\begin{equation}\label{eq:ineq_prob_projected_state_entanglement}
			\Pr_{\bmz \sim p_{\psi}(\bmz)}\left[E_{L \mid R}(\psi_{\bmz}) \ge c \abs{L} + 1\right] \ge p > 0,   
		\end{equation}
		for a constant $c>4\eta^{-1}$.  Then Protocol~\ref{prot:certification} certifies that the circuit complexity of $\psi$ exceeds $d$ with constant failure probability~$\delta$, using
		\begin{equation}
			T = \cO\left(\frac{2^{\cO(\eta^2d^2)}}{(c-4\eta^{-1})^4 p^2} \ln(\delta^{-1})    \right)
		\end{equation}
		copies of $\psi$ and achieving robustness $\varepsilon = \Omega\bigl((c-4\eta^{-1})^2p\bigr)$.
	\end{theorem}
	\begin{proof}[Proof of Theorem \ref{thm:2Dcomplexity}]
		Lemma~\ref{lem:fid_proj_entanglement} and Eq.~\eqref{eq:ineq_prob_projected_state_entanglement} give
		\begin{equation}
			\Pr_{\bmz \sim p_{\psi}(\bmz)}\left[\fid(\psi_{\bmz}) \le 1 - \tfrac{1}{4}(c-4\eta^{-1})^2 \right] \ge p > 0,  
		\end{equation}
		which implies $\LQ(\psi) \ge \frac{1}{4}(c-4\eta^{-1})^2 p$.
		Substituting this into Theorem~\ref{thm:performance_certification_protocol} with variance $\sigma^{2}=4^{n_A}=2^{\cO(\eta^{2}d^{2})}$ produces the stated sample complexity and robustness.
	\end{proof}
	
	Theorem~\ref{thm:2Dcomplexity} follows immediately by substituting $c-4\eta^{-1}=\Omega(1)$ and $p=\Omega(1)$ into the preceding theorem.
	
	\subsection{Proof of Lemma \ref{lem:ent_adaptive_circuit}: entanglement in states of low measurement-assisted circuit complexity}
	
	\begin{proof}
		
		We analyze the preparation of $\ket{\phi}$, considering more and more layers while tracking the expected bipartite entanglement entropy $\bE_{m}[E_{L\mid(R\cup B)}(\phi_{m})]$, where $m$ denotes the outcomes of mid-circuit measurements in the considered layers, and $\phi_{m}$ denotes the state after performing the considered layers given the measurement outcomes $m$.
		In each layer of unitary gates, at most $6w$ two-qubit gates can cross the cut $L\mid(R\cup B)$. Each such gate can increase the bipartite entanglement entropy by at most $2$~\cite[Lemma~4]{Harrow2021SeparationOTOC}. Therefore, the increase in entanglement entropy across the cut per layer is at most $12w$.
		For measurement rounds, local projective measurements cannot increase entanglement entropy on average~\cite{Vidal2000EntanglementMonotones}. Therefore,  after a measurement layer, the expected entanglement entropy across the cut does not increase.
		After $d$ layers in total, combining the above two observations yields
		\begin{equation}
			E_{L\mid(R\cup B)}(\phi)
			=\bE_{m}\left[E_{L\mid(R\cup B)}(\phi_{m})\right]
			\;\leq\;12wd.
		\end{equation}
		Finally, applying again the monotonicity of entanglement entropy under local measurements when measuring $B$, we obtain
		\begin{equation}
			\bE_{\bmz \sim p_{\phi}(\bmz)}[E_{L\mid R}(\phi_{\bm z})]\leq E_{L\mid (R\cup B)}(\phi)\leq 12wd.
		\end{equation}
	\end{proof}

	\subsection{Proof of Theorem \ref{thm:2Dadaptive_complexity}: certifying measurement-assisted circuit complexity}
	
	We begin by analyzing the performance of the linear witness $\tilde{O}$ given in Eq.~\eqref{eq:adaptive_complexity_witness}. Define the set
	\begin{equation}
		\pfreeprojset \coloneqq\Bigl\{\phi:\;E_{L\mid R}(\phi)\le\frac{12wd}{1-p}\Bigr\}.
	\end{equation}

	We call a set $\mathcal{S}$ a $p$-likely free projected-state set with respect to property $\mathsf{P}$ if, for every pure $\mathsf{P}$-free state $\psi\in\freeset$,
	\begin{equation}
		\Pr_{\psi_{\bm z}\sim\mathcal{E}(\psi)}\bigl[\psi_{\bm z}\in \mathcal{S}\bigr]\ge p.
	\end{equation}
	The set $\pfreeprojset$ is a $p$-likely free projected-state set for the property $\mathsf{P}$ of ``measurement-assisted circuit complexity $C_2^{anc}> d$''. 
	This follows from Lemma~\ref{lem:ent_adaptive_circuit} together with Markov’s inequality.
	
	We define the maximal fidelity of a state against this $p$-likely-free set as
	\begin{equation}
		\fidp(\psi_{\bm z}) \coloneqq\max_{\rho\in\pfreeprojset}\tr(\rho\,\psi_{\bm z}),
	\end{equation}
	and its cumulative distribution function
	\begin{equation}
		\culprobp\coloneqq\Pr_{\psi_{\bm z}\sim\mathcal{E}(\psi)}\left[\fidp(\psi_{\bm z})\le t\right].
	\end{equation}
	With these definitions, the linear witness is formally defined as
	\begin{equation}
		\threswitnessnewp \coloneqq \sum_{\bm z}\bigl[t+(1-t)\Theta_t\bigl(\fidp(\psi_{\bm z})\bigr)\bigr]\, \ketbra{\psi_{\bm z}}_A\otimes \ketbra{\bm z}_B.
	\end{equation}
	\begin{lemma}[Performance of witness $\threswitnessnewp$]\label{lem:performance_witnessp}
		For any threshold $t\in (0,1)$: (i) (Soundness): If $\rho\in\freeset$, then $\tr(\threswitnessnewp \rho) \le t + (1-p)(1-t)$. (ii) (Completeness): For the target $\psi$, $\tr(\threswitnessnewp \psi) = t+(1-t)\culprobp$. When the gap 
		\begin{equation}
			\Delta(\psi,t,p)\coloneqq (1-t)(\culprobp-(1-p)) > 0,
		\end{equation}
		the witness $\threswitnessnewp$ separates $\psi$ from $\freeset$, certifying that $\psi$ has measurement-assisted circuit complexity greater than $d$.
	\end{lemma}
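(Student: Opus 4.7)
The plan is to handle completeness by a direct computation and soundness by a convex reduction to pure free states followed by a two-case split on the threshold $\Theta_t$. For completeness, I would expand $\ket{\psi}=\sum_{\bm z}\sqrt{p_\psi(\bm z)}\ket{\psi_{\bm z}}_A\otimes\ket{\bm z}_B$ and observe that the $\ketbra{\bm z}_B$ factors in $\threswitnessnewp$ kill all off-diagonal cross terms in $\ket{\psi}\!\bra{\psi}$. What remains is $\sum_{\bm z} p_\psi(\bm z)\bigl[t+(1-t)\Theta_t(\fidp(\psi_{\bm z}))\bigr]$, and the expectation of $\Theta_t(\fidp(\psi_{\bm z}))$ under $p_\psi$ is exactly the CDF $\culprobp$, yielding $\tr(\threswitnessnewp\psi)=t+(1-t)\culprobp$.

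For soundness, I would first use linearity and the convex decomposition $\rho=\sum_k q_k\,\psi_k$ with pure $\psi_k\in\freeset$ to reduce to the pure case. Fix such a pure $\psi_k$. The definition of a $p$-likely free projected-state set (applied just before the lemma) gives $\Pr_{\bm z\sim p_{\psi_k}}[(\psi_k)_{\bm z}\in\pfreeprojset]\ge p$. Set $c(\bm z)\coloneqq t+(1-t)\Theta_t(\fidp(\psi_{\bm z}))$ and $f(\bm z)\coloneqq\tr(\psi_{\bm z}(\psi_k)_{\bm z})$; then $\tr(\threswitnessnewp\psi_k)=\sum_{\bm z} p_{\psi_k}(\bm z)\,c(\bm z)\,f(\bm z)$. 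On the event $(\psi_k)_{\bm z}\in\pfreeprojset$, the maximal-fidelity definition gives $f(\bm z)\le\fidp(\psi_{\bm z})$. A two-sub-case analysis on the threshold then shows $c(\bm z)f(\bm z)\le t$ uniformly: if $\fidp(\psi_{\bm z})\le t$ then $c(\bm z)=1$ but $f(\bm z)\le\fidp(\psi_{\bm z})\le t$; if $\fidp(\psi_{\bm z})>t$ then $c(\bm z)=t$ and $f(\bm z)\le 1$. On the complementary event (probability at most $1-p$), both $c(\bm z)\le 1$ and $f(\bm z)\le 1$, so $c(\bm z)f(\bm z)\le 1$. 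Averaging gives $\tr(\threswitnessnewp\psi_k)\le t\cdot p+1\cdot(1-p)=t+(1-p)(1-t)$, and convexity extends this to all $\rho\in\freeset$.

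The gap claim is then immediate: subtracting the soundness bound from the target value yields $t+(1-t)\culprobp-\bigl[t+(1-p)(1-t)\bigr]=(1-t)\bigl(\culprobp-(1-p)\bigr)=\Delta(\psi,t,p)$, so $\Delta(\psi,t,p)>0$ strictly separates $\psi$ from $\freeset$ and certifies $C_2^{anc}(\psi)>d$. The main obstacle I anticipate is precisely the uniform $c(\bm z)f(\bm z)\le t$ bound on the ``likely-free'' event; the threshold design of $\threswitnessnewp$ is tailored to make exactly this work, and once one sees that the piecewise coefficient is chosen so that the large-coefficient regime $c=1$ coincides with the small-fidelity regime $\fidp\le t$, the soundness bound collapses cleanly into $t+(1-p)(1-t)$ with no further effort.
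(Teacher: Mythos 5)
Your proof is correct and follows the same core argument as the paper: the same split of the sum according to whether $\fidp(\psi_{\bm z})\le t$, the same observation that the large coefficient $c(\bm z)=1$ only occurs where the likely-free fidelity bound forces $\tr(\psi_{\bm z}\rho_{\bm z})\le t$, and the same averaging against the $p$-likely event to land at $t+(1-p)(1-t)$; the completeness computation and the gap arithmetic are identical. The one place you genuinely improve on the paper is the explicit convex reduction $\rho=\sum_k q_k\psi_k$ to pure free components before invoking the $p$-likely property. The paper's proof applies $\mathbf{1}[\rho_{\bm z}\notin\pfreeprojset]$ directly to the mixed state's projected states, which implicitly requires $\Pr_{\bm z\sim p_\rho}[\rho_{\bm z}\in\pfreeprojset]\ge p$ for mixed $\rho\in\freeset$; this is not immediate, since the $p$-likely definition is stated only for pure free states, the projected state of a mixture is a reweighted mixture of the components' projected states, and $\pfreeprojset$ is an entropy sublevel set with no obvious closure under such mixing. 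Your route sidesteps all of this by linearity of $\tr(\threswitnessnewp\,\cdot\,)$, proving the bound for each pure $\psi_k$ and averaging, which is the cleaner and fully rigorous way to get mixed-state soundness here.
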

	
	\begin{proof}
		By definition, any free state $\rho\in\freeset$ satisfies
		\begin{equation}
			\begin{split}
				\tr(\threswitnessnewp \rho)& = \sum_{\bmz: \fidp(\psi_{\bmz}) \le t} p_{\rho}(\bmz) \tr(\psi_{\bmz }\rho_{\bmz})+t\sum_{\bmz: \fidp(\psi_{\bmz}) > t} p_{\rho}(\bmz) \tr(\psi_{\bmz }\rho_{\bmz})\\
				&\le \sum_{\bmz: \fidp(\psi_{\bmz}) \le t} p_{\rho}(\bmz) [t+(1-t)\cdot \mathbf{1}[\rho_{\bmz}\notin \pfreeprojset]]+t\sum_{\bmz: \fidp(\psi_{\bmz}) > t} p_{\rho}(\bmz)\\
				&\le  t+(1-p)(1-t).
			\end{split}
		\end{equation}
		For the target state $\psi$,
		\begin{equation}
			\begin{split}
				\tr(\threswitnessnewp\psi)&=\sum_{\bmz} p_{\psi}(\bmz) [t +(1-t) \Theta_t\bigl(\fidp(\psi_{\bm z})\bigr) ]\\
				&= t+(1-t)\culprobp,
			\end{split}
		\end{equation}
	\end{proof}
	
	Then we establish the lower bound of the witness gap for high-complexity states.
	
	\begin{lemma}\label{lem:adaptive_spectral_gap}
		Let $\psi$ be the target state and consider the bipartition $A\cup B$ shown in Fig.~\ref{fig:projected_ensemble_complexity}, with $w=\eta d$ for some $\eta>6$. Suppose the projected ensemble of $\psi$ on $A$ satisfies
		\begin{equation}
			\Pr_{\bm z\sim p_\psi(\bm z)}\left[E_{L\mid R}(\psi_{\bm z})\ge c|L|+1\right]\ge p>0,
		\end{equation}
		for a constant $c>6p^{-1}\eta^{-1}$. Then, setting
		\begin{equation}
			t=1-\frac{1}{4}\Bigl[c-\frac{6}{\eta(1-p')}\Bigr]^{2},
			\qquad
			p'=1-\frac{p+6\eta^{-1}c^{-1}}2,
		\end{equation}
		the witness gap satisfies
		\begin{equation}\label{eq:gap_adaptive_complexity}
			\Delta(\psi,t,p') \ge \frac{c(\eta cp-6)^3}{8\eta (\eta cp+6)^2}.
		\end{equation}
	\end{lemma}
	
	\begin{proof}
		Let $p'\in(0,1)$ to be determined with $c>6/[\eta(1-p')]$. First, by an argument parallel to that of Lemma~\ref{lem:fid_proj_entanglement}, for any state $\phi$ satisfying $E_{L\mid R}(\phi) > c\abs{L}+1$, we obtain
		\begin{equation}
			\fidpprime(\phi) \le 1 - \left(\frac{e-1}{4w^2} - \frac{3d}{(1-p')w}\right)^2.
		\end{equation}
		
		Substituting the entanglement bound $e \ge c\abs{L}+1$ then yields
		\begin{equation}
			\Pr_{\bmz \sim p_{\psi}(\bmz)}\left[\fidpprime(\psi_{\bmz}) \le 1 - \tfrac{1}{4}[c-6\eta^{-1}/(1-p')]^2 \right] \ge p > 0.
		\end{equation}
		Hence for $t = 1- \frac{1}{4}[c-6\eta^{-1}/(1-p')]^{2}$,
		\begin{equation}
			\culprobpprime = \Pr_{\bmz \sim p_{\psi}(\bmz)}\left[\fidpprime(\psi_{\bmz}) \le t \right] \ge p > 0.
		\end{equation}
		By Lemma \ref{lem:performance_witnessp}, the gap is therefore
		\begin{equation}
			\Delta(\psi,t,p') = (1-t)[\culprobpprime-(1-p')]\ge \tfrac{1}{4}[c-6\eta^{-1}/(1-p')]^{2}[p-(1-p')].
		\end{equation}
		When $\eta c p > 6$, choosing $p'=1-(p+6\eta^{-1}c^{-1})/2$ 
		gives
		\begin{equation}
			\Delta(\psi,t,p') \ge \frac{c(\eta cp-6)^3}{8\eta (\eta cp+6)^2}.
		\end{equation}
	\end{proof}
	
	\begin{proof}[Proof of Theorem \ref{thm:2Dadaptive_complexity}]
		Choose $t,p'$ as specified in Lemma~\ref{lem:adaptive_spectral_gap}, and define $O\coloneqq \tilde{O}(\psi,t,p'), a \coloneqq t+(1-p')(1-t) \ge \sup_{\rho \in \freeset} \tr(\tilde{O} \rho)$ and $b \coloneqq \tr(\tilde{O} \psi)$ according to Lemma~\ref{lem:performance_witnessp}.
		From Lemma~\ref{lem:adaptive_spectral_gap} we know that the gap $\Delta\coloneqq \Delta(\psi,t,p')=b-a=\Omega(1)$.  
		Set the decision threshold to $\eta^\ast\;\coloneqq\;a+\tfrac{\Delta}{3}$.
		
		\textit{Soundness}-For any $\rho\in\freeset$, we have $\tr(\tilde{O}\rho)\le a$. Estimating this quantity to an additive error $\varepsilon=\Delta/3$ ensures
		$\omega < \tr(\tilde{O}\rho)+\varepsilon \le \eta^\ast$, so all property-free states are rejected with high probability.
		
		\textit{Completeness}-For any state $\rho$ satisfying $\mathrm{d}_{\mathrm{tr}}(\rho,\psi)\le \varepsilon$, 
		\begin{equation}
			\tr(\tilde{O}\rho)-\varepsilon
			\ge \tr(\tilde{O} \psi) - \norm{\tilde{O}}_{\infty} \norm{(\rho-\psi)_{+}}_1 -\varepsilon = \eta^\ast.
		\end{equation}
		Here, for a Hermitian operator $M=M_{+}-M_{-}$, $M_{+}$ is its positive part. The inequality follows because $\tilde{O}$ is positive semidefinite. Therefore, the estimator satisfies
		\begin{equation}
			\omega > \tr(\tilde{O}\rho)-\varepsilon \ge \eta^\ast
		\end{equation}
		So $\rho$ is accepted with high probability.
		
		By Proposition \ref{prop:mom-estimator}, the sample complexity required to achieve additive error $\varepsilon$ is given by 
		\begin{equation}
			T = \cO\left(\frac{\sigma^2 \ln(\delta^{-1})}{\varepsilon^2}\right) = \cO(4^{n_A}) = \exp(\cO(\eta^2 d^2)),
		\end{equation}
		where $\sigma^2\le 4^{n_A}+1$. For the two-dimensional setting with $n_A=\mathcal{O}(\eta^2 d^2)$ and constant failure probability, this gives $T = \exp(\cO(\eta^2d^2))$, which completes the proof.
	\end{proof}
	
	\section{Performance analysis of complexity certification for generic states}\label{app:performance_certification_states}
	In this section, we prove the performance of our complexity certification schemes for generic quantum states, including Haar-random states and brickwork-circuit states. 
	
	\subsection{Deep-thermalized and Haar-random states}
	We first establish the performance guarantees for deep-thermalized states. We begin by introducing the basic definitions of quantum state designs and their associated entanglement properties. We then use these tools to establish the desired performance guarantees.
	
	\subsubsection{Entanglement in quantum state designs}
	Quantum state designs are defined as approximations to the Haar random distribution \cite{Brandao2016LocalRandomCircuits}. In this context, we denote by \(\mathrm{Haar}(n)\) the Haar random distribution over $n$-qubit pure states. Given a state ensemble \(\mathcal{E} = \{p_{\psi}, \psi\}\), we define its \(k\)-th moment as
	\begin{equation}
		\rho_{\mathcal{E}}^{(k)} \;=\; \bE_{\psi \sim \mathcal{E}} \left[(\ketbra{\psi})^{\otimes k}\right].
	\end{equation}
	Following \cite{Cotler2023EmergentDesign}, we define an \(\epsilon\)-approximate \(k\)-design as follows:
	\begin{definition}[$\epsilon$-approximate $k$-design]
		An $n$-qubit state ensemble $\mathcal{E}$ is an $\epsilon$-approximate $k$-design if
		\begin{equation}
			\norm{\rho_{\mathcal{E}}^{(k)} - \rho_{\mathrm{Haar}(n)}^{(k)}}_1 \le \epsilon.
		\end{equation}
		where  $\norm{\cdot}_1$ denotes the trace norm. 
	\end{definition}
	Note that an $\epsilon$-approximate \(k_1\)-design is also an $\epsilon$-approximate \(k_2\)-design for any $k_2 \leq k_1$. There is also an alternate definition of approximate state designs, using the name ``frame operator'' (FO).
	
	\begin{definition}[$\epsilon$-FO-approximate $k$-design, {\cite[Definition 10]{Liu2018EntanglementDesign}}]
		An $n$-qubit state ensemble $\mathcal{E}$ is an $\lambda$-FO-approximate $k$-design if
		\begin{equation}
			\norm{\rho_{\mathcal{E}}^{(k)} - \rho_{\mathrm{Haar}(n)}^{(k)}}_1 \le \lambda D_k^{-1}.
		\end{equation}
		where $D_k = \binom{k+N-1}{k}$ and $N= 2^n$.
	\end{definition}
	It follows that the ensemble \(\mathcal{E}\) is an \(\epsilon\)-approximate \(k\)-design if and only if it is a \(\lambda\)-FO-approximate \(k\)-design with \(\lambda = \epsilon D_k\). 
	
	Intuitively, states in a state design exhibit high entanglement, a fact that is rigorously established in \cite{Liu2018EntanglementDesign}. To show this, we consider the generalized R\'enyi entropy.  In the following, all logarithms are taken to base 2.
	
	\begin{definition}
		The quantum R\'enyi entropy of a density matrix \(\rho\) is defined as
		\begin{equation}
			S^{(\alpha)}(\rho) \;=\; \frac{1}{1-\alpha} \log \tr\bigl[\rho^{\alpha}\bigr].
		\end{equation}
	\end{definition}
	
	Taking the limit as \(\alpha \rightarrow 1\), the R\'enyi entropy converges to the von Neumann entropy:
	\begin{equation}
		S(\rho) \coloneqq \lim_{\alpha \rightarrow 1} S^{(\alpha)}(\rho) = -\tr[\rho \log \rho].
	\end{equation}
	By Jensen's inequality,
	\begin{equation}
		\sum_{i=1}^D p_i \log p_i \le \log \sum_{i=1}^D p_i^2,
	\end{equation}
	it follows that $S(\rho) \ge S^{(2)}(\rho)$.
	
	Here, we fix the system to be $LR$, where both $L$ and $R$ consist of $m$ qubits, with a total of $2m$ qubits. We now present the following result.
	
	\begin{lemma}[Purity in a state design] \label{lem:design_entanglement}
		Let \(\mathcal{E}\) be an \(\epsilon\)-approximate \(2\)-design on system \(LR\). For sufficiently large \(m\), it holds that
		\begin{equation}
			\bE_{\psi \sim \mathcal{E}} \tr(\psi_L^2) \le \frac{d_A+d_B}{d_Ad_B+1} + \epsilon\le 2^{-m+1}+\epsilon. 
		\end{equation}
	\end{lemma}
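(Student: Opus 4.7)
The plan is to reduce the expected purity to a trace against the second-moment operator $\rho_{\mathcal{E}}^{(2)}$, compute the Haar value exactly, and then use the design property to control the deviation.

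First, I would rewrite the purity using the swap trick: $\tr(\psi_L^2) = \tr\bigl[(\psi\otimes\psi)(F_L\otimes I_{RR})\bigr]$, where $F_L$ denotes the swap operator acting on the two copies of $L$ (and identity on the two copies of $R$). Taking expectations over $\psi\sim\mathcal{E}$ commutes with the trace, so
\begin{equation}
    \mathbb{E}_{\psi\sim\mathcal{E}}\tr(\psi_L^2)=\tr\bigl[\rho_{\mathcal{E}}^{(2)}\,(F_L\otimes I_{RR})\bigr].
\end{equation}
An identical identity holds for the Haar ensemble, so the problem reduces to comparing $\rho_{\mathcal{E}}^{(2)}$ against $\rho_{\mathrm{Haar}(2m)}^{(2)}$ through a fixed observable of operator norm one.

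Next, I would invoke the standard Page/Harrow computation of the Haar second moment to evaluate the reference value. Using $\rho_{\mathrm{Haar}}^{(2)}=(I+F)/(d(d+1))$ with $d=d_Ld_R$ together with $\tr[(F_L\otimes I_{RR})(I+F)]=d_Ld_R^2+d_L^2d_R$, one obtains
\begin{equation}
    \mathbb{E}_{\psi\sim\mathrm{Haar}}\tr(\psi_L^2)=\frac{d_L+d_R}{d_Ld_R+1}.
\end{equation}
Applying Hölder's inequality to the difference,
\begin{equation}
    \bigl|\mathbb{E}_{\psi\sim\mathcal{E}}\tr(\psi_L^2)-\mathbb{E}_{\psi\sim\mathrm{Haar}}\tr(\psi_L^2)\bigr|
    \le\bigl\|\rho_{\mathcal{E}}^{(2)}-\rho_{\mathrm{Haar}}^{(2)}\bigr\|_1\,\|F_L\otimes I_{RR}\|_\infty\le\epsilon,
\end{equation}
since $F_L$ is unitary and the trace-norm bound is exactly the $\epsilon$-approximate $2$-design condition. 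This yields the first inequality of the claim (with $d_A$, $d_B$ identified as $d_L$, $d_R$).

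Finally, for the second inequality I would simply note that with $d_L=d_R=2^m$,
\begin{equation}
    \frac{d_L+d_R}{d_Ld_R+1}=\frac{2\cdot 2^m}{2^{2m}+1}\le\frac{2\cdot 2^m}{2^{2m}}=2^{-m+1},
\end{equation}
which holds for every $m\ge 1$ (the ``sufficiently large $m$'' phrase is harmless here). There is no real obstacle in this argument—the only thing to be careful about is the bookkeeping of which subsystems the swap acts on and the identification of the dimension convention ($d_A,d_B$ vs.\ $d_L,d_R$); after that, the result follows directly from Hölder applied to the design bound.
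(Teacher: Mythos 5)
Your proof is correct, and it reaches the paper's bound by a more self-contained route. The paper translates the $\epsilon$-approximate $2$-design condition into the $\lambda$-FO-approximate language ($\lambda=\epsilon D_2$) and then invokes Theorem~26 of Liu (2018) as a black box to get the deviation bound $\lambda/D_2=\epsilon$, together with a cited formula for the Haar value of the purity. You instead prove both ingredients directly: the swap-trick identity $\tr(\psi_L^2)=\tr[(\psi\otimes\psi)(F_L\otimes I_{RR})]$ reduces everything to the second-moment operator, the explicit form $\rho_{\mathrm{Haar}}^{(2)}=(I+F)/(d(d+1))$ gives the Haar value $\frac{d_L+d_R}{d_Ld_R+1}$ (your trace computation $d_Ld_R^2+d_L^2d_R$ checks out), and H\"older's inequality against the unit-norm observable $F_L\otimes I_{RR}$ converts the trace-norm design condition into the additive error $\epsilon$. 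The underlying mathematics is the same — the cited theorem is essentially this H\"older argument — but your version avoids the detour through the FO-approximate design definition and makes the lemma verifiable without consulting the reference. Your reading of $d_A,d_B$ as $d_L,d_R=2^m$ matches the paper's intent, and your observation that the final inequality holds for all $m\ge 1$ (so the "sufficiently large $m$" hypothesis is not actually needed) is accurate.
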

	\begin{proof}
		The ensemble \(\mathcal{E}\) is also a \(\lambda\)-FO-approximate \(2\)-design with \(\lambda = \epsilon D_2\). From Theorem 26 in \cite{Liu2018EntanglementDesign},  we have
		\begin{equation}
			\bE_{\psi \sim \mathcal{E}} \tr(\psi_L^2) \le \bE_{\psi \sim \mathrm{Haar}(2m)} \tr(\psi_L^2)+\frac{\lambda}{D_2}= \frac{d_A+d_B}{d_Ad_B+1}  + \epsilon\le2^{-m+1}+\epsilon,
		\end{equation}
		where we use $\bE_{\psi \sim \mathrm{Haar}(2m)} \tr(\psi_L^2)= \frac{d_A+d_B}{d_Ad_B+1}$~\cite[Eq.~(4.11)]{Liu2018EntanglementDesign}.
	\end{proof}
	
	\subsubsection{Proof of Proposition \ref{prop:gap_deep_thermalize}: deep-thermalized states}
	\begin{proof}
		Since $E_{L\mid R}(\phi) = S(\phi_L) \ge S^{(2)}(\phi_L)$, we can bound $\Pr_{\phi \sim \mathcal{E}_{\psi}}\big[E_{L\mid R}(\phi) \ge c|L| + 1\big]$ as follows:
		\begin{equation}
			\Pr_{\phi\sim\mathcal{E}_{\psi}}\Bigl[E_{LR}\left(\phi\right) \ge c |L| + 1\Bigr]\ge\Pr_{\phi\sim\mathcal{E}_{\psi}}\Bigl[S^{(2)}(\phi_L) \ge c |L| + 1\Bigr]=\Pr_{\phi\sim\mathcal{E}_{\psi}}\Bigl[-\log\tr(\phi_L^2) \ge c |L| + 1 \Bigr]=\Pr_{\phi\sim\mathcal{E}_{\psi}}\Bigl[\tr(\phi_L^2) \le 2^{-(c |L| + 1)}\Bigr].
		\end{equation}
		By Markov's inequality,
		\begin{equation}
			\Pr_{\phi\sim\mathcal{E}_{\psi}}\Bigl[\tr(\phi_L^2) > 2^{-(c |L| + 1)}\Bigr]\le\frac{\bE_{\phi\sim\mathcal{E}_{\psi}}\Bigl[\tr(\phi_L^2)\Bigr]}{2^{-(c |L| + 1)}}=2^{c |L| + 1}\bE_{\phi\sim\mathcal{E}_{\psi}}\Bigl[\tr(\phi_L^2)\Bigr].
		\end{equation}
		Hence,
		\begin{equation}\label{eq:bound_entanglement_from_purity}
			\begin{split}
				\Pr_{\phi\sim\mathcal{E}_{\psi}}\Bigl[E_{L\mid R}\left(\phi\right) \ge c |L| + 1\Bigr]
				&\ge\Pr_{\phi\sim\mathcal{E}_{\psi}}\Bigl[\tr(\phi_L^2) \le 2^{-(c |L| + 1)}\Bigr]\\
				&=1-\Pr_{\phi\sim\mathcal{E}_{\psi}}\Bigl[\tr(\phi_L^2) > 2^{-(c |L| + 1)}\Bigr]\\
				&\ge 1-2^{c |L| + 1}\bE_{\phi\sim\mathcal{E}_{\psi}}\Bigl[\tr(\phi_L^2)\Bigr].
			\end{split}
		\end{equation}
		Lemma~\ref{lem:design_entanglement} shows that
		\begin{equation}
			\bE_{\phi \sim \mathcal{E}_{\psi}}\Bigl[\tr(\phi_L^2)\Bigr]\le 2^{-|L|+1}+\epsilon = \cO(2^{-|L|}).
		\end{equation}
		Choosing, for example, $\eta = 5$ and $c = \tfrac{9}{10}$, we have
		\begin{equation}
			\Pr_{\phi\sim\mathcal{E}_{\psi}}\Bigl[E_{L\mid R}\left(\phi\right) \ge  c |L| + 1\Bigr]\ge 1-2^{c|L| + 1}\Bigl[2^{-(|L| + 1)} + \epsilon \Bigr] = 1 - \cO\Bigl(2^{-(1-c)|L|}\Bigr) = \Omega(1).
		\end{equation}
		Therefore, the sample complexity in Theorem~\ref{thm:2Dcomplexity} is $\exp(\mathcal{O}(d^2))$.
		
		Moreover, for example, when choosing $\eta=7$ and $c=\frac9{10}$, we still have $\Pr_{\phi\sim\mathcal{E}_{\psi}}\Bigl[E_{L\mid R}\left(\phi\right) \ge  c |L| + 1\Bigr]\ge 1-o(1)$. In this case, the gap given by \eqref{eq:gap_adaptive_complexity} is $\Omega(1)$ and hence the observable $\tilde{O}(\psi,t,p')$ can be used to certify the measurement-assisted circuit complexity with $\cO(1)$ sample complexity.
	\end{proof}
	
	\subsubsection{Haar-random states}
	For Haar-random states, we use the following result on deep thermalization. 
	\begin{fact}[Haar-random states deeply thermalize, {\cite[Theorem 1]{Cotler2023EmergentDesign}}]\label{fact:Haar_random_deep_thermalization}
		Let \(\ket{\psi}\) be chosen Haar-randomly from the set of \(n\)-qubit pure states. Then, the ensemble \(\mathcal{E}_{\psi}\) forms an \(\epsilon\)-approximate \(k\)-design with probability at least \(1-\delta\), provided that 
		\begin{equation}
			|B| = \Omega\left(k |A| + \log\left(\frac{1}{\epsilon}\right) + \log \log \left(\frac{1}{\delta}\right)\right).
		\end{equation}
	\end{fact}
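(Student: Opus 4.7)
My strategy is the classical two-step approach: (i) compute the expected value of the moment operator $\rho_{\mathcal{E}_\psi}^{(k)}$ over Haar-random $\ket{\psi}$ and show it equals (or closely matches) the Haar moment on $A$, and (ii) use measure concentration to show that $\rho_{\mathcal{E}_\psi}^{(k)}$ is close to its mean with overwhelming probability over $\ket{\psi}$.

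\textbf{Step 1: Moment identity.} I would first write the projected-ensemble moment in terms of unnormalized vectors. Defining $\ket{\widetilde\psi_{\bm z}} = (I_A \otimes \bra{\bm z}_B)\ket{\psi}$ with $p_\psi(\bm z) = \braket{\widetilde\psi_{\bm z}}{\widetilde\psi_{\bm z}}$, the $k$-th moment becomes
\begin{equation}
\rho_{\mathcal{E}_\psi}^{(k)} = \sum_{\bm z} p_\psi(\bm z)^{\,1-k}\,(\ketbra{\widetilde\psi_{\bm z}})^{\otimes k}.
\end{equation}
The awkward factor $p_\psi(\bm z)^{1-k}$ is the main technical nuisance. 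A clean workaround is to exploit the fact that, for Haar-random $\ket{\psi}$ with $d_B \gg d_A$, each $p_\psi(\bm z)$ concentrates sharply around $d_B^{-1}$. I would therefore split the sum into a ``typical'' set where $p_\psi(\bm z) = d_B^{-1}(1\pm\gamma)$ for a small $\gamma$ and an ``atypical'' set of vanishing total weight, and replace $p_\psi(\bm z)^{1-k}$ by the constant $d_B^{k-1}$ on the typical set, incurring a controllable error.

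\textbf{Step 2: Expected moment matches Haar on $A$.} Once the normalizations are handled, the averaged expression reduces to
\begin{equation}
\mathbb{E}_{\psi \sim \mathrm{Haar}}\!\left[\sum_{\bm z}(\ketbra{\widetilde\psi_{\bm z}})^{\otimes k}\right] = \sum_{\bm z}(I_A \otimes \bra{\bm z}_B)^{\otimes k}\,\mathbb{E}_\psi[(\ketbra{\psi})^{\otimes k}]\,(I_A \otimes \ket{\bm z}_B)^{\otimes k}.
\end{equation}
Using the Weingarten/symmetric-projector formula $\mathbb{E}_\psi[(\ketbra{\psi})^{\otimes k}] = \Pi_{\mathrm{sym}}^{AB,(k)}/\binom{d+k-1}{k}$ and computing the partial action on the $\ket{\bm z}^{\otimes k}$ vectors, the sum over $\bm z$ produces the symmetric projector on $A$ times a combinatorial factor. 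After multiplying by the $d_B^{k-1}$ from Step 1, this matches $\rho_{\mathrm{Haar}(A)}^{(k)} = \Pi_{\mathrm{sym}}^{A,(k)}/\binom{d_A+k-1}{k}$ up to an error of order $d_A k^2/d_B$, which is small when $|B| \gtrsim k|A| + \log(1/\epsilon)$.

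\textbf{Step 3: Concentration via Lévy's lemma.} For the high-probability bound, I would view $\psi \mapsto \lVert \rho_{\mathcal{E}_\psi}^{(k)} - \rho_{\mathrm{Haar}(A)}^{(k)}\rVert_1$ as a function on the unit sphere of $\mathbb{C}^d$ and show it is Lipschitz with constant at most $\cO(d_B^{(k-1)/2})$-controlled by the operator-norm bound on each summand and the probability-concentration bounds from Step 1. Lévy's lemma then gives deviation probability $\exp(-\Omega(d\,t^2/L^2))$, and setting $t = \epsilon/2$ yields the required failure probability $\delta$ provided $|B|$ exceeds the stated threshold, where the extra $\log\log(1/\delta)$ absorbs the standard net/union-bound overhead used to upgrade concentration of traces against fixed test operators into trace-norm concentration.

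\textbf{Main obstacle.} The hardest part will be controlling the $p_\psi(\bm z)^{1-k}$ factors simultaneously for all $2^{|B|}$ outcomes: the naive union bound over $\bm z$ loses too much, so one must either use a careful moment calculation that avoids reciprocals altogether (working with the unnormalized ensemble and renormalizing at the end) or, as is standard, use a truncation argument where the contribution from atypical $\bm z$ is bounded in expectation while the typical part is handled by Lévy concentration on a smooth surrogate function.
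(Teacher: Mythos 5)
This statement is not proved in the paper at all: it is imported verbatim as a \emph{Fact}, with an explicit citation to Theorem~1 of Cotler et al.\ [Cotler2023EmergentDesign]. There is therefore no in-paper proof to compare your proposal against; the authors simply invoke the result (here with $k=2$ and $\epsilon=\Theta(2^{-n_A/2})$) as an external input to Proposition~\ref{prop:gap_deep_thermalize}. Your two-step outline (expected moment operator matches the Haar moment on $A$, then L\'evy concentration over $\ket{\psi}$) is indeed the strategy used in the cited reference, so as a roadmap it points in the right direction.

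As a standalone proof, however, the proposal has a genuine gap, and you have correctly located it yourself: the treatment of the factors $p_\psi(\bm z)^{\,1-k}$ is deferred rather than resolved. Declaring that one will ``split into typical and atypical sets'' and ``replace $p_\psi(\bm z)^{1-k}$ by $d_B^{k-1}$'' is exactly the step where all the difficulty lives, since one needs simultaneous control over all $2^{n_B}$ outcomes and the reciprocal powers blow up precisely on the low-probability branches that a naive union bound cannot exclude cheaply. A second, more concrete problem is your Step~3: if the Lipschitz constant of $\psi\mapsto\lVert\rho_{\mathcal{E}_\psi}^{(k)}-\rho_{\mathrm{Haar}(A)}^{(k)}\rVert_1$ really scaled as $\cO(d_B^{(k-1)/2})$, then L\'evy's lemma would give a deviation probability of order $\exp(-\Omega(d_A d_B\,t^2/d_B^{k-1}))$, which for $k\ge 3$ \emph{worsens} as $d_B$ grows and cannot yield the stated $\log\log(1/\delta)$ dependence. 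The function is in fact not globally Lipschitz with a useful constant because of the normalized projected states; one must either work with a smoothed surrogate whose Lipschitz constant is $\cO(1)$ (paying an approximation error controlled in expectation) or restructure the argument to avoid the reciprocals entirely. Until those two points are carried out, the proposal is a plan consistent with the literature rather than a proof of Fact~\ref{fact:Haar_random_deep_thermalization}.
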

	
	For complexity certification task, by choosing $k = 2$ and $\epsilon = \Theta(2^{-n_A/2})$, we see that $n$-qubit Haar-random states with $n = \Omega(|A|) = \Omega(d^2)$ satisfy the requirement in Proposition~\ref{prop:gap_deep_thermalize} with probability at least $1-\delta$, and thus their circuit complexity can be sample-efficiently certified.
	
	\subsection{Brickwork-circuit states}
	We now establish the performance guarantee for brickwork-circuit states. 
	In an $D$-dimensional brickwork circuit, qubits are labeled by $D$-dimensional coordinate vectors in $[L]^D$, where $L$ is the edge length. 
	Within every $2D$ consecutive layers, the $(2i - 1)$-th and $2i$-th layers, for $1 \le i \le D$, each act on adjacent qubits that differ only in their $i$-th coordinate. 
	Specifically, in the $i$-th coordinate, these two layers respectively couple qubits with coordinates $(2x_i,\, 2x_i + 1)$ and $(2x_i - 1,\, 2x_i)$.
	
	We first show that the projected states of shallow-depth brickwork-circuit states are highly entangled. 
	Building on this, we then derive the corresponding performance guarantee for such states.
	
	\subsubsection{Entanglement in projected states}
	Here, we establish the entanglement growth result for general finite dimension $D$.  
	We begin by using the result on entanglement growth for a single-qubit random unitary.
	
	\begin{fact}[Entanglement growth under random unitary, {\cite[Appendix B]{Nahum2017EntanglementGrowth}}]\label{fact:ent_growth}
		Let $\ket{\phi_t}_{1234}$ be a pure state, and define $\ket{\phi_{t+1}}_{1234} = U_{23} \ket{\phi_t}_{1234}$, where $U_{23}$ is a two-qubit unitary sampled from the Haar measure. Assume subsystems $2$ and $3$ each have dimension $2$. Then, over the Haar-random choice of $U_{23}$,
		\begin{equation}
			\bE_{U_{23}} \tr(\rho_{t+1,12}^2) 
			= \frac{2}{5} \bigl( \tr(\rho_{t,1}^2) + \tr(\rho_{t,123}^2) \bigr).
		\end{equation}
	\end{fact}
	
	Then, we establish the entanglement growth of the projected ensemble for random brickwork-circuit states. 
	We denote $\mathcal{E}$ as the random state ensemble $\{p(\psi), \psi\}$. 
	The projected ensemble is defined as $\mathcal{E}_{A} \coloneqq \{p(\psi)\,p_{\psi}(\bm{z}), \psi_{\bm{z}}\}$, where $A$ is a region of size $(2w)^D$ in the lattice (for the 2D case, see Fig.~\ref{fig:projected_ensemble_complexity}). 
	We have the following result:
	
	\begin{lemma}[High entanglement in $D$-dimensional random brickwork-circuit states]\label{lem:gap_random_highd_circuit}
		Let $\mathcal{E}$ be the ensemble of output states from a random $D$-dimensional depth-$d'$ circuit. 
		Let $s$ be an integer with $w \ge 4s$. 
		Then,
		\begin{equation}
			\bE_{\psi \sim \mathcal{E}_A}[\tr(\psi_L^2)]\le(4/5)^{(2w-4s)^{D-1}\min(s,\lfloor d'/2D\rfloor)}.
		\end{equation}
	\end{lemma}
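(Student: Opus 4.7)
The plan is to bound $\bE_{\psi \sim \mathcal{E}_A}[\tr(\psi_L^2)]$ by iterating the two-qubit Haar identity of Fact~\ref{fact:ent_growth} over the cut-crossing gates of the brickwork circuit, restricted to a ``bulk'' subregion of $A$ that is shielded from the $A$-$B$ interface by a buffer of width $2s$ in every direction perpendicular to the cut. First I would coordinatize the sites of $A$ by $(x_1,\ldots,x_D)$, with $x_1$ perpendicular to the $L\mid R$ cut, and define the bulk $A^\star$ as the set of sites whose coordinates $(x_2,\ldots,x_D)$ lie at distance at least $2s$ from the $A/B$ boundary, so that $A^\star$ contains $(2w-4s)^{D-1}$ columns along the $x_1$-direction. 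The $\min(s,\lfloor d'/2D\rfloor)$ factor in the target exponent reflects the fact that after at most $2Ds$ brickwork layers the backward light cone of $A^\star$ still lies inside $A$; for larger depths we truncate the analysis to the first $2Ds$ layers and keep only $s$ cut-crossing rounds.

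The second step is to reduce the projected-ensemble purity $\bE_\mathcal{E}\sum_{\bm z}p_\psi(\bm z)\tr(\psi_{\bm z,L}^2)$ to a purity problem on the bulk alone. By the causal-restriction argument, the portion of the circuit that acts nontrivially on $A^\star$ commutes with the computational-basis projector on $B$, so the reduced state of $\psi_{\bm z}$ on $A^\star$ is (for the truncated depth) independent of $\bm z$ and coincides with the output of an effective brickwork circuit on a neighborhood of $A^\star$ starting from $|0\rangle^{\otimes|A^\star|}$. Convexity of purity plus this causal decoupling then lets me recast the quantity of interest as an unprojected random-circuit purity across the cut $L\cap A^\star\mid R\cap A^\star$.

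The third step is the purity recursion itself. In the $D$-dimensional brickwork layout each pair of sites with $x_1=w-1,\,x_1=w$ receives a fresh Haar-random two-qubit gate once every $2D$ layers, giving $K\coloneqq(2w-4s)^{D-1}$ disjoint cut-crossing gates per round and $T\coloneqq\min(s,\lfloor d'/2D\rfloor)$ rounds overall. For each such gate I would apply Fact~\ref{fact:ent_growth} with qubit~2 the $L$-side boundary qubit and qubit~3 its $R$-side partner; bounding the companion purities by $1$ yields a factor of $4/5$ per gate in expectation. Since the gates within a round act on disjoint qubit pairs and are independently Haar-random, their contributions multiply to $(4/5)^K$ per round, and compounding over $T$ rounds gives the advertised bound $(4/5)^{KT}$.

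The hardest part will be preventing the naive $4/5$ factor per gate from becoming loose as the recursion is iterated: in Fact~\ref{fact:ent_growth} the companion purity $\tr(\rho_{123}^2)$ need not stay strictly below $1$ after several rounds, so a pointwise bound overcounts. The standard remedy, following Nahum~\emph{et al.}~\cite{Nahum2017EntanglementGrowth}, is to recast the iteration as a linear evolution on a vector of purities of nested subregions (equivalently, as a Markov chain on domain-wall configurations in the two-replica formalism) and read off the spectral contraction that yields the $4/5$ factor round by round; I would follow that template. A secondary subtlety is justifying that the $p_\psi(\bm z)^{-1}$ nonlinearity intrinsic to the projected ensemble drops out after the light-cone restriction in Step~2, which I believe is handled cleanly by the commuting-measurement observation outlined above.
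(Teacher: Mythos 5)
Your high-level strategy (light-cone truncation of the last $\mathcal{O}(Ds)$ layers, then iterating Fact~\ref{fact:ent_growth} over cut-crossing gates) matches the paper's, and you correctly identify the central difficulty: the companion purities $\tr(\rho_1^2)$ and $\tr(\rho_{123}^2)$ in Fact~\ref{fact:ent_growth} do not stay bounded away from $1$, so a pointwise $4/5$ per gate does not compound. But your proposal does not actually resolve this; it defers to ``the standard remedy following Nahum et al.'' The paper's resolution is a concrete branching recursion: each term $c\,\bE[\tr(\psi^{(i)}_X{}^2)]$ is either passed through a gate unchanged (gate entirely inside or outside $X$) or split into two terms on $X\cup\{x'\}$ and $X\setminus\{x\}$ with total coefficient $\tfrac45 c$, and the final bound is the sum of coefficients after all gates are integrated out. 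Two combinatorial facts are then essential and are absent from your sketch: (i) the subregion $X$ of every branch drifts, but stays sandwiched as $L_{-2g}\subseteq X\subseteq L_{+2g}$ after $g$ rounds (Claim~\ref{claim:rand-d-complexity-X-range}) --- this is exactly where the buffer width and the hypothesis $w\ge 4s$ enter; and (ii) for every fiber intersecting the bulk, $X\cap F$ must change at least once per round of $2D$ layers (Claim~\ref{claim:rand-d-complexity-ent-increase}), which is what yields at least $(2w-4s)^{D-1}$ coefficient decreases per round. Your version, with ``$K$ disjoint cut-crossing gates at $x_1=w-1,w$ whose contributions multiply by independence,'' fails after the first round: once $X$ has drifted, the gates at $x_1=w-1,w$ may lie entirely inside or outside $X$ and contribute nothing, and multiplicativity across fibers is a counting statement about the branching tree, not a consequence of the gates being independent.

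A secondary error: your Step~2 claims the reduced state of $\psi_{\bm z}$ on the bulk is independent of $\bm z$ and equals the output of a circuit acting on $\ket{0}^{\otimes|A^\star|}$. This is false --- the state fed into the truncated unitary $V$ is $\ket{\phi_{\bm z}}$, which depends on both the earlier layers $W$ and the outcome $\bm z$. What actually makes the argument go through (and what disposes of the $p_\psi(\bm z)^{-1}$ nonlinearity) is that the purity bound obtained from averaging over $V$ alone is \emph{uniform in the input state} $\ket{\phi_{\bm z}}$, so one may condition on $W$ and $\bm z$, apply the bound, and then average. You should state it that way rather than asserting $\bm z$-independence.
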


	\begin{proof}
		Recall the definition of the backward light cone of the subsystem $B$:
		\begin{equation}
			\begin{split}
				\cL_{B,d'+1} &= B, \\
				\cL_{B,i} &= \cL_{B,i+1}\cup \left[\bigcup \left\{\mathrm{supp}(V_{i,j}) \;\big|\; \mathrm{supp}(V_{i,j}) \cap \cL_{B,i+1} \neq \emptyset\right\}\right].
			\end{split}
		\end{equation}
		
		The whole $D$-dimensional circuit unitary $U$ can be partitioned into two parts $V W$, where $V$ is defined as the product of all two-qubit gates in the last $2D s'$ layers (with $s' = \min(s, \lfloor d'/2D \rfloor)$) whose supports lie entirely outside the backward light cone of $B$:
		\begin{equation}
			V = \prod_{i=d'-2Ds'+1}^{d'} \left(\bigotimes_{\mathrm{supp}(V_{i,j})\nsubseteq \cL_{B,i}} V_{i,j}\right),
		\end{equation}
		and $W$ is defined as the product of the remaining unitaries. Similar to Eq.~\eqref{eq:projected_state_low_complexity}, one can show that, given $W$ and measurement outcome $\bm{z}$, there exists a state $\ket{\phi_{\bm{z}}}$ such that
		\begin{equation}
			\ket{\psi_{\bm z}}=V\ket{\phi_{\bm z}}.
		\end{equation}
		
		We now show that, when $W$ and $\bm{z}$ are fixed, under the randomness of $V$,which consists of Haar-random two-qubit unitaries,
		\begin{equation}\label{eq:pf-highd}
			\bE_V[\tr(\psi_L^2)]\le(4/5)^{(2w-4s)^{D-1}\min(s,\lfloor d'/2D\rfloor)},
		\end{equation}
		where $\ket{\psi}$ denotes $\ket{\psi_{\bm{z}}}$ with a slight abuse of notation. The idea is to iteratively expand the term using Fact~\ref{fact:ent_growth}.
		We write $V$ as the product of unitaries $V = V_m \cdots V_1$, ordered by layer from $V_1$ (in layer $d' - 2Ds' + 1$) to $V_m$ (in layer $d'$), where each $V_i$ is drawn independently from the Haar measure. Let $\ket{\psi^{(i)}} = V_i \cdots V_1 \ket{\phi_{\bm{z}}}$, so that
		\begin{equation}
			\ket{\psi^{(0)}} = \ket{\phi_{\bm{z}}}, 
			\quad \ket{\psi^{(i)}} = V_i \ket{\psi^{(i-1)}}, 
			\quad \ket{\psi} = V_m \cdots V_1 \ket{\phi_{\bm{z}}} = \ket{\psi^{(m)}}.
		\end{equation}
		
		We start with $\bE_V[\tr(\psi_L^2)]=\bE_{V_m,\cdots,V_1}\left[\tr({\psi^{(m)}_L}^2)\right]$.
		At each step, we pick a term of the form $c\bE_{V_i,\cdots,V_1}\left[\tr({\psi^{(i)}_X}^2)\right]$ for some coefficient $c > 0$, subset $X$, and $i > 0$, and “decrease $i$” by integrating out $V_i$ using Fact~\ref{fact:ent_growth}. This process acts as follows (illustrated in Fig.~\ref{fig:pf-rand-D-circuit-expand}):
		\begin{itemize}
			\item If $V_i$ acts on two qubits both inside $X$ or both outside $X$, then
			\begin{equation}
				c\bE_{V_i,\cdots,V_1}\left[\tr({\psi^{(i)}_X}^2)\right]=c\bE_{V_i}\bE_{V_{i-1},\cdots,V_1}\left[\tr(\left(V_i\psi^{(i-1)}V_i^\dag\right)_X^2)\right]=c\bE_{V_{i-1},\cdots,V_1}\left[\tr({\psi^{(i-1)}_X}^2)\right],
			\end{equation}
			and we replace the term accordingly.
			\item Otherwise, $V_i$ acts on one qubit $x \in X$ and one qubit $x' \notin X$. By Fact~\ref{fact:ent_growth}, we replace this term by
			\begin{equation}
				\frac25c\left(\bE_{V_{i-1},\cdots,V_1}\left[\tr({\psi^{(i-1)}_{X\cup\{x'\}}}^2)\right]+\bE_{V_{i'-1},\cdots,V_1}\left[\tr({\psi^{(i-1)}_{X\backslash\{x\}}}^2)\right]\right).
			\end{equation}
			where the two resulting terms still have the desired form, but the sum of coefficients becomes $\frac{4}{5}c$. In this case, we say “the coefficient decreases.”
		\end{itemize}
		
		\begin{figure}
			\centering
			\subfigure[When $V_i$ acts on two qubits both inside $X$ or both outside $X$:  $c\bE_{V_i,\cdots,V_1}\left\lbrack\tr({\psi^{(i)}_X}^2)\right\rbrack=c\bE_{V_{i-1},\cdots,V_1}\left\lbrack\tr({\psi^{(i-1)}_X}^2)\right\rbrack$]{\includegraphics[width=0.396\textwidth]{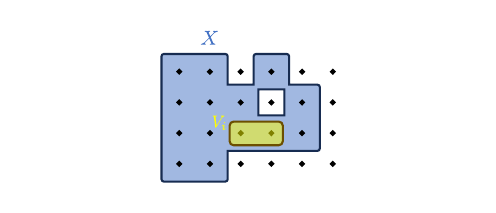}}
			\hfill
			\subfigure[When $V_i$ acts on one qubit in $X$ and the other outside $X$: $c\bE_{V_i,\cdots,V_1}\left\lbrack\tr({\psi^{(i)}_X}^2)\right\rbrack=\frac25c\left(\bE_{V_{i-1},\cdots,V_1}\left\lbrack\tr({\psi^{(i-1)}_{X_1}}^2)\right\rbrack+\bE_{V_{i-1},\cdots,V_1}\left\lbrack\tr({\psi^{(i-1)}_{X_2}}^2)\right\rbrack\right)$]{\includegraphics[width=0.594\textwidth]{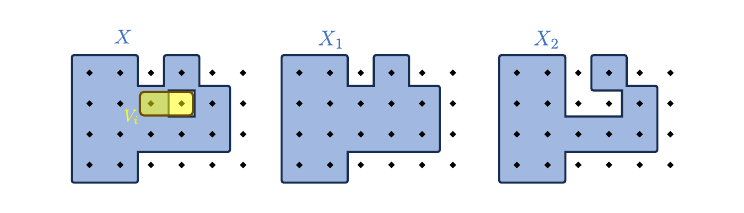}}
			\caption{Illustration of ``expanding terms'' in the proof of Lemma~\ref{lem:gap_random_highd_circuit}.}
			\label{fig:pf-rand-D-circuit-expand}
		\end{figure}
		
		Now, we expand $\bE_V[\tr(\psi_L^2)]=\bE_{V_m,\cdots,V_1}\left[\tr({\psi^{(m)}_L}^2)\right]$ and decrease the coefficient round by round. In each round, we enumerate every current term and reduce its index $i$ until its coefficient decreases.
		If this process continues for $r$ rounds before reaching $i = 0$, then the sum of the coefficients becomes $(\frac{4}{5})^r$.
		Moreover, since $\tr({\psi^{(i)}_X}^2)\le1$, the averaged purity is bounded by $(\frac{4}{5})^r$. 
		We will show that there are at least $(2w - 4s)^{D-1} s'$ rounds.
		
		Let $i_l$ ($0 \le l \le 2D s'$) be the index such that $V_{i_l+1}, \dots, V_m$ are the unitaries in the last $l$ layers, i.e., layers $d' - l + 1$ to $d'$. By definition, we have $i_0 = m$ and $i_{2D s'} = 0$. The unitaries can be grouped by their layer as
		\begin{equation}
			V=V_m\cdots V_1=(V_{i_0}V_{i_0-1}\cdots V_{i_1+1})(V_{i_1}V_{i_1-1}\cdots V_{i_2+1})\cdots(V_{i_{2Ds'-1}}V_{i_{2Ds'-1}-1}\cdots V_{i_{2Ds'}+1}),
		\end{equation}
		where $V_{i_l}, V_{i_l - 1}, \dots, V_{i_{l+1} + 1}$ are the unitaries in layer $d' - l$. 
		We have the following claim:
		
		\begin{claim}\label{claim:rand-d-complexity-X-range}
			When $i$ is decreased to $i_{2D g}$ for $0 \le g < s'$, the partition set $X$ of each term satisfies 
			$ L_{-2g} \subseteq X \subseteq L_{+2g}$,
			where $L_{\pm t}$ is defined in Fig.~\ref{fig:pf-rand-D-circuit-defs}.
		\end{claim}
		
		\begin{figure}
			\centering
			\includegraphics[width=0.8\textwidth]{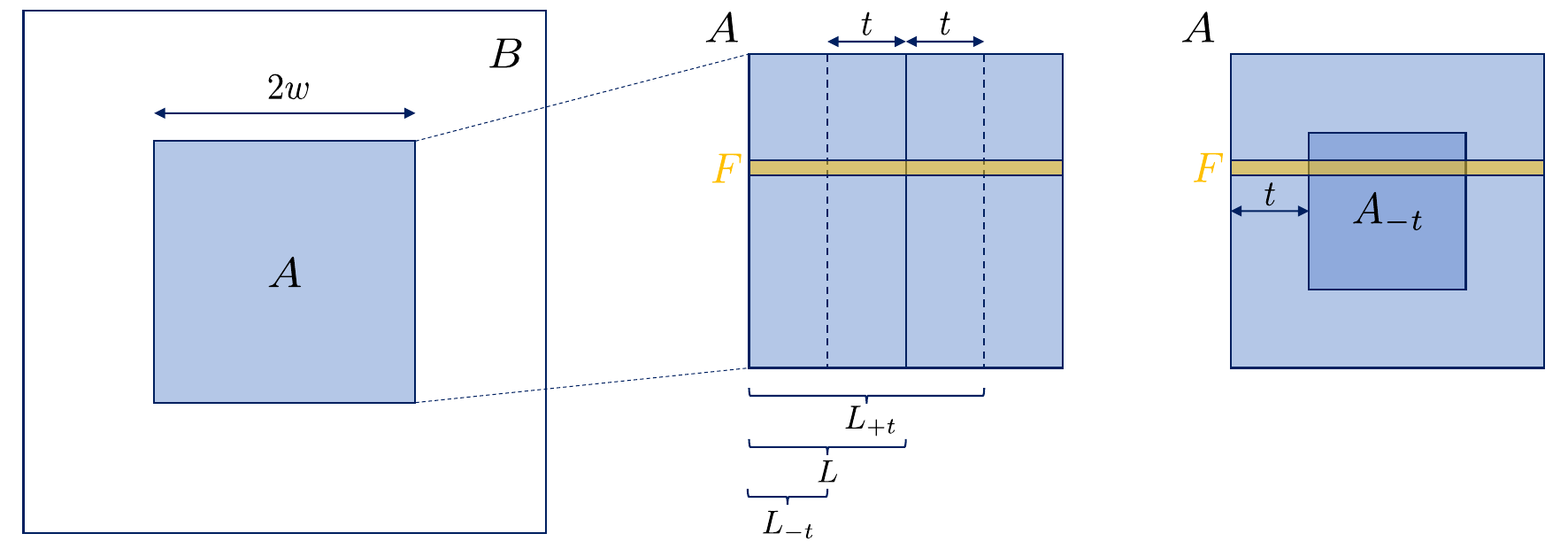}
			\caption{Definitions used in Claims~\ref{claim:rand-d-complexity-X-range} and~\ref{claim:rand-d-complexity-ent-increase}. 
				Here, $L_{\pm t}$ denotes the set of coordinates in $A$ whose first coordinate lies in $[1,\, w \pm t]$. 
				A fiber $F$ is defined as a one-dimensional line of qubits in $A$ that differ only in their first coordinate. 
				$A_{-t}$ denotes the set of coordinates at least distance $t$ from the boundary of $A$.
			}
			\label{fig:pf-rand-D-circuit-defs}
		\end{figure}
		
		\begin{proof}
			We proceed by induction. The claim holds trivially for $g = 0$. Consider a term with $i = i_l$ and $L_{-t} \subseteq X \subseteq L_{+t}$, and suppose $i$ is then decreased to $i_{l+1}$, meaning we examine all the unitaries in layer $d' - l + 1$. Let $X'$ be the partition set of any term after decreasing $i$. There exists $j \in [D]$ such that every gate in the $(d' - l + 1)$-th layer is aligned along the $j$-th coordinate; that is, each gate acts on qubits $x, x'$ satisfying $x'_j = x_j + 1$ and $x'_{j'} = x_{j'}$ for $j' \ne j$. 
			By the ``decrease $i$'' rule, if $j \ne 1$, then $L_{-t} \subseteq X' \subseteq L_{+t}$; otherwise $L_{-(t+1)} \subseteq X' \subseteq L_{+(t+1)}$.
			
			From the circuit architecture, in every $2D$ consecutive layers, there are exactly $2$ layers where all unitaries are aligned with the first coordinate. Therefore, after each such occurrence, $t$ increases by at most $2$. This proves the claim by induction.
		\end{proof}
		
		By the construction of $D$-dimensional brickwork circuits,
		\begin{equation}
			\mathcal{L}_{B,d'-i} \cap A_{-2s'} = \varnothing,
		\end{equation}
		for all $0 \le i \le 2D s'$, where $A_{-t}$ is defined in Fig.~\ref{fig:pf-rand-D-circuit-defs}. Hence, by the definition of $V$, for all adjacent qubits $x, x' \in A_{-2s'}$ and $0 \le g < s'$, there exists some $V_i$ acting on $x, x'$ in a layer from $d' - 2Dg + 1$ to $d'$, i.e., $i_{2Dg} \ge i > i_{2D(g+1)}$.
		We now show that, when $i$ of a term is decreased to $i_{2Dg}$ for $0 \le g < s'$, before $i$ is decreased to $i_{2D(g+1)}$, the coefficient can be decreased at least $(2w - 4s)^{D-1}$ times. It suffices to focus on one term with partition set $X$. To establish this, we define a fiber $F$ as a one-dimensional line of qubits in $A$ that differ only in their first coordinate (Fig.~\ref{fig:pf-rand-D-circuit-defs}). We have the following claim:
		
		\begin{claim}\label{claim:rand-d-complexity-ent-increase}
			For any fiber $F$ intersecting $A_{-2s'}$, for any term with set $X$, $X \cap F$ changes at least once before $i$ decreases from $i_{2Dg}$ to $i_{2D(g+1)}$.
		\end{claim}
		
		\begin{proof}
			By Claim~\ref{claim:rand-d-complexity-X-range}, $L_{-2g} \subseteq X \subseteq L_{+2g}$. This implies the existence of adjacent $x, x' \in \left(L_{+(2g+1)} \backslash L_{-(2g+1)}\right) \cap F$ such that $x \in X$ and $x' \notin X$. Since $(2g+1) < 2s' \le w - 2s'$, both $x, x'$ belong to $A_{-2s'}$, and thus there exists some $V_{i'}$ acting on $x, x'$ with $i_{2Dg} \ge i' > i_{2D(g+1)}$. If $X \cap F$ has never changed before $i$ decreases to $i'$, then for the current term with partition set $X$, we have $x \in X$ and $x' \notin X$. At this point, when $i$ is decreased, $X \cap F$ changes. This proves the claim.
		\end{proof}
		
		There are at least $(2w - 4s')^{D-1}$ such fibers. Moreover, whenever the coefficient decreases, exactly one qubit is either added to $X$ or removed from $X$. Therefore, the coefficient decreases at least $(2w - 4s)^{D-1}$ times when $i$ of a term is decreased to $i_{2Dg}$ for $0 \le g < s'$.
		In total, we can decrease the coefficients for at least $(2w - 4s)^{D-1} s'$ rounds. From the earlier discussion, this yields the bound in Eq.~\eqref{eq:pf-highd}. Including the randomness over $W$ and $\bm{z}$ completes the proof of Lemma~\ref{lem:gap_random_highd_circuit}.
	\end{proof}
	
	\subsubsection{Proof of Proposition \ref{prop:brickwork_state}}
	
	\begin{proof}
		Let
		\begin{equation}
			p_{\psi} \coloneqq \Pr_{\bm z \sim \mathcal{E}_{\psi}}\left[ E_{L \mid R}(\psi_{\bm z}) \ge c|L| + 1 \right],
		\end{equation}
		We first compute the expectation of $p_{\psi}$ over $\psi \sim \mathcal{E}$:
		\begin{equation}
			\begin{split}
				\bE_{\psi\sim\mathcal{E}}[p_{\psi}]& =\bE_{\psi\sim\mathcal{E}}\Bigl[\Pr_{\psi_{\bmz}\sim \mathcal{E}_{\psi}}\bigl[E_{L\mid R}\left(\psi_{\bmz}\right) \ge c |L| + 1\bigr] \Bigr] \\
				& =\Pr_{\psi\sim\mathcal{E},\psi_{\bmz}\sim \mathcal{E}_{\psi}}\Bigl[E_{L\mid R}\left(\psi_{\bmz}\right) \ge c |L| + 1\Bigr]\\
				&=\Pr_{\psi\sim\mathcal{E}_A}\Bigl[E_{L\mid R}\left(\psi\right) \ge c |L| + 1\Bigr] \\
				&\ge 1-2^{c |L| + 1}\bE_{\psi\sim\mathcal{E}_{A}}\Bigl[\tr(\psi_L^2)\Bigr]\\
				&\ge 1 - 2^{c|L|+1} (4/5)^{(2w-4s)\min(s,\lfloor d'/4\rfloor)}.
			\end{split}
		\end{equation}
		where the fourth line uses Eq.~\eqref{eq:bound_entanglement_from_purity} and the last inequality applies Lemma~\ref{lem:gap_random_highd_circuit}.
		
		For $s=\frac14 w$, $d'\ge w=\eta d$ (where $\eta$ is a constant to be determined later), a sufficiently small constant $c$, and a large $w$,  
		\begin{equation}
			\bE_{\psi\sim\mathcal{E}}[p_{\psi}] \ge 1 - 2^{-\Omega(w^2)}.
		\end{equation}
		Then,
		\begin{equation}
			\Pr_{\psi\sim\mathcal{E}}\left[p_{\psi} \ge 0.5\right] \ge 1 - 2 \cdot 2^{-\Omega(w^2)}.
		\end{equation}
		By Theorem~\ref{thm:2Dcomplexity}, when $p_{\psi} > 0.5$, we can certify circuit complexity $d$ with sample complexity $\exp(\mathcal{O}(w^2)) = \exp(\mathcal{O}(d^2))$, as $\eta$ is also a constant.  
		By choosing $w$ to be a large constant such that $1 - 2 \cdot 2^{-\Omega(w^2)} > 0.99$, we obtain the desired result.
		
		Moreover, whenever the constant $\eta\ge14c^{-1}$, when $p_{\psi}>0.5$, $\eta p_{\psi}c\ge 7$. In this case, the gap given by \eqref{eq:gap_adaptive_complexity} is $\Omega(1)$ and hence the observable $\tilde{O}(\psi,t,p)$ can be used to certify the measurement-assisted circuit complexity with $\cO(1)$ sample complexity.
	\end{proof}
	
	\section{Additional analysis on entanglement certification}\label{app:proof_entanglement_certification}
	
	
	\subsection{Proof of Proposition \ref{prop:fully_inseparable_entanglement}}
	\begin{proof}
		\textit{Soundness}---If $\rho$ is not fully inseparable, let $A \mid B$ be a separable bipartition. Then there exists some $i\in A$ such that $i+1 \in B$. Then  By Proposition~\ref{prop:bipartite_entanglement} (random-basis variant), when testing $A_1=\{i\}$ and $B_1=\{i+1\}$, the protocol outputs $\reject$ with probability at least $1-\delta$. Therefore, among all tests, at least one test rejects with probability at least $1-\delta$.
		
		\textit{Completeness}---If the input state $\rho$ satisfies  $\dtr(\rho,\psi)\leq\varepsilon := \min_i\left\{\frac{\widetilde{\mathrm{LE}}_{\psi}(i,i+1)}{6}\right\} = \Omega(1)$ (see Theorem~\ref{thm:performance_certification_protocol} and Eq.~\eqref{eq:fully_inseparable_requirement}), set the failure probability parameter to $\delta'=\delta/(n-1)$. By Proposition~\ref{prop:bipartite_entanglement} (random-basis variant), with
		$T=\cO\bigl(\log (\delta'^{-1})\bigr)=\cO\bigl(\log n+\log \delta^{-1}\bigr)$, each individual test for a fixed $1\le i\le n$ outputs $\accept$ with probability at least $1-\delta'$. A union bound then implies that all tests (for $2\le i\le n$) output $\accept$ simultaneously with probability at least $1-\delta$.
	\end{proof}
	
	\subsection{Localizable entanglement of Haar-random states}
	Here, we prove that the localizable entanglement of Haar-random states is bounded below by a constant with overwhelmingly high probability, thereby validating the performance of our protocol on generic states. 
	
	\begin{lemma}[\revise{Localizable entanglement of Haar-random states}]\label{lem:Haar-random-states-localizable-entanglement}
		Let $\ket{\psi}$ be chosen Haar-randomly from the set of $n$-qubit pure states. 
		Then, with probability at least $1-\delta$ for $\delta = \exp(-\exp(\Omega(n)))$, we have
		\begin{equation}
			\mathrm{LE}_{\psi}(1,2) = \Omega(1), 
			\qquad 
			\widetilde{\mathrm{LE}}_{\psi}(1,2) = \Omega(1).
		\end{equation}
	\end{lemma}
	
	\begin{proof}
		Fix $\epsilon = \tfrac{1}{10}$. By Fact~\ref{fact:Haar_random_deep_thermalization}, the projected ensemble $\mathcal{E}_{\psi}$ on qubits $\{1,2\}$ is an $\epsilon$-approximate 2-design with probability $1-\delta_0$, where $\delta_0 = \exp\bigl(-\exp(\Omega(n))\bigr)$. 
		Then, by Lemma~\ref{lem:design_entanglement}, 
		\begin{equation}
			\bE_{\phi \sim \mathcal{E}_{\psi}} \bigl[\tr(\phi_1^2)\bigr] \le \frac{4}{5} + \epsilon = \frac{9}{10}.
		\end{equation}
		Now we show for the constant $e_0 = \frac{1}{20}$, it holds that $\mathrm{LE}_{\psi}(1,2) \ge e_0$. 
		Denote by $\lambda_1, \lambda_2$ the Schmidt coefficients of the projected state $\phi$ with $\lambda_1 \ge \lambda_2$. We have $\lambda_2 = 1-\lambda_1$, $\tr(\phi_1^2) = \lambda_1^2 + \lambda_2^2$ and $\fid(\phi) = \lambda_1$. 
		Therefore,
		\begin{equation}
			\begin{split}
				\mathrm{LE}_{\psi}(1,2) &= \bE_{\phi \sim \mathcal{E}_{\psi}}[1 - \lambda_1] \\
				&\ge \bE_{\phi \sim \mathcal{E}_{\psi}}[\lambda_1(1-\lambda_1)] \\
				&= \tfrac{1}{2}\,\bE_{\phi \sim \mathcal{E}_{\psi}}\left[2\lambda_1(1-\lambda_1) - 1\right] + \tfrac{1}{2} \\
				&= -\tfrac{1}{2}\,\bE_{\phi \sim \mathcal{E}_{\psi}}[\tr(\phi_1^2)] + \tfrac{1}{2} \\
				&\ge \tfrac{1}{20}.
			\end{split}
		\end{equation}
		Hence $\mathrm{LE}_{\psi}(1,2) \ge e_0$ with probability at least $1-\delta_0$.
		
		To lower bound $\widetilde{\mathrm{LE}}_{\psi}(1,2)$, define $\mathrm{LE}_{\psi}^{\bmb}$ for $\bmb \in \{X,Y,Z\}^{n_B}$ as the localizable entanglement under measurement basis $\bmb$ on $B$. 
		By local unitary invariance of the Haar measure, $\mathrm{LE}_{\psi}^{\bmb}(1,2) \ge e_0$ for Haar-random $\psi$ with probability at least $1-\delta_0$. 
		Let $f(\psi,\bmb) \coloneqq \mathbf{1}[\mathrm{LE}_{\psi}^{\bmb}(1,2) \ge e_0]$. Then for $\bmb$ chosen uniformly from $\{X,Y,Z\}^{n_B}$,
		\begin{equation}
			\bE_{\psi}\bE_{\bmb} f(\psi,\bmb) = \bE_{\bmb}\bE_{\psi} f(\psi,\bmb) \ge 1-\delta_0.
		\end{equation}
		By Markov’s inequality,
		\begin{equation}
			\Pr_{\psi}\left[\bE_{\bmb} f(\psi,\bmb) \le 1-\sqrt{\delta_0}\right] \le \sqrt{\delta_0}.
		\end{equation}
		Therefore, with probability at least $1 - \delta_1$ for $\delta_1 \coloneqq \sqrt{\delta_0} = \exp\bigl(-\exp(\Omega(n))\bigr)$,
		\begin{equation}
			\begin{split}
				\bE_{\bmb} f(\psi,\bmb) &> 1-\delta_1,\\
				\widetilde{\mathrm{LE}}_{\psi}(1,2) &= \bE_{\bmb}\,\mathrm{LE}_{\psi}^{\bmb}(1,2) \\
				&> (1-\delta_1)e_0 \\
				&= \Omega(1).
			\end{split}
		\end{equation}
		
		Combining the bounds, we conclude that with probability at least $1-\delta$ for $\delta = \delta_0 + \delta_1 = \exp\bigl(-\exp(\Omega(n))\bigr)$,
		\begin{equation}
			\mathrm{LE}_{\psi}(1,2) = \Omega(1), 
			\quad 
			\widetilde{\mathrm{LE}}_{\psi}(1,2) = \Omega(1).
		\end{equation}
	\end{proof}
	
	By a union bound, with probability at least $1-\delta'$ for $\delta' = n\delta = \exp\bigl(-\exp(\Omega(n))\bigr)$, we have $\widetilde{\mathrm{LE}}_{\psi}(i,i+1) = \Omega(1)$ for all $i \in [n-1]$. 
	This establishes that our certification protocol for fully inseparable entanglement is sample-efficient for Haar-random states. 
	
	\subsection{Details of the numerical experiments on localizable entanglement}
	\subsubsection{Stabilizer states}
	In Fig.~\ref{fig:StabilizerEntanglement}, we sampled $N_s = 5$ random stabilizer states for system sizes $n \in \{32, 64, 128, 256\}$, and $N_s = 10$ for $n=16$.
	For each stabilizer state, we set the failure probability to $\delta = 0.01$ and the additive error to $\varepsilon = 0.05$.
	To guarantee that all sites are simultaneously estimated up to an additive error $\varepsilon$ with a global failure probability of at most $\delta$, we applied a union bound by setting $\delta' = \frac{\delta}{n N_s}$. Accordingly, we set the parameters for the median-of-means estimator (Proposition~\ref{prop:mom-estimator}) to $K = \lceil \tfrac{9}{2} \ln (\delta'^{-1}) \rceil \approx 60$ and $B = \lceil \frac{6 \sigma^2}{\varepsilon^2} \rceil = 600$, assuming a conservative variance upper bound of $\sigma^2 = 0.25$ of $\fid$ across different projected states.
	
	A total of $T = BK$ samples $\{\bmx_i\}_{i=1}^T$ were obtained by performing random Pauli measurements on the sampled stabilizer states. Each measurement sample $\bmx_i \in \mathsf{X}^n$ is reused across all sites $j$ to construct the projected state on the nearest-neighbor pair $(i, i+1)$.
	Subsequently, the localizable entanglement $\widetilde{\mathrm{LE}}_{\psi}(i,i+1)$ is obtained by computing the median-of-means estimator of $\fid$ of the obtained projected states on $(i, i+1)$.
	Note that $\sigma^2=0.25$ overestimates the actual variance of $\fid$. Consequently, the actual additive error is much smaller than the target $\varepsilon$.

	\subsubsection{Hamiltonian systems}
	In Fig.~\ref{fig:HamiltonianEntanglement}, we report sample means and plot error bars as the standard error of the mean, computed from $N=1000$ independently sampled projected states. The maximal fidelity to a two-qubit product state is evaluated as $\lambda_{1}$, the square of the maximal Schmidt coefficient of the two-qubit projected state. Ground states of the XXZ chain are obtained by DMRG for system sizes $n=32$ and $n=64$, while the $J_{1}$–$J_{2}$ chain is treated by DMRG at $n=20$. All other data sets shown in the figures are produced by exact diagonalization.

	\revise{
		\section{Additional analysis on magic certification}
		\label{app:proof_magic_certification}
		
		In this appendix, we prove that when the subsystem $A$ consists of a single qubit, the localizable magic of a Haar-random state is lower-bounded by a constant with overwhelmingly high probability.
		
		\begin{lemma}[Localizable magic of Haar-random states]
			\label{lem:Haar-random-states-localizable-magic}
			Let $A$ be a single-qubit subsystem, and $B$ be an $n_B$-qubit subsystem. If $\ket{\psi} \in \mathcal{H}_A \otimes \mathcal{H}_B$ is a Haar-random pure state, then with probability at least $1 - \exp(-\Omega(2^{n_B}))$, 
			\begin{equation}
				\min\left\{\mathrm{LM}(\psi), \widetilde{\mathrm{LM}}(\psi)\right\} \ge \frac{1}{24}.
				\label{eq:Haar-localizable-magic}
			\end{equation}
		\end{lemma}
		
		\begin{proof}
			Let $\mathcal{S}_1=\{\ket{0},\ket{1},\ket{+},\ket{-},\ket{+i},\ket{-i}\}$ be the set of single-qubit stabilizer states. For a pure single-qubit state $\ket{\phi}$, we define
			\begin{equation}
				m(\phi)
				\coloneqq 1-\max_{\ket{s}\in\mathcal{S}_1}
				\abs{\braket{s}{\phi}}^2.
				\label{eq:single-qubit-magic-gap}
			\end{equation}
			Since the free projected set $\mathrm{STAB}_1$ is the convex hull of $\mathcal{S}_1$, we have $m(\phi)=1-\fid(\phi)$.
			
			For a fixed $\ket{s}\in\mathcal{S}_1$ and a Haar-random state $\ket{\phi}\in\mathbb{C}^2$, the overlap
			$X_s\coloneqq\abs{\braket{s}{\phi}}^2$ is uniformly distributed on $[0,1]$. Thus, for any
			$0\le t\le 1/6$, a union bound gives
			\begin{equation}
				\Pr_{\phi}[m(\phi)\ge t]
				\ge 1-\sum_{\ket{s}\in\mathcal{S}_1}
				\Pr_{\phi}[X_s>1-t]
				=1-6t.
			\end{equation}
			Consequently, 
			\begin{equation}
				\mu_1\coloneqq\bE_{\phi\sim\operatorname{Haar}(n_A)}[m(\phi)]
				=\int_0^1\Pr_{\phi}[m(\phi)\ge t]\,\mathrm{d}t
				\ge\int_0^{1/6}(1-6t)\,\mathrm{d}t
				=\frac{1}{12}.
				\label{eq:single-qubit-Haar-magic-mean}
			\end{equation}
			
			Fix a product Pauli basis $\bmb\in\{X,Y,Z\}^{n_B}$ on $B$, and denote
			the corresponding localizable magic under basis $\bmb$ as $\mathrm{LM}_{\bmb}$.
			Using the Gaussian representation of Haar measure, let
			$\{g_{\bmz}\}_{\bmz}$ be independent standard complex Gaussian vectors
			in $\mathbb{C}^2$, and set
			$R_{\bmz}=\norm{g_{\bmz}}_2^2$ and
			$\ket{u_{\bmz}}=\ket{g_{\bmz}}/\norm{g_{\bmz}}_2$.  Then
			\begin{equation}
				\ket{\psi}
				=\frac{1}{\sqrt{\sum_{\bmz}R_{\bmz}}}
				\sum_{\bmz}\sqrt{R_{\bmz}}\,
				\ket{u_{\bmz}}_A\otimes\ket{\bmz}_{\bmb}.
				\label{eq:Haar-Gaussian-blocks}
			\end{equation}
			The directions $\ket{u_{\bmz}}$ are independent Haar-random states in
			$\mathcal{H}_A$ and are independent of $\{R_{\bmz}\}_{\bmz}$. Therefore,
			\begin{equation}
				\bE_{\psi}\!\left[
				\mathrm{LM}_{\bmb}(\psi)\mid\{R_{\bmz}\}_{\bmz}
				\right]
				=\sum_{\bmz}\frac{R_{\bmz}}{\sum_{\bmz'}R_{\bmz'}}
				\bE_{u_{\bmz}}m(u_{\bmz})
				=\mu_1.
				\label{eq:basis-resolved-LM-mean}
			\end{equation}
			Thus, $\bE_{\psi}[\mathrm{LM}_{\bmb}(\psi)]=\mu_1$ for every basis $\bmb$.
			In particular, 
			\begin{equation}
				\bE_{\psi}[\mathrm{LM}(\psi)]
				=\bE_{\psi}[\widetilde{\mathrm{LM}}(\psi)]
				=\mu_1,
				\qquad
				\widetilde{\mathrm{LM}}(\psi)
				=3^{-n_B}\!\sum_{\bmb\in\{X,Y,Z\}^{n_B}}
				\mathrm{LM}_{\bmb}(\psi).
				\label{eq:LM-means-and-random-basis-average}
			\end{equation}
			
			It remains to prove concentration.  For any
			$\boldsymbol{s}=(s_{\bmz})_{\bmz}\in\mathcal{S}_1^{d_B}$, where
			$d_B=2^{n_B}$, set
			\begin{equation}
				P_{\boldsymbol{s},\bmb}
				\coloneqq\sum_{\bmz}
				\ketbra{s_{\bmz}}_A\otimes\ketbra{\bmz}_{\bmb}.
			\end{equation}
			This is an orthogonal projector, and Eq.~\eqref{eq:single-qubit-magic-gap}
			gives
			\begin{equation}
				\mathrm{LM}_{\bmb}(\psi)
				=1-\max_{\boldsymbol{s}\in\mathcal{S}_1^{d_B}}
				\bra{\psi}P_{\boldsymbol{s},\bmb}\ket{\psi}.
				\label{eq:LM-projector-form}
			\end{equation}
			For any unit vectors $\ket{\psi},\ket{\varphi}$ and any orthogonal
			projector $P$,
			\begin{equation}
				\abs{\bra{\psi}P\ket{\psi}-\bra{\varphi}P\ket{\varphi}}
				\le 2\norm{\psi-\varphi}_2.
			\end{equation}
			Hence every $\mathrm{LM}_{\bmb}$ is $2$-Lipschitz on the unit sphere of
			$\mathbb{C}^{2d_B}$, and so is their average
			$\widetilde{\mathrm{LM}}$.  Then, Levy's lemma~\cite[Theorem 7.37]{watrous2018theory} implies, for a
			universal $c_0>0$ and every $\varepsilon>0$,
			\begin{equation}
				\begin{aligned}
					\Pr_{\psi}\!\left[
					\abs{\mathrm{LM}(\psi)-\mu_1}\ge\varepsilon\right]
					&\le 2\exp(-c_0d_B\varepsilon^2),\\
					\Pr_{\psi}\!\left[
					\abs{\widetilde{\mathrm{LM}}(\psi)-\mu_1}\ge\varepsilon\right]
					&\le 2\exp(-c_0d_B\varepsilon^2).
				\end{aligned}
				\label{eq:LM-concentration-compact}
			\end{equation}
			Because $\mu_1\ge1/12$, for either quantity to be smaller than $1/24$, it must deviate by at least $1/24$ from its mean. 
			Taking $\varepsilon=1/24$ in Eq.~\eqref{eq:LM-concentration-compact} and applying a union bound concludes the proof.
		\end{proof}
	}
	
	\section{Additional proofs on fidelity certification}\label{app:proof_spectral_gap}
	
	\subsection{Proof of Proposition \ref{prop:fidelity_performance}}
	
	\begin{proof}
		Set the decision boundary as $\eta^* = 1 - (1-c)\Delta(1-F)$, and the estimation error as $\varepsilon = c\Delta(1-F)$. 
		Accounting for statistical fluctuations of the estimator $\omega$ of $\tr(O_{\psi}\rho)$ (see Eq.~\eqref{eq:claim-distance-from-mean-MoM} and Corollary~\ref{col:var_conditional_fidelity}), the number of copies required to guarantee precision $\varepsilon$ with failure probability at most $\delta$ is
		\begin{equation}
			T = \frac{27 \ln(\delta^{-1})(4^{n_A}+1)}{c^{2}\Delta^{2}(1-F)^{2}}.
		\end{equation}
		
		When $\tr(\psi\rho) \leq F$, we have
		\begin{equation}
			\tr(O_{\psi}\rho)
			\leq
			F\cdot 1 + (1-F)(1-\Delta)
			=
			1-\Delta(1-F).
		\end{equation}
		Therefore, with probability at least $1-\delta$, the estimator $\omega$ satisfies
		\begin{equation}
			\omega < \tr(O_{\psi}\rho) + \varepsilon \leq \eta^{*},
		\end{equation}
		so the protocol outputs $\reject$.

		Conversely, when $\tr(\psi\rho) \ge 1-(1-2c)\Delta(1-F)$, note that the spectrum of $O_{\psi}$ is non-negative and has eigenvalue $1$ for $\ket{\psi}$. Therefore, 
		\begin{equation}
			\tr(O_{\psi}\rho) \geq \tr(\psi\rho) \geq 1-(1-2c)\Delta(1-F).
		\end{equation}
		Hence, with probability at least $1-\delta$,
		\begin{equation}    
			\omega > \tr(O_{\psi}\rho)-\varepsilon \ge\eta^{\ast}.
		\end{equation}
		and the protocol will output $\accept$.
	\end{proof}
	
	\subsection{Proof of Theorem~\ref{thm:constant-gap-graph-states}}
	
	Let $G=(V,E)$ be a simple undirected graph on $n$ vertices $V=\{1,\dots,n\}$ with symmetric adjacency matrix $A\in\bF_2^{n\times n}$ with zero diagonal.
	The associated $n$-qubit graph state $\ket{G}$ is the unique $+1$ common eigenstate of the stabilizer generators
	\begin{equation}
		K_i \;=\; X_i \prod_{j\in N(i)} Z_j,\qquad i\in[n],
	\end{equation}
	where $N(i)=\{j:\,A_{ij}=1\}$ is the neighborhood of $i$.
	Equivalently, writing an $n$-qubit Pauli as
	\begin{equation}\label{eq:stabilizer_H_XZ_form}
		P_j = i^{p_j}\prod_{k=1}^n X_k^{H^X_{k,j}} Z_k^{H^Z_{k,j}},
	\end{equation}
	the stabilizer of $\ket{G}$ is generated by $\{P_1,\dots,P_n\}$ with
	\begin{equation}
		H^X = I_n,\qquad H^Z = A,
	\end{equation}
	i.e., the $j$-th generator has an $X$ on qubit $j$ and $Z$'s on its neighbors.
	A \emph{random graph state} is obtained by sampling $A$ uniformly at random from all symmetric matrices in $\bF_2^{n\times n}$ with zero diagonal, and setting $H^Z=A$ (with $H^X=I_n$).
	
	\begin{proof}[Proof of Theorem~\ref{thm:constant-gap-graph-states}]
		Operationally, measuring $O_{\psi}$ can be viewed as follows: independently sample a local Pauli basis ($X$, $Y$, or $Z$) for each qubit in $B$, measure $B$ in the resulting product basis, and then (conditioned on the outcome $\bmx$) either output $0$ if $\tilde p_{\psi}(\bmx)=0$ or project $A$ onto $\ket{\psi_{\bmx}}$.
		For a fixed choice of local bases $\bmb\in\Bbrange$ (with $\sfB=\{\{0,1\},\{+,-\},\{+i,-i\}\}$), define
		\begin{align}
			O_{\bmb,\psi} = \sum_{\bmx \in \bmb,\tilde{p}_\psi(\bmx)\ne0} \ketbra{\psi_{\bmx}}_A \otimes \ketbra{\bmx}_B,
		\end{align}
		so that
		\begin{align}
			O_{\psi} = 3^{-n_B} \sum_{\bmb\in\Bbrange} O_{\bmb,\psi}=\bE_{\bmb}\left[O_{\bmb,\psi}\right],
		\end{align}
		where $\bmb$ is uniform over $\Bbrange$.
		
		Let $\mathcal S=\mathrm{Stab}(\psi)$ be generated by $\{P_1,\dots,P_n\}$.  
		Fix a basis choice $\bmb$, and let $\mathcal S_{\mathrm{comm},\bmb}\subseteq \mathcal S$ denote the subgroup of stabilizers that commute with all local Pauli measurements specified by $\bmb$ on $B$. Since the projectors $\{\ketbra{\bmx}_B\}_{\bmx\in\bmb}$ are pairwise orthogonal, we have $O_{\bmb,\psi}^2=O_{\bmb,\psi}$, and hence $O_{\bmb,\psi}$ is a projector.
		
		Moreover, $O_{\bmb,\psi}$ is exactly the projector onto the joint $(+1)$-eigenspace of $\mathcal S_{\mathrm{comm},\bmb}$. First, for any $g\in\mathcal S_{\mathrm{comm},\bmb}$ and any outcome $\bmx\in\bmb$ with $\tilde p_\psi(\bmx)\neq 0$, we have
		\begin{equation}
			g (I\otimes \ketbra{\bmx}_B)\ket{\psi}
			=(I\otimes \ketbra{\bmx}_B)\, g\ket{\psi}
			=(I\otimes \ketbra{\bmx}_B)\ket{\psi},
		\end{equation}
		where we used $[g,\ketbra{\bmx}_B]=0$ and $g\ket{\psi}=\ket{\psi}$. Thus each branch state $\ket{\psi_{\bmx}}_A\otimes\ket{\bmx}_B$ lies in the joint $(+1)$-eigenspace of $\mathcal S_{\mathrm{comm},\bmb}$.
		
		Conversely, let $\ket v=\sum_{\bmx\in\bmb}\widetilde{\ket{v_{\bmx}}}_A\otimes\ket{\bmx}_B$ satisfy $g\ket v=\ket v$ for all $g\in\mathcal S_{\mathrm{comm},\bmb}$. Since every $g\in\mathcal S_{\mathrm{comm},\bmb}$ commutes with $\ketbra{\bmx}_B$, each component $\widetilde{\ket{v_{\bmx}}}_A\otimes\ket{\bmx}_B$ is also stabilized by $\mathcal S_{\mathrm{comm},\bmb}$. In addition, it is stabilized by the measurement-outcome constraints $\{(-1)^{\bmx_i}P_{\bmb_i}\}_{i\in B}$, where $P_{\bmb_i}\in\{X_i,Y_i,Z_i\}$ is the Pauli observable associated with the basis $\bmb_i$. The state stabilized by $\mathcal S_{\mathrm{comm},\bmb}$ together with these commuting single-qubit Paulis is unique. Therefore,
		\begin{equation}
			\widetilde{\ket{v_{\bmx}}}_A\otimes\ket{\bmx}_B \ \propto\ (I\otimes\ketbra{\bmx}_B)\ket{\psi}
			\ \propto\ \ket{\psi_{\bmx}}_A\otimes\ket{\bmx}_B,
		\end{equation}
		and hence $\bra{v}O_{\bmb,\psi}\ket v = 1$ lies in $\mathrm{Ran}(O_{\bmb,\psi})$. Combining above, $O_{\bmb,\psi}$ is precisely the projector onto the joint $(+1)$-eigenspace of $\mathcal S_{\mathrm{comm},\bmb}$.
		
		
		For each $x\in\bF_2^n$, let $\ket{\psi_x}$ be the stabilizer state whose stabilizer group is generated by
		$\{(-1)^{x_1}P_1,\dots,(-1)^{x_n}P_n\}$.
		Then $\{\ket{\psi_x}\}_{x\in\bF_2^n}$ forms an orthonormal basis.
		Because $O_{\bmb,\psi}$ is the projector onto the subspace stabilized by $\mathcal S_{\mathrm{comm},\bmb}$, we have $\bra{\psi_x}O_{\bmb,\psi}\ket{\psi_x} \in \{0,1\}$
		and $\bra{\psi_x}O_{\bmb,\psi}\ket{\psi_x}=1$ iff every $g\in\mathcal S_{\mathrm{comm},\bmb}$ also stabilizes $\ket{\psi_x}$.
		To make this condition explicit, write each generator in $(H^X,H^Z)$ form in Eq.~\eqref{eq:stabilizer_H_XZ_form}.
		Any stabilizer element can be written as
		\begin{equation}
			g_z := P_1^{z_1}\cdots P_n^{z_n}
			\ \propto\
			\prod_{k=1}^n X_k^{(H^X z)_k} Z_k^{(H^Z z)_k},
			\qquad z\in\bF_2^n.
		\end{equation}
		Such $g_z$ stabilizes $\ket{\psi_x}$ iff $(-1)^{z^\top x}=1$, i.e., $z^\top x=0$ in $\bF_2$.
		For each $i\in B$, let 
		\begin{align}
			m_i^T=\begin{cases}e_i^TH^X,&\bmb_i=\{0,1\},\\e_i^TH^Z,&\bmb_i=\{+,-\},\\e_i^TH^X+e_i^TH^Z,&\bmb_i=\{+i,-i\},\\\end{cases}
		\end{align}
		where $e_i$ denote the unit vector having $1$ on the $i$-th entry. 
		A direct Pauli commutation check shows that $g_z$ commutes with the local measurement on qubit $i$ iff $m_i^\top z=0$ in $\bF_2$.
		Therefore, $\bra{\psi_x}O_{\bmb,\psi}\ket{\psi_x}=1$ iff ``for any $z\in\bF_2^n$, $m_i^Tz=0$ for all $i\in B$ implies $z^Tx=0$'', which is equivalent to $x\in\spn\{m_i\mid i\in B\}$.

		Since $O_{\psi}=\bE_{\bmb}[O_{\bmb,\psi}]$ and each $O_{\bmb,\psi}$ is diagonal in the basis $\{\ket{\psi_x}\}$, $O_{\psi}$ is also diagonal in this basis.
		The eigenvalue corresponding to $\ket{\psi_x}$ is
		\begin{align}
			\Pr_m[x\in\spn\{m_i\mid i\in B\}].
		\end{align}
		where $m_i$ is independently and uniformly picked from $\{(H^X)^Te_i,(H^Z)^Te_i,(H^X)^Te_i+(H^Z)^Te_i\}$. For random graph states, $H^X=I$ and $H^Z$ is uniformly and randomly picked from all symmetric matrices with zero diagonal. Therefore,  the result follows from the following Lemma~\ref{lem:gap-rand-graph-linear-algebra}.
	\end{proof}
	
	\begin{lemma}\label{lem:gap-rand-graph-linear-algebra}
		Let $c>0$ be any constant. Pick $A$ uniformly from random symmetric matrices in $\bF_2^{n\times n}$ with zero diagonal. Let $a_i$ denote the $i$-th row (or column) of $A$, and let $e_i$ denote the unit vector having $1$ on the $i$-th entry. 
		Then, with probability at least $1-\exp(-\Omega(n))$ over the choice of $A$, the following holds simultaneously for all nonzero $x\in\bF_2^n$:
		$\Pr_m[x\in\spn\{m_i\mid2\le i\le n\}]\le\frac12+\frac13+c$, where $m_i$ is independently and uniformly picked from $\{e_i,a_i,e_i+a_i\}$.
	\end{lemma}
	
	\begin{proof}
		For any matrix $A$ and vector $x$, we have
		\begin{align}
			\Pr_m[x\in\spn\{m_i\mid2\le i\le n\}]\le\bE_m\left[\sum_{k_2,\cdots,k_n\in\{0,1\}}1_{k_2m_2+\cdots+k_nm_n=x}\right]=2^{n-1}\Pr_{m'}[m'_2+\cdots+m'_n=x],\label{eq:pf-const-gap-rand-eq}
		\end{align}
		where $m'_i$ is picked from the following distribution: $0$ with probability $\frac12$, $e_i$ with probability $\frac16$, $a_i$ with probability $\frac16$, $e_i+a_i$ with probability $\frac16$; and each $m'_i$ is picked independently.
		
		We have another way to choose $m'$: let $m'_i=f_iz_i$, where $f_i$ is $1$ with probability $\frac23$ and $0$ with probability $\frac13$, $z_i$ is uniformly picked from $\{0,e_i,a_i,e_i+a_i\}$; and each $f_i,z_i$ are independent. Let
		\begin{align}
			V_f=\spn\{e_i,a_i\mid f_i=1\},\qquad |f|=\sum_{i=2}^nf_i,
		\end{align}
		where the addition is in $\bZ$. Then, $\dim V_f\le2|f|$, and
		\begin{align}\begin{split}
				\Pr_{m'}[m'_2+\cdots+m'_n=x]&=\bE_f\Pr_z[f_2z_2+\cdots+f_nz_n=x]=\bE_f\Pr_{y\gets V_f}[y=x]=\bE_f[1_{x\in V_f}2^{-{\dim V_f}}]\\&=\bE_f[1_{x\in V_f}2^{-2|f|}]+\bE_f[1_{x\in V_f}(2^{-{\dim V_f}}-2^{-2|f|})].
			\end{split}\label{eq:pf-const-gap-eq-two-terms}\end{align}
		
		Let $f'$ be another random vector drawn from the following distribution: $f'_i$ is $1$ with probability $\frac13$ and $0$ with probability $\frac23$. Then, for any function $X(f)$ of $f$,
		\begin{align}\label{eq:pf-const-gap-new-dist}
			2^{n-1}\bE_f[X(f)2^{-2|f|}]=\sum_{f\in\{0,1\}^{n-1}}2^{n-1}\frac{2^{|f|}}{3^{n-1}}X(f)2^{-2|f|}=\sum_{f\in\{0,1\}^{n-1}}\frac{2^{n-1-|f|}}{3^{n-1}}X(f)=\bE_{f'}[X(f')].
		\end{align}
		Use this equality in \eqref{eq:pf-const-gap-rand-eq} and \eqref{eq:pf-const-gap-eq-two-terms}, we have
		\begin{align}\begin{split}
				\Pr_m[x\in\spn\{m_i\mid2\le i\le n\}]&\le2^{n-1}\Pr_{m'}[m'_2+\cdots+m'_n=x]\\&=2^{n-1}(\bE_f[1_{x\in V_f}2^{-2|f|}]+\bE_f[1_{x\in V_f}(2^{-{\dim V_f}}-2^{-2|f|})])\\&=\bE_{f'}[1_{x\in V_{f'}}]+2^{n-1}\bE_f[1_{x\in V_f}(2^{-{\dim V_f}}-2^{-2|f|})]\\&\le\Pr_{f'}[x\in V_{f'}]+2^{n-1}\bE_f[2^{-{\dim V_f}}-2^{-2|f|}].
		\end{split}\end{align}
		When $x=0$, it gives
		\begin{align}\begin{split}
				2^{n-1}\Pr_{m'}[m'_2+\cdots+m'_n=0]&=\bE_{f'}[1_{0\in V_{f'}}]+2^{n-1}\bE_f[1_{0\in V_f}(2^{-{\dim V_f}}-2^{-2|f|})]\\&=1+2^{n-1}\bE_f[2^{-{\dim V_f}}-2^{-2|f|}].
		\end{split}\end{align}
		Therefore, 
		\begin{align}\begin{split}
				\Pr_m[x\in\spn\{m_i\mid2\le i\le n\}]&\le\Pr_{f'}[x\in V_{f'}]+2^{n-1}\bE_f[2^{-{\dim V_f}}-2^{-2|f|}]\\&=\Pr_{f'}[x\in V_{f'}]+2^{n-1}\Pr_{m'}[m'_2+\cdots+m'_n=0]-1.
		\end{split}\end{align}
		The result can be derived from the following two lemmas: Lemma~\ref{lem:pf-const-gap-0-concentrate} and Lemma~\ref{lem:pf-const-gap-le-one-third}.
	\end{proof}
	
	
	\begin{lemma}\label{lem:pf-const-gap-0-concentrate}
		For any constant $c_1>0$, $2^{n-1}\Pr_{m'}[m'_2+\cdots+m'_n=0]\le1+\frac12+c_1$ with probability at least $1-\exp(-{\Omega(n)})$ (with respect to the randomness of $A$), where $m'_i$ is picked from the following distribution: $0$ with probability $\frac12$, $e_i$ with probability $\frac16$, $a_i$ with probability $\frac16$, $e_i+a_i$ with probability $\frac16$; and each $m'_i$ is picked independently.
	\end{lemma}
	
	\begin{lemma}\label{lem:pf-const-gap-le-one-third}
		For any constant $c_2>0$, with probability at least $1-\exp(-{\Omega(n)})$ (with respect to the randomness of $A$), for all non-zero $x\in\bF_2^n$, $\Pr_{f'}[x\in V_{f'}]\le\frac13+c_2$, where $f'$ follows $\Ber(1/3)^{n-1}$.
	\end{lemma}
	
	In the following, for any set $S,T$ of indices, let $A_{S,T}$ denote the submatrix of $A$ by keeping the rows with indices in $S$ and columns with indices in $T$, i.e., $A_{S,T}=(A_{i,j})_{i\in S,j\in T}$; similarly, for vector $a$, let $a_S$ be the vector by keeping the entries in $a$ with indices in $S$, i.e., $a_S=(a_i)_{i\in S}$.
	
	
	Before proving these two lemmas, we first prove another lemma.
	
	\begin{lemma}\label{lem:rand-symm-matrix-vec}
		Let $x_1,\cdots,x_k$ be $k$ linearly independent vectors in $\bF_2^n$, and $d_1,\cdots,d_k\in \bF_2^n$. Then,
		\begin{align}
			\Pr_A[\forall 1\le i\le k, Ax_i=d_i]\le2^{-k(n-k)}.
		\end{align}
	\end{lemma}
	\begin{proof}
		Let $X$, $D$ be $n\times k$ matrices whose $i$-th column is $x_i$, $d_i$, respectively. ``$\forall 1\le i\le k, Ax_i=d_i$'' is equivalent to $AX=D$. Since $\{x_i\}$ is linearly indepedent, there exists some indices set $I$ of size $k$ such that $X_{I,[k]}$ is an invevrtible matrix. Let $\bar I$ denote the complementary of $I$. $AX=D$ implies
		\begin{align}
			D_{\bar I,[k]}=(AX)_{\bar I,[k]}=A_{\bar I,[n]}X=A_{\bar I,I}X_{I,[k]}+A_{\bar I,\bar I}X_{\bar I,[k]},
		\end{align}
		which is equivalent to
		\begin{align}
			A_{\bar I,I}=(D_{\bar I,[k]}-A_{\bar I,\bar I}X_{\bar I,[k]})X_{I,[k]}^{-1}.
		\end{align}
		Here, $A_{\bar I,I}$ is independent of $A_{\bar I,\bar I}$, each entry of $A_{\bar I,I}$ is independently uniformly chosen from $\bF_2$. Hence, we can first choose $A_{\bar I,\bar I}$; given $A_{\bar I,\bar I}$, the probabililty that the above equation holds is $2^{-k(n-k)}$. Therefore, $\Pr_A[(AX)_{\bar I,[k]}=D_{\bar I,[k]}]=2^{-k(n-k)}$ and
		\begin{align}
			\Pr_A[\forall 1\le i\le k, Ax_i=d_i]\le\Pr_A[(AX)_{\bar I,[k]}=D_{\bar I,[k]}]=2^{-k(n-k)}.
		\end{align}
	\end{proof}
	
	\begin{proof}[Proof of Lemma~\ref{lem:pf-const-gap-0-concentrate}]
		We compute the expectation and variance of $X=\Pr_{m'}[m'_2+\cdots+m'_n=0]$. Here, we pick $m'$ using the following method: first pick $y_i=y_i(a,r)$ from the following expressions: $0$ with probability $\frac12$, $e_i$ with probability $\frac16$, and $a+re_i$ with probability $\frac13$; then plug in $a_i$ and a random bit $r_i$ to get $m_i'=y_i(a_i,r_i)$. This procedure produces the same distribution. Let
		\begin{align}
			d_y=\sum_{i=2}^n1_{y_i=e_i}e_i,\qquad x_y=\sum_{i=2}^n1_{y_i=a+re_i}e_i.
		\end{align}
		Then, for any $i$, $(d_y)_i$ and $(x_y)_i$ cannot both be $1$, and
		\begin{align}
			m'_2+\cdots+m'_n=y_2(a_2,r_2)+\cdots+y_n(a_n,r_n)=d_y+Ax_y+\sum_{i=2}^nr_i(x_y)_ie_i.
		\end{align}
		
		To compute $\bE_A[X]$, we have
		\begin{align}
			\bE_A[X]=\bE_A\Pr_{m'}[m'_2+\cdots+m'_n=0]=\Pr_{y,A,r}[y_2(a_2,r_2)+\cdots+y_n(a_n,r_n)=0].
		\end{align}
		We first pick $y$, and then compute the probability of $m'_2+\cdots+m'_n=0$ under the randomness of $A$ and $r$.
		
		If for all $i$, $y_i$ is not $a_i+r_ie_i$, then $y_2(a_2,r_2)+\cdots+y_n(a_n,r_n)=0$ if and only if $y_i=0$ for all $i$. The probability of such event is $2^{-(n-1)}$.
		
		Otherwise, assume $y_i=a+re_i$ for some $i$. The probability of this case is $1-(\frac23)^{n-1}$. Now,
		\begin{gather}\begin{aligned}
				(m'_2+\cdots+m'_n)_{\overline{\{i\}}}&=(d_y)_{\overline{\{i\}}}+(Ax_y)_{\overline{\{i\}}}+\left(\sum_{j=2}^nr_j(x_y)_je_j\right)_{\overline{\{i\}}}\\&=(d_y)_{\overline{\{i\}}}+A_{\overline{\{i\}},\overline{\{i\}}}(x_y)_{\overline{\{i\}}}+A_{\overline{\{i\}},i}(x_y)_i+\sum_{j\ne i}r_j(x_y)_je_j\\&=A_{\overline{\{i\}},i}+(d_y)_{\overline{\{i\}}}+A_{\overline{\{i\}},\overline{\{i\}}}(x_y)_{\overline{\{i\}}}+\sum_{j\ne i}r_j(x_y)_je_j;
			\end{aligned}\\
			(m'_2+\cdots+m'_n)_i=(d_y)_i+(Ax_y)_i+\left(\sum_{j=2}^nr_j(x_y)_je_j\right)_i=r_ie_i+(d_y)_i+(Ax_y)_i.
		\end{gather}
		We randomly pick $A_{\overline{\{i\}},\overline{\{i\}}}$ under the constraints ``$A$ is symmetric'' and ``$A_{jj}=0$ for all $j$''; and the entries of $r_j$ ($j\ne i$). Then we randomly pick $A_{\overline{\{i\}},i}$. After that, $(m'_2+\cdots+m'_n)_{\overline{\{i\}}}$ is determined, and it is zero with probability $2^{-(n-1)}$. Then, we randomly pick $r_i$, and with probability $\frac12$, $(m'_2+\cdots+m'_n)_i=0$. Therefore, in this case, when $y$ is given,
		\begin{align}\label{eq:pf-lemma-0-random-A-uniform}
			\Pr_{A,r}[y_2(a_2,r_2)+\cdots+y_n(a_n,r_n)=0]=2^{-n}.
		\end{align}
		
		In conclusion,
		\begin{align}\label{eq:pf-lemma-0-expectation}
			\bE_A[X]=\Pr_{y,A,r}[y_2(a_2,r_2)+\cdots+y_n(a_n,r_n)=0]=2^{-(n-1)}+\left(1-\left(\frac23\right)^{n-1}\right)2^{-n}.
		\end{align}
		
		Now we bound $\bE_A[X^2]$. We have
		\begin{align}\begin{split}
				\bE_A[X^2]&=\bE_A[\Pr_{m'}[m'_2+\cdots+m'_n=0]\Pr_{m''}[m''_2+\cdots+m''_n=0]]\\&=\Pr_{y,y',A,r,r'}[y_2(a_2,r_2)+\cdots+y_n(a_n,r_n)=y'_2(a_2,r'_2)+\cdots+y'_n(a_n,r'_n)=0].
		\end{split}\end{align}
		We first pick $y,y'$, and then compute the probability of $m'_2+\cdots+m'_n=m''_2+\cdots+m''_n=0$ under the randomness of $A$ and $r,r'$.
		
		If for all $i$, $y_i$ is not $a+re_i$, then $y_2(a_2,r_2)+\cdots+y_n(a_n,r_n)=0$ if and only if $y_i=0$ for all $i$, and it is independent of $A$. Then, we can use the results when computing $\bE_A[X]$. The case ``for all $i$, $y'_i$ is not $a_i+r_ie_i$'' is similar. These two cases have contribution $(2^{-(n-1)})^2+2\cdot2^{-(n-1)}\cdot\left(1-\left(\frac23\right)^{n-1}\right)2^{-n}$.
		
		Now, we assume that there exists some $i,i'$ such that $y_i=a_i+r_ie_i$, $y'_{i'}=a_{i'}+r_{i'}e_{i'}$. Then, $x_y,x_{y'}\ne0$.
		
		First, we compute the probability of the case with $x_y=x_{y'}$:
		\begin{align}\begin{split}
				&\Pr_{y,y',A,r,r'}[y_2(a_2,r_2)+\cdots+y_n(a_n,r_n)=y'_2(a_2,r'_2)+\cdots+y'_n(a_n,r'_n)=0\land x_y=x_{y'}\ne0]\\=&\Pr_{y,y',A,r,r'}\left[d_y+Ax_y+\sum_{i=2}^nr_i(x_y)_ie_i=d_{y'}+Ax_{y'}+\sum_{i=2}^nr_i'(x_{y'})_ie_i=0\land x_y=x_{y'}\ne0\right]\\=&\Pr_{y,y',A,r,r'}\left[d_y+Ax_y+\sum_{i=2}^nr_i(x_y)_ie_i=0\land x_y=x_{y'}\ne0\land d_y+\sum_{i=2}^nr_i(x_y)_ie_i=d_{y'}+\sum_{i=2}^nr_i'(x_{y'})_ie_i\right].
		\end{split}\end{align}
		Here, $d_y+\sum_{i=2}^nr_i(x_y)_ie_i=d_{y'}+\sum_{i=2}^nr_i'(x_{y'})_ie_i$ is equivalent to that for all $2\le i\le n$, $(d_y)_i+r_i(x_y)_i=(d_{y'})_i+r'_i(x_{y'})_i$. Here, $(d_y)_i$ ($(d_{y'})_i$) and $(x_y)_i$ ($=(x_{y'})_i$) are not both $1$, so it is equivalent to that $d_y=d_{y'}$ and $r_i=r'_i$ for all $i$ with $(x_y)_i=1$. Therefore,  by utilizing \eqref{eq:pf-lemma-0-random-A-uniform},
		\begin{align}\begin{split}
				&\Pr_{y,y',A,r,r'}[y_2(a_2,r_2)+\cdots+y_n(a_n,r_n)=y'_2(a_2,r'_2)+\cdots+y'_n(a_n,r'_n)=0\land x_y=x_{y'}\ne0]
				\\=&\Pr_{y,y',A,r,r'}\left[d_y+Ax_y+\sum_{i=2}^nr_i(x_y)_ie_i=0\land x_y=x_{y'}\ne0\land d_y+\sum_{i=2}^nr_i(x_y)_ie_i=d_{y'}+\sum_{i=2}^nr_i'(x_{y'})_ie_i\right]
				\\=&\bE_{y,A,r}\left[\begin{aligned}
					&1\left[x_y\ne0\land d_y+Ax_y+\sum_{i=2}^nr_i(x_y)_ie_i=0\right]\\&\quad\cdot\Pr_{y',r'}\left[x_y=x_{y'}\land d_y=d_{y'}\land r_i=r'_i\text{ for all }i\text{ with }(x_y)_i=1\right]
				\end{aligned}\right]
				\\=&\bE_{y,A,r}\left[1\left[x_y\ne0\land d_y+Ax_y+\sum_{i=2}^nr_i(x_y)_ie_i=0\right]2^{-\sum_{i=2}^n(x_y)_i}\Pr_{y'}[x_y=x_{y'}\land d_y=d_{y'}]\right]\\=&\bE_y\left[1_{x_y\ne0}\Pr_{A,r}\left[d_y+Ax_y+\sum_{i=2}^nr_i(x_y)_ie_i=0\right]2^{-\sum_{i=2}^n(x_y)_i}\Pr_{y'}[y=y']\right]\\=&\bE_y\left[1_{x_y\ne0}2^{-n}2^{-\sum_{i=2}^n(x_y)_i}\Pr_{y'}[y=y']\right]\\=&2^{-n}\left(\bE_y\left[2^{-\sum_{i=2}^n(x_y)_i}\Pr_{y'}[y=y']\right]-\bE_y\left[1_{x_y=0}2^{-\sum_{i=2}^n(x_y)_i}\Pr_{y'}[y=y']\right]\right)\\=&2^{-n}\left(\left(\frac1{2^2}+\frac1{6^2}+\frac1{3^2}\times\frac12\right)^{n-1}-\left(\frac1{2^2}+\frac1{6^2}\right)^{n-1}\right)=2^{-n}\left(3^{-(n-1)}-\left(\frac5{18}\right)^{n-1}\right).
		\end{split}\end{align}
		
		Then, we compute the probability of the case with $x_y\ne x_{y'}$, and there exists some $i$ such that $(x_y)_i=(x_{y'})_i=1$. The probability of entering this case is
		\begin{align}\begin{split}
				&\Pr_{y,y'}[x_y,x_{y'}\ne0\land x_y\ne x_{y'}\land\exists i,(x_y)_i=(x_{y'})_i=1]\\=&\Pr_{y,y'}[x_y\ne x_{y'}\land\exists i,(x_y)_i=(x_{y'})_i=1]\\=&1-\Pr_{y,y'}[\forall i,((x_y)_i=0\lor (x_{y'})_i=0)]-\Pr_{y,y'}[x_y=x_{y'}]+\Pr_{y,y'}[x_y=x_{y'}=0]\\=&1-\left(1-\frac1{3^2}\right)^{n-1}-\left(\frac1{3^2}+\frac{2^2}{3^2}\right)^{n-1}+\left(\frac{2^2}{3^2}\right)^{n-1}=1-\left(\frac89\right)^{n-1}-\left(\frac59\right)^{n-1}+\left(\frac49\right)^{n-1}.
		\end{split}\end{align}
		Now we assume that $y,y'$ are given and compute
		\begin{align}\begin{split}
				&\Pr_{A,r,r'}[y_2(a_2,r_2)+\cdots+y_n(a_n,r_n)=y'_2(a_2,r'_2)+\cdots+y'_n(a_n,r'_n)=0]\\=&\Pr_{A,r,r'}\left[d_y+Ax_y+\sum_{k=2}^nr_k(x_y)_ke_k=d_{y'}+Ax_{y'}+\sum_{k=2}^nr_k'(x_{y'})_ke_k=0\right].
		\end{split}\end{align}
		Since $x_y\ne x_{y'}$, there exists some $j$ such that $(x_y)_j\ne(x_{y'})_j$. Without loss of generality, we assume that $(x_y)_j=0$, $(x_{y'})_j=1$. Because $(x_y)_i=(x_{y'})_i=1$, $i\ne j$. Then,
		\begin{align}
			&\begin{aligned}\left(d_y+Ax_y+\sum_{k=2}^nr_k(x_y)_ke_k\right)_{\overline{\{i,j\}}}&=A_{\overline{\{i,j\}},\overline{\{i,j\}}}(x_y)_{\overline{\{i,j\}}}+A_{\overline{\{i,j\}},i}(x_y)_i+A_{\overline{\{i,j\}},j}(x_y)_j\\&+(d_y)_{\overline{\{i,j\}}}+\sum_{k\ne i,j}r_k(x_y)_ke_k\\&=A_{\overline{\{i,j\}},i}+A_{\overline{\{i,j\}},\overline{\{i,j\}}}(x_y)_{\overline{\{i,j\}}}+(d_y)_{\overline{\{i,j\}}}+\sum_{k\ne i,j}r_k(x_y)_ke_k;\end{aligned}\label{eq:pf-lemma-0-s1-other}\\
			&\begin{aligned}\left(d_y+Ax_y+\sum_{k=2}^nr_k(x_y)_ke_k\right)_i&=(d_y)_i+A_{i,[n]}(x_y)+r_i(x_y)_i=r_i+(d_y)_i+A_{i,[n]}(x_y);\end{aligned}\label{eq:pf-lemma-0-s1-i}\\
			&\begin{aligned}\left(d_y+Ax_y+\sum_{k=2}^nr_k(x_y)_ke_k\right)_j&=(d_y)_j+A_{j,\overline{\{i,j\}}}(x_y)_{\overline{\{i,j\}}}+A_{j,i}(x_y)_i+A_{j,j}(x_y)_j+r_j(x_y)_j\\&=A_{j,i}+(d_y)_j+A_{j,\overline{\{i,j\}}}(x_y)_{\overline{\{i,j\}}};\end{aligned}\label{eq:pf-lemma-0-s1-j}\\
			&\begin{aligned}\left(d_{y'}+Ax_{y'}+\sum_{k=2}^nr'_k(x_{y'})_ke_k\right)_{\overline{\{i,j\}}}&=A_{\overline{\{i,j\}},\overline{\{j\}}}(x_{y'})_{\overline{\{j\}}}+A_{\overline{\{i,j\}},j}(x_{y'})_j+(d_{y'})_{\overline{\{i,j\}}}+\sum_{k\ne i,j}r'_k(x_{y'})_ke_k\\&=A_{\overline{\{i,j\}},j}+A_{\overline{\{i,j\}},\overline{\{j\}}}(x_{y'})_{\overline{\{j\}}}+(d_{y'})_{\overline{\{i,j\}}}+\sum_{k\ne i,j}r'_k(x_{y'})_ke_k;\end{aligned}\label{eq:pf-lemma-0-s2-other}\\
			&\begin{aligned}\left(d_{y'}+Ax_{y'}+\sum_{k=2}^nr'_k(x_{y'})_ke_k\right)_i&=(d_{y'})_i+A_{i,[n]}x_y+r'_i(x_{y'})_i=r'_i+(d_{y'})_i+A_{i,[n]}x_y;\end{aligned}\label{eq:pf-lemma-0-s2-i}\\
			&\begin{aligned}\left(d_{y'}+Ax_{y'}+\sum_{k=2}^nr'_k(x_{y'})_ke_k\right)_j&=(d_{y'})_j+A_{j,[n]}x_y+r'_j(x_{y'})_i=r'_j+(d_{y'})_j+A_{j,[n]}x_y.\end{aligned}\label{eq:pf-lemma-0-s2-j}
		\end{align}
		We consider the following procedure to sample $A$, $r$, $r'$: uniformly randomly sample $r_k,r'_k$ for all $k\ne i,j$, and uniformly randomly sample $A_{\overline{\{i,j\}},\overline{\{i,j\}}}$ under the constraints that ``$A_{\overline{\{i,j\}},\overline{\{i,j\}}}$ is symmetric and $A_{k,k}=0$ for all $k\ne i,j$''; set $A_{i,i}=A_{j,j}=0$; uniformly randomly sample $A_{\overline{\{i,j\}},i}$ and let $A_{i,\overline{\{i,j\}}}$ be its transpose, and now \eqref{eq:pf-lemma-0-s1-other} is determined and it is $0$ with probability $2^{-(n-2)}$; uniformly randomly sample $A_{\overline{\{i,j\}},j}$ and let $A_{j,\overline{\{i,j\}}}$ be its transpose, and now \eqref{eq:pf-lemma-0-s2-other} is determined and it is $0$ with probability $2^{-(n-2)}$; uniformly randomly sample $A_{i,j}=A_{j,i}$, and now \eqref{eq:pf-lemma-0-s1-j} is determined and it is $0$ with probability $2^{-1}$; uniformly randomly sample $r_i,r_j,r_i',r_j'$, and now \eqref{eq:pf-lemma-0-s1-i}, \eqref{eq:pf-lemma-0-s2-i}, \eqref{eq:pf-lemma-0-s2-j} are determined and they are all $0$ with probability $2^{-3}$. One can verify that this procedure produces the correct distribution of $A,r,r'$. Therefore, for fixed $y,y'$ with $x_y\ne x_{y'},x_y,x_{y'}\ne0$ and $(x_y)_i=(x_{y'})_i=1$ for some $i$, the probability of $m'_2+\cdots+m'_n=m''_2+\cdots+m''_n=0$ is $2^{-2n}$ under the randomness of $A$ and $r,r'$.
		
		Finally, we compute the probability of the case with $x_y\ne x_{y'}$, and there does not exist $i$ such that $(x_y)_i=(x_{y'})_i=1$. The probability of entering this case is
		\begin{align}\begin{split}
				&\Pr_{y,y'}[x_y,x_{y'}\ne0\land x_y\ne x_{y'}\land\forall i,((x_y)_i\ne(x_{y'})_i\lor (x_y)_i=(x_{y'})_i=0)]\\=&\Pr_{y,y'}[x_y,x_{y'}\ne0\land\forall i,((x_y)_i\ne(x_{y'})_i\lor (x_y)_i=(x_{y'})_i=0)]\\=&\Pr_{y,y'}[\forall i,((x_y)_i\ne(x_{y'})_i\lor (x_y)_i=(x_{y'})_i=0)]-\Pr_{y,y'}[x_y=0]-\Pr_{y,y'}[x_{y'}=0]+\Pr_{y,y'}[x_y=x_{y'}=0]\\=&\left(1-\frac1{3^2}\right)^{n-1}-2\left(\frac23\right)^{n-1}+\left(\frac{2^2}{3^2}\right)^{n-1}=\left(\frac89\right)^{n-1}-2\left(\frac23\right)^{n-1}+\left(\frac49\right)^{n-1}.
		\end{split}\end{align}
		In this case, $x_y\ne x_{y'}$ and $x_y,x_{y'}\ne0$, so $x_y$ and $x_{y'}$ are linearly independent. We pick $r,r'$ first then $A$, and utilize Lemma~\ref{lem:rand-symm-matrix-vec}, then we get
		\begin{align}\begin{split}
				&\Pr_{A,r,r'}[y_2(a_2,r_2)+\cdots+y_n(a_n,r_n)=y'_2(a_2,r'_2)+\cdots+y'_n(a_n,r'_n)=0]\\=&\Pr_{A,r,r'}\left[d_y+Ax_y+\sum_{i=2}^nr_i(x_y)_ie_i=d_{y'}+Ax_{y'}+\sum_{i=2}^nr_i'(x_{y'})_ie_i=0\right]\\=&\bE_{r,r'}\left[\Pr_A\left[Ax_y=-d_y-\sum_{i=2}^nr_i(x_y)_ie_i\land Ax_{y'}=-d_{y'}-\sum_{i=2}^nr_i'(x_{y'})_ie_i\right]\right]\le2^{-2(n-2)}.
		\end{split}\end{align}
		
		In summary,
		\begin{align}\begin{split}
				\bE[X^2]&\le(2^{-(n-1)})^2+2\cdot2^{-(n-1)}\cdot\left(1-\left(\frac23\right)^{n-1}\right)2^{-n}\\&+2^{-n}\left(3^{-(n-1)}-\left(\frac5{18}\right)^{n-1}\right)\\&+2^{-2n}\left(1-\left(\frac89\right)^{n-1}-\left(\frac59\right)^{n-1}+\left(\frac49\right)^{n-1}\right)\\&+2^{-2(n-2)}\left(\left(\frac89\right)^{n-1}-2\left(\frac23\right)^{n-1}+\left(\frac49\right)^{n-1}\right).
		\end{split}\end{align}
		Together with \eqref{eq:pf-lemma-0-expectation}, we have
		\begin{align}
			\Var[X]\le2^{-2n}\exp(-\Omega(n)).
		\end{align}
		
		Therefore, by Chebyshev's inequality,
		\begin{align}\begin{split}
				&\Pr_A\left[2^{n-1}\Pr_{m'}[m'_2+\cdots+m'_n=0]>1+\frac12+c_1\right]\\=&\Pr_A\left[2^{n-1}X>1+\frac12+c_1\right]\le\Pr_A\left[2^{n-1}X>2^{n-1}\bE[X]+c_1\right]\le\frac{2^{2(n-1)}}{c_1^2}\Var[X]=\exp(-\Omega(n)).
		\end{split}\end{align}
		Hence, $2^{n-1}\Pr_{m'}[m'_2+\cdots+m'_n=0]\le1+\frac12+c_1$ with probability at least $1-\exp(-{\Omega(n)})$ (with respect to the randomness of $A$).
	\end{proof}
	
	\begin{proof}[Proof of Lemma~\ref{lem:pf-const-gap-le-one-third}]
		Let $T$ denote the event ``for all non-zero $x\in\bF_2^n$, $\Pr_{f'}[x\in V_{f'}]\le\frac13+c_2$'', and we need to prove that $T$ happens with probability at least $1-\exp(-\Omega(n))$. Here, $f'$ follows $\Ber(1/3)^{n-1}$.
		
		Assume that $T$ does not happen, that is, there exists a non-zero $x\in\bF_2^n$ such that $\Pr_{f'}[x\in V_{f'}]>\frac13+c_2$. We consider the following procedure: for each $2\le i\le n$, independently uniformly sample $p_i$ from $\{0,1,2\}$. For $k=0,1,2$, let $(f_k)_i=1_{p_i=k}$; then $f_k$ follows $\Ber(1/3)^{n-1}$. By assumption, $\Pr[x\in V_{f_k}]>\frac13+c_2$, and hence
		\begin{align}\begin{split}
				1+3c_2&<\Pr[x\in V_{f_0}]+\Pr[x\in V_{f_1}]+\Pr[x\in V_{f_2}]=\bE[1[x\in V_{f_0}]+1[x\in V_{f_1}]+1[x\in V_{f_2}]]\\&\le\bE\left[1+\binom{1[x\in V_{f_0}]+1[x\in V_{f_1}]+1[x\in V_{f_2}]}{2}\right]\\&=1+(\Pr[x\in V_{f_0}\cap V_{f_1}]+\Pr[x\in V_{f_1}\cap V_{f_2}]+\Pr[x\in V_{f_2}\cap V_{f_0}]).
		\end{split}\end{align}
		By symmetry, $\Pr[x\in V_{f_0}\cap V_{f_1}]=\Pr[x\in V_{f_1}\cap V_{f_2}]=\Pr[x\in V_{f_2}\cap V_{f_0}]$. Therefore,
		\begin{align}
			\Pr[x\in V_{f_0}\cap V_{f_1}]>c_2. 
		\end{align}
		
		We independently sample another $q$ following the same distribution as $p$, and let $(g_k)_i=1_{q_i=k}$ for $k=0,1,2$. Then,
		\begin{align}\begin{split}
				c_2^2&<\Pr_{p,q}[x\in V_{f_0}\cap V_{f_1}\land x\in V_{g_0}\cap V_{g_1}]=\Pr_{p,q}[x\in V_{f_0}\cap V_{f_1}\cap V_{g_0}\cap V_{g_1}]\\&\le\Pr_{p,q}[V_{f_0}\cap V_{f_1}\cap V_{g_0}\cap V_{g_1}\ne\{0\}].
		\end{split}\end{align}
		Let $X=\Pr_{p,q}[V_{f_0}\cap V_{f_1}\cap V_{g_0}\cap V_{g_1}\ne\{0\}]$. When event $T$ does not happen, $X>c_2^2$.
		
		Now we bound the expectation of $X$ under the randomness of $A$, that is,
		\begin{align}
			\bE_A[X]=\Pr_{p,q,A}[V_{f_0}\cap V_{f_1}\cap V_{g_0}\cap V_{g_1}\ne\{0\}].
		\end{align}
		We first pick $p,q$ and then $A$. Let $I_k=\{i\mid p_i=k\}$, $J_k=\{i\mid q_i=k\}$ for $k=0,1,2$. $I_0\cap I_1=\varnothing$, $J_0\cap J_1=\varnothing$. $i\in I_k\cap J_{k'}$ if and only if $p_i=k,q_i=k'$, which happens with probability $\frac19$; ``$p_i=k,q_i=k'$'' and ``$p_j=k,q_j=k'$'' are independent for $i\ne j$. Therefore, by Chernoff's bound and union bound, with probability $1-\exp(-\Omega(n))$, $|I_k\cap J_{k'}|<(\frac19+0.01)n$ for all $k,k'\in\{0,1,2\}$.
		
		Now we fix $p,q$ with assumption $|I_k\cap J_{k'}|<(\frac19+0.01)n$ for all $k,k'\in\{0,1,2\}$, which implies $0<|I_k|,|J_k|<(\frac13+0.03)n$ for all $k\in\{0,1,2\}$; and we bound $\Pr_A[V_{f_0}\cap V_{f_1}\cap V_{g_0}\cap V_{g_1}\ne\{0\}]$. Here,
		\begin{align}\begin{split}
				V_{f_k}&=\spn\{e_i,a_i\mid (f_k)_i=1\}=\spn\{e_i,a_i\mid p_i=k\}=\spn\{e_i,a_i\mid i\in I_k\}\\&=\left\{\left.\sum_{i\in I_k}(d_ie_i+x_ia_i)\right\vert x,d\in\bF_2^{I_k}\right\}=\left\{Ax+d\left\vert x,d\in\bF_2^{I_k}\right.\right\}
		\end{split}\end{align}
		and similar for $V_{g_k}$. Then,
		\begin{align}\begin{split}
				&\Pr_A[V_{f_0}\cap V_{f_1}\cap V_{g_0}\cap V_{g_1}\ne\{0\}]\\\le&\bE_A\left[\sum_{x_1,d_1\in\bF_2^{I_0}}\sum_{x_2,d_2\in\bF_2^{I_1}}\sum_{x_3,d_3\in\bF_2^{J_0}}\sum_{x_4,d_4\in\bF_2^{J_1}}1[Ax_1+d_1=Ax_2+d_2=Ax_3+d_3=Ax_4+d_4\ne0]\right]\\=&\sum_{x_1,d_1\in\bF_2^{I_0}}\sum_{x_2,d_2\in\bF_2^{I_1}}\sum_{x_3,d_3\in\bF_2^{J_0}}\sum_{x_4,d_4\in\bF_2^{J_1}}\Pr_A[Ax_1+d_1=Ax_2+d_2=Ax_3+d_3=Ax_4+d_4\ne0].
		\end{split}\end{align}
		We bound $\Pr_A[Ax_1+d_1=Ax_2+d_2=Ax_3+d_3=Ax_4+d_4\ne0]$ for fixed $x_1,d_1\in\bF_2^{I_0},x_2,d_2\in\bF_2^{I_1},x_3,d_3\in\bF_2^{J_0},x_4,d_4\in\bF_2^{J_1}$; denote it as $p_{x,d}$. By Lemma~\ref{lem:rand-symm-matrix-vec},
		\begin{align}\begin{split}
				p_{x,d}&=\Pr_A[Ax_1+d_1=Ax_2+d_2=Ax_3+d_3=Ax_4+d_4\ne0]\\&\le\Pr_A[A(x_1+x_2)=d_1+d_2\land A(x_3+x_4)=d_3+d_4\land A(x_1+x_3)=d_1+d_3]\\&\le 2^{-k_{x,d}(n-k_{x,d})}=\cO(2^{-k_{x,d}n})
		\end{split}\end{align}
		(in $\bF_2$, plus and minus are the same), where
		\begin{align}
			k_{x,d}=\dim\spn\{x_1+x_2,x_3+x_4,x_1+x_3\}.
		\end{align}
		
		We only focus on those $x,d$ such that $p_{x,d}\ne0$, that is, there exists some $A$ such that $Ax_1+d_1=Ax_2+d_2=Ax_3+d_3=Ax_4+d_4\ne0$.
		
		When $x_1=x_2=0$, $Ax_1+d_1=Ax_2+d_2$ implies $d_1=d_2$; but $d_1\in\bF_2^{I_0}$, $d_2\in\bF_2^{I_1}$ and $I_0\cap I_1=\varnothing$. Hence, $d_1=d_2=0$, and $Ax_1+d_1=0$. Therefore, in this case $p_{x,d}=0$. We ignore this case, that is, we have $x_1\ne0$ or $x_2\ne0$. Similarly, $x_3\ne0$ or $x_4\ne0$.
		
		Since $x_1\in\bF_2^{I_0}$, $x_2\in\bF_2^{I_1}$ and $I_0\cap I_1=\varnothing$, $x_1\ne x_2$ (otherwise $x_1=x_2=0$); similarly, $x_3\ne x_4$.
		
		Now we discuss $k_{x,d}$ and count the number of $x,d$ such that $p_{x,d}\ne0$, that is, there exists some $A$ such that $Ax_1+d_1=Ax_2+d_2=Ax_3+d_3=Ax_4+d_4\ne0$. Here, $x_1+x_2=x_1-x_2\ne0$, $x_3+x_4\ne0$, so $k_{x,d}\in\{1,2,3\}$. Assume that $p_{x,d}\ne0$.
		\begin{itemize}
			\item $k_{x,d}=3$. The number of possible values of $x_1$ is $|\bF_2^{I_0}|=2^{|I_0|}<2^{(1/3+0.03)n}$. Similarly, $x_2$, $x_3$, $x_4$, $d_1$, $d_2$, $d_3$, $d_4$ each has less than $2^{(1/3+0.03)n}$ possible values. Hence,
			\begin{align}
				\sum_{x,d:k_{x,d}=3}p_{x,d}\le\sum_{x,d:k_{x,d}=3,p_{x,d}\ne0}\cO\left(2^{-3n}\right)\le2^{8(1/3+0.03)n}\cO\left(2^{-3n}\right)=\exp(-\Omega(n)).
			\end{align}
			\item $k_{x,d}=1$. Then, $x_1+x_2=x_3+x_4$; $x_1+x_3=x_1+x_2$ or $x_1+x_3=0$. The first case implies $x_2=x_3$ and then $x_1=x_4$, and the second case implies $x_1=x_3$ and then $x_2=x_4$.
			
			We assume $x_1=x_4$, $x_2=x_3$. Since $Ax_1+d_1=Ax_4+d_4$, $d_1=d_4$. Similarly $d_2=d_3$. Because $x_1\in\bF_2^{I_0}$, $x_4\in\bF_2^{J_1}$, $x_1=x_4\in\bF_2^{I_0}\cap\bF_2^{J_1}=\bF_2^{I_0\cap J_1}$, which has $2^{|I_0\cap J_1|}<2^{(1/9+0.01)n}$ possible values. Similarly, $x_2=x_3$, $d_1=d_4$, $d_2=d_3$ have less that $2^{(1/9+0.01)n}$ possible values. Hence, for this case, the number of possible $x,d$ is less than $2^{(4/9+0.04)n}$.
			
			The case for $x_1=x_3$, $x_2=x_4$ is similar. Therefore,
			\begin{align}
				\sum_{x,d:k_{x,d}=1}p_{x,d}\le\sum_{x,d:k_{x,d}=1,p_{x,d}\ne0}\cO\left(2^{-n}\right)<2\times2^{(4/9+0.04)n}\cO\left(2^{-n}\right)=\exp(-\Omega(n)).
			\end{align}
			\item $k_{x,d}=2$ with $x_1+x_2=x_3+x_4$. Since $x_1+x_2\in\bF_2^{I_0\cup I_1}=\bF_2^{I_0}\oplus\bF_2^{I_1}$, $x_3+x_4\in\bF_2^{J_0\cup J_1}=\bF_2^{J_0}\oplus\bF_2^{J_1}$, $x_1,x_2,x_3,x_4$ are determined by $x_1+x_2=x_3+x_4\in\bF_2^{I_0\cup I_1}\cap\bF_2^{J_0\cup J_1}=\bF_2^{(I_0\cap J_0)\cup(I_0\cap J_1)\cup(I_1\cap J_0)\cup(I_1\cap J_1)}$, which has $2^{|I_0\cap J_0|+|I_0\cap J_1|+|I_1\cap J_0|+|I_1\cap J_1|}<2^{(4/9+0.04)n}$ possible values.
			
			Since $Ax_1+d_1=Ax_2+d_2=Ax_3+d_3=Ax_4+d_4$, $A(x_1+x_2)+(d_1+d_2)=A(x_3+x_4)+(d_3+d_4)$ and hence $d_1+d_2=d_3+d_4$. By the same reason, the number of possible values of $(d_1,d_2,d_3,d_4)$ is less than $2^{(4/9+0.04)n}$. Therefore,
			\begin{align}
				\sum_{\substack{x,d:k_{x,d}=2\\x_1+x_2=x_3+x_4}}p_{x,d}\le\sum_{\substack{x,d:k_{x,d}=1,p_{x,d}\ne0\\x_1+x_2=x_3+x_4}}\cO\left(2^{-2n}\right)<2^{2(4/9+0.04)n}\cO\left(2^{-2n}\right)=\exp(-\Omega(n)).
			\end{align}
			\item $k_{x,d}=2$ with $x_1+x_2\ne x_3+x_4$. Since $x_1+x_2,x_3+x_4\ne0$, they are linearly independent, and hence $x_1+x_3\in\spn\{x_1+x_2,x_3+x_4\}=\{0,x_1+x_2,x_3+x_4,x_1+x_2+x_3+x_4\}$. It implies that at least one of $x_1+x_3$, $x_2+x_3$, $x_1+x_4$, $x_2+x_4$ is $0$.
			
			If $x_1+x_3=0$, then $x_1=x_3\in\bF_2^{I_0}\cap\bF_2^{J_0}=\bF_2^{I_0\cap J_0}$, which has $2^{|I_0\cap J_1|}<2^{(1/9+0.01)n}$ possible values. Together with $Ax_1+d_1=Ax_3+d_3$, $d_1=d_3$ and similarly $d_1=d_3$ has no more than $2^{(1/9+0.01)n}$ possible values. $x_2,x_4,d_2,d_4$ each has less than $2^{(1/3+0.03)n}$ possible values. In total, in this case there are no more than $2^{2(1/9+0.01)n+4(1/3+0.03)n}=2^{(14/9+0.14)n}$ possible $(x,d)$'s.
			
			The cases for $x_2+x_3=0$, $x_1+x_4=0$, $x_2+x_4=0$ are similar. Therefore,
			\begin{align}
				\sum_{\substack{x,d:k_{x,d}=2\\x_1+x_2\ne x_3+x_4}}p_{x,d}\le\sum_{\substack{x,d:k_{x,d}=1,p_{x,d}\ne0\\x_1+x_2\ne x_3+x_4}}\cO\left(2^{-2n}\right)<4\times2^{(14/9+0.14)n}\cO\left(2^{-2n}\right)=\exp(-\Omega(n)).
			\end{align}
		\end{itemize}
		
		In conclusion,
		\begin{align}
			\Pr_A[V_{f_0}\cap V_{f_1}\cap V_{g_0}\cap V_{g_1}\ne\{0\}]=\sum_{x,d}p_{x,d}\le\exp(-\Omega(n))
		\end{align}
		whenever $|I_k\cap J_{k'}|<(\frac19+0.01)n$ for all $k,k'\in\{0,1,2\}$ (which happens with probability $1-\exp(-\Omega(n))$ under the randomness of $p,q$). After including the randomness of $p,q$ and apply union bound, we have
		\begin{align}
			\bE_A[X]=\Pr_{p,q,A}[V_{f_0}\cap V_{f_1}\cap V_{g_0}\cap V_{g_1}\ne\{0\}]\le\exp(-\Omega(n)).
		\end{align}
		By Markov's inequality,
		\begin{align}
			\Pr_A[T\text{ does not happen}]\le \Pr\left[X>c_2^2\right]\le c_2^{-2}\bE_A[X]\le\exp(-\Omega(n)),
		\end{align}
		which finalizes the proof.
	\end{proof}
	
	\revise{
		\subsection{Non-zero gap for almost all states}
		In this subsection, we prove that the spectral gap $\Delta(O_{\psi})$ is strictly positive almost surely.
		\begin{theorem}\label{thm:spec_gap}
			Let $\ket{\psi}$ be drawn Haar-randomly from the $n$-qubit Hilbert space $\mathcal{H}_{d}$ with $d = 2^n$ and $n_A \ge 1$.  Then, with probability~$1$, $   \Delta(O_{\psi}) > 0$.
		\end{theorem}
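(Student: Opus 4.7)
The plan is to show first that $\ket\psi$ is always an eigenvector of $O_\psi$ with eigenvalue exactly $1$, reduce $\Delta(O_\psi) > 0$ to simplicity of this eigenvalue, and finally argue via real-algebraic geometry that simplicity holds off a Haar-null set.

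The first step is based on the identity
\begin{equation}
    \langle\phi|O_\psi|\phi\rangle \;=\; \sum_{\bm x\in\Bxrange}\tilde q_\phi(\bm x)\,|\langle\phi_{\bm x}|\psi_{\bm x}\rangle|^2,
\end{equation}
valid for any unit $\ket\phi$, where $\ket{\phi_{\bm x}}$ and $\tilde q_\phi(\bm x)$ are the analogues for $\phi$ of the projected state $\ket{\psi_{\bm x}}$ and its probability $\tilde p_\psi(\bm x)$. Since $|\langle\phi_{\bm x}|\psi_{\bm x}\rangle|^2\le 1$ and $\sum_{\bm x}\tilde q_\phi(\bm x)=1$, this forces $O_\psi \preceq I$. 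Setting $\phi=\psi$ saturates the bound, giving $O_\psi\ket\psi=\ket\psi$ and $\|O_\psi\|=1$. Thus $\Delta(O_\psi)>0$ is equivalent to the eigenvalue $1$ of $O_\psi$ being simple, i.e., $\mathrm{rank}(I-O_\psi)=d-1$. A useful byproduct is that saturation $\langle\phi|O_\psi|\phi\rangle=1$ also requires $\ket{\phi_{\bm x}}\propto\ket{\psi_{\bm x}}$ for every $\bm x$ with $\tilde q_\phi(\bm x)>0$, which will make the set of ``bad'' directions explicit.

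Next, I would algebraize this rank condition. Although $O_\psi$ is rational in $\psi$, multiplying by the polynomial normalizer $P(\psi) \coloneqq \prod_{\bm x}\langle\psi|\Pi_{\bm x}|\psi\rangle$ with $\Pi_{\bm x}\coloneqq I_A\otimes\ketbra{\bm x}_B$ produces the matrix $M_\psi \coloneqq P(\psi)(I-O_\psi)$, whose entries are polynomials in the real and imaginary parts of the amplitudes of $\psi$. Simplicity of the eigenvalue $1$ amounts to the non-vanishing of at least one $(d-1)$-minor of $M_\psi$, each a real polynomial, so the exceptional set $B \coloneqq \{\psi\in S^{2d-1} : \Delta(O_\psi)=0\}$ is a closed real-algebraic subvariety of the unit sphere. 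Any proper real-algebraic subvariety of a sphere has Haar measure zero; therefore it suffices to establish $B \subsetneq S^{2d-1}$.

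Finally, I would exhibit a single witness $\psi_0$ with $\Delta(O_{\psi_0})>0$. The base case $n_A=n_B=1$ is immediate: direct calculation gives $O_{\ket{\Phi^+}}=(I+\mathrm{SWAP})/3$ with spectrum $\{1,1/3,1/3,1/3\}$, so $\ket{\Phi^+}$ works. The main obstacle is extending the witness to general $n_A,n_B$: natural tensor ansätze of the form $\ket{\Phi^+}_{A',B'}\otimes\ket\chi_{\text{rest}}$ factorize to $O_{\psi_0}=O_{\ket{\Phi^+}}\otimes I$, which inherits an unwanted multi-fold degeneracy of the eigenvalue $1$ from the trivial action on the remaining qubits. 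My plan to overcome this is to couple the two tensor factors by a small entangling unitary $V_\epsilon$ (with $V_0=I$) and verify via a first-order perturbative expansion of the second eigenvalue of $O_{V_\epsilon\ket{\psi_0}}$---equivalently, of a judiciously chosen $(d-1)$-minor of $M_\psi$---that generic entangling perturbations lift all spurious degeneracies, yielding $\Delta(O_{\psi_\epsilon})>0$ for small $\epsilon$. Combined with the subvariety argument, this gives $\Delta(O_\psi)>0$ with Haar probability one.
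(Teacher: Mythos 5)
Your high-level strategy is sound and genuinely different from the paper's. The observations that $O_\psi \preceq I$ with $\ket{\psi}$ an exact eigenvalue-$1$ eigenvector, that $\Delta(O_\psi)>0$ is equivalent to simplicity of that eigenvalue, and that the exceptional set is contained in a proper real-algebraic subvariety of the unit sphere (hence Haar-null as soon as a single witness exists) are all correct; this would reduce a ``probability one'' statement to a single explicit example. The paper instead argues by induction on $n$: a single-measured-qubit lemma shows that entanglement across $A\mid B$ forces the top eigenvalue to be simple (via an uncertainty-principle argument over the three Pauli bases of the one measured qubit), and unitary invariance of the Haar measure propagates ``almost surely'' through the six projected states when one more qubit is measured. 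Your route would be shorter and more conceptual if it closed.

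It does not close. The entire theorem now rests on exhibiting one witness for each $(n_A,n_B)$, and you only have one for $n_A=n_B=1$; for the case the paper actually uses ($n_A=1$, $n_B=n-1$) the witness is missing, and the repair you propose cannot work as stated. Since $O_{\psi_\epsilon}\preceq I$ for every $\epsilon$ and each branch $\lambda_i(\epsilon)$ of the degenerate eigenvalue satisfies $\lambda_i(0)=1$ and $\lambda_i(\epsilon)\le 1$, the point $\epsilon=0$ is a maximum of every branch, so every first derivative vanishes: first-order degenerate perturbation theory yields identically zero splitting, and the $2^{n_B-1}$-fold degeneracy of $O_{\psi_0}=O_{\ket{\Phi^+}}\otimes I$ can be lifted only at second order or higher. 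Showing that the second-order corrections are strictly negative on all spurious branches is a nontrivial computation you have not done and whose outcome is not obvious. (A smaller slip: $O_{\ket{\Phi^+}}$ is not $(I+\mathrm{SWAP})/3$ --- that operator has spectrum $\{2/3,2/3,2/3,0\}$ and would violate $\tr(O_\psi\psi)=1$; the correct expression is $\bigl(I+2\ketbra{\Phi^+}{\Phi^+}\bigr)/3$, which does have your claimed spectrum $\{1,1/3,1/3,1/3\}$, so the two-qubit witness survives.) One way to finish within your framework is to build the witness recursively --- prove that $\tr(O_\psi\phi)=1$ forces $\phi=\psi$ whenever $\psi$ is entangled across the last measured qubit and all six of its single-qubit projected states recursively have a simple top eigenvalue --- but at that point you have reproduced the paper's induction and the algebraic-geometry layer adds nothing.
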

	}
	
	Firstly, we show that a Haar-random pure state is almost surely entangled.
	\begin{lemma}\label{lem:Haar_entangled}
		Let $\ket{\psi}$ be drawn Haar-randomly from the bipartite Hilbert space $\mathcal{H}_{A}\otimes\mathcal{H}_{B}$ with local dimensions $d_A,d_B\ge 2$.  Then, with probability~$1$, $\ket{\psi}$ is entangled across the cut $A\mid B$.
	\end{lemma}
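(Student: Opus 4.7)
The plan is to identify the set of product (separable) pure states on $\mathcal{H}_{A}\otimes\mathcal{H}_{B}$ with a proper algebraic subvariety of the ambient projective space, and then invoke the fact that such subvarieties have measure zero under any unitarily invariant measure.

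Concretely, I would first expand $\ket{\psi}=\sum_{i,j}\psi_{ij}\ket{i}_{A}\ket{j}_{B}$ and form the $d_{A}\times d_{B}$ coefficient matrix $M(\psi)$ with entries $M_{ij}=\psi_{ij}$. A standard observation is that $\ket{\psi}$ is a product state if and only if $M(\psi)$ has Schmidt rank one, i.e.\ $\operatorname{rank}M(\psi)=1$, which is equivalent to the simultaneous vanishing of all $2\times 2$ minors of $M(\psi)$. Let $\mathcal{P}\subset\mathbb{C}^{d_{A}d_{B}}$ denote the resulting algebraic variety (the affine cone over the Segre variety $\mathbb{CP}^{d_{A}-1}\times\mathbb{CP}^{d_{B}-1}\hookrightarrow\mathbb{CP}^{d_{A}d_{B}-1}$). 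Because $d_{A},d_{B}\ge 2$, one can exhibit an explicit rank-$2$ matrix (e.g.\ $M_{11}=M_{22}=1$, all other entries zero), so $\mathcal{P}$ is a proper Zariski-closed subset of $\mathbb{C}^{d_{A}d_{B}}$.

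Next, I would argue that $\mathcal{P}$ has Lebesgue measure zero in $\mathbb{C}^{d_{A}d_{B}}\simeq \mathbb{R}^{2 d_{A}d_{B}}$. This follows from the elementary fact that a proper algebraic subvariety is a finite union of smooth manifolds of strictly lower real dimension, hence is a Lebesgue null set. The Haar measure on pure states arises as the uniform measure on the unit sphere $S(\mathcal{H}_{A}\otimes\mathcal{H}_{B})$, and its intersection with $\mathcal{P}$ inherits measure zero because the surface measure on the sphere is absolutely continuous with respect to the restriction of Lebesgue measure after choosing any smooth chart (or, equivalently, because $\mathcal{P}$ is a cone, so $\mathcal{P}\cap S$ is a real-algebraic subset of strictly lower dimension in $S$).

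Putting these pieces together gives $\Pr_{\psi\sim\mathrm{Haar}}[\ket{\psi}\in\mathcal{P}]=0$, so with probability $1$ the sampled state has Schmidt rank $\ge 2$ and is therefore entangled across $A\mid B$. The only mildly delicate step is the measure-theoretic conclusion that a proper complex-algebraic subvariety is null on the sphere; I expect this to be the main obstacle to writing cleanly, but it is standard and can be dispatched by a direct dimension-count argument (the Segre variety has complex dimension $d_{A}+d_{B}-2<d_{A}d_{B}-1$ whenever $d_{A},d_{B}\ge 2$), avoiding any heavier machinery.
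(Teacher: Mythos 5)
Your proposal is correct and follows essentially the same route as the paper: both arguments reduce to the observation that the product states form the Segre variety $\mathbb{CP}^{d_A-1}\times\mathbb{CP}^{d_B-1}$ inside $\mathbb{CP}^{d_Ad_B-1}$, whose dimension $d_A+d_B-2$ is strictly smaller than $d_Ad_B-1$ when $d_A,d_B\ge 2$, hence it is a Haar-null set. The extra framing via vanishing $2\times 2$ minors is a fine (and slightly more explicit) way to see that this set is a proper subvariety, but the substance of the argument is identical to the paper's dimension count.
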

	
	\begin{proof}
		The set of all pure states in $\mathcal{H}_{AB}$, modulo a global phase, forms the complex projective space $\mathrm{CP}^{d-1}$ with real dimension $2d-2$, where $d=d_Ad_B$.  
		In contrast, the set of product (separable) states forms the Cartesian product $\mathrm{CP}^{d_A-1}\times\mathrm{CP}^{d_B-1}$, whose real dimension is $(2d_A-2)+(2d_B-2)$.  
		Because $d_A,d_B\ge 2$, we have
		\begin{equation}
			(2d_A-2)+(2d_B-2)<2d_Ad_B-2,
		\end{equation}
		so the manifold of separable states has strictly lower dimension than the manifold of all pure states.  
		A lower-dimensional submanifold has Haar measure zero within the larger manifold, implying that the probability of sampling a separable state under the Haar measure is zero.  Hence, a Haar-random pure state is almost surely entangled.
	\end{proof}
	
	The notation $O_{\psi}$ suppresses the dependence on the size $n_A$ of subsystem $A$. Here, we write $O_{\psi,n_A}$ to make this dependence explicit. We now show that when $B$ consists of a single qubit (i.e., $n_A=n-1$), the observable $O_{\psi,n-1}$ has a strictly positive spectral gap almost surely.
	
	\begin{lemma}\label{lem:Haar_spectral_singleB}
		Let $\ket{\psi}$ be drawn Haar-randomly from the $n$-qubit Hilbert space $\mathcal H_{AB}$ with $n_B=1$. Then
		\begin{equation}
			\Delta\bigl(O_{\psi,n-1}\bigr)>0
		\end{equation}
		with probability~1.
	\end{lemma}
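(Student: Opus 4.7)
The plan is to characterize the full $1$-eigenspace of $O_{\psi,n-1}$ and show that, almost surely over the Haar measure, this eigenspace is exactly $\mathrm{span}(\ket{\psi})$, so that no orthogonal state attains $\tr(O_{\psi,n-1}\phi)=1$.

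First I would observe that $O_{\psi,n-1} \preceq I$. Using the POVM identity $3^{-n_B}\sum_{\bm x}\ketbra{\bm x}=I_B$, we can rewrite
\begin{equation}
I - O_{\psi,n-1} \;=\; 3^{-n_B}\sum_{\bm x}\bigl(I_A - \ketbra{\psi_{\bm x}}_A\bigr)\otimes\ketbra{\bm x}_B \;\succeq\; 0,
\end{equation}
so every eigenvalue of $O_{\psi,n-1}$ is at most $1$. Since $\tr(O_{\psi,n-1}\psi)=1$ (by the computation already used to justify $O_\psi$ as a fidelity observable), $\ket{\psi}$ lies in the $1$-eigenspace. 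The gap vanishes if and only if some $\ket{\phi}\perp\ket{\psi}$ is also a $1$-eigenvector. Because $I-O_{\psi,n-1}$ is a sum of positive semidefinite terms, the kernel condition $(I-O_{\psi,n-1})\ket{\phi}=0$ is equivalent to requiring that, for every outcome $\bm x$, the unnormalized projected vector $(I_A\otimes\bra{\bm x}_B)\ket{\phi}$ is parallel to $\ket{\psi_{\bm x}}_A$.

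Next, specializing to $n_B=1$, I would parametrize $\ket{\psi}=\ket{\alpha_0}_A\otimes\ket{0}_B+\ket{\alpha_1}_A\otimes\ket{1}_B$ and $\ket{\phi}=\ket{\phi_0}_A\otimes\ket{0}_B+\ket{\phi_1}_A\otimes\ket{1}_B$ with unnormalized $A$-components. The $Z$-outcome constraints force $\ket{\phi_0}=c_0\ket{\alpha_0}$ and $\ket{\phi_1}=c_1\ket{\alpha_1}$ for some scalars $c_0,c_1$. Substituting into the $X$-outcome constraint gives
\begin{equation}
c_0\ket{\alpha_0}+c_1\ket{\alpha_1} \;\propto\; \ket{\alpha_0}+\ket{\alpha_1},
\end{equation}
and, assuming $\ket{\alpha_0},\ket{\alpha_1}$ are linearly independent, this forces $c_0=c_1$. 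Hence $\ket{\phi}\propto\ket{\psi}$, and the $1$-eigenspace is one-dimensional.

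Finally, I would link linear independence of $\{\ket{\alpha_0},\ket{\alpha_1}\}$ to entanglement across $A\mid B$. Writing $\ket{\psi}$ in its Schmidt form (at most two terms since $n_B=1$) and expanding in the computational basis of $B$, the Gram matrix of $(\ket{\alpha_0},\ket{\alpha_1})$ in the two-dimensional Schmidt support has determinant $\lambda_0\lambda_1\,|\det M|^2$, where $M$ is a unitary change-of-basis. Thus $\ket{\alpha_0}$ and $\ket{\alpha_1}$ are linearly dependent iff $\lambda_0\lambda_1=0$, iff $\ket{\psi}$ is product across $A\mid B$. Invoking Lemma~\ref{lem:Haar_entangled} (with local dimensions $2^{n-1}$ and $2$), the Haar-random state is entangled with probability $1$, so the independence of the two projected vectors and therefore $\Delta(O_{\psi,n-1})>0$ holds almost surely.

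The only delicate point is the passage from ``parallel projections on every Pauli eigenstate'' to ``equal proportionality constants''; the rest is bookkeeping. One could equivalently derive $c_0=c_1$ from any one of the $X$, $Y$ pairs, which also indicates that the three-basis choice is not essential—two mutually unbiased bases on the single qubit already suffice, a useful observation for later quantitative gap estimates.
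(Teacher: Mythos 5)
Your proof is correct and follows essentially the same route as the paper's: both reduce the problem to showing that the $1$-eigenspace of $O_{\psi,n-1}$ is $\mathrm{span}(\ket{\psi})$, pin down the form of a candidate eigenvector via the $Z$-basis constraints, force equal proportionality constants via the $X$-basis constraint using linear independence of the two projected vectors, and invoke Lemma~\ref{lem:Haar_entangled} to get that independence almost surely. Your formulation via the kernel of $I-O_{\psi,n-1}$ as a sum of positive semidefinite terms is a clean touch that lets you work with unnormalized projections and sidestep the paper's uncertainty-principle argument for $a_0,a_1\neq 0$, but the underlying idea is the same.
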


	\begin{proof}
		Lemma~\ref{lem:Haar_entangled} guarantees that $\ket{\psi}$ is entangled across $A\mid B$ almost surely.  Observe that
		\begin{equation}
			O_{\psi,n-1}=\tfrac13\bigl(O_X+O_Y+O_Z\bigr),
		\end{equation}
		where 
		\begin{equation}\label{eq:Oz_def}
			O_Z
			=\psi_0\otimes\ketbra{0}
			+\psi_1\otimes\ketbra{1},
		\end{equation}
		with
		\begin{equation}
			\psi_0=\frac{\tr_B[(I_A\otimes\ketbra{0})\,\psi]}
			{\tr[(I_A\otimes\ketbra{0})\,\psi]},
			\qquad
			\psi_1=\frac{\tr_B[(I_A\otimes\ketbra{1})\,\psi]}
			{\tr[(I_A\otimes\ketbra{1})\,\psi]},
		\end{equation}
		and $O_X,O_Y$ are defined analogously in the $\{\ket{\pm}\}$ and $\{\ket{\pm i}\}$ bases.  Entanglement implies $\psi_0\neq\psi_1$; otherwise $\ket{\psi}$ would be separable.
		
		To prove $\Delta(O_{\psi,n-1})>0$, we show the maximal eigenvalue of $O_{\psi,n-1}$ is nondegenerate.  Suppose a pure state $\ket{\phi}$ attains this maximum so that $\tr(O_{\psi,n-1}\phi)=1$.  Then
		\begin{equation}
			\tr(O_X\phi)=\tr(O_Y\phi)=\tr(O_Z\phi)=1.
		\end{equation}
		The condition $\tr(O_Z\phi)=1$ implies that when subsystem $B$ of $\ket{\phi}$ is measured in the computational basis, the projected states on subsystem $A$ are precisely $\psi_0$ (for outcome $\ket{0}$) and $\psi_1$ (for outcome $\ket{1}$). Hence, 
		\begin{equation}
			\ket{\phi} = a_0\ket{0}_B\ket{\psi_{0}}_A+a_1\ket{1}_B\ket{\psi_{1}}_A, \quad a_0,a_1 \in \mathbb{C}.
		\end{equation}
		The uncertainty principle dictates that $\phi_B$ cannot simultaneously be an eigenstate of at least two of the operators $\sigma_X, \sigma_Y, \sigma_Z$. Assuming without loss of generality that $\phi_B$ is not an eigenstate of $\sigma_X$ and $\sigma_Z$, it follows that $a_0, a_1 \neq 0$.
		Similarly, for the entangled state $\ket{\psi}$, we have
		\begin{equation}
			\ket{\psi} = b_0 \ket{0}_B\ket{\psi_0}_A + b_1 \ket{1}_B\ket{\psi_1}_A, \quad b_0,b_1\neq0.
		\end{equation}
		Next, $\tr(O_X\phi)=1$ forces the projected states of $\ket{\phi}$ and $\ket{\psi}$ after measuring $B$ in $\{\ket{\pm}\}$ to coincide.  Thus
		\begin{equation}
			a_0\ket{\psi_0}+a_1\ket{\psi_1}
			\propto
			b_0\ket{\psi_0}+b_1\ket{\psi_1},
		\end{equation}
		so $a_0=eb_0$ and $a_1=eb_1$ for some $e\in\mathbb{C}$. Therefore $\ket{\phi}=e \ket{\psi}$, and normalization implies $|e|=1$.  Hence $\ket{\phi}$ and $\ket{\psi}$ differ only by a global phase, proving the maximum eigenvalue is nondegenerate and $\Delta(O_{\psi,n-1})>0$ almost surely.
	\end{proof}
	
	\begin{proof}[Proof of Theorem \ref{thm:spec_gap}]
		We focus on the case $n_A=1$; the extension to $n_A>1$ is analogous. We proceed by induction on the number of qubits, $n$.
		For the base case, $n=2$. A Haar-random $\ket{\psi}$ on $2$ qubits satisfies $\Delta(O_{\psi,1})>0$ almost surely by Lemma~\ref{lem:Haar_spectral_singleB}.
		When $n \ge 3$, assume $\Delta(O_{\psi,1}) > 0$ holds for $(n-1)$-qubit Haar-random state $\psi$ almost surely. We now show it holds for $n$-qubit Haar-random states.
		
		Let $\ket{\psi}$ be Haar-random on $n$ qubits and $\psi_i$ for $i \in \mathsf{X}$ be the projected state corresponding to Pauli-basis measurement on the $n$-th qubit. Each projected state is Haar-randomly distributed on $n-1$ qubits by unitary invariance of the Haar measure. 
		By the induction assumption, $\Delta(O_{\psi_i,1})>0$ almost surely for each \(i\). Applying union bound shows this holds simultaneously for both outcomes with probability~1.
		
		Note that
		\begin{equation}
			O_{\psi,1}
			= \frac{1}{3}\bigl(\tilde O_X + \tilde O_Y + \tilde O_Z\bigr),
		\end{equation}
		where 
		\begin{equation}
			\tilde O_Z
			= O_{\psi_0,1}\otimes\ketbra{0}
			+O_{\psi_1,1}\otimes\ketbra{1},
		\end{equation}
		and $\tilde O_X,\tilde O_Y$ are defined analogously. Suppose $\phi$ is any pure state satisfying  $\tr(O_{\psi,1} \phi) = 1$, then
		\begin{equation}
			\tr(\tilde{O}_X \phi) = \tr(\tilde{O}_Y\phi)  = \tr(\tilde{O}_Z \phi) = 1.
		\end{equation}
		Since each $O_{\psi_i,1}$ satisfies $\Delta(O_{\psi_i,1}) > 0$, the above equalities implies 
		\begin{equation}
			\tr(O_X \phi) = \tr(O_Y\phi)  = \tr(O_Z \phi) = 1,
		\end{equation}
		which then implies $\phi=\psi$, with same reasoning as in Lemma~\ref{lem:Haar_spectral_singleB}.  Hence, the maximal eigenvalue of $O_{\psi,1}$ is nondegenerate almost surely.
	\end{proof}

	\revise{
		\subsection{Proof of Theorem~\ref{thm:large_spectral}}
		
		We first introduce the notation that will be used extensively throughout this proof. 
		Let $m = n_B = n - 1$ denote the number of qubits in the complementary subsystem, and define the corresponding Hilbert space dimension as $D = 2^m$. 
		We denote the set of single-qubit Pauli basis states as $\mathsf{X} = \{0, 1, +, -, +i, -i\}$, and define the corresponding projectors as $\pi_x = \ket{x}\bra{x}$ for $x \in \mathsf{X}^m$. 
		Let $\mu = 3^{-m}$ denotes a uniform weight, which corresponds to the probability of selecting a specific Pauli basis. 
		Consequently, 
		\begin{equation}
			\sum_{x \in \mathsf{X}^m} \mu \pi_x = I_D.
		\end{equation}
		We express the target pure state $\ket{\psi}$ as 
		\begin{equation}
			\ket{\psi} = \ket{0}\ket{u} + \ket{1}\ket{v}, \qquad \text{with} \quad \Vert{}u\Vert{}^2 + \Vert{}v\Vert{}^2 = 1.
		\end{equation}
		Consider an arbitrary state $\ket{\phi} = \ket{0}\ket{a} + \ket{1}\ket{b}$. 
		For any measurement outcome $x$, define the amplitude $u_x = \bra{x}\ket{u}$ (with analogous definitions for $v_x, a_x, b_x$), and let
		\begin{equation}
			p_x = \vert{}u_x\vert{}^2 + \vert{}v_x\vert{}^2, \qquad d_x = \vert{}u_x b_x - v_x a_x\vert{}^2.
		\end{equation}
		For Haar-random states, $p_x \neq 0$ for all $x$ almost surely. Direct substitution into the definition of the conditional fidelity observable $O_\psi$ yields 
		\begin{equation}
			\bra{\phi}(I - O_\psi)\ket{\phi} = \sum_{x \in \mathsf{X}^m} \mu \, \frac{d_x}{p_x}.
		\end{equation}
		To bound this gap from below, we construct an auxiliary unnormalized vector
		\begin{equation}
			\ket{w_x} = \left(-\overline{v_x}\ket{0} + \overline{u_x}\ket{1}\right) \otimes \ket{x},
		\end{equation}
		and define two positive semi-definite operators, $H_0$ and $H_1$, as follows:
		\begin{equation}\label{eq:H0H1}
			\begin{split}
				H_0 &= \sum_{x \in \mathsf{X}^m} \mu  \ketbra{w_x}, \\     
				H_1 &= \sum_{x \in \mathsf{X}^m} \mu p_x  \ketbra{w_x}.
			\end{split}
		\end{equation}
		By construction, $\bra{\phi}H_0\ket{\phi} = \sum_{x} \mu d_x$ and $\bra{\phi}H_1\ket{\phi} = \sum_{x} \mu p_x d_x$. 
		Applying the Cauchy-Schwarz inequality, we can lower-bound the spectral gap as follows:
		\begin{equation}\label{eq:Cauchy-Schwarz-gap}
			\bra{\phi}(I - O_\psi)\ket{\phi} \ge \frac{\bra{\phi}H_0\ket{\phi}^2}{\bra{\phi}H_1\ket{\phi}}.
		\end{equation}
		The problem is thus reduced to bounding the operators $H_0$ and $H_1$. We accomplish this through the following two lemmas:
		\begin{lemma} \label{lem:H0}
			With probability at least $1 - \exp(-\Omega(m))$ over the Haar-random choice of $\ket{\psi}$, the kernel of $H_0$ is spanned exclusively by the target state (i.e., $\ker H_0 = \mathbb{C}\ket{\psi}$), and its smallest non-zero eigenvalue is bounded from below by$$\lambda_{\min}^+(H_0) \ge \frac{c}{D},$$where $c > 0$ is a universal constant.
		\end{lemma}
		
		\begin{lemma} \label{lem:H1}
			With probability at least $1 - \exp(-\Omega(m))$ over the Haar-random choice of $\ket{\psi}$, the maximum eigenvalue of $H_1$ is bounded from above by$$H_1 \le \frac{C}{D^2} I,$$where $C > 0$ is a universal constant.
		\end{lemma}
		By substituting Lemma \ref{lem:H0} and Lemma \ref{lem:H1} into Eq.~\eqref{eq:Cauchy-Schwarz-gap}, we establish Theorem \ref{thm:large_spectral}. 
		In the remainder of this section, we provide the detailed proofs for these two lemmas.
		
		\subsection{Proof of Lemma~\ref{lem:H0}}
		
		In this subsection, we present the proof of Lemma~\ref{lem:H0}. We begin by introducing several standard concentration inequalities concerning Gaussian random matrices. Next, we demonstrate that the operator $H_0$ can be expressed in a specific Gaussian matrix form. Finally, we separately bound the diagonal and off-diagonal blocks of $H_0$, synthesizing these bounds to establish the lemma.
		
		\subsubsection{Gaussian random matrices}
		
		Let $Z \sim \mathcal{N}_{\mathbb{C}}(0,1)$ denote a standard complex Gaussian random variable, defined as $Z = (\xi + i\eta)/\sqrt{2}$, where $\xi$ and $\eta$ are independent standard real Gaussian variables distributed as $\mathcal{N}(0, 1)$. Accordingly, an $r \times s$ standard complex Gaussian matrix is defined as a matrix whose entries are i.i.d. standard complex Gaussian variables.
		The singular values of such matrices satisfy the following bounds.
		

		\begin{fact}[Singular value concentration,~{\cite[Proposition 6.33]{Aubrun2017AliceBob}}]\label{fact:gaussian-exponential-tail-bound}
			There exists a universal constant $c>0$. Let $G$ be an $s \times n$ standard complex Gaussian matrix with $s \ge n$. Then, for any $t > 0$, the largest singular value $\Vert{}G\Vert{}$ satisfy 
			\begin{equation}
				\Pr\left[\|G\|\ge \sqrt s+\sqrt n+t\right] \le e^{-ct^2}.
			\end{equation}
			If $s > n$, then for every $t > 4\sqrt{2 \log n} / (\sqrt{s/n}-1)$, the smallest singular value $s_{\min}(G)$ satisfy 
			\begin{equation}
				\Pr\left[s_{\min}(G)\le \sqrt s - \sqrt n- t\right] \le e^{-ct^2}.
			\end{equation}
		\end{fact}
		
		\begin{lemma}[Linear summations]\label{lem:linear-summation}
			Let $\{\gamma_j\}$ be a sequence of independent standard complex Gaussian variables, and let $\{A_j\}$ be a sequence of matrices, each of dimension at most $D \times D$. Then, for any $p \ge 2$,
			\begin{equation}
				\left(\bE\left\| \sum_j \gamma_j A_j \right\|^p\right)^{1/p} \le C \sqrt{p} \, D^{1/p} \max\left\{ \left\| \sum_j A_j A_j^\dagger \right\|^{1/2}, \left\|\sum_j A_j^\dagger A_j \right\|^{1/2} \right\},
			\end{equation}
			where $C > 0$ is a universal constant.
		\end{lemma}
		
		\begin{proof}
			Let $\|T\|_{S_p} = (\tr|T|^p)^{1/p}$ denote the Schatten-$p$ norm of a matrix $T$. For any matrix of dimension at most $D \times D$, the standard operator norm is bounded by the Schatten-$p$ norm such that $\|T\| \le \|T\|_{S_p} \le D^{1/p} \|T\|$. Then, we can bound the expectation as follows:
			\begin{equation}
				\begin{aligned}
					\left(\bE\left\| \sum_j \gamma_j A_j \right\|^p\right)^{1/p} &\le \left(\bE\left\| \sum_j \gamma_j A_j \right\|_{S_p}^p\right)^{1/p} \\
					&\le C \sqrt{p} \max\left\{ \left\| \left(\sum_j A_j A_j^\dagger\right)^{1/2} \right\|_{S_p}, \left\| \left(\sum_j A_j^\dagger A_j\right)^{1/2} \right\|_{S_p} \right\} \\
					&\le C \sqrt{p} \, D^{1/p} \max\left\{ \left\| \sum_j A_j A_j^\dagger \right\|^{1/2}, \left\| \sum_j A_j^\dagger A_j \right\|^{1/2} \right\}.
				\end{aligned}
			\end{equation}
			Here, the second inequality follows directly from the non-commutative Khintchine inequality (see Ref.~\cite[Proposition 19, Corollary 20]{Tropp2008ConditioningSubdictionaries}).
		\end{proof}
		
		\begin{lemma}[Complex Hanson-Wright inequality] \label{lem:hanson_wright}
			Let $g \in \mathbb{C}^D$ be a standard complex Gaussian random vector, and let $A \in \mathbb{C}^{D \times D}$. Then, for any $t > 0$,
			\begin{equation}
				\Pr\left( \left| g^\dagger A g - \bE[g^\dagger A g] \right| > t \right) \le 4 \exp\left( -c \min\left\{ \frac{t^2}{\norm{A}_F^2}, \frac{t}{\norm{A}} \right\} \right),
			\end{equation}
			where $c > 0$ is a universal constant, and $\|A\|_F$ denotes the Frobenius norm of $A$.
		\end{lemma}
		
		\begin{proof}
			We first reduce the complex quadratic form to a real one. Let $g = x + iy$, where $x, y \sim \mathcal{N}(0, I_D / 2)$ are independent real Gaussian vectors. We define a $2D$-dimensional standard real Gaussian vector $w = \sqrt{2} (x \oplus y) \sim \mathcal{N}(0, I_{2D})$. 
			
			By decomposing the matrix $A = A_R + i A_I$ into its real and imaginary parts, the quadratic form can be expanded as
			\begin{equation}
				g^\dagger A g = \frac{1}{2} w^T \tilde{A} w + i \frac{1}{2} w^T \tilde{B} w,
			\end{equation}
			where
			\begin{equation}
				\tilde{A} = \begin{pmatrix} A_R & -A_I \\ A_I & A_R \end{pmatrix}, \qquad \tilde{B} = \begin{pmatrix} A_I & A_R \\ -A_R & A_I \end{pmatrix}
			\end{equation}
			are deterministic real $2D \times 2D$ matrices.
			
			Applying the standard real Hanson-Wright inequality (see, e.g., Ref.~\cite[Theorem 1.1]{Rudelson2013}) to the real part $\frac{1}{2} w^T \tilde{A} w$, we obtain
			\begin{equation}
				\Pr\left( \left| w^T \tilde{A} w - \bE[w^T \tilde{A} w] \right| > t \right) \le 2 \exp\left( -c \min\left\{ \frac{t^2}{\norm{\tilde{A}}_F^2}, \frac{t}{\norm{\tilde{A}}} \right\} \right),
			\end{equation}
			and a similar bound applies to $\tilde{B}$. Because the spectrum of $\tilde{A}$ (and analogously $\tilde{B}$) exactly matches the spectrum of $A$ up to doubled multiplicities, their respective norms satisfy
			\begin{equation}
				\|\tilde{A}\| = \|A\|, \qquad \|\tilde{A}\|_F^2 = 2\|A\|_F^2.
			\end{equation}
			Therefore, the concentration bound for the real part scales directly with $\|A\|$ and $\|A\|_F$. Applying an identical argument to the imaginary part $\frac{1}{2} w^T \tilde{B} w$ and taking a union bound over both components yields the desired exponential tail bound, absorbing the scalar factors into the universal constant $c$.
		\end{proof}

		\subsubsection{Matrix representation of $H_0$}
		We define the channel $\Lambda$ as
		\begin{equation}
			\Lambda(X) = \sum_{x \in \mathsf{X}^m} \mu \, \bra{x} X \ket{x} \pi_x.
		\end{equation}
		For a single qubit, this channel takes the form $\Lambda_1(X) = (X + \tr(X)I)/3$. Since $\Lambda = \Lambda_1^{\otimes m}$, we can expand this operator over subsets of qubits as
		\begin{equation}\label{eq:subset-expansion}
			\Lambda(X) = 3^{-m} \sum_{S \subseteq [m]} \tr_{S^c}(X) \otimes I_{S^c}. 
		\end{equation}
		Furthermore, utilizing the single-qubit identity $\sum_{P \in \cP_1} P X P = 2\tr(X)I_2$, we can alternatively express $\Lambda(X)$ as a Pauli channel:
		\begin{equation}\label{eq:Lambda_Pauli}
			\Lambda(X) = \sum_{P \in \cP_m} p_P P X P, \qquad p_P = D^{-1} 3^{-\wt(P)}, 
		\end{equation}
		where $\cP_m = \{I, X, Y, Z\}^{\otimes m}$ denotes the $m$-qubit Pauli group and $\wt(P)$ is the Pauli weight of operator $P$.
		
		Let $g, h \in \mathbb{C}^D$ be independent standard complex Gaussian vectors. Define $T = \|g\|^2 + \|h\|^2$, and set the normalized vectors as $u = g/\sqrt{T}$ and $v = h/\sqrt{T}$.
		The projector $\ket{w_x}\bra{w_x}$ can be written in block matrix form with respect to the bipartition as
		\begin{equation}
			\ketbra{w_x} = \begin{pmatrix}
				\bra{x} v v^\dagger \ket{x} & -\bra{x} u v^\dagger \ket{x} \\
				- \bra{x} v u^\dagger \ket{x} & \bra{x} u u^\dagger \ket{x}
			\end{pmatrix} \otimes \pi_x.
		\end{equation}
		Consequently, the operator $H_0$ can be expressed compactly using the map $\Lambda$:
		\begin{equation} \label{eq:H0_matrix}
			\begin{split}
				H_0 &= \begin{pmatrix}
					\sum_{x } \mu \bra{x} v v^\dagger \ket{x} \pi_x & - \sum_{x } \mu \bra{x} u v^\dagger \ket{x} \pi_x \\
					-\sum_{x } \mu \bra{x} v u^\dagger \ket{x} \pi_x & \sum_{x } \mu \bra{x} u u^\dagger \ket{x} \pi_x
				\end{pmatrix} \\
				&= \frac{1}{T} \begin{pmatrix}
					\Lambda(h h^\dagger) & -\Lambda(g h^\dagger) \\
					-\Lambda(h g^\dagger) & \Lambda(g g^\dagger)
				\end{pmatrix}.
			\end{split}
		\end{equation}
		
		\subsubsection{Diagonal blocks}
		
		We first analyze the diagonal block $\Lambda(gg^\dagger)$. For any subset $S \subseteq [m]$, let $G_S$ denote the Gaussian matrix obtained by reshaping the vector $g$ across the bipartition $S|S^c$. Using the subset expansion of $\Lambda$, we have
		\begin{equation}\label{eq:expression-diagonal}
			\begin{split}
				\Lambda(gg^\dagger) &= 3^{-m} \sum_{S} \operatorname{tr}_{S^c}(gg^\dagger) \otimes I_{S^c} \\
				&= 3^{-m} \sum_{S} G_S G_S^\dagger \otimes I_{S^c}.
			\end{split}
		\end{equation}
		
		Setting $t = C'\sqrt{m}$ for a sufficiently large constant $C'$, we apply Fact~\ref{fact:gaussian-exponential-tail-bound}. A union bound ensures that with probability at least $1 - e^{-\Omega(m)}$, the operator norm bound $\| G_S \| \le \sqrt{2^{|S|}} + \sqrt{2^{m-|S|}} + t$ holds simultaneously for all $2^m$ subsets. Consequently,
		\begin{equation}
			\begin{split}
				\| \Lambda(gg^\dagger) \| &\lesssim 3^{-m} \sum_S (2^{|S|} + 2^{m-|S|} + C'm) = \mathcal{O}(1),
			\end{split}
		\end{equation}
		where $A \lesssim B$ denotes $A \le c B$ for some absolute constant $c > 0$.
		
		To lower-bound the smallest singular value, we rewrite Eq.~\eqref{eq:expression-diagonal} as a convex combination:
		\begin{equation}\label{eq:bound-g-norm}
			\Lambda(gg^\dagger) = \sum_S q_S \left(\frac{G_S G_S^\dagger}{2^{m-|S|}}\right) \otimes I_{S^c}, \qquad q_S = (1/3)^{|S|}(2/3)^{m-|S|}.
		\end{equation}
		The coefficients $q_S$ correspond to a binomial distribution over the subsets $S$, where each element is included independently with probability $1/3$. By standard concentration bounds, the total probability mass on subsets sized $m/4 \le |S| \le 5m/12$ is $1 - e^{-\Omega(m)}$. Uniformly over these typical subsets, Fact~\ref{fact:gaussian-exponential-tail-bound} guarantees that $s_{\min}(G_S G_S^\dagger / 2^{m-|S|}) \ge 1 - o(1)$. Therefore, with probability at least $1 - e^{-\Omega(m)}$, we obtain
		\begin{equation}\label{eq:bound-g-smin}
			s_{\min}(\Lambda(gg^\dagger)) \ge 1 - o(1).
		\end{equation}
		
		The same analysis also applies to the vector $h$. Combining the above, we conclude that with probability at least $1 - e^{-\Omega(m)}$, both diagonal blocks satisfy
		\begin{equation}\label{eq:diagonal-bound-final}
			(1-o(1))I \le \Lambda(gg^\dagger), \Lambda(hh^\dagger) \le C I.
		\end{equation}
		
		\subsubsection{The off-diagonal block}
		
		We now analyze the off-diagonal block $B = \Lambda(gh^\dagger)$. By Eq.~\eqref{eq:Lambda_Pauli}, we can expand $B$ as
		\begin{equation}\label{eq:B-decomposition}
			B = D^{-1}gh^\dagger + \sum_{P \neq I} p_P P g h^\dagger P.
		\end{equation}
		We choose a cutoff weight $r = C_0 \log m$ for a sufficiently large constant $C_0$, and split the summation into low-weight and high-weight Pauli contributions:
		\begin{equation}
			E_L = \sum_{1 \le \wt(P) \le r} p_P P g h^\dagger P, \qquad E_H = \sum_{\wt(P) > r} p_P P g h^\dagger P.
		\end{equation}
		The number of Pauli operators with weight at most $r$ is bounded by $\exp(\mathcal{O}((\log m)^2)) = D^{o(1)}$.
		
		We define the matrix $G \coloneqq [Pg]_P$, where each column is given by $Pg$ for $ 0\le \wt(P) \le r$. Setting $t = D^{2/3}$ and applying Lemma~\ref{lem:hanson_wright}, we can bound the matrix entries $(G^\dagger G)_{PQ} = g^\dagger P Q g$:
		\begin{equation}\label{eq:bound-diag-offdiag}
			\Pr( |(G^\dagger G)_{PQ} - \delta_{PQ} D| > t ) = e^{-\Omega(D^{1/3})}.
		\end{equation}
		Taking a union bound over all such $P$ and $Q$, Eq.~\eqref{eq:bound-diag-offdiag} holds uniformly for all pairs with probability at least $1 - e^{-D^{\Omega(1)}}$. An analogous concentration bound applies to the matrix $H \coloneqq [Ph]_P$.
		Applying Gershgorin's circle theorem to the rows of $G^\dagger G$ and $H^\dagger H$, the entry-wise bounds imply that the spectra are highly concentrated around $D$:
		\begin{equation}\label{eq:GH-bound}
			D^{-1}G^\dagger G = (1 + o(1))I, \qquad D^{-1}H^\dagger H = (1 + o(1))I.
		\end{equation}
		The low-weight component can be factored as $E_L = G W H^\dagger$, where $W$ is a diagonal matrix with $W_{PP} = p_P$ for $P \neq I$ and $W_{II} = 0$. Because $\max_{P \neq I} p_P = 1/(3D)$, we can bound the operator norm as
		\begin{equation}\label{eq:bound-lower-E}
			\begin{split}
				\norm{E_L} &\le \norm{G} \norm{W} \norm{H} \\
				&\le \sqrt{D(1+o(1))} \cdot \frac{1}{3D} \cdot \sqrt{D(1+o(1))} \\
				&= \frac{1}{3} + o(1).
			\end{split}
		\end{equation}
		
		For the high-weight component $E_H$, we write
		\begin{equation}
			E_H = \sum_j \overline{h_j} A_j, \qquad A_j = \sum_{\wt(P)>r} p_P P g e_j^\dagger P.
		\end{equation}
		We aim to bound $E_H$ using Lemma~\ref{lem:linear-summation}. First, with probability at least $1-e^{-\Omega(m)}$, we have
		\begin{equation}\label{eq:bound-low-weight-A}
			\begin{split}
				\sum_j A_j A_j^\dagger &= \sum_j \sum_{\wt(P), \wt(Q) > r} p_P p_Q P g e_j^\dagger P Q e_j g^\dagger Q \\
				&= D \sum_{\wt(P)>r} p_P^2 P g g^\dagger P \\
				&\le 3^{-r} \sum_{P} p_P P g g^\dagger P \\
				&= 3^{-r} \Lambda(g g^\dagger) \\
				&\le C 3^{-r} I,
			\end{split}
		\end{equation}
		where the final inequality follows from Eq.~\eqref{eq:diagonal-bound-final}.
		
		To bound $\sum_j A_j^\dagger A_j$, we introduce
		\begin{equation}
			A_j^{\mathrm{all}} = \sum_{P \in \mathcal{P}m} p_P P g e_j^\dagger P, \qquad A_j^{\le r} = \sum_{\wt(P) \le r} p_P P g e_j^\dagger P.
		\end{equation}
		Evaluating the full summation yields
		\begin{equation}\label{eq:sum-A-all}
			\begin{split}
				\sum_j (A_j^{\mathrm{all}})^\dagger A_j^{\mathrm{all}} &= \sum_{P,Q} p_P p_Q Q ( \sum_j e_j g^\dagger Q P g e_j^\dagger ) P \\
				&= \sum_{P,Q} p_P p_Q (g^\dagger Q P g) Q P \\
				&= \sum_R ( \sum_{P,Q: QP = \omega R} p_P p_Q \omega^2 ) (g^\dagger R g) R \\
				&= \sum_R c_R (g^\dagger R g) R,
			\end{split}
		\end{equation}
		where $c_R \coloneqq \sum_{P,Q: QP = \omega R} p_P p_Q \omega^2$.
		
		For single-qubit Paulis, these coefficients evaluate to
		\begin{equation}
			c_I = \frac{1}{4} + \frac{3}{36} = \frac{1}{3}, \qquad c_X = c_Y = c_Z = 2 \frac{1}{2} \frac{1}{6} - 2 \frac{1}{36} = \frac{1}{9}.
		\end{equation}
		Consequently,
		\begin{equation}
			\begin{split}
				\sum_R c_R (g^{\dagger} R g)R &= \frac{1}{3} \norm{g}^2 I + \frac{1}{9}[(g^{\dagger} X g)X + (g^{\dagger} Y g) Y + (g^{\dagger} Z g) Z]\\
				&= \frac{1}{3} \norm{g}^2 I  + \frac{1}{9}(2gg^{\dagger} - \norm{g}^2 I) \\
				&= \frac{2}{3} \left(\frac{\norm{g}^2 I + gg^{\dagger}}{3}\right) \\
				&= \frac{2}{3} \Lambda_1(gg^{\dagger}),
			\end{split}
		\end{equation}
		where we utilized the identity $\sum_{P \in \mathcal{P}_1} (g^\dagger P g) P = 2 g g^\dagger$ alongside Eq.~\eqref{eq:subset-expansion}. Substituting this into Eq.~\eqref{eq:sum-A-all}, we obtain
		\begin{equation}
			\sum_j (A_j^{\mathrm{all}})^\dagger A_j^{\mathrm{all}} = (2/3)^m \Lambda(g g^\dagger).
		\end{equation}
		
		Similarly, for the low-weight component, we have
		\begin{equation}
			\begin{split}
				\sum_j (A_j^{\le r})^\dagger A_j^{\le r} &= \sum_{\wt(P), \wt(Q) \le r} p_P p_Q (g^\dagger Q P g) Q P \\
				&= \sum_{\wt(P), \wt(Q) \le r} p_P p_Q (G^{\dagger}G)_{QP} Q P \\
				&\le D(1+o(1)) \sum_{\wt(P) \le r} p_P^2 I \\
				&\le D^{-1+o(1)}(1+o(1)) I,
			\end{split}
		\end{equation}
		where the first inequality uses Eq.~\eqref{eq:GH-bound}.
		
		Because $A_j = A_j^{\mathrm{all}} - A_j^{\le r}$, we can apply the Loewner inequality $(X-Y)^\dagger(X-Y) \le 2X^\dagger X + 2Y^\dagger Y$. Combined with Eq.~\eqref{eq:diagonal-bound-final}, this ensures that with probability at least $1-e^{-\Omega(m)}$,
		\begin{equation}\label{eq:bound-high-weight-A}
			\norm{\sum_j A_j^\dagger A_j} \le 2(2/3)^m \norm{\Lambda(g g^\dagger)} + 2 D^{-1+o(1)} = e^{-\Omega(m)}.
		\end{equation}
		
		Combining Eq.~\eqref{eq:bound-low-weight-A}, Eq.~\eqref{eq:bound-high-weight-A}, and Lemma~\ref{lem:linear-summation} evaluated at $p = m$, we obtain
		\begin{equation}
			\left(\bE \norm{E_H}^m\right)^{1/m} \coloneqq \epsilon = o(1)
		\end{equation}
		with probability $1-e^{-\Omega(m)}$. Conditional on this event, applying Markov's inequality yields
		\begin{equation}
			\begin{split}
				\Pr\left[ \norm{E_H} \ge e \epsilon \right] &= \Pr\left[ \norm{E_H}^m \ge e^m \epsilon^m \right] \\
				&\le \frac{\epsilon^m}{e^m \epsilon^m} = e^{-m}.
			\end{split}
		\end{equation}
		Therefore, a union bound guarantees that $\norm{E_H} = o(1)$ with probability at least $1-e^{-\Omega(m)}$.
		We now combine this result with the decomposition of $B$ in Eqs.~\eqref{eq:B-decomposition} and \eqref{eq:bound-lower-E}. Applying the Eckart-Young inequality, the second singular value of $B$ is bounded by
		\begin{equation}
			s_2(B) \le \norm{E_L} + \norm{E_H} = \frac{1}{3} + o(1).
		\end{equation}
		
		Let $A = \Lambda(hh^\dagger)$, $C = \Lambda(gg^\dagger)$, $K = A^{-1/2}BC^{-1/2}$, $M = \begin{pmatrix} I & -K \\ -K^\dagger & I \end{pmatrix}$, and $D_0 = \operatorname{diag}(A^{1/2}, C^{1/2})$. Then, the second singular value of $K$ satisfies
		\begin{equation}
			s_2(K) \le \norm{A^{-1/2}} \norm{C^{-1/2}} s_2(B) \le \frac{1}{3} + o(1) < \frac{1}{2}.
		\end{equation}
		Moreover, by Eq.~\eqref{eq:H0_matrix}, $H_0$ factorizes as $H_0 = \frac{1}{T} D_0 M D_0$, and by its definition in Eq.~\eqref{eq:H0H1}, $H_0(\ket{0}{g} + \ket{1}\ket{h})=0$.
		Consequently, $K$ has its largest singular value $s_1(K) = 1$. Thus, $H_0$ has a one-dimensional kernel, and the smallest non-zero eigenvalue of $M$ is bounded by $\lambda_{\min}^+(M) = 1 - s_2(K) > 1/2$.
		
		Furthermore, the non-zero spectrum of $D_0 M D_0$ is identical to the non-zero spectrum of $M^{1/2} D_0^2 M^{1/2}$. Since the diagonal blocks satisfy $D_0^2 \ge (1-o(1))I$, we have
		\begin{equation}
			\lambda_{\min}^+(H_0) \ge \frac{1}{T} \lambda_{\min}(D_0)^2 \lambda_{\min}^+(M) = \Omega(T^{-1}).
		\end{equation}
		
		Finally, by Fact~\ref{fact:gaussian-exponential-tail-bound}, the normalization constant 
		\begin{equation}\label{eq:T-concentration}
			T = \norm{g}^2 + \norm{h}^2 = \Theta(D)
		\end{equation}
		except with probability $e^{-cD}$. This establishes the lower bound $\lambda_{\min}^+(H_0) \ge \Omega(D^{-1})$ and completes the proof of Lemma~\ref{lem:H0}.
		
		\subsection{Proof of Lemma~\ref{lem:H1}}
		
		In this subsection, we present the proof of Lemma~\ref{lem:H1}. We begin by introducing several integration tools for complex Gaussian random variables, which we will subsequently use to calculate matrix moments. These tools primarily rely on Wick ordering and the Wick formula.
		
		\subsubsection{Wick ordering and Wick formula}
		A family of complex random variables $\{Z_1, \ldots, Z_r\}$ is termed \emph{proper} if
		\begin{equation}
			\bE[Z_i Z_j] = 0 \qquad \forall i, j.
		\end{equation}
		For the standard complex Gaussian vector $g \sim \mathcal{N}_{\mathbb{C}}(0, I_D)$ introduced above, $\bE[g g^T] = 0$. Hence, for any set of unit vectors $\{x_1, \ldots, x_r\}$, the random variables
		\begin{equation}
			g_{x_i} \coloneqq \bra{x_i}\ket{g}
		\end{equation}
		form a proper standard complex Gaussian family with covariance
		\begin{equation}\label{eq:H1-complex-covariance}
			C_{ij} \coloneqq \bE[g_{x_i} \overline{g_{x_j}}] = \bra{x_i}\ket{x_j}.
		\end{equation}
		
		For a standard proper complex Gaussian variable $Z$, its Wick-ordered monomials are defined via the generating function
		\begin{equation}\label{eq:H1-wick-generating-function}
			\exp(sZ + t\overline{Z} - st) = \sum_{p,q \ge 0} \frac{s^p t^q}{p!q!} :\!Z^p\overline{Z}^q\!: .
		\end{equation}
		In particular (see Ref.~\cite[Eqs.~(1.2), (1.3)]{IsmailSimeonov2015ComplexHermite}), the quartic Wick polynomial is given by
		\begin{equation}\label{eq:H1-quartic-wick-polynomial}
			:\!Z^2\overline{Z}^2\!: \,\, = |Z|^4 - 4|Z|^2 + 2.
		\end{equation}
		The key tool we use is the following formula.
		
		\begin{lemma}[Wick formula]\label{lem:wick-formula}
			Let $Z_1, \ldots, Z_r$ be jointly proper standard complex Gaussian variables, and let $C_{ij} = \bE[Z_i \overline{Z_j}]$. For nonnegative integers $p_i, q_i$, let $\mathcal{M}(\bm{p}, \bm{q})$ be the set of matrices $N = (n_{ij})_{i,j=1}^r$ with nonnegative integer entries satisfying the constraints
			\begin{equation}
				n_{ii} = 0, \qquad \sum_j n_{ij} = p_i, \qquad \sum_i n_{ij} = q_j.
			\end{equation}
			Then,
			\begin{equation}\label{eq:H1-block-wick-formula}
				\bE \prod_{i=1}^r :\!Z_i^{p_i}\overline{Z_i}^{q_i}\!: \,\, = \sum_{N \in \mathcal{M}(\bm{p}, \bm{q})} \frac{\prod_{i=1}^r p_i! q_i!}{\prod_{i \ne j} n_{ij}!} \prod_{i \ne j} C_{ij}^{n_{ij}}.
			\end{equation}
		\end{lemma}
		\begin{proof}
			This formula follows directly by extracting the coefficients of $\prod_i s_i^{p_i} t_i^{q_i}$ on both sides of the generating function identity. Utilizing the properness and unit variance of the variables, we have
			\begin{equation}\label{eq:H1-block-generating-function}
				\begin{split}
					\bE \prod_{i=1}^r \exp(s_i Z_i + t_i \overline{Z_i} - s_i t_i) &= \exp( \sum_{i,j} s_i C_{ij} t_j - \sum_i s_i t_i ) \\
					&= \exp( \sum_{i \ne j} s_i C_{ij} t_j ).
				\end{split}
			\end{equation}
			Here, the first equality uses the standard moment generating function identity $\bE[\exp(A)] = \exp(\frac{1}{2}\bE[A^2])$ for a zero-mean Gaussian exponent $A$.
			Expanding the left-hand side using Eq.~\eqref{eq:H1-wick-generating-function}, the coefficient of $\prod_i s_i^{p_i} t_i^{q_i}$ is exactly
			\begin{equation}
				\frac{1}{\prod_i p_i! q_i!} \bE \prod_{i=1}^r :\!Z_i^{p_i} \overline{Z_i}^{q_i}\!: .
			\end{equation}
			Expanding the right-hand side of Eq.~\eqref{eq:H1-block-generating-function} as a multivariate Taylor series, we obtain
			\begin{equation}
				\begin{split}
					\exp( \sum_{i \ne j} s_i C_{ij} t_j ) &= \prod_{i \ne j} \sum_{n_{ij}=0}^\infty \frac{(s_i C_{ij} t_j)^{n_{ij}}}{n_{ij}!} \\
					&= \sum_{N} \frac{\prod_{i \ne j} C_{ij}^{n_{ij}}}{\prod_{i \ne j} n_{ij}!} \prod_i s_i^{\sum_j n_{ij}} \prod_j t_j^{\sum_i n_{ij}},
				\end{split}
			\end{equation}
			where the sum runs over all matrices $N$ with zero diagonal ($n_{ii} = 0$). Matching the powers of $s_i$ and $t_j$ on both sides imposes the exact constraints $\sum_j n_{ij} = p_i$ and $\sum_i n_{ij} = q_j$, which directly recovers Eq.~\eqref{eq:H1-block-wick-formula}.
		\end{proof}
		
		Equivalently, the Wick formula can be interpreted through a labeled-diagrammatic representation: each block $i$ possesses $p_i$ labeled slots and $q_i$ labeled anti-slots. A directed edge $i \to j$ pairs one slot of each type and contributes a factor of $C_{ij}$. The condition $n_{ii} = 0$ removes all internal edges within each block $i$.
		
		\subsubsection{Concentration of $H_1$}
		We define 
		\begin{equation}\label{eq:H1-Rg-tilde}
			\widetilde{R}_g \coloneqq \sum_{x\in\mathsf{X}^m} \mu |g_x|^4 \pi_x, \qquad g_x = \bra{x}\ket{g}.
		\end{equation}
		Next, we use the quartic Wick polynomial $\mathfrak{h}(z)$ and its corresponding operator sum $Z_g$:
		\begin{equation}\label{eq:H1-Zg-definition}
			\mathfrak{h}(z) \coloneqq \,\,:\!z^2\overline{z}^2\!: \,\, = |z|^4 - 4|z|^2 + 2, \qquad Z_g \coloneqq \sum_{x \in \mathsf{X}^m} \mu \, \mathfrak{h}(g_x) \pi_x.
		\end{equation}
		Because $\sum_x \mu \pi_x = I_D$, we obtain the decomposition
		\begin{equation}\label{eq:H1-Rg-decomposition}
			\widetilde{R}_g = Z_g + 4\Lambda(gg^\dagger) - 2I_D.
		\end{equation}
		The operator norm of $\Lambda(gg^\dagger)$ is already bounded by Eq.~\eqref{eq:diagonal-bound-final}. Thus, it suffices to bound $Z_g$.
		
		Let $S_g$ denote the squared Frobenius norm of $Z_g$:
		\begin{equation}\label{eq:H1-Sg-definition}
			S_g \coloneqq \norm{Z_g}_F^2 = \sum_{x,y \in \mathsf{X}^m} \mu^2 \, \mathfrak{h}(g_x) \mathfrak{h}(g_y) |\bra{x}\ket{y}|^2.
		\end{equation}
		For any fixed pair of outcomes $x$ and $y$, we evaluate the expectation using Lemma~\ref{lem:wick-formula} with two blocks, setting $p_i = q_i = 2$.
		This leaves exactly two edges from the first block to the second, and two in the reverse direction. 
		Therefore,
		\begin{equation}\label{eq:H1-two-point-wick}
			\begin{split}
				\bE[\mathfrak{h}(g_x) \mathfrak{h}(g_y)] &= (2!)^2 \bE[g_x \overline{g_y}]^2 \bE[g_y \overline{g_x}]^2 \\
				&= 4|\bra{x}\ket{y}|^4.
			\end{split}
		\end{equation}
		Consequently, taking the expectation over $S_g$ factors into a tensor product structure:
		\begin{equation}\label{eq:H1-mean-Sg}
			\begin{split}
				\mathbb{E}[S_g] &= 4 \sum_{x,y \in \mathsf{X}^m} \mu^2 |\bra{x}\ket{y}|^6 \\
				&= 4 [ 3^{-2} \sum_{a,b \in \mathsf{X}} |\bra{a}\ket{b}|^6 ]^m \\
				&= 4.
			\end{split}
		\end{equation}
		The final equality follows by analyzing the single-qubit overlaps. For any fixed basis state $a \in \mathsf{X}$, exactly one of the six Pauli states $b \in \mathsf{X}$ yields a squared overlap of $1$, one yields a squared overlap of $0$, and the remaining four yield a squared overlap of $1/2$. Therefore, the one-qubit factor evaluates identically to $3^{-2} (6) [ 1 + 4 (1/2)^3 ] = 1$.
		
		We next bound the variance of $S_g$. Expanding Eq.~\eqref{eq:H1-Sg-definition} yields
		\begin{equation}\label{eq:H1-variance-expanded}
			\begin{split}
				\Var(S_g) &= \sum_{x_1, \ldots, x_4 \in \mathsf{X}^m} \mu^4 |\bra{x_1}\ket{x_2}|^2 |\bra{x_3}\ket{x_4}|^2 \\
				&\quad \times \operatorname{Cov}( \mathfrak{h}(g_{x_1})\mathfrak{h}(g_{x_2}), \mathfrak{h}(g_{x_3})\mathfrak{h}(g_{x_4}) ).
			\end{split}
		\end{equation}
		In the labeled-diagram representation of Eq.~\eqref{eq:H1-block-wick-formula}, each of the four Wick blocks contains two labeled slots and two labeled anti-slots, resulting in at most $8!$ possible diagrams. Because the covariance $\operatorname{Cov}( \mathfrak{h}(g_{x_1})\mathfrak{h}(g_{x_2}), \mathfrak{h}(g_{x_3})\mathfrak{h}(g_{x_4}) )$ subtracts the disconnected contributions $\bE[\mathfrak{h}(g_{x_1})\mathfrak{h}(g_{x_2})]\bE[\mathfrak{h}(g_{x_3})\mathfrak{h}(g_{x_4})]$, any surviving diagram must contain at least one edge crossing the bipartition $\{1, 2\} \mid \{3, 4\}$. Let $\mathcal{D}_{\mathrm{surv}}$ denote the set of these surviving diagrams.
		
		For each diagram $\beta \in \mathcal{D}_{\mathrm{surv}}$, we define its corresponding one-qubit factor as
		\begin{equation}\label{eq:H1-sbeta-definition}
			s_\beta \coloneqq 3^{-4} \sum_{a_1, \ldots, a_4 \in \mathsf{X}} |\bra{a_1}\ket{a_2}|^2 |\bra{a_3}\ket{a_4}|^2 \prod_{(i \to j) \in \beta} \bra{a_i}\ket{a_j}.
		\end{equation}
		Because each basis state $x_i \in \mathsf{X}^m$ is a product state, the inner products perfectly tensorize over the $m$ independent qubits. Consequently, the variance in Eq.~\eqref{eq:H1-variance-expanded} simplifies to
		\begin{equation}\label{eq:H1-variance-diagrams}
			\Var(S_g) = \sum_{\beta \in \mathcal{D}_{\mathrm{surv}}} s_\beta^m.
		\end{equation}
		
		It remains to establish a uniform upper bound on $|s_\beta|$. We augment the eight directed Wick edges of $\beta$ with the four directed edges originating from the two trace factors in Eq.~\eqref{eq:H1-sbeta-definition}, namely $1 \to 2$, $2 \to 1$, $3 \to 4$, and $4 \to 3$. By ignoring the edge orientations, this construction yields a loopless $6$-regular multigraph $G_\beta$ on four vertices, comprising exactly $12$ edges. 
		
		Let $A_1, \ldots, A_4$ be independent random variables uniformly distributed over the set of Pauli basis states $\mathsf{X}$, and define the overlap function
		\begin{equation}
			k(a,b) \coloneqq |\bra{a}\ket{b}|.
		\end{equation}
		Taking the absolute value of Eq.~\eqref{eq:H1-sbeta-definition} and applying the triangle inequality yields
		\begin{equation}\label{eq:H1-sbeta-graph-bound}
			\begin{split}
				|s_\beta| &\le 16 I(G_\beta), \\
				I(G) &\coloneqq \mathbb{E} \prod_{\{i,j\} \in E(G)} k(A_i, A_j),
			\end{split}
		\end{equation}
		where the prefactor $16 = (6/3)^4$ arises from multiplying the original weight $3^{-4}$ by $6^4$ to convert the explicit summations over the six Pauli states into expectations over the uniform random variables $A_i$.
		
		For any loopless multigraph $G$ with a maximum degree of at most six, we have the bound
		\begin{equation}\label{eq:H1-graphical-holder}
			I(G) \le (1/4)^{|E(G)|/6}.
		\end{equation}
		To establish this, we eliminate the vertices of the graph one at a time. Suppose the current vertex being removed has $r \le 6$ remaining incident edges, and its neighboring states are fixed as $b_1, \ldots, b_r$. Applying the generalized Hölder inequality yields
		\begin{equation}\label{eq:H1-holder-step}
			\begin{split}
				\mathbb{E}_A \prod_{\ell=1}^r k(A, b_\ell) &\le \prod_{\ell=1}^r ( \mathbb{E}_A k(A, b_\ell)^6 )^{1/6} \\
				&= (1/4)^{r/6}.
			\end{split}
		\end{equation}
		Here, for every fixed basis state $b \in \mathsf{X}$, evaluating the sixth moment explicitly gives
		\begin{equation}
			\mathbb{E}_A k(A, b)^6 = \frac{1}{6} [ 1 + 4 (1/2)^3 ] = \frac{1}{4}.
		\end{equation}
		Because each edge in the multigraph is removed exactly once during this vertex elimination process, accumulating these individual bounds directly proves Eq.~\eqref{eq:H1-graphical-holder}.
		
		We now examine the conditions for equality in Eq.~\eqref{eq:H1-graphical-holder} for a $6$-regular graph. Consider the first eliminated vertex $v$, which has exactly six incident edges. Achieving equality in Eq.~\eqref{eq:H1-holder-step} requires the six non-negative functions $k(\,\cdot\,,b_\ell)^6$ to be strictly proportional. Because their $L^1$ norms are identically $1/4$, these functions must be equal, which consequently requires all six neighboring states $b_\ell$ to coincide.
		
		If $v$ were connected to two distinct neighboring vertices, those vertices would be assigned distinct random states with constant probability, rendering the Hölder inequality strict. Therefore, saturating the inequality in Eq.~\eqref{eq:H1-graphical-holder} requires that all six edges incident on $v$ connect to a single adjacent vertex $w$. Because $w$ also has a maximum degree of six, the pair $\{v,w\}$ must form an isolated graph component.
		
		For the specific graph $G_\beta$, applying Eq.~\eqref{eq:H1-graphical-holder} with $|E(G_\beta)|=12$ yields the upper bound $|s_\beta| \le 1$. Reaching equality would force $G_\beta$ to be a union of exactly two disconnected components, each consisting of two vertices joined by six edges. Because $G_\beta$ is constructed with the fixed edges connecting vertices $1$ and $2$, as well as $3$ and $4$, these two isolated components must be precisely $\{1,2\}$ and $\{3,4\}$.
		However, this configuration implies that the underlying Wick diagram contains no edges crossing the bipartition $\{1,2\} \mid \{3,4\}$, which contradicts the definition of the surviving diagrams $\beta \in \mathcal{D}_{\mathrm{surv}}$. Thus, every surviving diagram must strictly satisfy $|s_\beta| < 1$:
		\begin{equation}\label{eq:H1-rho-definition}
			\rho \coloneqq \max_{\beta \in \mathcal{D}_{\mathrm{surv}}} |s_\beta| < 1.
		\end{equation}
		Combining Eq.~\eqref{eq:H1-variance-diagrams} with the uniform bound in Eq.~\eqref{eq:H1-rho-definition}, we obtain the variance bound:
		\begin{equation}\label{eq:H1-variance-final}
			\Var(S_g) \le |\mathcal{D}_{\mathrm{surv}}| \rho^m \le 8! \, \rho^m = e^{-\Omega(m)}.
		\end{equation}
		
		Because $Z_g$ is Hermitian, $\norm{Z_g} \le \norm{Z_g}_F = \sqrt{S_g}$. Combining this with the expectation $\mathbb{E}[S_g] = 4$ from Eq.~\eqref{eq:H1-mean-Sg}, Chebyshev's inequality yields
		\begin{equation}\label{eq:H1-Zg-bound}
			\Pr\left[ \norm{Z_g} > \sqrt{5} \right] \le \Pr\left[ S_g > 5 \right] \le \operatorname{Var}(S_g) = e^{-\Omega(m)}.
		\end{equation}
		Substituting this bound into Eq.~\eqref{eq:H1-Rg-decomposition} and utilizing the diagonal bound from Eq.~\eqref{eq:diagonal-bound-final}, we obtain
		\begin{equation}\label{eq:H1-Rg-bound}
			\norm{\widetilde{R}_g} \le C
		\end{equation}
		with probability at least $1 - e^{-\Omega(m)}$. The analogous bound $\norm{\widetilde{R}_h} \le C$ holds for the vector $h$.
		
		For the Haar-random target state $\ket{\psi}$, the outcome probabilities are given by
		\begin{equation}
			p_x = \frac{|g_x|^2 + |h_x|^2}{T}, \qquad T = \norm{g}^2 + \norm{h}^2.
		\end{equation}
		By Eq~\eqref{eq:T-concentration}, $T = \Theta(D)$ with probability at least $1 - e^{-\Omega(D)}$. We define the composite operator
		\begin{equation}
			R_\psi \coloneqq \sum_{x \in \mathsf{X}^m} \mu p_x^2 \pi_x.
		\end{equation}
		Applying $(a+b)^2 \le 2(a^2+b^2)$ alongside the bounds for $\widetilde{R}_g$ and $\widetilde{R}_h$, we can upper-bound $R_\psi$ as
		\begin{equation}\label{eq:H1-Rpsi-bound}
			R_\psi \le \frac{2}{T^2} \left( \widetilde{R}_g + \widetilde{R}_h \right) \le \frac{C}{D^2} I_D,
		\end{equation}
		where $C > 0$ is a universal constant. This inequality holds with probability at least $1 - e^{-\Omega(m)}$.
		
		Finally, observe that the unnormalized vector $\ket{w_x}$ satisfies $\norm{w_x}^2 = p_x$ and resides entirely within the subspace $\mathbb{C}^2 \otimes \operatorname{span}\{\ket{x}\}$. Consequently, its rank-one projector satisfies
		\begin{equation}
			p_x \ket{w_x}\bra{w_x} \le p_x^2 (I_2 \otimes \pi_x).
		\end{equation}
		Summing this inequality over all outcomes $x$ and applying Eq.~\eqref{eq:H1-Rpsi-bound}, we arrive at the final bound for $H_1$:
		\begin{equation}\label{eq:H1-final-bound}
			H_1 \le I_2 \otimes R_\psi \le \frac{C}{D^2} I_{2D}.
		\end{equation}
		This completes the proof of Lemma~\ref{lem:H1}.
		
	}
	
	
	\bibliographystyle{SciAdv}
	\bibliography{refProperty}
	
\end{document}